%% file: main.tex
\documentclass[11pt,a4paper]{article}

\newif\iflipics
\makeatletter
\@ifclassloaded{lipics-v2021}{\lipicstrue}{\lipicsfalse}
\makeatother
\newcommand{\LIPICS}[1]{\iflipics #1\fi}
\newcommand{\ARXIV}[1]{\iflipics\else #1\fi}
\newcommand{\LIPICSorARXIV}[2]{\iflipics #1\else #2\fi}

\ARXIV{\usepackage[a4paper,margin=3cm]{geometry}}
\usepackage{stmaryrd} 

\usepackage[utf8]{inputenc}
\usepackage{amsthm}
\usepackage[theorems=false,etoolbox=true,paralist=false,todo=hide,textwidth=1.5cm,amssymb=true]{generic} 
\usepackage{chngcntr} 
\usepackage[xcolor,hyperref,notion,quotation,paper]{knowledge}
\ARXIV{\usepackage[numbers,sort]{natbib}}
\usepackage{calc}
\usepackage{adjustbox} 
\usepackage{collcell} 

\input{appearence}

\input{macros}

\input{notions}
\input{frontmatter}

\begin{document}
    \maketitle
    \input{abstract}
    \ARXIV{\tableofcontents}
    
\section{Introduction}\label{sec:intro}

    \input{intro}

\section{Preliminaries}\label{sec:preliminaries}

    \input{functions}

    \input{terms}

\section{Minimality via category theory}\label{sec:category}

    \input{epis-monos}

    \input{factorization-systems}

\section{The transducer model}\label{sec:model}

    \input{data}

    \input{transducers}
    \input{morphisms}

\section{Simple criteria for transducer minimization}\label{sec:minimization}
    
    \input{images-of-morphisms}

    \input{initial-transducer}

    \input{final-transducer}
    \input{minimization-theorems}

\section{Applications to string-to-term transducers}\label{sec:applications}

    \input{downward-stt}

    \input{upward-stt}

\section{Conclusion}\label{sec:conclusion}
    \input{conclusion}

\bibliographystyle{plainurl}
\bibliography{biblio}

\newpage
\appendix
    \input{app-category}

    \input{app-model}
    \input{app-minimization}

    \input{app-applications}


\end{document}

%% file: appearence.tex
\makeatletter
\def\@LIPICSplainheadfont{\bfseries} 
\makeatother

\LIPICSorARXIV{
\newcommand\myparagraph[1]{\medskip\noindent{\sffamily\bfseries #1.}\xspace\ignorepars}
}{
\newcommand\myparagraph[1]{\subsection{#1}}  
}

\usetikzlibrary{calc}
\usetikzlibrary{patterns}
\usetikzlibrary{decorations}
\usetikzlibrary{decorations.markings}
\usetikzlibrary{decorations.pathmorphing}
\usetikzlibrary{decorations.pathreplacing}
\usetikzlibrary{arrows}
\usetikzlibrary{arrows.meta}
\usetikzlibrary{bending}
\usetikzlibrary{positioning}
\usetikzlibrary{matrix}

\tikzset{every node/.style={font=\vphantom{Ag}, minimum size=1mm, inner sep=0pt, outer sep=0pt}}
\tikzset{every path/.style={line width=0.5pt, shorten >=3pt, shorten <=3pt}}

\pgfdeclarelayer{background}
\pgfsetlayers{background,main}

\tikzstyle{reverseclip}=[insert path={(current page.north east) --
  (current page.south east) --
  (current page.south west) --
  (current page.north west) --
  (current page.north east)}
]

\tikzstyle{dot}=[draw, shape=circle, fill, style={font=\vphantom{}}]
\tikzstyle{bluebox}=[shape=rectangle, fill=cyan!50, minimum size=4mm, inner sep=2pt, outer sep=0pt]
\tikzstyle{ubrace} = [draw, thick, decoration={brace, mirror, raise=0.0cm}, decorate, 
                      every node/.style={anchor=north, yshift=-0.1cm}]

\tikzset{ftip/.tip = {>[angle'=45, bend, round, scale=3]}}
\tikzset{fftip/.tip = {>[angle'=45, bend, round, scale=3, sep=-4pt]>[angle'=45, bend, round, scale=3]}}
\tikzset{btip/.tip = {<[angle'=45, bend, round, scale=3]}}
\tikzset{bbtip/.tip = {<[angle'=45, bend, round, scale=3, sep=-4pt]<[angle'=45, bend, round, scale=3]}}

\tikzstyle{outline} = [preaction={-, draw, shorten >=2pt, shorten <=2pt, line width=3pt, white}]
\tikzstyle{arrow} = [arrows = {-ftip}]

\tikzstyle{partial arrow} = [arrows = {-ftip}]

\tikzstyle{bold arrow} = [arrows = {-ftip[scale=1.5,line width=1.5]}, line width=1.5]

\tikzstyle{over} = [preaction={-, draw, shorten >=1pt, shorten <=1pt, line width=5pt, white}]

\tikzstyle{epi arrow} = [arrows = {-fftip}]
\tikzstyle{mono arrow} = [arrows = {btip-ftip}]
\tikzstyle{reg epi arrow} = [arrows = {-fftip}]
\tikzstyle{reg mono arrow} = [arrows = {bbtip-ftip}]

\tikzstyle{loop above} = [loop, looseness=10, in=60, out=120]
\tikzstyle{loop below} = [loop, looseness=10, in=-60, out=-120]

\pgfkeys{
  /triparams/.is family, /triparams,
  w/.store in=\triW,
  h/.store in=\triH,
  pattern/.store in=\triPat,
  color/.store in=\triCol,
  draw/.store in=\triDrw,
  w=1, h=1, color=gray, draw=black,
}

\pgfkeys{
  /triparams/.is family, /triparams,
  w/.store in=\triW,
  h/.store in=\triH,
  color/.store in=\triCol,
  draw/.store in=\triDrw,
  w=0.5, h=0.5, color=nicered, draw=black,
}

\tikzset{
  pics/tri/.default={}, 
  pics/tri/.style args={#1}{
    code={
      \pgfkeys{/triparams, #1}
      \pgfmathsetmacro{\triHalf}{0.5*\triW}
      \coordinate (-baseL) at (-\triHalf,-0.5*\triH);
      \coordinate (-baseR) at ( \triHalf,-0.5*\triH);
      \coordinate (-apex)  at (0,0.5*\triH);
      \path[draw=\triDrw, thick, fill=\triCol]
        (-\triHalf,-0.5*\triH) -- (\triHalf,-0.5*\triH) -- (0,0.5*\triH) -- cycle;
    }
  },
  pics/triOutline/.default={},
  pics/triOutline/.style args={#1}{
    code={
      \pgfkeys{/triparams, w=3.6, h=2.6, draw=black, #1}
      \pgfmathsetmacro{\triHalf}{0.5*\triW}
      \coordinate (-baseL) at (-\triHalf,-0.5*\triH);
      \coordinate (-baseR) at ( \triHalf,-0.5*\triH);
      \coordinate (-apex)  at (0,0.5*\triH);
      \draw[\triDrw, thick] (-\triHalf,-0.5*\triH) -- (\triHalf,-0.5*\triH) -- (0,0.5*\triH) -- cycle;
    }
  }
}

\makeatletter
\pgfdeclareshape{register}{
  \inheritsavedanchors[from=rectangle] 
  \inheritanchorborder[from=rectangle] 
  \inheritanchor[from=rectangle]{center}
  \inheritanchor[from=rectangle]{base}
  \inheritanchor[from=rectangle]{north}
  \inheritanchor[from=rectangle]{south}
  \inheritanchor[from=rectangle]{west}
  \inheritanchor[from=rectangle]{east}
  \backgroundpath{
    \southwest \pgf@xa=\pgf@x \pgf@ya=\pgf@y
    \northeast \pgf@xb=\pgf@x \pgf@yb=\pgf@y
    \pgfpathmoveto{\pgfpoint{\pgf@xa}{\pgf@ya}}
    \pgfpathlineto{\pgfpoint{\pgf@xa}{\pgf@yb}}
    \pgfpathlineto{\pgfpoint{(\pgf@xa+\pgf@xb)/2}{\pgf@yb+0.5*(\pgf@xb-\pgf@xa)}}
    \pgfpathlineto{\pgfpoint{\pgf@xb}{\pgf@yb}}
    \pgfpathlineto{\pgfpoint{\pgf@xb}{\pgf@ya}}
    \pgfpathlineto{\pgfpoint{(\pgf@xa+\pgf@xb)/2}{\pgf@ya-0.5*(\pgf@xb-\pgf@xa)}}
    \pgfpathclose
  }
}
\pgfdeclareshape{gap}{
  \inheritsavedanchors[from=rectangle] 
  \inheritanchorborder[from=rectangle] 
  \inheritanchor[from=rectangle]{center}
  \inheritanchor[from=rectangle]{base}
  \inheritanchor[from=rectangle]{north}
  \inheritanchor[from=rectangle]{south}
  \inheritanchor[from=rectangle]{west}
  \inheritanchor[from=rectangle]{east}
  \backgroundpath{
    \southwest \pgf@xa=\pgf@x \pgf@ya=\pgf@y
    \northeast \pgf@xb=\pgf@x \pgf@yb=\pgf@y
    \pgfpathmoveto{\pgfpoint{\pgf@xa}{\pgf@ya}}
    \pgfpathlineto{\pgfpoint{\pgf@xa}{\pgf@yb}}
    \pgfpathlineto{\pgfpoint{(\pgf@xa+\pgf@xb)/2}{\pgf@yb-0.5*(\pgf@xb-\pgf@xa)}}
    \pgfpathlineto{\pgfpoint{\pgf@xb}{\pgf@yb}}
    \pgfpathlineto{\pgfpoint{\pgf@xb}{\pgf@ya}}
    \pgfpathlineto{\pgfpoint{(\pgf@xa+\pgf@xb)/2}{\pgf@ya+0.5*(\pgf@xb-\pgf@xa)}}
    \pgfpathclose
  }
}
\pgfdeclareshape{top gap}{
  \inheritsavedanchors[from=rectangle] 
  \inheritanchorborder[from=rectangle] 
  \inheritanchor[from=rectangle]{center}
  \inheritanchor[from=rectangle]{base}
  \inheritanchor[from=rectangle]{north}
  \inheritanchor[from=rectangle]{south}
  \inheritanchor[from=rectangle]{west}
  \inheritanchor[from=rectangle]{east}
  \backgroundpath{
    \southwest \pgf@xa=\pgf@x \pgf@ya=\pgf@y
    \northeast \pgf@xb=\pgf@x \pgf@yb=\pgf@y
    \pgfpathmoveto{\pgfpoint{\pgf@xa}{\pgf@ya}}
    \pgfpathlineto{\pgfpoint{\pgf@xa}{\pgf@yb}}
    \pgfpathlineto{\pgfpoint{(\pgf@xa+\pgf@xb)/2}{\pgf@yb-0*(\pgf@xb-\pgf@xa)}}
    \pgfpathlineto{\pgfpoint{\pgf@xb}{\pgf@yb}}
    \pgfpathlineto{\pgfpoint{\pgf@xb}{\pgf@ya}}
    \pgfpathlineto{\pgfpoint{(\pgf@xa+\pgf@xb)/2}{\pgf@ya+0.5*(\pgf@xb-\pgf@xa)}}
    \pgfpathclose
  }
}
\pgfdeclareshape{bottom gap}{
  \inheritsavedanchors[from=rectangle] 
  \inheritanchorborder[from=rectangle] 
  \inheritanchor[from=rectangle]{center}
  \inheritanchor[from=rectangle]{base}
  \inheritanchor[from=rectangle]{north}
  \inheritanchor[from=rectangle]{south}
  \inheritanchor[from=rectangle]{west}
  \inheritanchor[from=rectangle]{east}
  \backgroundpath{
    \southwest \pgf@xa=\pgf@x \pgf@ya=\pgf@y
    \northeast \pgf@xb=\pgf@x \pgf@yb=\pgf@y
    \pgfpathmoveto{\pgfpoint{\pgf@xa}{\pgf@ya}}
    \pgfpathlineto{\pgfpoint{\pgf@xa}{\pgf@yb}}
    \pgfpathlineto{\pgfpoint{(\pgf@xa+\pgf@xb)/2}{\pgf@yb-0.5*(\pgf@xb-\pgf@xa)}}
    \pgfpathlineto{\pgfpoint{\pgf@xb}{\pgf@yb}}
    \pgfpathlineto{\pgfpoint{\pgf@xb}{\pgf@ya}}
    \pgfpathlineto{\pgfpoint{(\pgf@xa+\pgf@xb)/2}{\pgf@ya+0*(\pgf@xb-\pgf@xa)}}
    \pgfpathclose
  }
}
\makeatother
\tikzstyle{register} = [draw, nicecyan, fill=nicecyan!50!white, shape=register, inner sep=0pt, outer sep=0pt]
\tikzstyle{gap} = [draw, nicered, fill=nicered!50!white, shape=gap, inner sep=0pt, outer sep=0pt]
\tikzstyle{top gap} = [draw, nicered, fill=nicered!50!white, shape=top gap, inner sep=0pt, outer sep=0pt]
\tikzstyle{bottom gap} = [draw, nicered, fill=nicered!50!white, shape=bottom gap, inner sep=0pt, outer sep=0pt]

\definecolor{nicecyan}{HTML}{006165}
\definecolor{nicered}{HTML}{DB3A34}
\definecolor{nicegreen}{HTML}{6D972E}

\knowledgestyle*{intro notion}{color=blue, emphasize}
\knowledgestyle*{notion}{color=black}
\knowledgestyle*{unknown (cont)}{color=Orange}

\ARXIV{
\newtheorem{theorem}{Theorem}
\newtheorem{lemma}[theorem]{Lemma}
\newtheorem{proposition}[theorem]{Proposition}
\newtheorem{corollary}[theorem]{Corollary}
\newtheorem{example}[theorem]{Example}
}
\newtheorem{myclaim}[theorem]{Claim}

\newtheorem{assumption}{Assumption}

\newcolumntype{C}[1]{>{\hfil$}p{#1}<{$\hfil}}

%% file: macros.tex
\newcommand\gabriele[2][]{{\color{black}\todo[color=green!30,#1]{{\bf Gabriele:} #2}}\ignorespaces}
\ignorespaces

\ignorespaces

\newcommand{\qsep}{\mathord{\vcenter{\hbox{\LIPICSorARXIV{\scriptsize}{\footnotesize}$\,\blacksquare$}}}}

\newcommand{\usep}{\mathord{\vcenter{\hbox{\LIPICSorARXIV{\footnotesize}{\scriptsize}$\blacktriangle$}}}}
\newcommand{\dsep}{\mathord{\vcenter{\hbox{\LIPICSorARXIV{\footnotesize}{\scriptsize}$\blacktriangledown$}}}}

\NewDocumentCommand{\Dom}{s r()}{\IfBooleanTF{#1}{
	 \operatorname{Dom}(#2)}{
	"\operatorname{Dom}(#2)@domain"}}

\NewDocumentCommand{\Cod}{s r()}{\IfBooleanTF{#1}{
	 \operatorname{Cod}(#2)}{
	"\operatorname{Cod}(#2)@codomain"}}

\NewDocumentCommand{\Rng}{s r()}{\IfBooleanTF{#1}{
	 \operatorname{Rng}(#2)}{
	"\operatorname{Rng}(#2)@range"}}

\let\rightharpoonup\to

\NewDocumentCommand{\comp}{s}{\IfBooleanTF{#1}{
	 \mkern 1mu;\mkern 2mu}{
	 "\mkern 1mu;\mkern 2mu @composition"}}

\NewDocumentCommand{\vcomp}{s}{\IfBooleanTF{#1}{
	 \mkern 1mu;\mkern 2mu}{
	 "\mkern 1mu;\mkern 2mu @pointwise composition"}}

\NewDocumentCommand{\residual}{s m m}{\IfBooleanTF{#1}{
	 #2 \mkern 1mu\backslash\mkern 2mu {#3}}{
	 "#2 \mkern 1mu\backslash\mkern 2mu {#3} @residual"}}

\NewDocumentCommand{\vresidual}{s m m}{\IfBooleanTF{#1}{
	 #2 \mkern 1mu\backslash\mkern 2mu {#3}}{
	 "#2 \mkern 1mu\backslash\mkern 2mu {#3} @residual vector"}}

\NewDocumentCommand{\inverse}{s m}{\IfBooleanTF{#1}{
	 #2^{-1}}{
	 "#2^{-1} @inverse"}}

\NewDocumentCommand{\BottomUpTerms}{s O{\Gamma*}}{\IfBooleanTF{#1}{
	 \bbT^\uparrow[#2]}{
	"\bbT^\uparrow[#2] @bottom-up term algebra"}}

\NewDocumentCommand{\TopDownTerms}{s O{\Gamma*}}{\IfBooleanTF{#1}{
	 \bbT^\downarrow[#2]}{
	"\bbT^\downarrow[#2] @top-down term algebra"}}

\NewDocumentCommand{\epi}{s o}{\IfBooleanTF{#1}
    {\IfValueTF{#2}{\stackrel{#2}{\twoheadrightarrow}}{\twoheadrightarrow}}
    {\mathbin{"\IfValueTF{#2}{\stackrel{#2}{\twoheadrightarrow}}{\twoheadrightarrow} @epi"}}}

\NewDocumentCommand{\threeheadrightarrow}{}{\twoheadrightarrow\kern-0.45em\clipbox{{0.5\width} 0pt 0pt 0pt}{\ensuremath{\rightarrow}}}
\NewDocumentCommand{\sepi}{s o}{\IfBooleanTF{#1}
    {\mathbin{\IfValueTF{#2}{\stackrel{#2}{\threeheadrightarrow}}{\threeheadrightarrow}}}
    {\mathbin{"\IfValueTF{#2}{\stackrel{#2}{\threeheadrightarrow}}{\threeheadrightarrow} @strong epi"}}}

\NewDocumentCommand{\mono}{s o}{\IfBooleanTF{#1}
    {\IfValueTF{#2}{\stackrel{#2}{\rightarrowtail}}{\rightarrowtail}}
    {\mathbin{"\IfValueTF{#2}{\stackrel{#2}{\rightarrowtail}}{\rightarrowtail} @mono"}}}

\NewDocumentCommand{\rightarrowtailtail}{}{\clipbox{0pt 0pt {0.5\width} 0pt}{\ensuremath{\rightarrowtail}}\kern-0.63em\rightarrowtail}
\NewDocumentCommand{\smono}{s o}{\IfBooleanTF{#1}
    {\IfValueTF{#2}{\stackrel{#2}{\rightarrowtailtail}}{\rightarrowtailtail}}
    {\mathbin{"\IfValueTF{#2}{\stackrel{#2}{\rightarrowtailtail}}{\rightarrowtailtail} @strong mono"}}}

\NewDocumentCommand{\iso}{s o}{\IfBooleanTF{#1}
    {\IfValueTF{#2}{\stackrel{#2}{\simeq}}{\simeq}}
    {\mathrel{"\IfValueTF{#2}{\stackrel{#2}{\simeq}}{\simeq} @isomorphism"}}}

\NewDocumentCommand{\aiso}{s o}{\IfBooleanTF{#1}
    {\IfValueTF{#2}{\stackrel{#2}{\simeq}}{\simeq}}
    {\mathrel{"\IfValueTF{#2}{\stackrel{#2}{\simeq}}{\simeq} @arrow isomorphism"}}}

\NewDocumentCommand{\viso}{s o}{\IfBooleanTF{#1}
    {\IfValueTF{#2}{\stackrel{#2}{\cong}}{\cong}}
    {\mathrel{"\IfValueTF{#2}{\stackrel{#2}{\cong}}{\cong} @vector isomorphism"}}}

\NewDocumentCommand{\repr}{s m}{\IfBooleanTF{#1}
    {r_{#2}}
    {"r_{#2} @representative"}}

\NewDocumentCommand{\isov}{s m e{^}}{\IfBooleanTF{#1}
    {j_{#2}\IfValueT{#3}{^{#3}}}
    {"j_{#2}\IfValueT{#3}{^{#3}} @witnessing isomorphism"}}

\NewDocumentCommand{\emb}{s o}{\IfBooleanTF{#1}
    {\mathbin{\IfValueTF{#2}{\stackrel{#2}{\hookrightarrow}}{\hookrightarrow}}}
    {\mathbin{"\IfValueTF{#2}{\stackrel{#2}{\hookrightarrow}}{\hookrightarrow} @subterm embedding"}}}

\NewDocumentCommand{\multiplicity}{s m}{\IfBooleanTF{#1}
    {#2}
    {"#2 @multiplicity"}}
\NewDocumentCommand{\Subterms}{s m}{\IfBooleanTF{#1}
    {#2_\downarrow}
    {"#2_\downarrow @subterms closure"}}
\NewDocumentCommand{\capacity}{s m}{\IfBooleanTF{#1}
    {\multiplicity{\Subterms*{#2}}}
    {"\multiplicity{\Subterms*{#2}} @capacity"}}

\let\oldbbD\bbD
\RenewDocumentCommand{\bbD}{s e{_^}}{\IfBooleanTF{#1}{
	 \IfValueTF{#2}{\IfValueTF{#3}{\oldbbD_{#2 \to #3}}%
	 							  {\oldbbD_{#2}}}%
	 							  {\oldbbD}}{
	 \IfValueTF{#2}{\IfValueTF{#3}{"\oldbbD_{#2 \to #3} @updates"}%
	 							  {"\oldbbD_{#2} @domain"}}%
	 							  {"\oldbbD @data structure"}}}

\NewDocumentCommand{\type}{s e{_} r()}{\IfBooleanTF{#1}{
	 \tau\IfValueT{#2}{_{#2}}(#3)}{
	"\tau\IfValueT{#2}{_{#2}}(#3) @\type*_Q(q)"}}

\NewDocumentCommand{\Data}{s e{_} r()}{\IfBooleanTF{#1}{
	 \oldbbD_{\IfValueTF{#2}{\type*_{#2}(#3)}{\type*(#3)}}}{
	"\oldbbD_{\IfValueTF{#2}{\type*_{#2}(#3)}{\type*(#3)}} @domain"}}

\NewDocumentCommand{\tildeData}{s e{_} r()}{\IfBooleanTF{#1}{
	 \tilde\oldbbD_{\IfValueTF{#2}{\type*_{#2}(#3)}{\type*(#3)}}}{
	"\tilde\oldbbD_{\IfValueTF{#2}{\type*_{#2}(#3)}{\type*(#3)}} @\tildeData*(q)"}}

\NewDocumentCommand{\emptytuple}{s}{\IfBooleanTF{#1}{
	 ()}{
	"() @empty tuple"}}

\NewDocumentCommand{\nullvalue}{s}{\IfBooleanTF{#1}{
	 \bot}{
	"\bot @\nullvalue*"}}

\let\oldbbU\bbU
\RenewDocumentCommand{\bbU}{s e{_^}}{\IfBooleanTF{#1}{
	 \oldbbU\IfValueT{#2}{_{#2}}\IfValueT{#3}{^{#3}}}{
	"\oldbbU\IfValueT{#2}{_{#2}}\IfValueT{#3}{^{#3}} @typed monoid of updates"}}

\NewDocumentCommand{\speca}{s m}{\IfBooleanTF{#1}{
	 \hat #2}{
	"\hat #2 @specification"}}

\NewDocumentCommand{\specb}{s m}{\IfBooleanTF{#1}{
	 \check #2}{
	"\check #2 @specification"}}

\NewDocumentCommand{\States}{s m}{\IfBooleanTF{#1}{
	 \langle #2\rangle}{
	"\langle #2\rangle @configuration space"}}

\NewDocumentCommand{\CStates}{s m m}{\IfBooleanTF{#1}{
	 \langle #2|#3\rangle}{
	"\langle #2|#3\rangle @constrained configuration space"}}

\NewDocumentCommand{\init}{s}{\IfBooleanTF{#1}{
	 {\triangleright}}{
	{"\triangleright @initial type"}}}
\NewDocumentCommand{\halt}{s}{\IfBooleanTF{#1}{
	 {\triangleleft}}{
	{"\triangleleft @final type"}}}

\NewDocumentCommand{\initstate}{s}{\IfBooleanTF{#1}{
	 q_{\init*}}{
	"q_{\init*} @initial state"}}
\NewDocumentCommand{\haltstate}{s}{\IfBooleanTF{#1}{
	 q_{\halt*}}{
	"q_{\halt*} @final state"}}

\NewDocumentCommand{\initdata}{s}{\IfBooleanTF{#1}{
	 d_{\init*}}{
	"d_{\init*} @initial data"}}

\NewDocumentCommand{\initspace}{s}{\IfBooleanTF{#1}{
	 \States*{\initstate*}}{
	"\States*{\initstate*} @initial configuration space"}}

\NewDocumentCommand{\haltspace}{s}{\IfBooleanTF{#1}{
	 \States*{\haltstate*}}{
	"\States*{\haltstate*} @final configuration space"}}

\let\olddelta\delta
\let\oldkappa\kappa
\let\oldchi\chi
\RenewDocumentCommand{\delta}{s e{_^}}{\IfBooleanTF{#1}{
	"\olddelta\IfValueT{#2}{_{#2}}\IfValueT{#3}{^{#3}} @induced transformation"}{
	\olddelta\IfValueT{#2}{_{#2}}\IfValueT{#3}{^{#3}}}}
\RenewDocumentCommand{\kappa}{s e{_^}}{\IfBooleanTF{#1}{
	"\oldkappa\IfValueT{#2}{_{#2}}\IfValueT{#3}{^{#3}} @induced transformation"}{
	\oldkappa\IfValueT{#2}{_{#2}}\IfValueT{#3}{^{#3}}}}
\RenewDocumentCommand{\chi}{s e{_^}}{\IfBooleanTF{#1}{
	"\oldchi\IfValueT{#2}{_{#2}}\IfValueT{#3}{^{#3}} @induced transformation"}{
	\oldchi\IfValueT{#2}{_{#2}}\IfValueT{#3}{^{#3}}}}
\NewDocumentCommand{\deltainit}{}{\delta_{\init*}}
\NewDocumentCommand{\deltahalt}{}{\delta_{\halt*}}
\NewDocumentCommand{\kappainit}{}{\kappa_{\init*}}
\NewDocumentCommand{\kappahalt}{}{\kappa_{\halt*}}
\NewDocumentCommand{\chiinit}{}{\chi_{\init*}}
\NewDocumentCommand{\chihalt}{}{\chi_{\halt*}}

\NewDocumentCommand{\Sol}{s r()}{\IfBooleanTF{#1}{
	 \operatorname{Sol}(#2)}{
	"\operatorname{Sol}(#2) @solution set"}}

\NewDocumentCommand{\Solbig}{s r()}{\IfBooleanTF{#1}{
	 \operatorname{Sol}\big(#2\big)}{
	"\operatorname{Sol}\big(#2\big) @solution set"}}

\NewDocumentCommand{\eqdata}{s e{_}}{\IfBooleanTF{#1}{
	\dsim_{#2}}{
	\mathrel{"\dsim_{#2} @equivalence@data type"}}}

\NewDocumentCommand{\neqdata}{s e{_}}{\IfBooleanTF{#1}{
	\not\dsim_{#2}}{
	\mathrel{"\not\dsim_{#2} @equivalence@data type"}}}

\NewDocumentCommand{\eqdataclass}{s e{_} O{}}{\IfBooleanTF{#1}{
	 [#3]_{\eqdata*_{#2}}}{
	"[#3]_{\eqdata*_{#2}} @equivalence@data type"}}

\NewDocumentCommand{\quotient}{s m m}{\IfBooleanTF{#1}{
	 #2_{/#3}}{
	 #2_{/#3}}}

\NewDocumentCommand{\cl}{s r()}{\IfBooleanTF{#1}{
	 \operatorname{cl}(#2)}{
	"\operatorname{cl}(#2) @closure"}}
\NewDocumentCommand{\bigcl}{s r()}{\IfBooleanTF{#1}{
	 \operatorname{cl}\big(#2\big)}{
	"\operatorname{cl}\big(#2\big) @closure"}}

\NewDocumentCommand{\clS}{s r()}{\IfBooleanTF{#1}{
	 \operatorname{cl}(#2)}{
	"\operatorname{cl}(#2) @space closure"}}
\NewDocumentCommand{\bigclS}{s r()}{\IfBooleanTF{#1}{
	 \operatorname{cl}\big(#2\big)}{
	"\operatorname{cl}\big(#2\big) @space closure"}}

\NewDocumentCommand{\identity}{m}{\operatorname{id}_{#1}}

\NewDocumentCommand{\lef}{s}{\IfBooleanTF{#1}{
	 \mathrel{\sqsubseteq}}{
	 \mathrel{"\sqsubseteq @preorder on factorizations"}}}

\NewDocumentCommand{\Hankel}{s e{_}}{\IfBooleanTF{#1}
    {H_{\IfValueTF{#2}{#2}{}}}
    {"H_{\IfValueTF{#2}{#2}{}} @Hankel matrix"}}

\NewDocumentCommand{\Divisors}{s e{_}}{\IfBooleanTF{#1}
    {G_{\IfValueTF{#2}{#2}{}}}
    {"G_{\IfValueTF{#2}{#2}{}} @divisor vector"}}

\NewDocumentCommand{\Residuals}{s e{_}}{\IfBooleanTF{#1}
    {R_{\IfValueTF{#2}{#2}{}}}
    {"R_{\IfValueTF{#2}{#2}{}} @residual matrix"}}

\NewDocumentCommand{\Partials}{s e{_}}{\IfBooleanTF{#1}
    {\partial\IfValueT{#2}{_{#2}}}
    {"\partial\IfValueT{#2}{_{#2}} @derivative"}}

\NewDocumentCommand{\myhilleq}{s e{_}}{\IfBooleanTF{#1}
    {\sim\IfValueT{#2}{_{#2}}}
    {\mathrel{"\sim\IfValueT{#2}{_{#2}} @Myhill-Nerode equivalence"}}}

\NewDocumentCommand{\CData}{s m m}{\IfBooleanTF{#1}{
	 {#2/#3}}{
	 {"#2/#3 @induced constrained set"}}}

\NewDocumentCommand{\hattau}{s d()}{\IfBooleanTF{#1}{
	 \hat\tau\IfValueT{#2}{(#2)}}{
	 "\hat\tau\IfValueT{#2}{(#2)} @\hattau*"}}

\NewDocumentCommand{\bartau}{s d()}{\IfBooleanTF{#1}{
	 \bar\tau\IfValueT{#2}{(#2)}}{
	 "\bar\tau\IfValueT{#2}{(#2)} @\bartau*"}}

\NewDocumentCommand{\myhillclass}{s e{_} m}{\IfBooleanTF{#1}
    {[#3]_{\myhilleq*[\IfValueTF{#2}{#2}{\tau}]}}
    {"[#3]_{\myhilleq*[\IfValueTF{#2}{#2}{\tau}]} @Myhill-Nerode equivalence"}}

\NewDocumentCommand{\nerodeeq}{s e{_}}{\IfBooleanTF{#1}
    {\dsim_{\IfValueTF{#2}{#2}{A}}}
    {\mathrel{"\dsim_{\IfValueTF{#2}{#2}{A}} @Nerode equivalence"}}}

\NewDocumentCommand{\nerodeclass}{s e{_} m}{\IfBooleanTF{#1}
    {[#3]_{\nerodeeq*[\IfValueTF{#2}{#2}{A}]}}
    {"[#3]_{\nerodeeq*[\IfValueTF{#2}{#2}{A}]} @Nerode equivalence"}}

\NewDocumentCommand{\Min}{s e{_}}{\IfBooleanTF{#1}
    {M_{\IfValueTF{#2}{#2}{\tau}}}
    {"M_{\IfValueTF{#2}{#2}{\tau}} @minimal transducer for $\tau$"}}

\NewDocumentCommand{\controleq}{s}{\IfBooleanTF{#1}
    {\stackrel{\scalebox{0.5}{ctl}}{=}}
    {\mathrel{"\stackrel{\scalebox{0.5}{ctl}}{=} @\controleq*"}}}

\NewDocumentCommand{\Lower}{s r()}{\IfBooleanTF{#1}
    {\mathop{L}(#2)}
    {"\mathop{L}(#2) @\Lower*(U)"}}

\NewDocumentCommand{\divisor}{s e{_}}{\IfBooleanTF{#1}
    {\mathrel{\preceq\IfValueT{#2}{_{#2}}}}
    {\mathrel{"\preceq\IfValueT{#2}{_{#2}} @\divisor*"}}}

\NewDocumentCommand{\initialtransd}{s}{\IfBooleanTF{#1}
    {I}
    {"I @initial transducer"}}

\NewDocumentCommand{\initialmorph}{s m}{\IfBooleanTF{#1}
    {\mathop{!}\nolimits^{#2}}
    {"\mathop{!}\nolimits^{#2} @initial arrow"}}
\NewDocumentCommand{\initialmorphprime}{m}{\mathop{!!}\nolimits^{#1}}
\NewDocumentCommand{\specainitialmorph}{s m}{\IfBooleanTF{#1}
    {\mathop{\hat!}\nolimits^{#2}}
    {"\mathop{\hat!}\nolimits^{#2} @specification"}}
\NewDocumentCommand{\specbinitialmorph}{s m}{\IfBooleanTF{#1}
    {\mathop{\check!}\nolimits^{#2}}
    {"\mathop{\check!}\nolimits^{#2} @specification"}}

\NewDocumentCommand{\finaltransd}{s}{\IfBooleanTF{#1}
    {M}
    {"M @final transducer"}}

\NewDocumentCommand{\finalmorph}{s m}{\IfBooleanTF{#1}
    {\mathop{!}\nolimits_{#2}}
    {"\mathop{!}\nolimits_{#2} @final arrow"}}
\NewDocumentCommand{\finalmorphprime}{m}{\mathop{!!}\nolimits_{#1}}
\NewDocumentCommand{\specafinalmorph}{s m}{\IfBooleanTF{#1}
    {\mathop{\hat!}\nolimits_{#2}}
    {"\mathop{\hat!}\nolimits_{#2} @specification"}}
\NewDocumentCommand{\specbfinalmorph}{s m}{\IfBooleanTF{#1}
    {\mathop{\check!}\nolimits_{#2}}
    {"\mathop{\check!}\nolimits_{#2} @specification"}}

\NewDocumentCommand{\Reach}{s r()}{\IfBooleanTF{#1}
    {\operatorname{Reach}(#2)}
    {"\operatorname{Reach}(#2) @pruned transducer"}}

\NewDocumentCommand{\Obs}{s r()}{\IfBooleanTF{#1}
    {\operatorname{Obs}(#2)}
    {"\operatorname{Obs}(#2) @quotiented transducer"}}

\NewDocumentCommand{\equ}{s O{U}}{\IfBooleanTF{#1}{
	 \mathrel{=_{#2}}}{
	 \mathrel{"=_{#2} @\equ*[U]"}}}

\NewDocumentCommand{\nequ}{s O{U}}{\IfBooleanTF{#1}{
	 \mathrel{\neq_{#2}}}{
	 \mathrel{"\neq_{#2} @\equ*[U]"}}}

\NewDocumentCommand{\equclass}{s O{U} m}{\IfBooleanTF{#1}{
	 [#3]_{\equ*[#2]}}{
	"[#3]_{\equ*[#2]} @\equ*[U]"}}

\NewDocumentCommand{\equclassbig}{s O{U} m}{\IfBooleanTF{#1}{
	 \left[#3\right]_{\equ*[#2]}}{
	"\left[#3\right]_{\equ*[#2]} @\equ*[U]"}}

\NewDocumentCommand{\dotGamma}{s}{\IfBooleanTF{#1}{
	 \dot\Gamma}{
	"\dot\Gamma @alphabet with inverses"}}

\NewDocumentCommand{\eqg}{s}{\IfBooleanTF{#1}{
	 \mathrel{\dot=}}{
	 \mathrel{"\dot= @identity of the free group"}}}
\NewDocumentCommand{\neqg}{s}{\IfBooleanTF{#1}{
	 \mathrel{\dot\neq}}{
	 \mathrel{"\dot\neq @identity of the free group"}}}

\NewDocumentCommand{\starg}{s}{\IfBooleanTF{#1}{
	 \circledast}{
	{"\circledast @variant of the Kleene star"}}}

\let\oldmodels\models
\RenewDocumentCommand{\models}{s}{\IfBooleanTF{#1}{
	 \mathbin{\oldmodels}}{
	 \mathbin{"\oldmodels @models"}}}

\NewDocumentCommand{\rank}{s m}{\IfBooleanTF{#1}{
	 \#{#2}}{
	"\#{#2}@rank"}}

\NewDocumentCommand{\inter}{s m m}{\IfBooleanTF{#1}{
	 {#2} \mathop{\sslash} {#3}}{
	"{#2} \mathop{\sslash} {#3}@interleaving"}}

\NewDocumentCommand{\effect}{s m}{\IfBooleanTF{#1}{
	 \operatorname{effect}(#2)}{
	"\operatorname{effect}(#2)@effect"}}

\NewDocumentCommand{\dual}{s m}{\IfBooleanTF{#1}{
	 #2^\circ}{
	"#2^\circ @dual update"}}

\NewDocumentCommand{\forwardcomp}{s}{\IfBooleanTF{#1}{
	 \normalfont\textbf{(F)}}{
	"\normalfont\textbf{(F)}@forward compatibility"}}

\NewDocumentCommand{\backwardcomp}{s}{\IfBooleanTF{#1}{
	 \normalfont\textbf{(B)}}{
	"\normalfont\textbf{(B)}@backward compatibility"}}

\NewDocumentCommand{\compatibility}{s}{\IfBooleanTF{#1}{
	 \normalfont\textbf{(F-B)}}{
	"\normalfont\textbf{(F-B)}@compatibility"}}

\newcommand\inv[1][\cA]{"S_{#1}@\inv"}
\WithSuffix\newcommand\inv*[1][\cA]{S_{#1}}

\newcommand\idupdate[1][k]{"\mathrm{id}_{#1}@\idupdate"}
\WithSuffix\newcommand\idupdate*[1][k]{\mathrm{id}_{#1}}

\NewDocumentCommand{\subst}{s o m}{%
  \IfValueTF{#2}{%
    \IfBooleanTF{#1}{ [#2/#3]}{%
    				 "[#2/#3] @substitution"}%
  }{%
    \IfBooleanTF{#1}{ [#3]}{%
    				 "[#3] @linear substitution"}%
  }%
}
\NewDocumentCommand{\bigsubst}{s o m}{%
  \IfValueTF{#2}{%
    \IfBooleanTF{#1}{ \left[#2/#3\right]}{%
    				 "\left[#2/#3\right] @substitution"}%
  }{%
    \IfBooleanTF{#1}{ \left[#3\right]}{%
    				 "\left[#3\right] @linear substitution"}%
  }%
}

\NewDocumentCommand{\opentag}{m}{\underline{#1}}
\NewDocumentCommand{\closetag}{m}{\overline{#1}}

\newcommand\enc{\mathrm{enc}}

%% file: notions.tex
\knowledge{notion}
    | domain@function
    | domains@function
    | domain of definition
    | \Dom*(f)

\knowledge{notion}
    | source
    | sources
    | source set
    | source sets

\knowledge{notion}
    | codomain
    | codomains
    | \Cod*(f)

\knowledge{notion}
    | range
    | ranges
    | \Rng*(f)

\knowledge{notion}
    | functional composition
    | composition
    | compositions
    | composing
    | composed
    | compose
    | f \comp* g

\knowledge{notion}
    | factorization
    | factorizations
    | factorizes
    | factorize
    | factorized
    | factorizing

\knowledge{notion}
    | input alphabet
    | input alphabets
    | \Sigma
    | \Sigma*

\knowledge{notion}
    | output alphabet
    | output alphabets
    | \Gamma
    | \Gamma*

\knowledge{notion}
    | word substitution
    | word substitutions
    | substitution@word
    | substitutions@word

\knowledge{notion}
    | term
    | terms
    | Terms
    | free term
    | free terms
    | Free terms
\knowledge{notion}
    | linear term
    | linear terms
    | Linear terms
    | linear@term
\knowledge{notion}
    | ground term
    | ground terms
    | Ground terms
    | term@ground
    | terms@ground
\knowledge{notion}
    | first-order term
    | first-order terms
    | First-order terms
    | term@first-order
    | terms@first-order
\knowledge{notion}
    | second-order term
    | second-order terms
    | Second-order terms
    | term@second-order
    | terms@second-order
\knowledge{notion}
    | hedge
    | hedges

\knowledge{notion}
    | linear context
    | linear contexts
    | Linear contexts

\knowledge{notion}
    | t\subst*[\bar x]{\bar u}
    | first-order substitution
    | first-order substitutions
    | substitution
    | substitutions

\knowledge{notion}
    | t\subst*{\bar u}
    | linear substitution
    | linear substitutions
    | substitution@linear
    | substitutions@linear

\knowledge{notion}
    | category
    | categories
\knowledge{notion}
    | arrow
    | arrows
\knowledge{notion}
    | object
    | objects
\knowledge{notion}
    | identity
    | identities
\knowledge{notion}
    | epi 
    | epis
    | Epis
    | epi-morphism
    | epi-morphisms
    | Epi-morphisms
    | \epi*
    | \epi*[f]
\knowledge{notion}
    | mono
    | monos
    | Monos
    | mono-morphism
    | mono-morphisms
    | Mono-morphisms
    | \mono*
    | \mono*[g]
\knowledge{notion}
    | strong mono
    | strong monos
    | Strong monos
    | strong mono-morphism
    | strong mono-morphisms
    | Strong mono-morphisms
    | strong@mono
    | \smono*
    | \smono*[g]
\knowledge{notion}
    | object isomorphism
    | object isomorphisms
    | isomorphism
    | isomorphisms
    | Isomorphisms
    | isomorphic
    | invertible
    | iso
    | isos
    | \iso*
    | \iso*[i]
\knowledge{notion}
    | inverse
    | inverses
    | \inverse*{i}
\knowledge{notion}
    | arrow isomorphism
    | arrow isomorphisms
    | isomorphism@arrow
    | isomorphisms@arrow
    | isomorphic@arrow
    | \aiso*
    | \aiso*[i]
    | \aiso*[i,j]
\knowledge{notion}
    | vector isomorphism
    | isomorphism@vector
    | \viso*
    | \viso*[j]
\knowledge{notion}
    | non-trivial
\knowledge{notion}
    | representative
    | representatives
    | \repr*{w}
\knowledge{notion}
    | witnessing isomorphism
    | \isov*{w}
\knowledge{notion}
    | diagonal fill-in property

\knowledge{notion}
    | quotient
    | quotients
    | quotiented
\knowledge{notion}
    | subobject
    | subobjects
\knowledge{notion}
    | subquotient
    | subquotients
\knowledge{notion}
    | algebraically minimal
    | algebraic minimization
    | algebraic minimality
    | minimal
    | minimality
    | minimization
    | minimizing

\knowledge{notion}
    | strong factorization
    | strong factorizations
    | strong@factorization
\knowledge{notion}
    | image
    | images
    | Images
\knowledge{notion}
    | initial image
    | initial images
    | Initial images

\knowledge{notion}
    | initial
    | initial object
    | Initial
    | Initial objects
\knowledge{notion}
    | initial arrow
    | initial arrows
    | initial@arrow
    | \initialmorph*{A}
    | initial morphism
\knowledge{notion}
    | final
    | final object
    | Final
    | Final objects
\knowledge{notion}
    | final arrow
    | final arrows
    | final@arrow
    | \finalmorph*{A}
    | final morphism
\knowledge{notion}
    | initial transducer
    | \initialtransd*
\knowledge{notion}
    | final transducer
    | \finaltransd*
\knowledge{notion}
    | \Reach*(A)
    | pruned transducer
    | prune
    | pruned
    | pruning
\knowledge{notion}
    | \Obs*(A)
    | \Obs*(\Reach*(A))
    | quotiented transducer
    | quotient@transducer
    | quotiented@transducer
    | quotienting@transducer

\knowledge{notion}
    | data structure
    | data structures
    | \bbD*

\knowledge{notion}
    | data type
    | data types
    | type 
    | types

\knowledge{notion}
    | domain
    | domains
    | data domain
    | data domains
    | \bbD*_\tau

\knowledge{notion}
    | update
    | updates
    | Updates
    | updated
    | updating
    | \bbD*_\alpha^\beta

\knowledge{notion}
    | \nullvalue*
\knowledge{notion}
    | empty tuple
    | \emptytuple*

\knowledge{notion}
    | string register algebra
    | String register algebra
\knowledge{notion}
    | polynomial register algebra
    | Polynomial register algebra
\knowledge{notion}
    | affine register algebra
    | Affine register algebra
\knowledge{notion}
    | linear register algebra
    | Linear register algebra
\knowledge{notion}
    | leaf substitution algebra
    | Leaf substitution algebra
\knowledge{notion}
    | free term algebra
    | Free term algebra

\knowledge{notion}
    | copyful update
    | copyful updates
    | copyful
\knowledge{notion}
    | copyless update
    | copyless updates
    | copyless
\knowledge{notion}
    | non-erasing update
    | non-erasing updates
    | non-erasing
    | erasing update
    | erasing updates
\knowledge{notion}
    | linear update
    | linear updates
    | linear string updates
    | linear

\knowledge{notion}
    | bottom-up term algebra
    | \BottomUpTerms*

\knowledge{notion}
    | top-down term algebra
    | \TopDownTerms*

\knowledge{notion}
    | arity
    | arities
\knowledge{notion}
    | assignment
    | assignments
    | Assignments
    | memory
    | memories

\knowledge{notion}
    | \bbU*
    | \bbU*_n
    | \bbU*^m
    | \bbU*_n^m
    | typed monoid of updates
    | typed monoid
    | monoid of updates

\knowledge{notion}
    | register
    | registers
\knowledge{notion}
    | control state
    | control states
    | state
    | states
\knowledge{notion}
    | \type*_Q(q)
\knowledge{notion}
    | \Data*_Q(q)
    | \Data*(q)
    | \Data*(q')
\knowledge{notion}
    | \tildeData*{q}
    | \tildeData*{q'}
\knowledge{notion}
    | configuration space
    | configuration spaces
    | configuration space generated by $Q$
    | generated configuration space
    | generated@configuration space
    | generating@configuration space
    | space
    | spaces
    | unconstrained configuration space
    | unconstrained configuration spaces
    | \States*{Q}

\knowledge{notion}
    | configuration
    | configurations
\knowledge{notion}
    | transformation specified by
    | transformation specified
    | transformations specified by
    | transformations specified 
    | specification of the transformation
    | specifications of the transformations
    | specified by
    | specified
    | specify
    | specification
    | Specification
    | specifications
    | specification of
    | specifications of
    | transformation
    | transformations
    | Transformations
    | transform
    | transforms
    | transformed
    | \speca* f
    | \specb* f

\knowledge{notion}
    | initial type
    | \init*
\knowledge{notion}
    | final type
    | \halt*
\knowledge{notion}
    | initial data
    | \initdata*
\knowledge{notion}
    | initial control state
    | initial state
    | \initstate*
\knowledge{notion}
    | final control state
    | final state
    | \haltstate*
\knowledge{notion}
    | initial configuration space
    | \initspace*
\knowledge{notion}
    | final configuration space
    | \haltspace*
    
\knowledge{notion}
    | streaming transducer
    | streaming transducers
    | Streaming transducers
    | deterministic streaming transducer
    | deterministic streaming transducers
    | Deterministic streaming transducers
    | transducer
    | transducers
    | Transducers
\knowledge{notion}
    | subsequential transducer
    | subsequential transducers
    | Subsequential transducers
\knowledge{notion}
    | finite transducer
    | finite transducers
    | Finite transducers
    | finite@transducer
\knowledge{notion}
    | regular look-ahead

\knowledge{notion}
    | initial transformation
    | initial transformations
\knowledge{notion}
    | internal transformation
    | internal transformations
\knowledge{notion}
    | final transformation
    | final transformations
\knowledge{notion}
    | finite constraint
    | finite@constraint
\knowledge{notion}
    | induced transformation
    | induced transformations
    | transformation induced 
    | transformations induced
    | transformation@induced
\knowledge{notion}
    | output
    | outputs
    | output value
    | output values
\knowledge{notion}
    | transduction realized
    | transduction
    | transductions
    | realizable transduction
    | realizable transductions
    | realizes
    | realizes the transduction
    | realize
    | realized
    | realizing
    | realizable
    | \out
\knowledge{notion}
    | equivalent
    | equivalence
\knowledge{notion}
    | empty at

\knowledge{notion}
    | polynomial automaton
    | polynomial automata
    | Polynomial automata
\knowledge{notion}
    | polynomial map
    | polynomial maps
    | polynomial@maps
\knowledge{notion}
    | linear map
    | linear maps
\knowledge{notion}
    | Hybrid-set-vector automata
    | hybrid-set-vector automata
    | hybrid-set-vector automaton
\knowledge{notion}
    | Streaming string-to-string transducers
    | streaming string-to-string transducers
    | streaming string-to-string transducer
    | string-to-string transducers
    | string-to-string transducer
    | SSTs
    | SST
\knowledge{notion}
    | Sequential transducers
    | sequential transducers
    | sequential transducer
\knowledge{notion}
    | One-way transducers
    | one-way transducers
    | one-way transducer
\knowledge{notion}
    | Streaming string-to-term transducers
    | streaming string-to-term transducers
    | streaming string-to-term transducer
    | Streaming string-to-tree transducers
    | streaming string-to-tree transducers
    | streaming string-to-tree transducer
    | string-to-term transducers
    | string-to-term transducer
    | string-to-tree transducers
    | string-to-tree transducer
    | STTs
    | STT
\knowledge{notion}
    | downward string-to-term transducer
    | downward string-to-term transducers
    | Downward string-to-term transducers
    | Downward STTs
    | downward STTs
    | downward STT
    | downward@STT
\knowledge{notion}
    | upward string-to-term transducer
    | upward string-to-term transducers
    | Upward string-to-term transducers
    | Upward STTs
    | upward STTs
    | upward STT
    | upward@STT

\knowledge{notion}
    | transducer morphism
    | transducer morphisms
    | Transducer morphisms
    | transducer morphism from $A$ to $B$
    | morphism
    | morphisms

\knowledge{notion}
    | constraint
    | constraints
    | Constraints
\knowledge{notion}
    | \Sol*(\gamma)
    | \bbD*_\gamma
    | solution set
    | solution sets
    | Solution set
    | solution
    | solutions
\knowledge{notion}
    | constrained type
    | constrained types
    | Constrained types
    | type@constrained
    | types@constrained
\knowledge{notion}
    | constrained domain
    | constrained domains
    | Constrained domains
    | domain@constrained
    | domains@constrained
    | constrained codomain
    | constrained codomains
\knowledge{notion}
    | finitary
    | finitary constrained domains
    | finitary constrained domain
\knowledge{notion}
    | presentation@constraint

\knowledge{notion}
    | equivalence@data type
    | \eqdata*_{E'}
\knowledge{notion}
    | \cl*(D)
    | closure of $D$
    | closure
    | closures
    | closure operator
    | closed
\knowledge{notion}
    | extensive
    | extensiveness
\knowledge{notion}
    | monotone
    | monotonicity
\knowledge{notion}
    | idempotent
    | idempotency
\knowledge{notion}
    | \clS*(S)
    | space closure
    | space closures
    | closure@space
    | closures@space
    | closed@space
    | closure operator@space
    | closure of $S$

\knowledge{notion}
    | update@constrained
    | constrained update
    | constrained updates
\knowledge{notion}
    | presentation
    | presentations
    | presented by
    | presented
    | presenting
\knowledge{notion}
    | \forwardcomp*
    | forward compatibility
    | forward compatible
    | forward compatibility between register constraints
    | compatibility
\knowledge{notion}
    | \backwardcomp*
    | backward compatibility
    | backward compatible
    | backward compatibility between gap constraints
\knowledge{notion}
    | assignment@constrained
    | constrained assignment
    | constrained assignments
    | Constrained assignments
\knowledge{notion}
    | configuration space@constrained 
    | space@constrained 
    | spaces@constrained
    | constrained configuration space
    | constrained configuration spaces
    | Constrained configuration spaces
    | constrained space
\knowledge{notion}
    | configuration@constrained
    | configurations@constrained
    | constrained configuration
    | constrained configurations
    | Constrained configurations
\knowledge{notion}
    | transformation@constrained
    | constrained transformation
    | constrained transformations
    | Constrained transformations
    | specifies@constrained transformation
    | specified by@constrained transformation
    | specified@constrained transformation
    | specification@constrained transformation
    | specifications@constrained tranformation
\knowledge{notion}
    | transducer@constrained
    | transducers@constrained
    | constrained transducer
    | constrained transducers
    | Constrained transducers
\knowledge{notion}
    | constrained transducer morphism
    | transducer morphism@constrained
    | transducer morphisms@constrained
    | Transducer morphisms@constrained
\knowledge{notion}
    | initial configuration space@constrained
\knowledge{notion}
    | final configuration space@constrained
\knowledge{notion}
    | initial transformation@constrained
    | constrained initial transformation
    | constrained initial transformations
\knowledge{notion}
    | internal transformation@internal
    | constrained internal transformation
    | constrained internal transformations
    | transition
    | transitions
\knowledge{notion}
    | final transformation@constrained
    | constrained final transformation
    | constrained final transformations

\knowledge{notion}
    | inner divisor
    | inner divisors
    | (inner) divisor
    | divisor
    | divisors
    | divide
    | divides
    | divided
    | dividing
    | factor through
    | factors through
\knowledge{notion}
    | (outer) residual
    | (outer) residuals
    | outer residual
    | outer residuals
    | residual
    | residuals
    | \residual*{g}{u}
\knowledge{notion}
    | common divisor
    | common divisors
    | common inner divisor
    | common (inner) divisor
\knowledge{notion}
    | epi divisor
    | epi divisors
\knowledge{notion}
    | epi common divisor
    | epi common divisors
    | (epi) common divisor
    | (epi) common divisors
\knowledge{notion}
    | residual vector
    | residual vectors
    | \vresidual*{g}{U}
\knowledge{notion}
    | greatest common divisor
    | greatest common divisors
    | Greatest common divisors
    | greatest
    | GCD
    | GCDs
\knowledge{notion}
    | epi greatest common divisor
    | epi greatest common divisors
    | Epi greatest common divisors
    | EGCD
    | EGCDs
\knowledge{notion}
    | standard data structure
    | standard
\knowledge{notion}
    | jointly coprime
    | joint coprimality
    | coprimality

\knowledge{notion}
    | copy
    | copies
\knowledge{notion}
    | disjoint
\knowledge{notion}
    | subterm embedding
    | subterm embeddings
    | embedding@subterm
    | embeddings@subterm
    | embeds@subterm
    | embedded@subterm
    | \emb*
\knowledge{notion}
    | origin
    | origins
    | originates
    | originating
\knowledge{notion}
    | \Lower*(\cU)
\knowledge{notion}
    | ideal
\knowledge{notion}
    | downward closed
\knowledge{notion}
    | upward directed
\knowledge{notion}
    | conflict
    | conflicts
    | violation
    | violating pair
    | violating pairs
    | pair@violation
    | pairs@violation
\knowledge{notion}
    | capacity
    | capacities
\knowledge{notion}
    | subterms closure
\knowledge{notion}
    | multiplicity
    | multiplicities

\knowledge{notion}
    | vector
    | vectors
\knowledge{notion}
    | sub-vector
    | sub-vectors
    | U[a-]
    | V[a-]
\knowledge{notion}
    | matrix
    | matrices
\knowledge{notion}
    | column vector
    | column vectors
\knowledge{notion}
    | complete
    | completeness
\knowledge{notion}
    | Hankel matrix
    | \Hankel*

\knowledge{notion}
    | pointwise composition
    | pointwise compositions
    | composition@pointwise
    | compositions@pointwise
    | composes@pointwise
    | compose@pointwise
    | composed@pointwise
    | composing@pointwise
\knowledge{notion}
    | common factorization
    | common factorizations
    | commonly factorize
    | commonly factorizes
    | commonly factorized
    | commonly factorizing
    | factorization@common
    | factorizations@common
    | factorize@common
    | factorizes@common
    | factorized@common
    | factorizing@common

\knowledge{notion}
    | preorder on common factorizations
    | \lef*
\knowledge{notion}
    | witness
    | witnessed
    | witnesses
    | witnessing
\knowledge{notion}
    | equivalent common factorizations
    | equivalent@common factorization
    | equivalence@common factorization
\knowledge{notion}
    | greatest common factorization
    | greatest common factorizations
    | Greatest common factorizations
    | GCF
    | GCFs

\knowledge{notion}
    | divisor vector
    | \Divisors*
    | \Divisors*[w]
\knowledge{notion}
    | residual matrix
    | \Residuals*
    | \Residuals*[w,-]
\knowledge{notion}
    | derivative
    | \Partials*_a
    | \Partials*_a[w]
\knowledge{notion}
    | Myhill-Nerode equivalence
    | \myhilleq*
    | equivalence@Myhill-Nerode
    | equivalent@Myhill-Nerode

\knowledge{notion}
    | divisor-equivalent
    | divisor-equivalence
    | equivalent@divisor
    | equivalence@divisor

\knowledge{notion}
    | \hattau*
    | \hattau*(w)
\knowledge{notion}
    | \bartau*
    | \bartau*(w,i)
\knowledge{notion}
    | \Min*
    | minimal transducer for $\tau$

\knowledge{notion}
    | \controleq*
\knowledge{notion}
    | Nerode equivalence of $A$
    | Nerode equivalence
    | \nerodeeq*
    | equivalence@Nerode
    | equivalent@Nerode

\knowledge{notion}
    | generalize
    | generalizes
    | generalized
    | generalization
    | generalizations

\knowledge{notion}
    | least general generalization
    | least general generalizations
    | anti-unifier
    | anti-unifiers
    | anti-unification

\knowledge{notion}
    | unifier
    | unifiers
\knowledge{notion}
    | parameter
    | parameters
\knowledge{notion}
    | most general
    | most general unifier
    | most general unifiers
    | MGU
    | MGUs
\knowledge{notion}
    | unified
    | unifying
    | unifies
\knowledge{notion}
    | maximally deconstructed

\knowledge{notion}
    | monoidal transducer
    | monoidal transducers
    | Monoidal transducers

\knowledge{notion}
    | cost register automaton
    | cost register automata
    | Cost register automata
    | CRA

%% file: frontmatter.tex
\LIPICSorARXIV{
    \title{Minimization of Streaming Transducers}
    
    \author{Christian Bianchini}{Università degli Studi di Udine, Italy}{bianchini.christian@spes.uniud.it}{https://orcid.org/0009-0006-2848-2517}{}
    \author{Gabriele Puppis}{Università degli Studi di Udine, Italy}{gabriele.puppis@uniud.it}{https://orcid.org/0000-0001-9831-3264}{}
    
    \authorrunning{G. Puppis and C. Bianchini}

    \Copyright{Gabriele Puppis and Christian Bianchini}
    \ccsdesc[500]{Theory of computation~Transducers}
    \keywords{Streaming transducers, Minimization}

    \funding{This work was partially supported by INdAM - GNCS Project (CUP: E53C25002010001).}

    \EventEditors{Claudia Faggian and Joost-Pieter Katoen}
    \EventNoEds{2}
    \EventLongTitle{41st Annual Symposium on Logic in Computer Science (LICS 2026)}
    \EventShortTitle{LICS 2026}
    \EventAcronym{LICS}
    \EventYear{2026}
    \EventDate{July 20--23, 2026}
    \EventLocation{Lisbon, Portugal}
    \EventLogo{}
    \SeriesVolume{380}
    \ArticleNo{27}

    \relatedversiondetails{Full Version}{http://arxiv.org/abs/2605.11190}

    \nolinenumbers
}{
    \title{{\bf Minimization of streaming transducers}%
           \thanks{This work was partially supported by INdAM - GNCS Project (CUP: E53C25002010001).}
           \\[0.5em]
           \large\bf Extended Abstract}
    
    \author{
      Christian Bianchini\\
      Udine University\\
      \texttt{bianchini.christian@spes.uniud.it}
      \and
      Gabriele Puppis\\
      Udine University\\
      \texttt{gabriele.puppis@uniud.it}
    }
    
    \date{}
}

%% file: abstract.tex

\begin{abstract}
We provide general criteria for the existence of "minimal" models of
"streaming transducers", namely devices that read an input word and produce an
"output value" by iteratively updating an internal memory. This abstract model
subsumes classical (sub)"sequential transducers" (Sch{\"u}tzenberger),
"streaming string-to-string transducers" (Alur--\v{C}ern{\'y}), 
"polynomial automata" (Benedikt et al.), and variants of 
"streaming string-to-tree transducers" (Alur--D'Antoni). 
We then instantiate these criteria to obtain effective "minimization" results 
for variants of the latter model, where outputs are "terms" constructed 
incrementally by extending (tuples of) "terms" either at the leaves or at the roots.
\end{abstract}

%% file: intro.tex

We study the existence of "minimal" models of "streaming transducers".
By "streaming transducer" we generically mean a device that consumes an
input word and deterministically produces a corresponding "output" of a 
certain type by updating data in its internal memory. 
Examples of such models are 
the "sequential transducers" \cite{Schutzenberger77},
the "streaming string-to-string transducers" \cite{AlurCerny10}, 
the "polynomial automata" \cite{BenediktDSW17}, and
the "streaming string-to-tree transducers" \cite{AlurDAntoni17,BoiretPS18fsttcs}.

A classical notion of memory that is often used, especially in the previously
mentioned models, consists of a finite number of registers that can be assigned 
numbers, strings, terms, etc. 
Registers can be manipulated using specific operations, that we call 
"updates" (e.g.~sums and products of numbers, concatenations of strings, 
substitutions of terms).
An important aspect is that 
\emph{the memory of our "transducer" model is ``write-only''}: it serves 
exclusively as a mechanism for generating "outputs" and is separated from the
machine's control flow.
\ARXIV{
For example, the registers of
finite-memory automata \cite{KaminskiFrancez94} do not fit this memory model.
}

We will assume that the set of "updates" available for manipulating the internal memory
of a "transducer" is closed under "composition"; in particular, "updates" will 
naturally form a monoid structure.
\AP
In this sense, our model of "transducer" is also similar to 
the ""monoidal transducer"" of \cite{Aristote25}. 
The main difference is that in a "monoidal transducer" the data domain 
itself is required to be a monoid, and the outputs are produced by
aggregating, via the monoid product, the values emitted by the transitions, 
whereas in our model only the "updates" carry a monoid structure. 
Another minor difference is that our data is typed.
More generally, one can see "monoidal transducers" as a syntactic
fragment of "streaming transducers".

The main goal of the paper is to establish general criteria for the
existence of "minimal" "streaming transducers", in the same spirit 
of the celebrated Myhill-Nerode theorem \cite{UllmanHopcroft79},
and then apply these criteria to obtain new "minimization" results
for concrete examples of "transducer" classes.
Drawing inspiration from a categorical approach to automata minimization \cite{Gougen72}, 
and from more recent work that views automata as functors \cite{ColcombetPetrisan,ColcombetPetrisanStabile21}, 
we adopt an \emph{algebraic notion of minimality} for "transducers". 
Informally, a "transducer" $A$ is "algebraically minimal" if, for every 
"equivalent" "transducer" $B$, $A$ can be obtained as a "quotient" of a "subobject" of $B$.
The ``quotient-of-a-subobject'' relation ("subquotient" for short) is formalized 
in terms of suitable pairs of functions that transform the "configuration space" of a
"transducer" and that generalize surjective functions (for inducing "quotients") 
and injective functions (for inducing "subobjects"). In "category" theory,
such functions are called "epis" and "monos", and we
shall adopt the same terminology here.
However, we will keep category-theoretic jargon to a minimum and use it only 
when it provides a clear and convenient shorthand.

The mere existence of "algebraically minimal" "transducers" 
is quite significant: a "minimal" "transducer", being a "subquotient" 
of all "equivalent" "transducers", inherits from those objects any 
property that is closed under taking "quotients" and "subobjects"
(cf.~Eilenberg's theory of varieties/pseudovarieties \cite{Eilenberg76,EilenbergSchutzenberger76}).
For example, it often happens that classes of "transductions" 
definable by logical formalisms (notably, the class of first-order definable transductions) 
are characterized in terms of forbidden patterns at the structural level of "transducers"
(e.g., aperiodicity of the transformation semigroup). 
In these cases, the properties that characterize definability in 
logical formalisms are naturally closed under "quotients" and "subobjects". 
In particular, if some equivalent "transducer" avoids a forbidden pattern 
(e.g.~a non-trivial permutation of the "configuration space"), then so does a "minimal" 
"transducer"; conversely, any obstruction present in a "minimal" "transducer" 
also appears in every equivalent "transducer". 
When "algebraically minimal" "transducers" can be computed, these characterizations 
become effective: membership in a "subquotient"-closed fragment can 
be decided by inspecting the structure of a "minimal" "transducer".
We refer the reader to 
\cite{Schutzenberger65,FiliotKrishnaTrivedi14,ColcombetLeyPuppis15,
      FiliotGauwinLhote16,SaarikiviVeanes17,CartonColcombetPuppis18,MuschollPuppisSTACS19,ColcombetVanGoolMorvan22}
for examples of applications of these principles in the theory of automata and "transducers".

Another important application of "minimization" is automata learning: in active 
learning settings, the availability of canonical/minimal representatives is used 
to maintain compact hypotheses and justify correctness of the learned model 
up to "equivalence". 
See, for instance, 
\cite{Angluin87,RivestSchapire93,Vilar96,ShahbazGroz09,DeLaHiguera10,
      boj14icalp,Laurence2014,CasselHowarJonssonSteffen16,
      MoermanSammartinoSilvaKlinSzynwelski17,ColcombetPetrisanStabile21,Aristote25}.

\medskip
The results in the first part of the paper give sufficient and necessary conditions 
for the existence of "algebraically minimal" "transducers", and can be summarized as follows:

\begin{theorem}[name={\textbf{rephrasing of Theorems \ref{thm:minimal-transducer}--\ref{thm:gcd-necessary}}}]%
               \label{thm:minimal-transducer-rephrasing}
A class of "transducers" admits "algebraically minimal" models if and only if%
    \,\footnote{The only-if direction actually relies on mild assumptions on the sets of
                data and "updates" available to the "transducers" (cf.~the definition of
                "standard data structure" preceding Theorem~\ref{thm:gcd-necessary}). These
                assumptions are satisfied by essentially all "data structures" underlying the
                models considered in the literature.}
the underlying "data structure" admits "constrained domains" and 
"greatest common divisors" for every non-empty set of "updates".
In addition, if every "constrained domain" is also "finitary", then 
"algebraically minimal" models also exist within the subclass of "finite transducers".
\end{theorem}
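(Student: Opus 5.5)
We prove the statement by the usual categorical route to minimization: an initial transducer, a final transducer, and a factorization system that turns the unique morphism between them into a subquotient of every equivalent transducer. We fix a transduction $\tau$ and work in the category of transducers realizing $\tau$, whose arrows are the transducer morphisms (pairs of functions on configuration spaces commuting with initial, internal and final transformations). We then proceed in four steps.

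\textbf{Step 1 (initial transducer).} We build the initial object $\initialtransd*$. Its states are the input words, the data at $w$ is obtained by applying the initial transformation followed by the updates read along $w$, and the final transformation yields $\tau(w)$. For every $A$ realizing $\tau$, the run of $A$ gives a unique morphism $\initialmorph*{A}$, so $\initialtransd*$ is initial. This step needs no hypothesis. The image of $\initialmorph*{A}$ under an (epi, strong mono) factorization is the pruned transducer $\Reach*(A)$; this requires constrained domains, so that the reachable part of a configuration space is again a configuration space.

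\textbf{Step 2 (final transducer).} This is the core of the argument. For each word $w$, the set of residual updates $\{\,\text{updates along } v \text{ followed by the final transformation} : v\,\}$ is a non-empty set. We take its greatest common divisor $g_w$ together with the residual vector $\vresidual*{g_w}{U}$. The final transducer $\finaltransd*$ has states given by Myhill--Nerode classes of words, where two words are equivalent when their residual vectors agree up to vector isomorphism. The data type at a class is the constrained codomain of $g_w$, and transitions are read off from the residuals: dividing $g_{wa}$ out of the $a$-derivative of the residuals of $w$ gives the update on letter $a$.

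Finality means that every reachable transducer $A$ maps uniquely into $\finaltransd*$. To check this, we observe that the data of $A$ at any state reached by $w$ is a common divisor of the residuals of $w$. The GCD property then forces a unique factorization through $g_w$, and uniqueness of the morphism follows from $g_w$ being epi. The main obstacle is showing that these choices are coherent, i.e.\ independent of the representative $w$ up to the witnessing isomorphisms, and compatible with the monoid structure of updates. We would try to prove a lemma stating that GCDs of residual vectors are stable under derivatives: $\vresidual*{g}{(\Partials*_a U)}$ is, up to isomorphism, a suitable composition of updates. This is the kind of structural statement our assumptions on divisors should deliver.

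\textbf{Step 3 (minimality).} We set $\Min* = \Obs*(\Reach*(\initialtransd*))$, the image of the unique arrow $\initialtransd* \to \finaltransd*$. Given any equivalent $A$, the arrow factors as $\initialtransd* \to A \to \finaltransd*$. Diagonal fill-in for the factorization system then exhibits $\Min*$ as a quotient of the subobject $\Reach*(A)$, which is exactly algebraic minimality. For finiteness, suppose $A$ is finite and constrained domains are finitary. Then $\Min*$ has at most as many states as $\Reach*(A)$, and finitary constrained types ensure that the data types it needs are finitely presented, so $\Min*$ is again a finite transducer.

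\textbf{Step 4 (necessity).} We argue by contrapositive, assuming a standard data structure and a non-empty set $U$ of updates with no GCD. We encode $U$ as the residual set of a transduction $\tau_U$, using one input letter per update of $U$. Any algebraically minimal transducer for $\tau_U$ would yield a common divisor of $U$ at its initial state. Comparing it with the common divisors induced by the equivalent transducers built from each common divisor of $U$, minimality would force that divisor to be greatest. Constrained domains are necessary by a similar encoding, using pruning: without them, the reachable data cannot form a subobject.
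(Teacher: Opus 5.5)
Your architecture is the paper's: initial transducer on words, images as closures of ranges, an object final only among initial images, then Proposition~\ref{prop:minimal}, with the finite case argued the same way. Two small fixes. In Step~3 there is in general no arrow $A\to M$, only $\operatorname{Reach}(A)\to M$. In Step~1 the initial transducer needs no data: every state has the singleton type of $q_{\triangleright}$, and $\varphi$ sits entirely in the final transformation.

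The genuine gap is Step~2, which carries the whole proof and which you leave as a plan. A GCD $g_w$ need not be epi; the paper stresses that GCDs and EGCDs are incomparable. So residuals via $g_w$ are not unique, your residual rows (hence the states of $M$) are not well defined, and ``uniqueness because $g_w$ is epi'' is unavailable. The missing idea is to replace $g_w$ by the epi part of its strong factorization, which by diagonal fill-in is an EGCD (Lemma~\ref{lem:gcd-vs-egcd}). EGCDs are unique up to arrow isomorphism (Lemma~\ref{lem:unique-egcd}) and distribute over epi prefixes (Lemma~\ref{lem:egcd-distributivity}, via cancellation of epis and the fact that residuals of epis are epi). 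This yields exactly your ``stability under derivatives'': $G[sa]=G[s]\mathbin{;}\partial_a[s]$ with $R[sa,-]=\partial_a[s]\backslash R[s,a-]$ (Lemma~\ref{lem:invariance}). Fixed witnessing isomorphisms $j_w$ then make $M$ output $\varphi(w)$ exactly rather than up to isomorphism.

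Your finality check is also insufficient. It factors through $g_w$ only the single value $f_w(d_{\triangleright})$ that $A$ stores after $w$. A morphism instead needs one update per state $q$ of $\operatorname{Reach}(A)$, defined on its whole constrained domain and independent of the word reaching $q$. The paper attaches to $q$ an EGCD of the vector of updates emitted from $q$, and then needs all words reaching $q$ to have $\cong$-equivalent rows $R[w,-]$ (property (5) in the proof of Proposition~\ref{prop:final-transducer}). This follows from Lemmas~\ref{lem:unique-egcd} and~\ref{lem:egcd-distributivity} when the accumulated updates $f_w$ are epi, e.g.\ over the leaf substitution algebra, where closures of non-empty sets are full domains. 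It fails in general. Over the free term algebra singletons are constrained domains (Example~\ref{ex:closure-terms}), so EGCDs of Hankel rows have one-point codomains. Take $a^n\mapsto b^n(c)$, realized by a one-state upward STT: all words $a^n$ reach the same state, yet the rows $R[a^n,-]$ are pairwise non-$\cong$. So the coherence you flag as ``the main obstacle'' is real, and the divisor calculus alone does not settle it.

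Step~4 misses the two ingredients of the paper's necessity proof.

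\emph{Data encoding.} With one letter per update and no encoding of data, nothing forces a minimal transducer to store an arbitrary $d\in\mathbb{D}_\alpha$, so its updates need not relate to the functions in $U$. (This also needs an infinite alphabet when $U$ is infinite; the paper uses countability and unary codes.) The paper reads inputs $\tilde d\,\qsep\,\usep\,\tilde u$ and $\tilde d\,\qsep\,\dsep\,\tilde u$. For each common divisor $f$ it builds a pruned transducer $A_f$ that applies $f$ on the $\dsep$-transition. The morphism $A_f\to M$ then forces the $\dsep$-update $g$ of $M$ to be divided by every $f$. So the divisor is read off that transition, not off the initial state.

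\emph{Joint coprimality.} Showing that $g$ is itself a common divisor of $U$ needs the parallel $\usep$-branch carrying \emph{all} of $\mathbb{D}_\alpha^\beta$. Joint coprimality then makes the update accumulated before the branching isomorphic to the identity. Without this, ``minimality forces it to be greatest'' has no content.

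Finally, neither your sketch nor the paper shows that constrained domains are necessary. Theorem~\ref{thm:gcd-necessary} concerns GCDs only, and constrained domains are built into the framework, so your last sentence is an unsupported claim rather than a proof step.
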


In the second part of the paper, we give examples of applications of the above theorem 
for "minimizing" "streaming transducers" that produce "terms" as outputs.
More precisely, the outputs of these "transducers" are constructed incrementally
while reading the input, by extending (tuples of) "terms" 
either at the leaves (called "downward string-to-term transducers") 
or at the roots (called "upward string-to-term transducers"). 
For both models, "minimization" is effective, meaning that, given a 
"finite transducer", one can compute a "minimal" one "equivalent" to it.
To the best of our knowledge, these minimization results are new, and can be seen as 
non-trivial generalizations of Choffrut's "minimization" of "sequential transducers" 
\cite{cho03}, one of the few core "minimization" theorems for "transducers". 
Even if some proofs are technically difficult, the most interesting part of 
this contribution lies in a connection between "minimization" of 
"string-to-term transducers", the theory of "unifiers" 
of Robinson, Martelli and Montanari \cite{Robinson65,MartelliMontanari82},
and the theory of "anti-unifiers" of Plotkin and Reynolds \cite{Plotkin70,Reynolds70}.
Specifically, the former theory is used to compute a "subobject"
of an "upward string-to-term transducer", while the latter theory is used
to compute a "quotient" of a "downward string-to-term transducer".

\myparagraph{Related work}
There are several ways to define and construct a minimal "transducer". The two
most prominent approaches in the literature either minimize concrete resources,
such as the number of "control states" and/or registers
(cf.~\cite{AlurRaghothaman2013Decision,
           DaviaudReynierTalbot2016Twinning,BaschenisGauwinMuschollPuppis2016,
           BenaliouaIdrissLhoteReynier2024,BenaliouaLhoteReynier26}),
or even the amount of non-determinism
(cf.~\cite{BellSmertnig2021Polya,BellSmertnig2023LinearHull}), or adopt an
algebraic notion of "minimality", such as the one discussed above, expressed
through a universal "subquotient" property relative to all "equivalent" objects
(cf.~\cite{Gougen72,Gerdjikov18,ColcombetPetrisan,ColcombetPetrisanStabile21,Aristote25}).
Unlike resource minimization, whose goal is to optimize a chosen concrete
measure of an implementation, "algebraic minimality" aims at identifying a
canonical representative of an equivalence class. 

In some settings, however, the two approaches coincide: an object that is
"minimal" in the algebraic sense is also one that uses the fewest concrete
resources. The paradigmatic example is finite-state automata minimization, 
but similar phenomena also occur for automata enhanced with registers
(cf.~\cite{BenediktLeyPuppis10,BojanczykKlinLasota14,MoermanSammartinoSilvaKlinSzynwelski17,BalachanderFiliotGentiliniTzevelekos25}).
In these cases, the automaton with the fewest states is unique and can be
obtained by first pruning unreachable states, thereby passing to a "subobject",
and then merging states according to the coarsest equivalence compatible with
the recognized language. The resulting automaton is therefore a "quotient" of
every other equivalent pruned automaton, and satisfies precisely the universal
"subquotient" property.

For "transducers", the relationship between "algebraic minimality" and resource
minimization is more subtle. On the one hand, "algebraically minimal" 
"transducers" do have a minimum number of "control states" among all 
"equivalent" "transducers" (this follows from standard arguments). 
On the other hand, they are not necessarily optimal with respect to the number 
of registers used. 
In fact, it is even difficult to define a uniform notion of register that 
applies to all generic models of "transducers". When such a notion exists, the
best one can generally expect is to minimize locally, subject to the already
minimized control structure, the number of registers available at each
"control state" that are not fixed by a single assignment. 
We will see in Section~\ref{subsec:upwardSTT} an example of this phenomenon
for the class of "upward string-to-term transducers".

The "algebraic minimization" approach is also closely related to the classical
linear-algebraic theory of weighted automata, going back at least to the
realization results of Carlyle and Paz~\cite{CarlylePaz1971} and to the
minimization theorem of Fliess~\cite{Fliess1974}. Indeed, a weighted automaton
can be transformed into an equivalent "cost register automaton" ("CRA") with a
single "control state" and linear register "updates", where the states of the
weighted automaton become the registers of the one-state "CRA". Thus,
minimizing states in the former amounts to minimizing registers in the latter.
From our perspective, Fliess' minimization can be seen as a concrete instance,
in a linear-algebraic setting, of the same ``subobject followed by quotient''
approach underlying "algebraic minimization": one first restricts to the
smallest subspace generated by reachable values, corresponding to a "subobject",
and then quotients by the equivalence induced by the remaining linear
observations.

\myparagraph{Organization of the paper}
Section \ref{sec:preliminaries} fixes basic notation for functions
and "terms". Section \ref{sec:category} introduces the "category"-theoretic 
concepts underlying "algebraic minimality" and the existence of "minimal" objects. 
Section \ref{sec:model} defines the abstract model of "streaming transducer", 
while Section \ref{sec:minimization} proves "minimization" results under
simple assumptions. Section \ref{sec:applications} gives effective "minimization" 
results for two concrete classes of "transducers", namely 
"downward string-to-term transducers" and "upward string-to-term transducers". 
Finally, Section \ref{sec:conclusion} concludes and discusses a few research 
directions.

\LIPICSorARXIV{
Due to space constraints, most proofs are only sketched or omitted; full details
are available in the extended version at \url{http://arxiv.org/abs/2605.11190}.
}{
This manuscript is the extended version of a paper presented at the conference LICS'26.   
}
For convenience, technical terms and notations in the electronic version of 
this manuscript are hyper-linked to their definitions (cf.~\url{https://ctan.org/pkg/knowledge}).

\LIPICS{
}

%% file: functions.tex

\myparagraph{Functions}

\AP
Given a function $f$, we denote by
$""\Dom*(f)""$, $""\Cod*(f)""$, and $""\Rng*(f)""$,
respectively, its 
\reintro[domain of definition]{domain}, \reintro{codomain}, 
and \reintro{range}%
\footnote{We prefer to not use the term `image' since this is reserved 
          for another concept, to be introduced later.}.
In particular, we have $\Rng(f)\subseteq\Cod(f)$ and 
this inclusion might be strict.
\emph{We assume that the "domain" $A$ and "codomain" $B$ 
are always part of the definition of a function, and they
are usually specified within the notation $f: A\to B$.}
\AP
We use the same notation for a partial function 
$f: A \to B$, with a bit of care for its "domain@@function": 
we distinguish the ""source set"" $A$ from the 
\reintro{domain of definition} of $f$, defined 
as the subset $\Dom(f)$ of $A$ that contains the 
elements $x$ where $f(x)$ is defined.
When equating two partial functions, like in $f=g$,
we require that (1) $f$ and $g$ have the same
"source", "domain of definition", and "codomain",
and (2) for all $x\in\Dom(f)$, $f(x)=g(x)$.

\AP
Given two (partial) functions $f: A \to B$ and $g: B \to C$, 
we denote by $""f \comp* g @composition""$ their \reintro{composition}, 
interpreted according to the \emph{diagrammatic order}%
\footnote{We are aware that functional composition 
	is often written with $\circ$ and in the opposite order;
	we believe that the adopted notation eases readability, 
	especially here since we often deal with long sequences of compositions.},
namely, $f\comp g: A \to C$ is defined on those
$x$'s such that $x\in\Dom(f)$ and $f(x)\in\Dom(g)$, and maps
any such $x$ to $g(f(x))\in C$.
When we write a "composition" $f\comp g$ we always tacitly 
assume that the "codomain" of $f$ is the same as the "source" of $g$; 
however, $\Rng(f)$ may be strictly contained in the "source" of $g$, and
$\Dom(g)$ may be strictly contained in $\Cod(f)$.
\AP
A ""factorization"" of a (partial) function $h: A \to B$ is
a pair of (partial) functions $f: A\to C$ and $g: C\to B$ such that $h=f\comp g$.

%% file: terms.tex

\myparagraph{Terms}

\AP
A ""term"" is a non-empty, finite tree, 
labelled over a ranked alphabet $\Gamma$, where the number 
of children of each node coincides with the rank of its label. 
An example is the "term" $a(b(c),c)$, provided that 
$a,b,c\in\Gamma$ have ranks $2,1,0$, respectively.
\AP
We shall also work with "terms" over an alphabet expanded with variables,
say $\Gamma\uplus X$, with $X = \{x_1,x_2,\dots,y,\dots\}$, where the variables
are assumed to have rank $0$, and thus necessarily occur at the leaves. 

\AP
We mostly manipulate "terms" using (\reintro[substitution]{first-order}) \reintro{substitutions}:
given a "term" $t$ over an $n$-tuple of variables $\bar x=(x_1,\dots,x_n)$
and given an $n$-tuple of "terms" $\bar u=(u_1,\dots,u_n)$ 
(possibly over a different tuple of variables), 
we denote by $""t\subst*[\bar x]{\bar u}""$ the result of substituting in 
$t$ every occurrence of variable $x_i$, simultaneously for all $i=1,\dots,n$,
with the corresponding "term" $u_i$.
For example, if $t=a(x_1,x_2,x_1)$, $\bar u= (b, c(x_3))$, 
then  $t\subst[\bar x]{\bar u} = a(b,c(x_3),b)$.

%% file: epis-monos.tex

%

We focus on an algebraic notion of "minimality" for "transducers",
characterized by a universal property formalized in terms of "quotient" and
"subobject" relations. Although these relations can be tailored to specific
models of "transducers", such definitions tend to obscure the key properties
behind model-dependent details. Since we aim to establish "minimization" results
for several classes of "transducers", we work at a higher level of abstraction,
where "quotients" and "subobjects" can be defined once, uniformly and
independently of the model.
For this, we use a simplified category-theoretic framework. 
\AP
We call ""arrow"" any member of a restricted class of functions or partial functions, 
assumed to be closed under "composition" and to contain an ""identity"" on each set. 
\AP
The sets forming the "domains" and "codomains" of these (partial) functions may
carry additional structure, and are generically called ""objects""; accordingly,
an "arrow" can be seen as a structure-preserving (partial) function.
Examples of "arrows" are the "transformations" within a "transducer"'s
"configuration space", representing transition functions, and the
"transformations" between "transducers" themselves, representing "transducer
morphisms". 
\AP
A ""category"" is precisely a class of "objects" connected by "arrows".

\emph{All definitions in this section are understood relative to a fixed "category",
which we keep implicit whenever no ambiguity arises.}

\LIPICSorARXIV{
\myparagraph{Epis and monos}
}{
\subsection{Epis and monos}\label{secub:epis-monos}
}
We begin by discussing special types of "arrows", called "epis" and "monos", 
which capture "quotients" and "subobjects", respectively, and generalize surjective
and injective maps between sets. Such generalizations are common in "category"
theory and come in several strengths. For instance, plain "epis" lie at the
bottom of the standard hierarchy
\[
	\text{"epis"} ~\supseteq~ \text{extremal epis} ~\supseteq~
	\text{strong epis} ~\supseteq~ \text{strict epis} ~\supseteq~
	\text{regular epis}.
\]
In this work, we mostly use "epis" and "strong monos", a choice tailored to our
"minimization" results. In the "category" of unrestricted functions between sets,
the hierarchies of "epis" and "monos" collapse to the usual surjective and
injective functions. In more structured "categories", where "arrows" are
restricted functions ---as happens for the transition functions of our
"transducers"--- these notions may diverge. For example, in the "category" 
of "polynomial maps" over algebraic varieties, the function
$f: \bbR^2 \to \bbR^2$ defined by $f(x,y)=(x,xy)$ is "epi" but not surjective,
since $(0,1)\nin\Rng(f)$.

Below are the definitions of the important types of "arrows"
that we are going to use:
\begin{itemize}
\item \AP
	An "arrow" $f: A \to B$ is an ""epi"", 
	denoted $A \mathrel{\reintro{\epi*[f]}} B$,
	if for all "arrows" $h,h': B\to C$,
	$f\comp h = f\comp h'$ implies $h=h'$.
\item \AP
	An "arrow" $g: B\to C$ is a ""mono"", 
	denoted $B \mathrel{\reintro{\mono*[g]}} C$,
	if for all "arrows" $h,h': A\to B$,
	$h\comp g = h'\comp g$ implies $h=h'$.
\item \AP
	An "arrow" $i: A\to B$ is an ""isomorphism"",
	denoted $A \mathrel{\reintro{\iso*[i]}} B$,
	if there is an arrow $\inverse*{i}: B\to A$,
	called ""inverse"" of $i$,
	such that $i\comp i^{-1}$ and $i^{-1}\comp i$ 
	are "identities" on $A$ and $B$, respectively. 
	\AP
	We write $A \mathrel{\reintro{\iso*}} B$,
	without the annotation $i$,
	to state the existence of such an "isomorphism" between $A$ and $B$.
	\AP
	In a similar way, by thinking of "arrows" as "objects" 
	in a new "category" (the arrow "category"), we define an
	""isomorphism@@arrow"" between two "arrows" 
	$f: A\to B$ and $f': A'\to B'$ as a pair of 
	"object isomorphisms" $i: A\to A'$ and $j: B\to B'$
	such that $f\comp j = i\comp f'$.
	\AP
	We denote this by $f \mathrel{\reintro{\aiso*[i,j]}} f'$,
	or simply by $f \mathrel{\reintro{\aiso*}} f'$.
\item \AP
	An "arrow" $g: C\to D$ is a ""strong mono"", 
	denoted $C \mathrel{\reintro{\smono*[g]}} D$,
	if for all "epi" $f: A \epi B$ and "arrows"
	$h: A\to C$ and $h': B \to D$, 
	if $f\comp h' = h\comp g$, 
	then there is a unique "arrow"
	$d: B\to C$ such that $h = f\comp d$ and $h' = d\comp g$.
	In other words, if the left diagram commutes, 
	then also the right diagram commutes for some unique $d$:
	\begin{align*}
	\input{fig-diamond-fillin}
	\end{align*}
\AP
The above condition is called ""diagonal fill-in property"", and is
crucial for reasoning with diagrams involving "epis" and "strong monos". Two
standard consequences will be used repeatedly. First, every "strong mono" is a
"mono": indeed, by taking $f$ to be an identity and by considering 
two arrows $h,h'': A\to C$ such that $h\comp g = h''\comp g'$, we get
$h = d = h'$. Second, every "arrow" that is both "epi" and "strong mono" 
is an "object isomorphism", obtained by applying the same "diagonal fill-in property" 
to the square where $h$ and $h'$ are identities.
\end{itemize}
It is easy to see that the classes of "epis", "monos", 
"isomorphisms", and "strong monos" are all closed under "composition".

%% file: fig-diamond-fillin.tex
\begin{tikzpicture}[baseline=(current bounding box.center)]
  \node (A) at (0,0) {$A$};
  \node (B) at (20mm,0) {$B$};
  \node (C) at (0,-14mm) {$C$};
  \node (D) at (20mm,-14mm) {$D$};

  \path
    (A) edge [epi arrow]
      node [above=1mm] {\scriptsize $f$}
    (B)
    (C) edge [reg mono arrow]
      node [below=1mm] {\scriptsize $g$}
    (D)
    (A) edge [arrow]
      node [left=1mm,pos=0.4] {\scriptsize $h$}
    (C)
    (B) edge [arrow]
      node [right=1mm,pos=0.4] {\scriptsize $h'$}
    (D);
\end{tikzpicture}
\qquad\quad
\text{implies}
\qquad\quad
\begin{tikzpicture}[baseline=(current bounding box.center)]
  \node (A) at (0,0) {$A$};
  \node (B) at (20mm,0) {$B$};
  \node (C) at (0,-14mm) {$C$};
  \node (D) at (20mm,-14mm) {$D$};

  \path
    (A) edge [epi arrow]
      node [above=1mm] {\scriptsize $f$}
    (B)
    (C) edge [reg mono arrow]
      node [below=1mm] {\scriptsize $g$}
    (D)
    (A) edge [arrow]
      node [left=1mm,pos=0.4] {\scriptsize $h$}
    (C)
    (B) edge [arrow]
      node [right=1mm,pos=0.4] {\scriptsize $h'$}
    (D)
    (B) edge [arrow, dashed]
      node [above left=0mm] {\scriptsize $\exists! d$}
    (C);
\end{tikzpicture}

%% file: factorization-systems.tex

\LIPICSorARXIV{
\myparagraph{Factorization systems}
}{
\subsection{Factorization systems}\label{subsec:factorization-systems}
}
\AP
A "factorization" $h = f\comp g$ is called ""strong@@factorization"" 
if $f$ is an "epi" and $g$ is a "strong mono", and in this case the 
intermediate "object" $C=\Cod(f)$ is called the ""image"" of $h$. 
The uniqueness of the "image" of $h$ follows directly from the 
"diagonal fill-in property":

\begin{lemma}[restate=StrongImage,
						  name=\textbf{{\cite[Propositions 14.4]{TheJoyOfCats}}}]\label{lem:strong-image}
The "image" of an "arrow", if a "strong factorization" exists,
is unique up to "isomorphism".
\end{lemma}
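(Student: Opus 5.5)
Suppose $h: A\to B$ admits two "strong factorizations" $h = f\comp g = f'\comp g'$, where $f: A\epi C$, $g: C\smono B$, $f': A\epi C'$ and $g': C'\smono B$. The plan is to build mutually inverse "arrows" between $C$ and $C'$ by two applications of the "diagonal fill-in property". We also check that the resulting "isomorphism" commutes with both factorizations, which gives uniqueness of the image as a factorization and not merely as an "object".

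First, consider the square with top edge the "epi" $f: A\epi C$ and bottom edge the "strong mono" $g': C'\smono B$. Its left side is $f': A\to C'$ and its right side is $g: C\to B$. It commutes because $f\comp g = h = f'\comp g'$. The "diagonal fill-in property" of $g'$ yields a unique $d: C\to C'$ with $f' = f\comp d$ and $g = d\comp g'$. Symmetrically, using the "epi" $f'$ and the "strong mono" $g$, we obtain a unique $d': C'\to C$ with $f = f'\comp d'$ and $g' = d'\comp g$.

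Next we show that $d\comp d'$ is the "identity" on $C$. We have $f\comp d\comp d' = f'\comp d' = f$ and $d\comp d'\comp g = d\comp g' = g$. So $d\comp d'$ is a diagonal for the square with top $f$, bottom $g$, left $f$ and right $g$. That square commutes trivially, and the "identity" on $C$ is also a diagonal for it. Uniqueness in the "diagonal fill-in property" of the "strong mono" $g$ then forces $d\comp d' = \identity{C}$. Alternatively, since $f$ is "epi", it suffices that $f\comp(d\comp d') = f\comp\identity{C}$, which we already have. The same argument shows $d'\comp d = \identity{C'}$. Hence $d$ is an "isomorphism" $C\iso C'$, and it satisfies $f' = f\comp d$ and $g = d\comp g'$.

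No step is a real obstacle, since everything is diagram chasing. The only point needing care is the direction of composition in the diagrammatic order, and making sure each square is set up with the "epi" on top and the "strong mono" on the bottom, as the definition of the "diagonal fill-in property" requires.
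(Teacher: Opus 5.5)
Your proof is correct and is essentially the argument the paper intends. The paper gives no separate proof; it only notes that uniqueness follows directly from the diagonal fill-in property, citing the Joy of Cats result, and your two fill-ins $d\colon C\to C'$ and $d'\colon C'\to C$, shown to be mutually inverse by cancelling the epis $f$ and $f'$, are exactly that standard argument.
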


\AP
An "object" $I$ is ""initial"" if for every "object" $A$,  
there is a unique "arrow" $""\initialmorph*{A}"": I\to A$, 
called the \reintro{initial arrow} for $A$.
Symmetrically, an "object" $F$ is ""final"" if for every "object" $A$,  
there is a unique "arrow" $""\finalmorph*{A}"": A\to F$, 
called the \reintro{final arrow} for $A$.

\begin{lemma}[restate=InitialFinal, name=\textbf{{\cite[Propositions 7.3, 7.6]{TheJoyOfCats}}}]%
	\label{lem:initial-final}
"Initial" (resp.~"final") "objects", 
if exist in the considered "category", 
are unique up to "isomorphism".
\end{lemma}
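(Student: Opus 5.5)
The argument uses only the universal properties, and the two halves are dual. For initial objects, suppose $I$ and $I'$ are both "initial". Initiality of $I$ gives an "arrow" $u = \initialmorph*{I'}: I\to I'$, and initiality of $I'$ gives an "arrow" $v: I'\to I$. We will show that $u$ is an "isomorphism" with "inverse" $v$.

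The key observation is that $u\comp v: I\to I$ is an "arrow" from $I$ to itself. Since the class of "arrows" contains an "identity" on each set, $\identity{I}: I\to I$ is also such an "arrow". By initiality of $I$ there is exactly one "arrow" $I\to I$, namely $\initialmorph*{I}$. Hence $u\comp v = \identity{I}$. Swapping the roles of $I$ and $I'$ gives $v\comp u = \identity{I'}$. Both composites are "arrows" because the class is closed under "composition". This is exactly the definition of $I \iso*[u] I'$. We also get slightly more: the isomorphism is unique, since there is only one "arrow" $I\to I'$ at all.

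For final objects we dualize. Take $F,F'$ "final", let $u: F\to F'$ and $v: F'\to F$ be the "final arrows", and compare $u\comp v$ and $\identity{F}$ as "arrows" $F\to F$. Uniqueness of the "final arrow" into $F$ gives $u\comp v=\identity{F}$, and symmetrically $v\comp u=\identity{F'}$.

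The one point that needs care concerns the paper's convention that "arrows" may be partial functions, with equality requiring the same "source", "domain of definition", and "codomain". Under that convention, we must check that the identity on $I$ really is a member of the restricted class of "arrows", so that it competes with $u\comp v$ for uniqueness. This is guaranteed by the standing assumption that the class contains an "identity" on each set. The equality $u\comp v=\identity{I}$ is then an equality of "arrows" in the sense fixed in the preliminaries: in particular, $u\comp v$ is total on $I$, which is what the definition of "isomorphism" requires. Beyond this check the argument is routine.
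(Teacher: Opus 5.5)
Your proof is correct and is the standard argument. The paper gives no proof of its own but defers to \cite[Propositions 7.3, 7.6]{TheJoyOfCats}, where uniqueness is obtained as you do: $u\comp* v$ and the identity are both arrows $I\to I$, so they coincide by initiality, and dually for final objects. It is the same ``the unique endo-arrow is the identity'' trick the paper itself uses when it notes that $I$ is isomorphic to $\Reach*(I)$.
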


\medskip
Below, we define "algebraic minimality" and we outline some general conditions 
for the existence of "algebraically minimal" "objects", based on a reworking of \cite{ColcombetPetrisan}.
\begin{itemize}
\item \AP
	$A$ is a ""quotient"" of $B$ if $B \epi A$, i.e.~there is an "epi" from $B$ to $A$,%
\item \AP
	$A$ is a ""subobject"" of $B$ if $A \smono B$, i.e.~there is a "strong mono" from $A$ to $B$
\item \AP
	$A$ is a ""subquotient"" of $B$ 
	if $A \mathrel{\scalebox{-1}[1]{$\epi$}} C \smono B$ for some $C$,
	i.e.~$A$ is a "quotient" of some "subobject" $C$ of $B$,
\item \AP
	$A$ is ""algebraically minimal"" if it is a "subquotient" of every "object".
\end{itemize}

\noindent
Next, assume that the "category" under consideration admits: 
\begin{enumerate}
\item a "strong factorization" of every "arrow", together with the corresponding "image",
\item an "initial object" $I$,
\item \AP
	an "object" $M$ that is "final" in the \emph{sub-category} consisting
	of the "images" of the "initial arrows" (""initial images"" for short).
\end{enumerate}
Recall that, by Lemmas \ref{lem:strong-image}
and \ref{lem:initial-final}, the above "objects"
are unique up to "isomorphism".
We also remark that the above assumptions differ slightly 
w.r.t.~the presentation of \cite{ColcombetPetrisan}, specifically
because we do not require the existence of a "final object"
in the full category, but rather in the sub-category of
"initial images". The reason for this difference is
that it simplifies the construction of the "minimal" 
object, especially for the case of "transducers".

For every "object" $A$, we can then define:
\begin{itemize}
\item \AP
	$""\Reach*(A)""$ as the "image" 
	of the "initial arrow" $I \to A$,
\item \AP 
	$""\Obs*(\Reach*(A))""$ as the "image" 
	of the "final arrow" $\Reach(A) \to M$
	in the sub-category of "initial images".
\end{itemize}
Note that the "initial object" $I$ is "isomorphic" to its "image"
$\Reach(I)$, since, by definition, there is a unique "arrow" from $I$ to itself, 
and this is both the identity and the "composition" $I \epi \Reach(I) \smono I$.

The notations $\Reach(A)$ and $\Obs(\Reach(A))$ are inspired by 
the operations performed to minimize deterministic finite automata. 
Specifically, $\Reach(A)$ reflects the idea that constructing a 
"strong factorization" of the "initial arrow" $I \to A$ 
amounts to pruning $A$ by retaining only its reachable states. 
Likewise, $\Obs(\Reach(A))$ is suggestive of the fact that the 
"final arrow" $\Reach(A) \to M$ induces an observational 
equivalence that groups reachable states inducing the
same residual languages. 

\begin{proposition}[restate=AlgebraicallyMinimal,
	name=\textbf{{variation of \cite[{Lemma 3.5}]{ColcombetPetrisan}}}]%
	\label{prop:minimal}
Under the previous assumptions that enable the constructions 
$\Reach(-)$ and $\Obs(\Reach(-))$, we have that 
\begin{enumerate}
\item $I \iso \Reach(I)$,
\item $M \iso \Obs(I)$,
\item $M \iso \Obs(\Reach(A))$ for every "object" $A$.
\end{enumerate}
In particular, $\Obs(\Reach(A))$ is a "subquotient" of $A$,
and because $M \iso \Obs(\Reach(A))$, 
$M$ is a "subquotient" of every $A$, hence an "algebraically minimal" "object".
\end{proposition}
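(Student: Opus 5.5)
The plan is to verify the three items in order. Item~2 will be obtained as the special case $A=I$ of item~3, once item~1 is available. The main tools are the uniqueness of "initial arrows", the "diagonal fill-in property", and the closure of "epis" and "strong monos" under "composition".

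\emph{Item 1.} This is the observation made right before the statement. Take the "strong factorization" $I \epi \Reach(I) \smono I$ of the "initial arrow" $I\to I$. By initiality this composite equals $\identity{I}$. Hence the "strong mono" $\Reach(I)\smono I$ is also split epi, and therefore "epi". An "arrow" that is both "epi" and "strong mono" is an "isomorphism".

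\emph{Item 3.} Fix $A$ and write $e: I\epi \Reach(A)$ for the epi part of the initial arrow. Let
\[
\Reach(A) \epi[p] O \smono[g] M
\]
be the "strong factorization" of the final arrow $\Reach(A)\to M$, so that $O=\Obs(\Reach(A))$.

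First I check that $O$ is itself an "initial image", so that it lives in the sub-category and the construction is well-typed. Since $M$ is an initial image, there are $B$ and a strong factorization $I\epi M\smono[m] B$ of $\initialmorph{B}$. By initiality,
\[
\initialmorph{B} = e\comp p\comp g\comp m,
\]
with $e\comp p$ "epi" and $g\comp m$ a "strong mono" by closure under composition. By Lemma~\ref{lem:strong-image}, $O$ is therefore (up to "isomorphism") the image of $\initialmorph{B}$.

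Next, the initial arrow $u: I\to M$ coincides with the epi part of the factorization of $\initialmorph{B}$, again by uniqueness of initial arrows, so $u$ is "epi". Uniqueness also gives $u = e\comp p\comp g$. This yields a commuting square: the epi $u: I\epi M$ on top, $h=e\comp p: I\to O$ on the left, $h'=\identity{M}$ on the right, and the strong mono $g: O\smono M$ at the bottom. The diagonal fill-in then provides $d: M\to O$ with $d\comp g=\identity{M}$. Consequently $g\comp d\comp g = g$, and since $g$ is a mono, $g\comp d=\identity{O}$. So $g$ is an isomorphism and $M\iso\Obs(\Reach(A))$.

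\emph{Item 2 and the consequence.} By item~1, $\Reach(I)\iso I$, so $\Obs(I)$, read as $\Obs(\Reach(I))$, is isomorphic to $M$ by item~3. For the final claim, $O$ is a "quotient" (via $p$) of the "subobject" $\Reach(A)\smono A$, so $O$ is a "subquotient" of $A$. Composing the epi $p$ with the isomorphism $g$ shows $M$ is a quotient of $\Reach(A)$, hence a subquotient of every $A$.

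The step I expect to be the main obstacle is the bookkeeping in item~3. One must make sure that $O$ genuinely lies in the sub-category of initial images, so that the "final arrow" and its image are taken there. One must also make sure that the initial arrow into $M$ is epi, since this is what allows the fill-in square to be formed. Both points reduce to uniqueness of initial arrows combined with Lemma~\ref{lem:strong-image}.
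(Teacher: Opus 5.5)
Your proof is correct, and its core step is exactly the paper's argument for item~3: the diagonal fill-in applied to the square formed by the epi $I \to M$, the identity $\identity{M}$, and the strong mono $\Obs(\Reach*(A)) \to M$, which commutes by initiality of $I$. The only difference is the order of the steps: the paper proves item~2 first and uses it to obtain the epi $I \to M$ (via $I \iso \Reach(I) \to \Obs(\Reach*(I)) \iso M$), whereas you get that epi directly from $M$ being an initial image and then recover item~2 as the instance $A = I$, which also spells out the fact $M \iso \Reach(M)$ that the paper justifies only in passing.
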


We conclude by observing that, in general, unless the "category" is particularly 
well-behaved, the "subquotient" relation does not need to be transitive or 
anti-symmetric. In particular, there may exist non-"isomorphic", "minimal" 
"objects" that are one a "subquotient" of another. However, if the
objects are of the form $\Obs(\Reach(A))$ and $\Obs(\Reach(A'))$,
then they are necessarily also "isomorphic".

%% file: data.tex

Our "transducers" can store data from an arbitrary, fixed universe 
(e.g.~real numbers, words, or trees), and can manipulate it
via a basic set of operations, called "updates" 
(e.g. sums and products of numbers, concatenations of strings, 
"term" "substitutions").

\subsection{Data structures}\label{subsec:data}

\AP
We define a \reintro{data structure} as an algebra $""\bbD*""$ of typed data. 
In most cases, the reader can assume ""types"" to be natural numbers,
although later we will introduce more abstract "types"
--- even with this extension, we shall always assume 
that the set of "types" is countable.
The \reintro{domain} of a "type" $\tau$, denoted $""\bbD*_\tau""$,
is the subset of $\bbD$ consisting of all data of "type" $\tau$.
For example, if $\tau$ is a number, then its "domain" may include
$\tau$-tuples of "terms".
\AP
The "data structure" is also equipped with a set of operations, called ""updates"",
and this set is assumed to be closed under "composition" and to contain "identities"
on each "domain". 
Each "update" is associated with an input "type" $\alpha$ and an "output" "type" $\beta$, 
and accordingly maps any data in $\bbD_\alpha$ to a data in $\bbD_\beta$. In particular,
the "domain of definition" of an "update" coincides with the "domain" of its input type.
\AP
We denote by $\reintro[updates]{\bbD*_\alpha^\beta}$ the set of "updates" 
with input "type" $\alpha$ and "output" "type" $\beta$.

The following examples of "data structures" with numeric "types" will be used later:
\begin{itemize}
\item {\bf ""Free term algebra"".}
      \AP
      The "domain" associated with each  
      "type" $\tau$ contains $\tau$-tuples of ""ground terms"", namely, "terms" without variables.
      \AP
      Note that there is only one data of "type" $\tau=0$, namely, the "empty tuple" $\intro{\emptytuple*}$.
      An "update" in $\bbD_\alpha^\beta$ is specified by a $\beta$-tuple of "terms" 
      $\bar c=(c_1,\dots,c_\beta)$ over an $\alpha$-tuple of variables $\bar x=(x_1,\dots,x_\alpha)$, 
      and maps any $\alpha$-tuple of "ground terms" $\bar t=(t_1,\dots,t_\alpha)$ 
      to the $\beta$-tuple $\big(c_1\subst[\bar x]{\bar t},\dots,c_\beta\subst[\bar x]{\bar t}\big)$.
      Intuitively, such an "update" extends the input "terms" $t_1,\dots,t_\alpha$ upward,
      at their roots.
      One can further restrict the types of "updates" in $\bbD_\alpha^\beta$ by 
      requiring that the specifications $\bar c$ satisfy any combination of the following conditions:
      \LIPICSorARXIV{%
      (1) every variable $x_i$ occurs \emph{at most once} in $\bar c$
            (""copyless updates"", as opposed to ""copyful updates"");
      (2) every variable $x_i$ occurs \emph{at least once} in $\bar c$
            (""non-erasing updates"");
      (3) every variable $x_i$ occurs \emph{exactly once} in $\bar c$
            and when the variables of $\bar c$ are listed from left to right 
            they appear in the precise order $x_1,\dots,x_\alpha$
            (""linear updates"").      
      }{
      \begin{enumerate}
      \item every variable $x_i$ occurs \emph{at most once} in $\bar c$
            (""copyless updates"", as opposed to ""copyful updates"");
      \item every variable $x_i$ occurs \emph{at least once} in $\bar c$
            (""non-erasing updates"");
      \item every variable $x_i$ occurs \emph{exactly once} in $\bar c$
            and when the variables of $\bar c$ are listed from left to right 
            they appear in the precise order $x_1,\dots,x_\alpha$
            (""linear updates"").
      \end{enumerate}
      }
\item {\bf ""Leaf substitution algebra"".}
      The "domain" of each "type" $\tau$ contains "terms" with exactly $\tau$
      variable occurrences, without repetitions and in a fixed left-to-right order
      $x_1,\dots,x_\tau$. 
      \AP
      Since variable names are immaterial in this setting, we
      replace them by placeholders and write the "substitution" simply as
      $""t\subst*{\bar u}""$.
      We call these objects ""linear terms"". 
      An "update" in $\bbD_\alpha^\beta$ is specified by an $\alpha$-tuple 
      $\bar u=(u_1,\dots,u_\alpha)$ of "linear terms"
      with precisely $\beta$ placeholders, and maps every "linear term" $t$ of
      "type" $\alpha$ to $t\subst{\bar u}$ of "type" $\beta$. 
      Intuitively, such an "update" extends $t$ downward, at the leaves.
\item \AP
      {\bf ""Polynomial register algebra"".}
      The "domain" of each "type" $\tau$ consists of $\tau$-tuples of elements from
      a given field, such as $\bbQ$, or more generally from a polynomial ring, such
      as $\bbQ[z]$.
      \AP
      The "updates" are ""polynomial maps"" of the form
      $(x_1,\dots,x_\alpha) \mapsto (p_1(\bar x),\dots,p_\beta(\bar x))$,
      where each $p_i$ is a polynomial in $\bar x=(x_1,\dots,x_\alpha)$. As before,
      one may restrict "updates", for example, to affine or linear maps.
\item \AP 
      {\bf ""String register algebra"".} 
      Here the data consists of tuples of strings, whose "types" are again their arities.
      \AP
      The "updates" are generated by atomic operations that prepend or append a letter
      to a component, concatenate two components, insert a new empty component,
      duplicate, delete, or swap components. As with terms, one can restrict these
      "updates", for instance by forbidding duplication ("copyless updates") and/or
      deletion ("non-erasing updates") \cite{AlurCerny10,MuschollPuppisICALP19}.
\end{itemize}
The "free term algebra" is particularly important for us, because many 
"data structures" can be presented as its quotients. For example, 
strings over $\Gamma$ can be represented by "terms" over the ranked 
alphabet $\Gamma\uplus\{\emptystr,\odot\}$, where symbols in $\Gamma$ 
and $\emptystr$ have rank $0$, while $\odot$ has rank $2$ and represents concatenation. 
To enforce associativity and identity laws for $\emptystr$, one quotients 
"terms" by the finest congruence $\sim$ satisfying 
$\odot(x,\odot(y,z)) \sim \odot(\odot(x,y),z)$ and $\odot(x,\emptystr) \sim x \sim \odot(\emptystr,x)$. 
Accordingly, "updates" on tuples of strings, such as those of the "string register algebra", 
can be simulated by "term" "substitutions": for example, concatenating $w_1$ and $w_2$ 
amounts to applying the "term" "update" $(w_1,w_2) \mapsto t\subst[x_i]{w_i}_{i=1,2}$, 
with $t=\odot(x_1,x_2)$.

%% file: transducers.tex

\subsection{Streaming transducers}\label{subsec:streaming-transducers}

In the following, we fix an arbitrary "data structure" $\bbD$.
While different "transducers" from the same class will share the same "data structure" $\bbD$, 
it is convenient to allow different "types" of data at different "control states" of a "transducer". 
For example, a "transducer" can store tuples of different arities at different "states". 

\AP
We denote by $Q$ a set of ""control states"". 
Each "state" $q\in Q$ is assigned a "data type" $""\type*_Q(q)""$, 
and hence a "domain" $\Data_Q(q)$.
Note that the assignment may depend on $Q$,
and so the same "state" $q$ may be assigned different "types" in
different state sets. When $Q$ is clear from the context, we drop the 
subscript $Q$ from the notations $\type_Q(q)$ and $\Data_Q(q)$.
\AP
The ""configuration space generated by $Q$"" is the set 
\[
  \reintro{\States*{Q}}
  ~=~ \big\{ (q,d) \::\: q\in Q, ~ d\in\Data(q) \big\}.
\]
\AP
A ""configuration"" is a "state"-data pair $(q,d) \in \States{Q}$.

\AP
A ""transformation"" is a partial function
$f: \States{Q} \rightharpoonup \States{Q'}$ between "configuration spaces"
that is \reintro{specified} by a pair $(\speca f,\specb f)$, where:
\begin{itemize}
\item $\reintro{\speca* f}: Q \rightharpoonup Q'$ is a partial function from $Q$ to $Q'$,
			describing the possible transitions between control states;
\item $\reintro{\specb* f}$ is a partial function, 
		  with the same "domain of definition" as $\hat f$ 
      (namely, $\hat f$ is defined on $q$ iff $\check f$ is defined on $q$)
      that maps every "control state" $q\in\Dom(\check f)$ 
			to a corresponding "update" $\check f(q)\in \bbD_{\type(q)}^{\type(r)}$,
			where $r=\hat f(q)$.
\end{itemize}
The transformation $f$ "specified" by $(\speca f,\specb f)$ 
has for "domain" the set 
$\States{\Dom(\speca f)} \subseteq \States{Q}$ 
and maps any "configuration" $(q,d)\in\Dom(f)$ 
to the "configuration" $\big(\speca f(q), \, \specb f(q)(d)\big) \in \States{Q'}$.
Note that the notion of "transformation" between "configuration spaces" 
is a restriction of that of a partial function, in two respects. 
First, the "transformation" 
relativized to a particular "state" must correspond to an "update", 
which might already be a restricted form of function between data.
Second, there might be a dependency of $\specb f$ from $\speca f$, 
but not the other way around.
For example, we can injectively map "configurations" with different 
"states" to "configurations" with the same "state" 
and different data, but not vice versa.

"Transformations" inherit from "updates" their closure under "composition"
(proof omitted):

\begin{lemma}[name=\textbf{composition of transformations},
			  restate=Compositiontransformations]
			  \label{lem:composition-transformations}
Given two "transformations" 
$f: \States{Q} \rightharpoonup \States{Q'}$ and 
$g: \States{Q'} \rightharpoonup \States{Q''}$,
"specified" respectively by 
$(\speca f,\specb f)$ and $(\speca g,\specb g)$, 
their "composition" is the "transformation"
$h: \States{Q} \rightharpoonup \States{Q''}$ 
"specified" by $(\speca h,\specb h)$, where 
$\speca h = \speca f\comp \speca g$ and 
$\specb h(q) = \specb f(q) \,\comp \specb g(\speca f(q))$ 
for all $q\in Q$.
\end{lemma}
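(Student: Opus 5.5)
The plan is to compute the composite $f\comp g$ pointwise and check that it coincides, as a partial function, with the "transformation" $h$ "specified" by the proposed pair $(\speca h,\specb h)$. First I check that $(\speca h,\specb h)$ is a legitimate "specification". The map $\speca h=\speca f\comp\speca g$ is a partial function $Q\rightharpoonup Q''$, with domain of definition
\[
  \Dom(\speca h) ~=~ \{q\in\Dom(\speca f) : \speca f(q)\in\Dom(\speca g)\}.
\]
On this set, and only there, $\specb h(q)=\specb f(q)\comp\specb g(\speca f(q))$ is defined, so $\speca h$ and $\specb h$ have the same "domain of definition". For typing, let $r=\speca f(q)$. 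Then $\specb f(q)\in\bbD_{\type(q)}^{\type(r)}$ and $\specb g(r)\in\bbD_{\type(r)}^{\type(\speca g(r))}$. Since "updates" are closed under "composition", the composite lies in $\bbD_{\type(q)}^{\type(\speca h(q))}$, as required.

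Next I compare domains. By the definition of "composition" of partial functions, $\Dom(f\comp g)$ consists of the $(q,d)$ with $(q,d)\in\Dom(f)$ and $f(q,d)\in\Dom(g)$. The first condition means $q\in\Dom(\speca f)$. The second means $\speca f(q)\in\Dom(\speca g)$, because $\Dom(g)=\States{\Dom(\speca g)}$ and membership depends only on the state component. Hence $\Dom(f\comp g)=\States{\Dom(\speca h)}=\Dom(h)$. The "source sets" and "codomains" also match ($\States{Q}$ and $\States{Q''}$), as the convention on equality of partial functions requires.

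Finally I compare values. For $(q,d)$ in this common domain,
\[
  g(f(q,d)) ~=~ g\big(\speca f(q),\specb f(q)(d)\big) ~=~ \big(\speca g(\speca f(q)),\ \specb g(\speca f(q))(\specb f(q)(d))\big),
\]
and in diagrammatic notation this is exactly $\big(\speca h(q),\ (\specb f(q)\comp\specb g(\speca f(q)))(d)\big)=h(q,d)$.

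There is no real obstacle here; the only delicate point is the domain bookkeeping. One must note that "updates" are total on the domain of their input type, so the composite $\specb f(q)\comp\specb g(\speca f(q))$ is defined on all of $\Data(q)$. Consequently the domain of $f\comp g$ is governed entirely by the state-level partial maps, which is what makes it of the form $\States{\Dom(\speca h)}$.
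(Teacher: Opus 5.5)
Your proof is correct. The paper omits the proof of this lemma as routine, and your argument is the intended direct unfolding of the definitions: you check that $(\hat h,\check h)$ is a well-typed specification using closure of updates under composition, that the composite of $f$ and $g$ has domain of definition exactly the configurations over the states where $\hat h$ is defined (since updates are total on their input domain), and that the two partial functions agree pointwise.
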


\AP
To model the "initial@initial transformation" and "final transformations"
of a "transducer" it is convenient to introduce two special "states",
$""\initstate*""$ and $""\haltstate*""$,
and assume they are fixed as part of the "transducer" model, just as the 
underlying "data structure" $\bbD$ is.
The "state" $\initstate$ is assigned a special "data type" $""\init*""$ 
with a singleton "domain" $\bbD_{\init*} = \{""\initdata*""\}$. 
For example, sometimes we identify $\init$ with the numeric "type" $0$, 
and $\initdata$ with the "empty tuple" $\emptytuple$.
\AP
Symmetrically, the "state" $\haltstate$ is assigned "type" $""\halt*""$ 
(e.g.~the number $1$),
but its "domain" usually contain multiple data, that is, the possible output values.

The "states" $\initstate$ and $\haltstate$
determine, respectively, the \reintro{initial configuration space} 
$""\initspace*"" = \{(\initstate,\initdata)\}$ and 
the \reintro{final configuration space}
$""\haltspace*"" = \{(\haltstate,d) \::\: d\in\bbD_{\halt*}\}$.
Both "spaces" are part of the underlying setup and hence are fixed 
for the entire class of "transducers" under consideration.

\AP
A ""deterministic streaming transducer"" (hereafter simply \reintro{transducer})
over an input alphabet $\Sigma$ is a tuple 
$A=(Q,\deltainit,(\delta_a)_{a\in\Sigma},\deltahalt)$, where:
\begin{itemize}
\item $Q$ is a typed set of "control states", generating $\States{Q}$,
\item \AP
	  $\deltainit: \initspace \rightharpoonup \States{Q}$ 
	  is an ""initial transformation"",
\item \AP 
	  $\delta_a: \States{Q} \rightharpoonup \States{Q}$ 
	  is an ""internal transformation"", for each $a\in\Sigma$,
\item \AP
	  $\deltahalt: \States{Q} \rightharpoonup \haltspace$ 
	  is a ""final transformation"". 
\end{itemize}
Recall that "transformations" are allowed to be partial functions.
Intuitively, $\deltainit(\initstate,\initdata)$, if defined, determines 
the first "configuration" of the run of the "transducer";
$\delta_a$ describes how the "transducer" moves from one 
"configuration" to another when reading an input letter $a$;
and, $\deltahalt$ describes how the final output is 
constructed from the last "configuration" of the run.
\AP
We remark that the "state" set $Q$ of a "transducer" does not need to be finite; 
we will explicitly say ""finite transducer"" when we want $Q$ to be finite.

The general structure of a "transducer" model is shown on the left of
Figure~\ref{fig:transducer-diagram}. For a specific "transducer", however, it
is often useful to annotate "control states" with their "data domains", and
"transitions" with their input symbols and "updates". The right-hand side of
Figure~\ref{fig:transducer-diagram} illustrates this representation with an
example of a "sequential transducer" (the model is defined further below), 
which copies the input to the output while inserting a separator $\#$ before
each pair of consumed input symbols. 
Note that the "data domain" associated with each "control state" is $\Gamma^*$,
where $\Gamma = \Sigma\uplus\{\#\}$, and every "update" appends a constant 
string to the source data.

\begin{figure*}[t]
\begin{align*}
\input{fig-transducer-diagram}
\end{align*}
\caption{Diagrams of a "transducer".}
\label{fig:transducer-diagram}
\end{figure*}

\AP
Given a "transducer" $A=(Q,\deltainit,(\delta_a)_{a\in\Sigma},\deltahalt)$
and some input $w=a_1\dots a_n \in \Sigma^*$,
the ""transformation induced"" by $w$ is the "composition" 
$\delta*_w = \delta_{a_1}\comp\dots\comp\delta_{a_n}$ of the 
"internal transformations" induced by the letters in $w$.
This "composition" can be prolonged to include the
"initial@initial transformation" and/or the "final transformation",
resulting for instance in the "transformation" 
$\delta*_{\init w\halt} = \deltainit\comp\delta*_w\comp\deltahalt:
 \initspace \rightharpoonup \haltspace$. 
\AP
In particular, 
$\delta*_{\init* w\halt*}(\initstate,\initdata) = \deltahalt(\delta*_w(\deltainit(\initstate,\initdata)))$, 
and if defined, results in a "configuration" of the form $(\haltstate,d_w)$, for some $d_w\in\bbD_{\halt*}$.
In this case, $d_w$ is called the ""output"" of $A$ on $w$.
\AP
The ""transduction realized"" by $A$ is the partial function 
$\varphi: \Sigma^* \rightharpoonup \bbD_{\halt*}$ that maps any $w\in\Dom(\varphi)$ 
to the corresponding output $d_w$, as described above.
\AP
We say that two "transducers" are ""equivalent"" if they "realize" 
the same "transduction".

We conclude the section by showing how different models of "transducers"
proposed in the literature can be seen as instances of our model,
for a suitable choice of the "data structure" $\bbD$: 
\begin{itemize}
	\item \AP
				{\bf ""Sequential transducers""} \cite{Schutzenberger77}. 
				These are "transducers" over a "data structure" whose data 
				are words over an output alphabet $\Gamma$, and whose "updates" 
				are of the form $f_w:x\mapsto xw$, appending a fixed string $w$ 
				to the input string $x$.
				Figure~\ref{fig:transducer-diagram} shows an instance of this model. 
				Besides the special "type" $\init$, with singleton "domain"
				$\bbD_{\init*}=\{\emptystr\}$, the "data structure" has a single 
				numeric "type" $\tau=1$, whose "domain" is $\bbD_1=\Gamma^*$.
	\item \AP 
				{\bf ""Polynomial automata""} \cite{BenediktDSW17}.
		  	These are "transducers" over the "polynomial register algebra".
		  	\AP
		  	A slight generalization of this class, called ""cost register automata"",
		  	was introduced in \cite{AlurDDRY13} as a "transducer" that manipulates numeric 
		  	registers using generic arithmetic operations. The latter model is also very similar 
		  	in spirit to our notion of "transducer" over an abstract "data structure".
	\item \AP 
				{\bf ""Streaming string-to-string transducers"" (\reintro{SSTs})} \cite{AlurCerny10}.
		    These are "transducers" over the "string register algebra".
			  For this model, it is worth recalling that there is a particularly interesting 
			  subclass of "SSTs" with "copyless updates", which captures precisely the 
			  string-to-string "transductions" definable in monadic second-order logic \cite{Cou97}. 
	\item \AP 
				{\bf ""Streaming string-to-term transducers"" (\reintro{STTs})}.
		  	These are "transducers" that consume input words, manipulate "terms" by
		  	"first-order substitutions", and eventually output "ground terms". They
		  	are essentially the model of \cite[Section 3.6]{AlurDAntoni17}, except
		  	that our output trees are ranked, whereas those of \cite{AlurDAntoni17}
		  	are unranked. Related models over unranked, unordered trees have also
		  	been studied in \cite{BoiretPS18fsttcs}.
		  	As discussed in Section~\ref{subsec:data}, "substitutions" yield at
		  	least two natural "data structures" for defining "updates" on "terms",
		  	and hence two variants of "STTs".
		  	\AP
		  	The first, called ""downward STT"", uses the "leaf substitution algebra"
		  	and constructs outputs by extending "linear terms" at the leaves. The
		  	second, called ""upward STT"", uses the "free term algebra" and extends
		  	tuples of "ground terms" at the roots. When "updates" of the latter
		  	variant are restricted to be "copyless" (as for "SSTs", and as enforced
		  	in \cite{AlurDAntoni17}), the "downward@@STT" and "upward@@STT"
		  	variants, both enhanced with ""regular look-ahead""\,%
		  	\footnote{Regular look-ahead is the ability of a "transducer"
		  					  to annotate its input with a co-deterministic Mealy machine, 
		  					  prior to consuming it.},
		  	are equivalent in expressive power and capture precisely the
		  	string-to-"term" transductions definable in monadic second-order logic
		  	\cite[Theorem 16]{AlurDAntoni17}.
\end{itemize}

%% file: fig-transducer-diagram.tex
\begin{tikzpicture}[baseline=(current bounding box.center)]
\matrix (M) [matrix of nodes, column sep=12mm, row sep=0mm, nodes={anchor=center}] {
  |[name=init]| $\initspace$ & |[name=space]| $\States{Q}$ & |[name=halt]| $\haltspace$ \\
};
\path (init) edge [partial arrow] node [above=1mm] {\scriptsize $\deltainit$} (space);
\path (space) edge [partial arrow, loop above] node [above=1mm] {\scriptsize $a ~:~ \delta_a$} (space);
\path (space) edge [partial arrow] node [above=1mm] {\scriptsize $\deltahalt$} (halt);
\end{tikzpicture}
\qquad\quad
\begin{tikzpicture}[baseline=(current bounding box.center)]
\matrix (N) [matrix of nodes, column sep=22mm, row sep=5mm, nodes={anchor=center}] {
  & |[name=q0]| $(q_0,\Gamma^*)$ & \\
  |[name=initp]| $(q_{\init},\{\emptystr\})$ & & |[name=haltp]| $(q_{\halt},\Gamma^*)$ \\
  & |[name=q1]| $(q_1,\Gamma^*)$ & \\
};
\path (initp) edge [arrow] node [sloped, above=1mm, pos=0.4] {\scriptsize $\init: x\mapsto x$} (q0);
\path (q0) edge [arrow, bend right=30] node [left=1mm] {\scriptsize $a: x\mapsto x \# a$} (q1);
\path (q1) edge [arrow, bend right=30] node [right=1mm] {\scriptsize $a: x\mapsto x a$} (q0);
\path (q0) edge [arrow] node [sloped, above=0.5mm, pos=0.45] {\scriptsize $\halt: x\mapsto x\#$} (haltp);
\path (q1) edge [arrow] node [sloped, below=0.5mm, pos=0.4] {\scriptsize $\halt: x\mapsto x$} (haltp);
\end{tikzpicture}

%% file: morphisms.tex

\subsection{Transducer morphisms}\label{subsec:transducer-morphisms}

Let $A=(Q_A\deltainit,(\delta_a)_{a\in\Sigma},\deltahalt)$
and $B=(Q_B,\kappainit,(\kappa_a)_{a\in\Sigma},\kappahalt)$ 
be two "transducers" from the same class, that is, with the same 
input alphabet $\Sigma$ and the same "data structure" $\bbD$.
\AP
A ""transducer morphism"" from $A$ to $B$ is a "transformation" 
$h: \States{Q_A} \rightharpoonup \States{Q_B}$ such that
\begin{itemize}
\item $h$ is a total function,
\item $\kappainit = \deltainit \comp h$,
\item $h \comp \kappa_a = \delta_a \comp h$, for every $a\in\Sigma$,
\item $h \comp \kappahalt = \deltahalt$.
\end{itemize}
The above identities are understood as equalities between the
respective "domains" and "codomains" of the partial functions, 
conjoined with the pointwise equalities of the images, that is, 
$f=g$ means $\Dom(f)=\Dom(g)$, $\Cod(f)=\Cod(g)$, and $f(x)=g(x)$ for all $x\in\Dom(f)$.
In particular, a "morphism" $h: A \rightharpoonup B$ can be equally defined 
as a total "transformation" that makes
the diagrams in Figure \ref{fig:transducer-morphism-diagrams} commute.
Of course, from the commutativity of these diagrams it also 
follows that the "transducers" $A$ and $B$ are "equivalent".

\begin{figure*}[t]
\begin{align*}
\input{fig-transducer-morphism-diagrams}
\end{align*}
\caption{Commutative diagrams for a "transducer morphism" $h$.}
\label{fig:transducer-morphism-diagrams}
\end{figure*}

To conclude the section and support intuition, Figure \ref{fig:transducer-morphism}
presents an example of a "morphism" between two instances of the "sequential transducer" model.
The "transducer" at the top is the same as the one shown in Figure \ref{fig:transducer-diagram}:
it inserts a separator symbol $\#$ before each pair of consumed input symbols. 
The "transducer" at the bottom is an "equivalent" one that anticipates the insertion 
at the previous "state". 
Notice that, in order to satisfy the commutativity conditions, the "morphism" transforms 
the data component of the former "transducer" into that of the latter in a non-trivial way. 
We also observe that no inverse "morphism" exists from the bottom "transducer" to the top one, 
since "updates" appending non-empty strings are not reversible in the underlying "data structure".

\begin{figure*}[t]
\begin{align*}
\input{fig-transducer-morphism}
\end{align*}
\caption{Example of "morphism" between two "transducers".}
\label{fig:transducer-morphism}
\end{figure*}

%% file: fig-transducer-morphism-diagrams.tex
\begin{tikzpicture}[baseline=(current bounding box.center)]
\matrix (M) [matrix of nodes, column sep=10mm, row sep=3mm, nodes={anchor=center}] {
  & |[nicecyan,name=space1]| $\States*{Q_A}$ && |[nicecyan,name=space2]| $\States*{Q_A}$ & |[nicecyan,name=space3]| $\States*{Q_A}$ && |[nicecyan,name=space4]| $\States*{Q_A}$ & \\
  |[name=init]| $\States*{\{\initstate*\}}$ & && & && & |[name=halt]| $\States*{\{\haltstate*\}}$ \\
  & |[nicered,name=space1p]| $\States*{Q_B}$ && |[nicered,name=space2p]| $\States*{Q_B}$ & |[nicered,name=space3p]| $\States*{Q_B}$ && |[nicered,name=space4p]| $\States*{Q_B}$ & \\
};

\path[nicecyan] (init) edge [partial arrow] node [above=1mm, pos=0.4] {\scriptsize $\deltainit$} (space1);
\path[nicecyan] (space2) edge [partial arrow] node [above=1mm, pos=0.4] {\scriptsize $\delta_a$} (space3);
\path[nicecyan] (space4) edge [partial arrow] node [above=1mm, pos=0.4] {\scriptsize $\deltahalt$} (halt);
\path[nicered] (init) edge [partial arrow] node [above=1mm, pos=0.4] {\scriptsize $\kappainit$} (space1p);
\path[nicered] (space2p) edge [partial arrow] node [above=1mm, pos=0.4] {\scriptsize $\kappa_a$} (space3p);
\path[nicered] (space4p) edge [partial arrow] node [above=1mm, pos=0.4] {\scriptsize $\kappahalt$} (halt);

\path (space1) edge [partial arrow] node [right=2mm,pos=0.4] {\scriptsize $h$} (space1p);
\path (space2) edge [partial arrow] node [left=2mm,pos=0.4] {\scriptsize $h$} (space2p);
\path (space3) edge [partial arrow] node [right=2mm,pos=0.4] {\scriptsize $h$} (space3p);
\path (space4) edge [partial arrow] node [left=2mm,pos=0.4] {\scriptsize $h$} (space4p);
\end{tikzpicture}

%% file: fig-transducer-morphism.tex
\begin{tikzpicture}[baseline=(current bounding box.center)]
\matrix (M) [matrix of nodes, column sep=29mm, row sep=2mm, nodes={anchor=center}] {
  & & |[nicecyan,name=q0]| $(q_0,\Gamma^*)$ & \\
  |[nicecyan,name=A]| $A$ & |[nicecyan,name=init]| $(q_{\init*},\{\emptystr\})$ & & |[nicecyan,name=halt]| $(q_{\halt*},\Gamma^*)$ \\[2ex]
  & & |[nicecyan,name=q1]| $(q_1,\Gamma^*)$ & \\[6ex]
  & & |[nicered,name=q0p]| $(q_0,\Gamma^*)$ & \\
  |[nicered,name=B]| $B$ & |[nicered,name=initp]| $(q_{\init*},\{\emptystr\})$ & & |[nicered,name=haltp]| $(q_{\halt*},\Gamma^*)$ \\[2ex]
  & & |[nicered,name=q1p]| $(q_1,\Gamma^*)$ & \\
};

\path (A) edge [arrow] node [left=1mm] {\scriptsize $h$} (B);
\path (init) edge [arrow] node [left=1mm] {\scriptsize $\identity{\bbD_{\init*}}$} (initp);
\path (q0) edge [arrow,bend right=60] node [left=1mm,pos=0.7] {\scriptsize $h: x\mapsto x\#$} (q0p);
\path (q1) edge [arrow,bend left=60] node [right=1mm,pos=0.2] {\scriptsize $~h: x\mapsto x$} (q1p);
\path (halt) edge [arrow] node [right=1mm] {\scriptsize $\identity{\bbD_{\halt*}}$} (haltp);

\path[nicecyan] (init) edge [arrow] node [sloped,above=1mm,pos=0.4] {\scriptsize $\init*: x\mapsto x$} (q0);
\path[nicecyan] (q0) edge [arrow,bend right=30] node [shape=rectangle,inner sep=1mm,fill=white,left=0.25mm] {\scriptsize $a: x\mapsto x \# a$} (q1);
\path[nicecyan] (q1) edge [arrow,bend right=30] node [right=1mm] {\scriptsize $a: x\mapsto x a$} (q0);
\path[nicecyan] (q0) edge [arrow] node [sloped,above=0.5mm,pos=0.45] {\scriptsize $\halt*: x\mapsto x\#$} (halt);
\path[nicecyan] (q1) edge [arrow] node [sloped,below=0.5mm,pos=0.4] {\scriptsize $\halt*: x\mapsto x$} (halt);

\path[nicered] (initp) edge [arrow] node [sloped,above=1mm,pos=0.4] {\scriptsize $\init*: x\mapsto x\#$} (q0p);
\path[nicered] (q0p) edge [arrow,bend right=30] node [left=1mm,fill=white] {\scriptsize $a: x\mapsto x a$} (q1p);
\path[nicered] (q1p) edge [arrow,bend right=30] node [shape=rectangle,inner sep=1mm,fill=white,right=0.25mm] {\scriptsize $a: x\mapsto x a \#$} (q0p);

\path[white] (q0p) edge [arrow,line width=2mm] (haltp);

\path[nicered] (q0p) edge [arrow] node [sloped,above=0.5mm,pos=0.45] {\scriptsize $\halt*: x\mapsto x$} (haltp);
\path[nicered] (q1p) edge [arrow] node [sloped,below=0.5mm,pos=0.4] {\scriptsize $\halt*: x\mapsto x$} (haltp);
\end{tikzpicture}

%% file: images-of-morphisms.tex

We fix again a "data structure" $\bbD$ and the induced class 
of "transducers".
We also fix a "transduction" $\varphi$ "realizable" by "transducers"  
in this class, and think of the sub-class of "transducers"
that "realize" $\varphi$ as a "category", with
arrows denoting the possible "morphisms" between "transducers".

To establish the existence of an "algebraically minimal"
"transducer" that computes $\varphi$, we use the approach 
outlined in Section \ref{sec:category}, which 
boils down to showing that the "category" of "transducers"
that compute $\varphi$ admits "strong factorizations" of 
"transducer morphisms", together with induced 
"images", an "initial object", and also a "final object" 
in the sub-category of "initial images". 
In the next subsections we provide precisely the constructions 
of such objects.

\subsection{Images of transducer morphisms}\label{sec:factorizations}

Recall that the "image" of a "morphism" $h: A \to B$ is an object $C$
together with an "epi" $f: A\epi C$ and a "strong mono" $g: C\smono B$
such that $h=f\comp g$. For an ordinary function between sets, this is just
the usual "range" construction: $C=\Rng(h)$, the map $f$ is obtained from
$h$ by replacing its "codomain" with $C$, and $g$ is the inclusion map 
of $C$ into $B$, as depicted below:
\begin{align*}
\input{fig-strong-factorization}%
\end{align*}
In our setting, where $A$ and $B$ are "transducers", a similar idea
applies: taking the "image" of $h$ amounts at restricting the
"configuration space" of $B$ to those elements that can be reached 
via $h$ from the "configuration space" of $A$.
This restriction acts at two levels: on the "control states", 
by pruning the unreachable ones, and on the "data domains" associated 
with them, by removing (most of the) unreachable data.
While pruning "control states" is easy, it is not clear at the
moment how one can restrict a "data domain".
Our approach is to \emph{approximate} the set of reachable data as 
the "solution set" of a suitable system of equations.
Indeed, an exact representation of the set of reachable data is not necessary here, 
provided that "updates" cannot distinguish the exact set from its approximation.
To formalize this idea, we shall expand our "data structure"
$\bbD$ with abstract "types", each associated with a "domain@@constrained"
that is defined by a system of equations.

\AP
We define a ""constraint"" of type $\tau$ as a (possibly infinite) set 
$\gamma$ of pairs $(f,g)$ of "updates" with $\tau$ as input "type".
Any such pair $(f,g)$ represents an equation%
\footnote{This idea is similar to the use of equalizers in category theory;
          in particular, a "constraint" can be seen as a special case 
          of a multi-equalizer.}
of the form $f(x) = g(x)$, 
where $x$ is a variable denoting arbitrary data from the unrestricted
"domain" $\bbD_\tau$.
\AP
The ""solution set"" of the "constraint" $\gamma$ is defined as
\[
    \reintro{\bbD*_\gamma} ~=~ 
    \big\{ d\in\bbD_\tau \::\: \forall (f,g)\in\gamma ~~ f(d) = g(d) \big\}.
\]
The "constraint" $\gamma$ can be added to the "data structure" $\bbD$
as a new "type" (actually a sub-type of $\tau$), 
and its "solution set" $"\bbD*_\gamma"$ can be thought of as 
the "data domain" associated with it.
\AP
We call these new "types" and "domains", respectively, 
""constrained types"" and ""constrained domains"".

The "updates" from a "constrained type" $\gamma$ are
nothing but the "updates" from the underlying basic "type" $\tau$ 
restricted to $\bbD_\gamma$. Of course, also the "codomain" of
an "update" can be "constrained@constrained domain", provided 
that it still covers all the images of the "update".
\AP
\phantomintro{constrained configuration space}%
\phantomintro{constrained transformation}%
\phantomintro{constrained initial transformation}%
\phantomintro{constrained internal transformation}%
\phantomintro{constrained final transformation}%
\phantomintro{constrained transducer}%
\phantomintro{constrained transducer morphism}%
The other definitions of "configuration space", 
"transformation", "transducer", and "transducer morphism" 
are adapted, almost verbatim, in presence of "constrained domains". 

\begin{example}\label{ex:constraint-polynomial}
Let $\bbD$ be the "polynomial register algebra" and consider the "updates" $f,g:\bbD_2\to\bbD_1$ 
defined by $f(x,y)=x^2+y^2$ and $g(x,y)=1$. The "constraint" $f(x,y)=g(x,y)$ induces a new
"domain@@constrained" $\bbD_{\{(f,g)\}} \subseteq \bbD_2$ that coincides
with the unit circle.
\end{example}

By design, "constrained domains" are closed under inverses of "updates"
and intersections:

\begin{lemma}[restate=Constraints, name=\textbf{closures of constrained domains}]%
    \label{lem:constraint-closure}
"Constrained domains" are effectively closed under 
inverses of "updates" and under finite intersections. 
They are also closed (non-effectively) under infinite intersections.
\end{lemma}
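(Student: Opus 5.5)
The plan is to argue directly from the definition of "solution set", treating each closure property separately and making explicit what ``effectively'' means: a "constraint" given as a presentation (a set of pairs of "updates") should be transformed into another presentation by a syntactic operation on pairs.

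\emph{Inverses of updates.} Let $u\in\bbD_\alpha^\beta$ be an "update" and let $\gamma$ be a "constraint" of type $\beta$. I claim that the preimage $u^{-1}(\bbD_\gamma)=\{d\in\bbD_\alpha : u(d)\in\bbD_\gamma\}$ is the "solution set" of the "constraint"
\[
  u\comp\gamma ~=~ \{(u\comp f,\ u\comp g) \::\: (f,g)\in\gamma\}
\]
of type $\alpha$. This follows directly from the definitions. For $d\in\bbD_\alpha$, we have $d\in\bbD_{u\comp\gamma}$ iff $f(u(d))=g(u(d))$ for all $(f,g)\in\gamma$, and this holds iff $u(d)\in\bbD_\gamma$. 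Closure of "updates" under "composition" guarantees that $u\comp f$ and $u\comp g$ are again "updates" with input type $\alpha$. The construction is effective because composing two "updates" is effective. If $\gamma$ is itself presented over a "constrained type", I first unfold it to its underlying basic type by adjoining the constraints that define that type, using the intersection case below.

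\emph{Intersections.} Given "constraints" $(\gamma_i)_{i\in I}$ of the same type $\tau$, I take the union $\gamma=\bigcup_i\gamma_i$. Then $d\in\bbD_\gamma$ iff $d$ satisfies every equation of every $\gamma_i$, which holds iff $d\in\bigcap_i\bbD_{\gamma_i}$. When $I$ is finite, the union of finitely many presentations can be computed, so this case is effective. When $I$ is infinite, the same union still yields a valid (possibly infinite) "constraint", because "constraints" are allowed to be infinite sets. However, in general no finite description can be computed from the family, so this case is only non-effective.

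The only step requiring care is the bookkeeping of types. Pulling back along $u$ must land in a constraint over the input type $\alpha$. If constraints are allowed to have constrained input types, the restricted "updates" must be identified with their unrestricted versions on the underlying basic type. I do not expect a real obstacle here; the argument is essentially definitional.
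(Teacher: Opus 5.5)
Your proposal is correct and follows the paper's proof essentially verbatim. Both arguments present the preimage under $u$ by the pulled-back system $\{(u\comp f, u\comp g) : (f,g)\in\gamma\}$, and both obtain intersections by taking the union (conjunction) of the systems of equations, which is finite and computable when finitely many finite systems are intersected; your extra remark on unfolding constrained input types is a harmless refinement that the paper leaves implicit.
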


\AP
The above result is useful for deriving a "closure operator" 
that over-approximates any set $D$ of data of uniform "type" $\tau$
by the intersection of all the possible "constrained domains" that contain $D$. 
We call this over-approximation the \reintro{closure of $D$},
and denote it by $""\cl*(D)""$. 
Equivalently, one can define $\cl(D)$ as the "solution set" $\bbD_{\gamma_D}$
of the "constraint" $\gamma_D = \big\{ (f,g) \::\: \forall d\in D ~ f(d)=g(d) \big\}$ 
that contains all equations that are valid over $D$.

\begin{example}\label{ex:closure-polynomials}
Let $\bbD$ be again the "polynomial register algebra" and 
$D = \big\{ \big(\frac{1-n^2}{1+n^2}, \frac{2n}{1+n^2}\big) \::\: n\in\bbN \big\} \subseteq \bbD_2$.
\LIPICSorARXIV{%
This set contains infinitely many points on the unit circle
from Example \ref{ex:constraint-polynomial},
and its "closure" coincides with circle itself.
}{%
This set contains infinitely many points that can 
be graphically represented as follows:
\[
\begin{tikzpicture}[scale=0.4]
  \draw[gray] (0,0) circle (1);

  \foreach \n in {0,1,...,10}{
    \pgfmathsetmacro{\x}{(1-\n*\n)/(1+\n*\n)}
    \pgfmathsetmacro{\y}{2*\n/(1+\n*\n)}
    \fill (\x,\y) circle ({8mm * 1/(\n+5)});
  }

  \node at (-1,0) {$\times$};
\end{tikzpicture} 
\]
Its "closure" is the unit circle from Example \ref{ex:constraint-polynomial}.
}%
We also observe that every finite system of polynomial equations, and hence
every "constrained domain" of the "polynomial register algebra", can be
represented by a single "constraint" of the form $f(x)=g(x)$, for suitable
"updates" $f,g: \bbD_n \to \bbD_m$. Moreover, using the fact that the set of
roots of a product of polynomials is the union of the sets of roots of its 
factors, one can show that "constrained domains" are closed under finite unions. 
In particular, every finite set is its own "closure". 
Thus, over the "polynomial register algebra", the "closure" operator behaves
like a topological closure --- as a matter of fact, this is known as Zariski 
topology \cite{CoxLittleOShea15}. This behavior, however, is specific to the
polynomial setting and should not be expected for arbitrary "data structures".
For example, it fails for the "data structure" with linear or affine "updates"
and also for the "free term algebra", as shown in the next example.
\end{example}

\begin{example}\label{ex:closure-terms}
Let $\bbD$ be the "free term algebra" over a ranked alphabet $\Gamma$.
In this "data structure", the only "constrained domains" contained in
$\bbD_1$ are $\emptyset$, the singletons, and $\bbD_1$ itself; this 
follows from Lemma~\ref{lem:non-overlap}, presented later in 
Section~\ref{subsec:downwardSTT}.
In particular, if $D \subseteq \bbD_1$ contains at least two distinct "terms", 
then $\cl(D)=\bbD_1$.
Non-trivial "closures" may nevertheless arise at higher numerical "types".
For instance, let $D=\{(a,a),(b,b)\}\subseteq\bbD_2$, where $a,b$ are distinct
rank-$0$ symbols of $\Gamma$. 
If $\Gamma$ also contains a binary symbol $c$, then
$\cl(D)=\big\{(t,t) \::\: t\in\bbD_1\big\}$,
as witnessed by the equation $f(x,y)=g(x,y)$, with
$f:(x,y)\mapsto c(x,y)$ and $g:(x,y)\mapsto c(y,x)$.
\end{example}

\AP
It is easy to see that the "closure" operator $\cl(-)$ is 
""extensive"" (i.e.~$D\subseteq\cl(D)$), 
""monotone"" (i.e.~$D\subseteq D'$ implies $\cl(D)\subseteq\cl(D')$), and 
""idempotent"" (i.e.~$\cl(\cl(D))=\cl(D)$), and hence the following lemma
holds:

\begin{lemma}[restate=ClosureUnionPush,
              name={\textbf{closure vs union}}]\label{lem:closure-union-push}
For all sets $D,D'\subseteq\bbD_\tau$, 
$\cl(D\cup D') = \bigcl(\cl(D) \cup \cl(D'))$.
\end{lemma}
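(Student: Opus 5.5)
The plan is to prove the two inclusions separately, using only that $\cl(-)$ is "extensive", "monotone" and "idempotent". No specific property of "constrained domains" is needed beyond these three facts.

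\emph{Inclusion $\cl(D\cup D')\subseteq\bigcl(\cl(D)\cup\cl(D'))$.} By "extensiveness" we have $D\subseteq\cl(D)$ and $D'\subseteq\cl(D')$, hence $D\cup D'\subseteq\cl(D)\cup\cl(D')$. Applying "monotonicity" of $\cl(-)$ to this inclusion gives the claim.

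\emph{Inclusion $\bigcl(\cl(D)\cup\cl(D'))\subseteq\cl(D\cup D')$.} From $D\subseteq D\cup D'$, "monotonicity" gives $\cl(D)\subseteq\cl(D\cup D')$. Symmetrically, $\cl(D')\subseteq\cl(D\cup D')$. Hence $\cl(D)\cup\cl(D')\subseteq\cl(D\cup D')$. Applying "monotonicity" once more and then "idempotency" yields
\[
  \bigcl(\cl(D)\cup\cl(D')) ~\subseteq~ \cl(\cl(D\cup D')) ~=~ \cl(D\cup D').
\]

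The only point that deserves care is that the three properties really hold for the particular definition of $\cl(-)$ as the intersection of all "constrained domains" containing the given set; the paper states them as easy, and I would justify them in one line each.
\begin{itemize}
\item "Extensiveness" holds because $D$ is contained in every set being intersected.
\item "Monotonicity" holds because any "constrained domain" containing $D'$ also contains $D$. So $\cl(D')$ is an intersection over fewer sets than $\cl(D)$, and therefore $\cl(D)\subseteq\cl(D')$.
\item "Idempotency" holds because $\cl(D)$ is itself a "constrained domain", since by Lemma~\ref{lem:constraint-closure} "constrained domains" are closed under arbitrary intersections. Consequently $\cl(D)$ is the least "constrained domain" containing $D$, and it is its own "closure".
\end{itemize}

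There is no real obstacle here. The one subtle step is this use of closure under infinite intersections in the idempotency argument.
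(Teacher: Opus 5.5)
Your proof is correct and follows the paper's argument. The first inclusion uses extensiveness plus monotonicity, and the second uses monotonicity applied twice followed by idempotency. Your one-line justifications of the three closure properties are also sound, including idempotency via closure of constrained domains under arbitrary intersections (Lemma~\ref{lem:constraint-closure}); the paper simply asserts these properties as easy.
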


\AP
It is also convenient to generalize the "closure operator" 
from sets of data to sets of "configurations". Given $S \subseteq \States{Q}$, 
we define $""\clS*(S)"" = \big\{(q,d) \::\: q\in Q, \: d\in \cl(D_{S,q})\big\}$,
where $D_{S,q} = \big\{d'\in\Data(q) \::\: (q,d')\in S\big\}$.
Note that $\clS(S)$ can be seen as a new "configuration space"
$\States{\tilde Q}$, where $\tilde Q$ contains a copy $\tilde q$ 
of each "state" $q\in Q$ associated with a "constrained domain" 
$\Data(\tilde{q})=\cl(S_q)$.


\medskip
We will soon see that the "image" of a "transformation" $h$ between
"configuration spaces", and hence in particular of a "transducer morphism", 
is obtained as the "closure" of the "range" of $h$. The current definition of
"constrained domain", however, may lack a finite presentation, since it allows
infinite systems of equations. This is harmless for proving the existence of
"images" of "transducer morphisms", but it raises the question of whether the
same result holds when all components of a "transducer" are required to be
finitely presentable.
\AP
A natural setting is obtained by restricting to ""finitary constrained domains"", 
namely "constrained domains" that are "solution sets" of finite systems of equations. 
Since finite systems are not, by themselves, closed under arbitrary intersections, 
we rely on the following compactness assumption for the "data structure" $\bbD$ 
under consideration:

\begin{assumption}[restate=Compactness,
                   name={\textbf{compactness}}]\label{ass:compactness}
For every numeric "type" $\tau$ of $\bbD$ and for every "constraint" 
$\gamma$ of "type" $\tau$, there is a finite subset $\gamma'$ of $\gamma$ 
that is a "constrained type" of $\bbD$ and has the same "solution set" as $\gamma$. 
Therefore, all "constrained domains" are allowed in $\bbD$ and they are "finitary".
\end{assumption}

The canonical example of a "data structure" satisfying this assumption 
is, again, the "polynomial register algebra". For this structure, Hilbert's finite 
basis theorem \cite{Hilbert1890} implies that every "constraint" is equivalent 
to a finite subset of it. This result serves as a foundation for deriving 
similar compactness properties for other "data structures", such as the 
"string register algebra" or the "free term algebra". 
We shall use Hilbert's theorem later, when we will present concrete
classes of "transducers" that admit "algebraically minimal" objects.

\medskip
Now, turning to the proof of existence of "images" of "transducer morphisms", 
we are going to exploit closure properties of "constrained domains" and the derived "closure" operator.
We first introduce a few technical lemmas, one showing a continuity property of
"transformations" w.r.t.~"closures@@space", another characterizing "transformations" that 
are "epis", and a third one giving sufficient conditions for "transformations" to be 
"strong mono".

\begin{lemma}[restate=ClosureContinuity, 
              name={\textbf{closure-continuity of transformations}}]
              \label{lem:closure-continuity}
For every "transformation" $f: \States{Q} \to \States{Q'}$
and every subset $S$ of $\States{Q}$, we have
$f(\clS(S)) \subseteq \clS(f(S))$.
\end{lemma}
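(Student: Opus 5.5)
The plan is to reduce the statement to a per-state claim about single "updates", and then use the closure of "constrained domains" under inverses of "updates" (Lemma~\ref{lem:constraint-closure}).

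First, we unfold the definitions. Let $(q,d)\in\clS(S)$ with $q\in\Dom(\speca f)$, and put $r=\speca f(q)$ and $u=\specb f(q)\in\bbD_{\type(q)}^{\type(r)}$. By definition of $\clS(-)$, we have $d\in\cl(D_{S,q})$. We must show that $(r,u(d))\in\clS(f(S))$, that is, $u(d)\in\cl(D_{f(S),r})$. Every $(q,d')\in S$ is mapped by $f$ to $(r,u(d'))$. Hence $u(D_{S,q})\subseteq D_{f(S),r}$. (Other states $q'$ with $\speca f(q')=r$ only contribute further elements to $D_{f(S),r}$, which is harmless by "monotonicity".) It therefore suffices to prove the single-update statement
\[
u(\cl(D)) \subseteq \cl(u(D)) \qquad\text{for every update } u \text{ and every } D\subseteq\bbD_\alpha ,
\]
because then $u(d)\in\cl(u(D_{S,q}))\subseteq\cl(D_{f(S),r})$.

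To prove the single-update statement, we let $E=\cl(u(D))$, which is a "constrained domain" of type $\beta$. By Lemma~\ref{lem:constraint-closure}, the preimage $u^{-1}(E)$ is a "constrained domain" of type $\alpha$. Concretely, if $E=\bbD_\gamma$, then $u^{-1}(E)=\bbD_{\gamma'}$ with $\gamma'=\{(u\comp g,\,u\comp g') : (g,g')\in\gamma\}$, using closure of "updates" under "composition". Since $u(D)\subseteq E$ by "extensiveness", we have $D\subseteq u^{-1}(E)$. As $\cl(D)$ is the intersection of all "constrained domains" containing $D$, we get $\cl(D)\subseteq u^{-1}(E)$, i.e.\ $u(\cl(D))\subseteq E$.

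The only delicate point is bookkeeping for constrained types. The closure is taken relative to the unconstrained underlying type of each state, and $\Data(q)$ may itself be constrained. I would handle this by working throughout in the underlying basic types. The "updates" on a constrained type are restrictions of updates on the basic type, so the preimage argument applies verbatim. The same holds when $f$ is partial, since states outside $\Dom(\speca f)$ contribute nothing.
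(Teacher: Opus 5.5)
Your proof is correct and follows the paper's argument. The paper also first proves $u(\operatorname{cl}(D))\subseteq\operatorname{cl}(u(D))$ for a single update via closure of constrained domains under inverse images, then lifts it state by state using monotonicity of the closure. The only cosmetic difference is that you apply the preimage argument directly to $E=\operatorname{cl}(u(D))$, which is itself a constrained domain, whereas the paper quantifies over all constrained domains $E\supseteq u(D)$ and then takes their intersection.
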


\begin{lemma}[restate=Epis, name={\textbf{epi transformations}}]\label{lem:epis}
Let $f: \States{Q} \to \States{Q'}$ be a "transformation" "specified" by $(\speca f,\specb f)$. 
The following conditions are equivalent:
\begin{itemize}
\item $f$ is "epi",
\item $\Cod(f) \subseteq \clS(\Rng(f))$,
\item $\speca f: Q\to Q'$ surjective and 
      $\Data(r) \subseteq \bigcl(\bigcup_{q\in\speca f^{-1}(r)} \Rng(\specb f(q)))$ for every $r\in Q'$.
\end{itemize}
\end{lemma}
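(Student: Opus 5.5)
We prove the cycle of implications (i)$\Rightarrow$(iii)$\Rightarrow$(ii)$\Rightarrow$(i). Throughout, the ambient "category" is that of "transformations" between ("constrained") "configuration spaces", so the test arrows $h,h'$ in the definition of "epi" are themselves "transformations".

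\textbf{(ii)$\Leftrightarrow$(iii).} This is essentially bookkeeping with the definition of $\clS(-)$.

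The set $\Rng(f)$ consists of the pairs $(\speca f(q),\specb f(q)(d))$. Hence the data of $\Rng(f)$ at a "state" $r$ is
\[
\bigcup_{q\in\speca f^{-1}(r)} \Rng(\specb f(q)).
\]
By definition, $\clS(\Rng(f))$ contains only "configurations" whose "state" lies in the range of $\speca f$. Therefore $\Cod(f)\subseteq\clS(\Rng(f))$ forces two things. First, every $r\in Q'$ with non-empty $\Data(r)$ must be in $\Rng(\speca f)$. Second, $\Data(r)$ must be contained in the "closure" of the union above.

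One caveat: for states with empty domain, surjectivity of $\speca f$ has to be read with some care, or is assumed to hold since every type domain is non-empty. I expect to dispatch this with a remark. The converse (iii)$\Rightarrow$(ii) reads the same computation backwards.

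\textbf{(ii)$\Rightarrow$(i).} Let $h,h':\States{Q'}\to\States{Q''}$ be "transformations" with $f\comp h=f\comp h'$. Fix a "state" $r=\speca f(q)$.

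First, $\speca h(r)=\speca h'(r)$, since both equal the state component of $h(f(q,d))$ for any $d$. By surjectivity of $\speca f$, this gives $\speca h=\speca h'$.

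Second, the two "updates" $\specb h(r),\specb h'(r)$ have the same output "type". They agree on every data of the form $\specb f(q)(d)$ with $\speca f(q)=r$, that is, on the union $D_r$ from (iii). The pair $(\specb h(r),\specb h'(r))$ is therefore an equation valid over $D_r$, so it belongs to the "constraint" $\gamma_{D_r}$. Consequently the two "updates" agree on $\cl(D_r)\supseteq\Data(r)$. Hence $h=h'$.

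Definedness also matches: $h$ and $h'$ are specified on states, and the previous paragraph shows they are defined on the same states.

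\textbf{(i)$\Rightarrow$(ii)/(iii).} This is the main obstacle, since it requires building separating test arrows inside the restricted "category".

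\emph{Surjectivity of $\speca f$.} Suppose $r\notin\Rng(\speca f)$. Take $Q''=Q'$ and two "transformations" that agree everywhere except at $r$: let $h$ be the identity, and let $h'$ be the identity undefined at $r$. Partial "transformations" are allowed, so this is legitimate. Then $f\comp h=f\comp h'$ but $h\neq h'$ (this uses $\Data(r)\neq\emptyset$), contradicting (i).

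\emph{Closure condition.} Suppose some $d\in\Data(r)$ lies outside $\cl(D_r)$. By the definition of $\cl(-)$ as the "solution set" of $\gamma_{D_r}$, there is an equation $(g,g')\in\gamma_{D_r}$, with output "type" $\beta$, such that $g(d)\neq g'(d)$. Build $Q''=\{r''\}$ with $\type(r'')=\beta$. Define $h$ to send $r\mapsto r''$ with "update" $g$, and $h'$ to send $r\mapsto r''$ with "update" $g'$; both are undefined on all other states. Then $f\comp h=f\comp h'$, because $g$ and $g'$ agree on $D_r$. However $h\neq h'$, since they differ at $(r,d)$. This contradicts (i).

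The delicate point is that the "category" must permit these auxiliary spaces and partial maps, and that $\Data(r)$ may itself be constrained. Restricting $g,g'$ to the constrained type is allowed by the paper's convention on "updates" from "constrained types".
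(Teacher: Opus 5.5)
Your proposal is correct and follows essentially the same route as the paper. You prove (ii)$\Leftrightarrow$(iii) by unfolding $\clS(\Rng(f))$ state by state. You prove (i)$\Rightarrow$(iii) with the same separating test transformations: one pair differing only on an unreached state $r$, and one pair sending $r$ to a single fresh state via updates $g,g'$ that witness $d\notin\cl(D_r)$. You get the converse because updates agreeing on $D_r$ agree on its closure; your ``(ii)$\Rightarrow$(i)'' step actually invokes (iii), which is fine given the equivalence. The edge cases you noticed are left untreated in exactly the same way by the paper's own argument: empty data domains, and the tacit assumption that $\clS(\Rng(f))$ contributes nothing at states outside the range of $\speca f$, essentially that $\cl(\emptyset)=\emptyset$.
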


\begin{lemma}[restate=StrongMonos, name={\textbf{strong mono transformations}}]\label{lem:strong-monos}
Every inclusion map $g: \States{Q} \to \States{Q'}$, where $\States{Q} \subseteq \States{Q'}$
and $g(q,d) = (q,d)$ for all $(q,d)\in\States{Q}$, is a "strong mono".
\end{lemma}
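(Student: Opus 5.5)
We verify the "diagonal fill-in property" directly. Fix an "epi" $f: \States{P} \epi \States{P'}$ and "transformations" $h: \States{P}\to\States{Q}$, $h': \States{P'}\to\States{Q'}$ with $f\comp h' = h\comp g$. Here $g$ is the inclusion. We must produce a unique "transformation" $d: \States{P'}\to\States{Q}$ with $h = f\comp d$ and $h' = d\comp g$. Uniqueness is immediate: $g$ is injective as a function, so $d\comp g = h'$ forces $d$ to coincide pointwise with $h'$ (with codomain restricted), and "transformations" are equal as partial functions once their specification is equal on their domain. Thus the whole work is in existence, and the candidate is forced: $d$ is $h'$ with codomain $\States{Q}$.

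To make this well defined as a "transformation", we specify $d$ by $(\speca{h'},\specb{h'})$, viewing each $\specb{h'}(r)$ as an "update" into the (possibly "constrained") domain $\Data_Q(\speca{h'}(r))$. Two things must be checked. First, $\speca{h'}(r)\in Q$ for every $r\in\Dom(\speca{h'})$. Second, the range of $\specb{h'}(r)$ lies inside $\Data_Q(\speca{h'}(r))$, which may be strictly smaller than $\Data_{Q'}(\speca{h'}(r))$.

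These checks are exactly where the "epi" hypothesis enters. By Lemma~\ref{lem:epis}, $\Cod(f)\subseteq\clS(\Rng(f))$ and $\speca f$ is surjective. Hence every $r$ equals $\speca f(p)$ for some $p$, and $\speca{h'}(r) = \speca{(f\comp h')}(p) = \speca{(h\comp g)}(p)\in Q$. For the data, apply Lemma~\ref{lem:closure-continuity} to $h'$:
\[
h'(\States{P'}) \subseteq h'(\clS(\Rng(f))) \subseteq \clS(h'(\Rng(f))) = \clS(g(\Rng(h))) \subseteq \clS(\States{Q}).
\]
The "space closure" here is taken inside $\States{Q'}$. Since $\States{Q}$ is a "configuration space" whose domains $\Data_Q(q)$ are "constrained domains" (sub-types of the corresponding types in $Q'$), each of them is closed. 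Therefore $\clS(\States{Q})=\States{Q}$, and the range of $h'$ lies in $\States{Q}$. This is the step I expect to be the main obstacle. One has to make sure the closure computed relative to $Q'$'s types, restricted at state $q$, returns exactly $\Data_Q(q)$. This uses the fact that a "constrained domain" of a sub-type is the solution set of a constraint of the base type, so it equals its own closure (by extensiveness plus the definition of $\cl$ as the intersection of constrained domains containing it). Some care is also needed with partiality: domains of definition of $d$ and $h'$ agree by construction.

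Finally, the equations hold. We have $d\comp g = h'$ by construction. For $f\comp d = h$, compose with $g$: $f\comp d\comp g = f\comp h' = h\comp g$. Since $g$ is injective and the two sides have the same domains, $f\comp d = h$ as partial functions with codomain $\States{Q}$. The fact that composites are again "transformations" follows from Lemma~\ref{lem:composition-transformations}.
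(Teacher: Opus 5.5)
Your proof is correct and is essentially the paper's. As there, $d$ is $h'$ with its codomain restricted to $\langle Q\rangle$, justified by the chain $h'(\langle P'\rangle)\subseteq h'(\operatorname{cl}(\operatorname{Rng}(f)))\subseteq\operatorname{cl}(h'(\operatorname{Rng}(f)))=\operatorname{cl}(g(\operatorname{Rng}(h)))\subseteq\langle Q\rangle$, which comes from Lemmas~\ref{lem:epis} and~\ref{lem:closure-continuity} and the closedness of $\langle Q\rangle$. The differences are minor. You check separately, via surjectivity of $\hat f$, that the control-state part of $h'$ lands in $Q$, whereas the paper reads this off the closure chain. You also obtain uniqueness from injectivity of $g$, whereas the paper derives it from $f$ being epi.
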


Since "transducer morphisms" are particular forms of "transformations",
the two lemmas above imply similar sufficient conditions for a "transducer morphism" 
to be "epi", resp.~"strong mono".%
\footnote{In fact, the existential quantification in the "diagonal fill-in property" 
          for the definition of "strong mono" requires to check that the resulting
          diagonal $d$ is a "transducer morphism". This is readily done by
          inspecting the proof of Lemma \ref{lem:strong-monos}%
          \LIPICS{ in the long version of the paper}, %
          which constructs $d$ from a given "transformation" $h'$ by restricting the "codomain".
          Clearly, this construction is also applicable to "transducer morphisms".}
We remark, however, that these results do not give exact characterizations of
"epis" and "strong monos" among "transducer morphisms". For instance, because
the definition of "epi" involves a universal quantification over the available
"arrows", a "transducer morphism" may be "epi" relative to the class of
"transducer morphisms", while failing to be "epi" relative to the larger 
class of "transformations".

Using the above results, we can now prove the existence of "strong factorizations" 
of "transducer morphisms".  

\begin{proposition}[restate=Image,name={\textbf{images}}]
                    \label{prop:images}
Every "transducer morphism" $h: A\to B$ has a "strong factorization" 
$A \epi[f] C \smono[g] B$, 
with $f$ "epi" and $g$ "strong mono".
\end{proposition}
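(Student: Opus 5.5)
We take $C$ to be $B$ restricted to the closure of the range of $h$. Let $A=(Q_A,\deltainit,(\delta_a)_a,\deltahalt)$ and $B=(Q_B,\kappainit,(\kappa_a)_a,\kappahalt)$, and set $S=\Rng(h)\subseteq\States{Q_B}$. The first step is to define the configuration space of $C$ as $\clS(S)$. Concretely, the state set $Q_C$ consists of copies $\tilde q$ of those $q\in Q_B$ lying in $\Rng(\speca h)$, each with constrained domain $\Data(\tilde q)=\cl(D_{S,q})$. States of $B$ outside $\Rng(\speca h)$ are discarded, since their slice of $S$ is empty and $\cl(\emptyset)$ is irrelevant. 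The transformations of $C$ are those of $B$ restricted to $\States{Q_C}$:
\begin{itemize}
\item the initial transformation is $\kappainit$ with codomain restricted to $\clS(S)$;
\item each $\kappa_a$ is restricted in both source and codomain;
\item $\kappahalt$ is restricted in its source.
\end{itemize}
Restricting an update to a constrained source type is allowed by definition, and the specification $(\speca{},\specb{})$ is inherited from $B$.

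The main step is to check that these restrictions are well defined, i.e.\ that $\clS(S)$ is closed under the transformations of $B$. For the initial transformation, $\kappainit=\deltainit\comp h$, so its range lies in $S$. For internal transitions, $\kappa_a(S)=\kappa_a(h(\States{Q_A}))=h(\delta_a(\States{Q_A}))\subseteq S$, using $h\comp\kappa_a=\delta_a\comp h$ and totality of $h$. Lemma~\ref{lem:closure-continuity} then gives $\kappa_a(\clS(S))\subseteq\clS(\kappa_a(S))\subseteq\clS(S)$, by monotonicity and idempotency of the closure. One subtlety needs care here: $\kappa_a$ is partial, so the continuity lemma must be applied to the restriction of $\kappa_a$ to its domain. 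Since domains are unions of whole state slices, and a partial transformation restricted to its defined states is again a transformation, this should go through. I expect this, together with stating precisely that $\Dom$ of each restricted transformation matches, to be the main technical obstacle.

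Next define $f:\States{Q_A}\to\States{Q_C}$ as $h$ with codomain replaced by $\clS(S)$, and $g:\States{Q_C}\to\States{Q_B}$ as the inclusion. Both are transformations. For $f$, the updates $\specb h(q)$ have range inside the constrained codomain, which the definition of constrained codomain allows. For $g$, the updates are identities (restricted), and $\speca g$ maps $\tilde q\mapsto q$. The morphism equations for $f$ and $g$ follow directly from those of $h$, because $g$ is injective and the transformations of $C$ are restrictions. For example, $f\comp\kappa_a^C\comp g=h\comp\kappa_a=\delta_a\comp h=\delta_a\comp f\comp g$, and cancelling the injective inclusion $g$ yields $f\comp\kappa_a^C=\delta_a\comp f$, including equality of domains. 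The initial and final equations are handled similarly, and clearly $h=f\comp g$.

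Finally, $f$ is epi by Lemma~\ref{lem:epis}, since $\Cod(f)=\clS(S)=\clS(\Rng(f))$. This argument holds among all transformations, so a fortiori among transducer morphisms, because epi-ness is a universal condition over fewer test arrows. The map $g$ is a strong mono by Lemma~\ref{lem:strong-monos}, identifying $\States{Q_C}$ with a subset of $\States{Q_B}$ via $\tilde q\mapsto q$. To see that strong mono-ness survives in the category of transducer morphisms, I will follow the footnote. Take a commuting square with $e:A'\epi B'$ epi and transducer morphisms $k:A'\to C$, $k':B'\to B$. The diagonal $d$ is $k'$ with codomain restricted to $\clS(S)$, which is legitimate because $k'(\States{})\subseteq\clS(k'(\Rng e))\subseteq\clS(S)$. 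That $d$ is again a transducer morphism follows, as before, by cancelling the injective $g$. Uniqueness of $d$ comes from $g$ being mono.
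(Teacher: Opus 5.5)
Your construction of $C$, $f$ and $g$ matches the paper's proof, and so do the supporting checks: closure-continuity to keep the restricted transitions inside $\operatorname{cl}(\operatorname{Rng}h)$, the morphism equations obtained by cancelling the injective inclusion, and Lemma~\ref{lem:epis} for $f$. The gap is in your last paragraph. There you work in the category of transducers realizing $\varphi$, so the $e$ in the fill-in square is only assumed epi \emph{relative to transducer morphisms}. Yet your inclusion $k'(\langle Q_{B'}\rangle)\subseteq\operatorname{cl}(k'(\operatorname{Rng}e))$ presupposes $\langle Q_{B'}\rangle\subseteq\operatorname{cl}(\operatorname{Rng}e)$. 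That is Lemma~\ref{lem:epis}, which characterizes epis among \emph{all} transformations. You correctly noted that epi-ness transfers from transformations to transducer morphisms, but here you use the converse, and the converse fails. Because the transducer category has more epis, a strong mono there must admit more fill-ins, so Lemma~\ref{lem:strong-monos} does not transfer.

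Here is a concrete failure in the copyless non-erasing free term algebra.
\begin{itemize}
\item Let $B'$ have one state $r$ of type $2$, initial update $()\mapsto(a,a)$, $\sigma$-update $(x_1,x_2)\mapsto(c(x_1),c(x_2))$, and final update $(x_1,x_2)\mapsto b(x_1,x_2)$. Let $e:I\to B'$ be the initial morphism.
\item Non-erasing updates are injective. So for any morphism $k:B'\to C''$, the final-transformation equation determines the data part of $k$ from the state $\hat k(r)$. That state is fixed by $e$, hence $e$ is epi among transducer morphisms.
\item Every reachable datum $(c^n(a),c^n(a))$ satisfies $b(x_1,x_2)=b(x_2,x_1)$. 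So $C=\operatorname{Reach}(B')$ has its data inside the diagonal.
\item The square $e\mathbin{;}\mathrm{id}_{B'}=f\mathbin{;}g$ commutes, but no $d$ satisfies $d\mathbin{;}g=\mathrm{id}_{B'}$. Thus $g$ is not a strong mono in that category.
\end{itemize}
The paper's footnote makes the same leap: it only checks that the diagonal is a morphism, not that fill-ins exist against the larger class of epis.

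A coherent repair is to change which arrows count as epis and strong monos.
\begin{itemize}
\item Take as left class the morphisms that are epi as transformations. This is a subclass of the epis, so its members are still cancellable.
\item Take as right class the arrows with unique fill-in against that class.
\end{itemize}
The proof of Proposition~\ref{prop:minimal} only needs diagonal fill-ins between these two classes. With this reading, your argument goes through verbatim.
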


\begin{proof}[Proof sketch\ARXIV{ (details in the appendix))}]
The "transducer" $C$ is defined as a ``pruning'' of $B$,
and specifically by replacing the "configuration space" of $B$
with the "closure" $\clS(\Rng(h))$ of the "range" 
of the "morphism" $h: A \to B$.
Accordingly, the "morphism" $h$ is factorized as $A \epi[f] C \smono[g] B$,
where the "epi" $f$ is obtained by restricting the "codomain"
of $h$ to $\clS(\Rng(h))$ and the "strong mono" $g$ is 
the inclusion map from $\clS(\Rng(h))$ to $\Cod(h)$. 
\end{proof}

%% file: fig-strong-factorization.tex
\\[-4ex]
\begin{tikzpicture}
\draw (40mm,0) ellipse (4mm and 8mm);
\node (B) at (40mm,10mm) {$B$};

\fill [gray!20] (0,8mm) -- (20mm,3mm) -- (20mm,-3mm) -- (0,-8mm) -- cycle;
\fill [gray!20] (20mm,3mm) -- (40mm,3mm) -- (40mm,-3mm) -- (20mm,-3mm) -- cycle;
\draw (0,8mm) -- (20mm,3mm);
\draw (20mm,-3mm) -- (0,-8mm);
\draw (20mm,3mm) -- (40mm,3mm);
\draw (20mm,-3mm) -- (40mm,-3mm);

\draw [fill=white] (0,0) ellipse (4mm and 8mm);
\node (A) at (0,10mm) {$A$};

\draw [fill=white] (20mm,0) ellipse (2mm and 3mm);
\fill [pattern=north east lines, pattern color=gray!50] (20mm,0) ellipse (2mm and 3mm);
\node (C) at (20mm,5mm) {$C$};

\draw [dashed, fill=white] (40mm,0) ellipse (2mm and 3mm);
\fill [pattern=north east lines, pattern color=gray!50] (40mm,0) ellipse (2mm and 3mm);

\draw (A) edge [arrow, bend left=15] node [above=1mm] {\scriptsize $h$} (B);
\draw (A) edge [epi arrow] node [above=1mm] {\scriptsize $f$} (C);
\draw (C) edge [reg mono arrow] node [above=1mm] {\scriptsize $g$} (B);
\end{tikzpicture}

%% file: initial-transducer.tex

\subsection{The initial transducer}\label{sec:initial-transducer}

The construction of an "initial transducer" does not require additional assumptions
and it is a straightforward generalization of the construction of the "initial object"
in the "category" of finite state automata. 
Intuitively, the "states" of the "initial transducer" are identified with
the input words over $\Sigma$ and there is no internal memory. The "output"
is produced by simply mapping each state $w$ to the corresponding value
$\varphi(w)$, if defined.
As a matter of fact, any partial function $\varphi: \Sigma^* \to \bbD_{\halt*}$
can be "realized" by such a "transducer".

\phantomintro{\initialtransd*}%
\begin{proposition}[restate=InitialTransducer, 
			  name={\textbf{initial transducer}}]\label{prop:initial-transducer}
There is an (infinite) "transducer" $\reintro{\initialtransd*}$,
with words over $\Sigma$ as "control states",
that is "initial", namely, that admits a unique 
"morphism" $\initialmorph{A}: \initialtransd \to A$
towards each "transducer" $A$ that "realizes" $\varphi$.
\end{proposition}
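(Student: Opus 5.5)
I will construct $\initialtransd$ explicitly and then show existence and uniqueness of the morphism $\initialmorph{A}$ by induction on input words.

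\emph{Construction.} Let the state set of $\initialtransd$ be $\Sigma^*$, with every state $w$ assigned the type $\init$, so that each data domain is the singleton $\{\initdata\}$. The initial transformation maps $(\initstate,\initdata)$ to $(\emptystr,\initdata)$ via the identity update on $\bbD_{\init*}$. Each $\delta_a$ maps $(w,\initdata)$ to $(wa,\initdata)$, again via the identity update. The final transformation is defined on $(w,\initdata)$ exactly when $w\in\Dom(\varphi)$, and then sends it to $(\haltstate,\varphi(w))$. This requires the constant update $\bbD_{\init*}\to\bbD_{\halt*}$ with value $\varphi(w)$ to be an available update.

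Here is the first delicate point. I would argue that the required updates exist because $\varphi$ is realized by some transducer $A$. For $w\in\Dom(\varphi)$, the transformation $\deltainit\comp\delta_w\comp\deltahalt$ of $A$ restricted to $\initstate$ is an update in $\bbD_{\init*}^{\halt*}$, since updates are closed under composition. This update sends $\initdata$ to $\varphi(w)$. Similarly, the composition $\deltainit\comp\delta_w$ supplies an update from type $\init$ to the type of the state reached in $A$. So all needed updates exist. If one prefers, the intermediate states can be typed more carefully, but the singleton choice is simplest. Realization of $\varphi$ by $\initialtransd$ is then immediate.

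\emph{Existence of the morphism.} Given $A=(Q,\deltainit,(\delta_a)_a,\deltahalt)$ realizing $\varphi$, there is a mismatch with totality. A morphism must be a total transformation, but $\deltainit\comp\delta_w$ may be undefined for some $w$. Indeed, the condition $\kappa$-commutation $h\comp\kappa_a=\delta_a\comp h$ forces $h(w)$ to be defined whenever $h(wa)$ is. I expect this to be the main obstacle. I would resolve it by restricting $\initialtransd$'s states to words $w$ such that $\deltainit\comp\delta_w$ is defined in some (equivalently every) transducer realizing $\varphi$. The natural candidate is the set of prefixes of $\Dom(\varphi)$, together with the convention that $\deltainit$ and $\delta_a$ in $\initialtransd$ are undefined outside it. Commutation then demands that $A$'s runs also die outside this set. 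If that is not guaranteed, I would instead adopt that the morphism need only be total on the reachable part. Either way, I would fix this choice before the proof.

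With this fixed, I define $h(w,\initdata)=(\deltainit\comp\delta_w)(\initstate,\initdata)$. As a transformation it is specified by $\hat h(w)$, the state reached in $A$ on $w$, and $\check h(w)$, the composed update, both taken from Lemma~\ref{lem:composition-transformations}. Checking the three diagrams is then routine:
\begin{itemize}
\item the initial diagram holds at $w=\emptystr$;
\item $h\comp\delta^A_a=\delta^{\initialtransd}_a\comp h$ holds because $\delta_{wa}=\delta_w\comp\delta_a$;
\item $h\comp\deltahalt$ yields $\varphi(w)$ exactly on $\Dom(\varphi)$, because $A$ realizes $\varphi$.
\end{itemize}

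\emph{Uniqueness.} Suppose $h'$ is any morphism. By induction on $|w|$, the initial condition forces $h'(\emptystr,\initdata)=\deltainit(\initstate,\initdata)$. The transition condition then forces $h'(wa,\initdata)=\delta_a(h'(w,\initdata))$. Since each data domain is a singleton, a transformation is determined by its values on configurations. Hence $h'=h$.
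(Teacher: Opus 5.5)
Your construction of $I$ matches the paper's proof: states $\Sigma^*$ with the singleton type, identity updates on letters, and final update $d_{\triangleright}\mapsto\varphi(w)$ on $\mathrm{Dom}(\varphi)$. So do the morphism sending $(w,d_{\triangleright})$ to the configuration reached by $A$ on $w$, and the induction on $|w|$ for uniqueness. You also show that the constant update $d_{\triangleright}\mapsto\varphi(w)$ exists, as a composite of updates along a run of some realizer; the paper leaves this implicit.

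The one thing to repair is your handling of totality, where only your fallback works.

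\emph{Run-domains are invariant under total morphisms.} Let $h\colon B\to C$ be a total morphism. The commutation equalities give $\kappa_{\triangleright}\mathbin{;}\kappa_w=\delta_{\triangleright}\mathbin{;}\delta_w\mathbin{;}h$. Since $h$ is total, the domain of the right-hand side is $\mathrm{Dom}(\delta_{\triangleright}\mathbin{;}\delta_w)$. Hence the set of words on which the run is defined is the same in $B$ and $C$.

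\emph{Your first option fails.} You claim this set is the same in \emph{some (equivalently every)} realizer of $\varphi$, but that is false. The full $I$ and its restriction to prefixes of $\mathrm{Dom}(\varphi)$ both realize $\varphi$. They differ on this set whenever some word is not a prefix of a word in $\mathrm{Dom}(\varphi)$. So the restricted $I$ admits no total morphism into the full one, and in general no transducer is initial for total morphisms.

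\emph{What the paper does.} The paper's proof keeps all of $\Sigma^*$ as states. It lets the morphism be undefined at $(w,d_{\triangleright})$ exactly when the run of $A$ on $w$ is undefined, tacitly relaxing the totality clause of the definition. This is consistent with the commutation conditions because equality of partial functions compares domains, as you observed for $h(w)$ versus $h(wa)$.

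Adopt that convention. Your phrase \emph{total on the reachable part} needs to be made precise in this form, since every state of $I$ is reachable. With that fixed, the rest of your argument goes through unchanged.
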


%% file: final-transducer.tex

\subsection{The final transducer for the initial images}\label{sec:final-transducer}

The construction of a "final" "transducer" $\finaltransd$ is the most delicate
part, even after restricting to the sub-category of "images" of "initial@@arrow"
"morphisms". The goal is to extract from the "transduction" $\varphi$ some
"configurations" that ``cover'' the reachable "configurations" of every
"transducer" $A$ "realizing" $\varphi$, so as to obtain a "final morphism" 
from $\Reach(A)$ to $\finaltransd$. For this "morphism" to be unique,
the values stored in $\finaltransd$ must be as large as possible so as
to cover the values stored by any "equivalent" "transducer" at the 
corresponding "control states".

\AP
This brings us to one of the main difficulties of the "minimization" result.
Given a set $U$ of "updates", we need "factorizations" of the form
$u = g\comp v_u$ for every $u\in U$. In this case, the "update" $g$ is 
called a "common (inner) divisor" of $U$ and the "updates" $v_u$ are the 
corresponding "(outer) residuals". 
Moreover, $g$ should be a "\emph{greatest} common divisor@GCD", namely, 
every other "common divisor" $g'$ of $U$ must itself be a "divisor" of $g$. 
This requirement already appears in the minimization of
"sequential transducers" \cite{cho03}, where the longest common prefix of 
a set of "outputs" plays the role of a "greatest common divisor" in the 
"data structure" that manipulates strings by appending constants.

The additional difficulty here is that we work with an abstract, sorted monoid 
of "updates", not just with a free monoid. Thus, "greatest common divisors" need
not be unique, and different "common divisors" may induce different "residuals". 
A common way to address this issue is to assume that the monoid of
"updates" comes equipped with a canonical choice of "greatest common divisor"
and residuals. This approach is pursued, for instance, in
\cite{Maletti04,Gerdjikov18,Aristote25,Aristote24unpublished}, and, to some
extent, also in \cite{Eisner03}, where canonical factorizations of sets of
updates underpin both "minimization" and automata learning results.

Here we follow a different route. Instead of postulating canonical 
"factorizations" of "updates", whose choice may be non-obvious or unnatural 
in many "data structures", we use "strong factorizations" 
(Proposition~\ref{prop:images}) and the "diagonal fill-in property".
This allows us to restrict, without loss of generality, to "divisors" 
of "updates" that are "epis", and to show that "greatest common divisors" 
and their "residuals" ---whenever they exist--- behave canonically enough 
for the "minimization" construction to go through.



We recap the important notions that we are going to use.
\AP
We call ""vector"" a family $U$ of "updates" indexed by words $t\in\Sigma^*$.
These "updates" must share the same input "type" $\alpha$ and "output" "type" $\beta$,
and are thus of the form $U[t]: \bbD_\alpha \to \bbD_\beta$.
For convenience, we allow "vectors" to contain undefined entries,
which we denote by writing $U[t]=\bot$ for some $t\in\Sigma^*$.
We also tacitly assume that the set of indices $t$ where 
the entries of $U$ are defined is prefix-closed, namely, if
$U[t]\neq\bot$ and $t'$ is a prefix of $t$, then $U[t']\neq\bot$.
\AP
We lift "composition" from functions to "vectors" in a pointwise manner:
given an "update" $f$ and a "vector" $U$, where the "domains" of 
the "updates" $U[t]$ are all equal to the "codomain" of $f$, 
we let $""f\vcomp U @pointwise composition""$ be the "vector" 
defined by $(f \vcomp U)[t] = f\comp (U[t])$ for all $t\in\Sigma^*$, 
with the convention that $f\comp\bot=\bot$.
\AP
We call ""sub-vector"" of $U=(U[t])_{t\in\Sigma^*}$ a "vector"
of the form $\reintro{U[a-]} = (U[at])_{t\in\Sigma^*}$, for any
letter $a\in\Sigma$.

\AP
Given a "factorization" $u=g\comp v$, we say that $g$ is a ""divisor""
of $u$ and $v$ is a ""residual"" of $u$ via $g$. 
\AP
If $g$ is also an "epi", we call it an ""epi divisor"", and 
remark that in this case there is a \emph{unique} "residual" of $u$ via $g$,
which 
we denote by $\reintro{\residual*{g}{u}}$.
\AP
A ""common divisor"" of a "vector" $U$ is a "divisor" $g$ of every $u\in U$.
If $g$ is also "epi", it is called an ""epi common divisor"", and the 
""residual vector"" of $U$ via $g$ is the vector $\reintro[residual vector]{\vresidual*{g}{U}}$ 
defined by $(\vresidual{g}{U})[t] = \residual{g}{U[t]}$ for all $t\in\Sigma^*$.
\AP
A ""greatest common divisor"" (\reintro{GCD} for short) of $U$ is a "common divisor" 
of $U$ that is "divided" by every other "common divisor" of $U$. 
By convention, if $U[t]=\bot$ for all $t\in\Sigma^*$, then we let $\bot$ be its unique "GCD".
\AP
Similarly, we define an ""epi greatest common divisor"" (\reintro{EGCD}) of $U$ 
as an "epi common divisor" of $U$ that is "divided" by every other "epi common divisor" of $U$.

Note that an "EGCD" of $U$ is a special case of a "common divisor" of $U$, but it is not
necessarily greatest w.r.t.~all possible "common divisors" of $U$. 
In particular, the notions of "GCD" and "EGCD" are incomparable.
Despite this, we can still relate these two notions via the the following result, 
which exploits the existence of "strong factorizations" 
(Proposition \ref{prop:images}) and the "diagonal fill-in property":

\begin{lemma}[restate=GCDvsEGCD, name=\textbf{GCD vs EGCD}]%
	\label{lem:gcd-vs-egcd}
If $g$ is a "GCD" of a "vector" $U$, then there is a "strong factorization" $g=g'\comp g''$ of it,
where $g'$ is an "EGCD" of $U$.
\end{lemma}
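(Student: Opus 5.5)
The plan is to take $g$ to be a "GCD" of $U$ and apply Proposition~\ref{prop:images}, or rather its analogue for "transformations" and "updates", to obtain a "strong factorization" $g = g'\comp g''$ with $g'$ "epi" and $g''$ "strong mono". The factorization is built by restricting the "codomain" of $g$ to the "closure" of its "range" and letting $g''$ be the inclusion (Lemmas~\ref{lem:epis} and \ref{lem:strong-monos}). The case where all entries of $U$ are $\bot$ is handled by the convention on $\bot$. It then remains to check two things: that $g'$ is an "epi common divisor" of $U$, and that every other "epi common divisor" of $U$ "divides" $g'$.

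The first point is immediate. For every defined entry $U[t]$ we have $U[t] = g\comp v_t$ for some "residual" $v_t$, since $g$ is a "common divisor". Hence $U[t] = g'\comp(g''\comp v_t)$, so $g'$ is a "common divisor", and it is "epi" by construction.

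For maximality, let $e$ be any "epi common divisor" of $U$. Since $e$ is in particular a "common divisor" and $g$ is a "GCD", there is $k$ with $g = e\comp k$, so $e\comp k = g'\comp g''$. Reading this as the square with the "epi" $e$ on top, $g'$ on the left, $k$ on the right and the "strong mono" $g''$ at the bottom, the "diagonal fill-in property" yields a unique $d$ with $g' = e\comp d$ and $k = d\comp g''$. The first equation says that $e$ "divides" $g'$, which is exactly what makes $g'$ an "EGCD".

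The main obstacle is making sure the "strong factorization" is legitimate in the ambient "category" of "updates", possibly extended with "constrained types". The intermediate "object" is a "constrained domain" $\cl(\Rng(g))$, so the "residuals" $g''\comp v_t$ and the diagonal $d$ must all be genuine "updates" in the extended "data structure". I would justify this as in the proof of Proposition~\ref{prop:images}: "updates" on a "constrained type" are restrictions of ordinary "updates". For the "epi" property I would specialize Lemma~\ref{lem:epis} to one-state "configuration spaces", and for strong monicity I would use Lemma~\ref{lem:strong-monos}, whose proof builds $d$ by restricting the "codomain" of $k$. In that step I also need $\Rng(k\restriction\cdot)$ to land in $\cl(\Rng(g))$, which follows from the closure-continuity Lemma~\ref{lem:closure-continuity} together with the fact that $e$ is "epi".
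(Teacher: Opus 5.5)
Your proposal is correct and follows the same route as the paper. You obtain the strong factorization $g=g'\comp g''$ by applying Proposition~\ref{prop:images} to single-state transformations, observe that $g'$ is an epi common divisor, and then apply the diagonal fill-in property to the square $e\comp k = g'\comp g''$ to show that every epi common divisor $e$ divides $g'$. Your extra remarks on why the factorization and the diagonal are genuine updates over constrained types (via Lemmas~\ref{lem:epis}, \ref{lem:strong-monos} and~\ref{lem:closure-continuity}) spell out details the paper leaves implicit.
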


We now introduce our second assumption, to be proven later for specific classes of "updates":

\begin{assumption}[restate=GCDs,
                   name={\textbf{existence of GCDs}}]\label{ass:GCD}
Every set of "updates" of uniform input and output "types" admits a "GCD". 
\end{assumption}

Note that if Assumption \ref{ass:GCD} holds, then every "vector" 
has a "GCD", and by Lemma \ref{lem:gcd-vs-egcd} also an "EGCD".
\emph{The results that follow tacitly rely on this assumption, 
and in particular on the existence of "EGCDs" of "vectors".}

\begin{lemma}[restate=UniqueEGCD, name=\textbf{uniqueness of EGCDs}]%
	\label{lem:unique-egcd}
The "EGCDs" of any given "vector" $U$ are pairwise "isomorphic@@arrow".
\end{lemma}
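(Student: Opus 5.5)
Given two "EGCDs" $g_1: \bbD_\alpha \to \bbD_{\beta_1}$ and $g_2: \bbD_\alpha \to \bbD_{\beta_2}$ of $U$, I will show that the mutual divisibility relations force an "arrow isomorphism" $g_1 \aiso g_2$. Here the relevant arrow isomorphism has the identity on the common source $\bbD_\alpha$ and an isomorphism $i$ between the codomains.

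First I use the definition directly. Since $g_2$ is an "epi common divisor" and $g_1$ is an "EGCD", $g_2$ "divides" $g_1$, so $g_1 = g_2 \comp k$ for some "update" $k$. Symmetrically, $g_2 = g_1 \comp k'$. Substituting gives $g_1 = g_1 \comp k' \comp k$ and $g_2 = g_2 \comp k \comp k'$. Because $g_1$ and $g_2$ are "epis", left cancellation yields $k'\comp k = \identity{\bbD_{\beta_1}}$ and $k \comp k' = \identity{\bbD_{\beta_2}}$. Hence $k$ is an "isomorphism" with inverse $k'$, and $g_1 = \identity{\bbD_\alpha}\comp g_1 = g_2 \comp k$, which is exactly $g_2 \aiso[\identity{},k] g_1$ (or its inverse, depending on orientation).

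Next I check the degenerate case where all entries of $U$ are $\bot$. There the convention makes $\bot$ the unique "GCD", so the claim is trivial.

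The main point to watch, and the expected obstacle, is whether divisibility of $g_1$ by $g_2$ is witnessed by an "update" in the same ambient "category" in which epi-ness is tested. With "constrained domains", the codomains $\bbD_{\beta_i}$ may be constrained types, and the factor $k$ must be an "update" between them. The definition of "EGCD" already demands that $k$ exist as an "update", so cancellation is legitimate because $g_i$ is epi relative to "updates". The resulting $k$ and $k'$ are then "updates", so the isomorphism is internal to the data structure.

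A further remark is that the "residual vectors" correspond as well: $\vresidual{g_1}{U} = k' \vcomp \vresidual{g_2}{U}$, by uniqueness of residuals via epi divisors. Precisely, $U[t] = g_2\comp r = g_1\comp k'\comp r$, so $\residual{g_1}{U[t]} = k'\comp \residual{g_2}{U[t]}$. This shows the isomorphism is compatible with the residuals.
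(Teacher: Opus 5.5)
Your proof is correct and follows essentially the same route as the paper. Mutual divisibility gives factors $k,k'$, and cancelling the epis in $g_1 = g_1\comp* k'\comp* k$ and $g_2 = g_2\comp* k\comp* k'$ shows that $k$ is an isomorphism with inverse $k'$, so $g_1$ and $g_2$ are arrow-isomorphic; your closing remark on residual vectors is the same observation the paper makes right after the lemma.
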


By Lemma \ref{lem:unique-egcd}, for every two "EGCDs" $g,g'$ of $U$,
there is an "isomorphism@@arrow" $j$ such that $g\comp j=g'$.
In particular, $g,g'$ have the same "domain" and "isomorphic" "codomains".
Moreover, if $V = \vresidual*{g}{U}$ and $V' = \vresidual*{g'}{U}$
are the corresponding "residual vectors",
then, because $g\vcomp V = U = g'\vcomp V'$, the same $j$
above is also an "isomorphism@@arrow" between 
$V[t]$ and $V'[t]$, 
namely, $V[t] = j\comp V'[t]$ for all $t\in\Sigma^*$ .
\AP
We shall denote this latter property by writing $V \mathrel{""\viso*[j]""} V'$.

\begin{lemma}[restate=EGCDDistributivity, 
		  name=\textbf{distributivity of EGCDs}]\label{lem:egcd-distributivity}
If $U=g\vcomp V$, for some "epi" $g$ and some "vector" $V$, 
and $h$ is an "EGCD" of $V$, then $g\comp h$ is an "EGCD" of $U$.
\end{lemma}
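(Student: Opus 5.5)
We verify the two defining properties of an "EGCD" for $g\comp h$ directly.

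\emph{Common divisor and epi.} Since $h$ is a "common divisor" of $V$, for every $t$ with $V[t]\neq\bot$ there is a "residual" $w_t$ with $V[t]=h\comp w_t$. Hence $U[t]=g\comp V[t]=(g\comp h)\comp w_t$, so $g\comp h$ is a "common divisor" of $U$. It is "epi" because $g$ and $h$ are "epis" and "epis" are closed under "composition". If $V$ has no defined entries, neither does $U$, and the convention on $\bot$ handles this degenerate case trivially.

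\emph{Greatest among epi common divisors.} Let $g'$ be any "epi common divisor" of $U$. We must show that $g'$ "divides" $g\comp h$, i.e.\ $g\comp h=g'\comp k$ for some "update" $k$. The plan is to first relate $g'$ to $g$.

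\begin{enumerate}
\item Both $g$ and $g'$ are "epi common divisors" of $U$. Let $e$ be an "EGCD" of $U$, which exists by Assumption~\ref{ass:GCD} and Lemma~\ref{lem:gcd-vs-egcd}. Then both $g$ and $g'$ divide $e$, and it suffices to show that $e$ divides $g\comp h$.
\item Write $e=g\comp m$. Since $e$ and $g$ are "epi", $m$ is "epi": if $m\comp x=m\comp y$, then $e\comp x=e\comp y$, hence $x=y$.
\item Since $e$ is a common divisor, $U[t]=e\comp r_t$. Then $g\comp V[t]=U[t]=g\comp m\comp r_t$, and because $g$ is "epi", $V[t]=m\comp r_t$. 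Thus $m$ is an "epi common divisor" of $V$.
\item Since $h$ is an "EGCD" of $V$, $m$ divides $h$, say $h=m\comp n$. Then $g\comp h=g\comp m\comp n=e\comp n$, so $e$ divides $g\comp h$.
\end{enumerate}

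Combining steps 1 and 4, $g'$ divides $e$, which divides $g\comp h$, so $g'$ divides $g\comp h$. Hence $g\comp h$ is an "EGCD" of $U$.

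The only delicate step is step 3: cancelling $g$ on the left. This is exactly the definition of "epi" in diagrammatic order, since $g\comp x=g\comp y$ implies $x=y$. One must also check that the typing of the "residuals" is consistent, namely that their domains equal the "codomains" of the divisors. This is routine.

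The argument relies on the tacit existence of an "EGCD" of $U$. If one prefers to avoid it, the same reasoning works with $g'$ in place of $e$, provided $g$ divides $g'$. That holds when $g'$ is an "EGCD"; in general, one would pass through $e$ as above.
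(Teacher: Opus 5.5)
Your proof is correct and follows essentially the same route as the paper's: fix an EGCD of $U$, cancel the epi $g$ to transfer a factorization from $U$ down to $V$, and then use the maximality of $h$ among epi common divisors of $V$. The only difference is that the paper factors the EGCD $f$ of $U$ as $f=g\comp h\comp\ell$ (using that $g\comp h$ is an epi common divisor of $U$) and cites that residuals of epis are epis. You instead factor it as $e=g\comp m$ (using that $g$ itself is an epi common divisor of $U$) and check directly that $m$ is epi, which is slightly more economical.
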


\begin{corollary}[restate=EGCDsOfSubvectors, 
		  name=\textbf{EGCDs of sub-vectors}]\label{cor:egcds-of-subvectors}
Let $g$ is an "EGCD" of $U$ and let $V=\vresidual{g}{U}$ be the associated "residual vector".
For every $a\in\Sigma$, if $h$ is an "EGCD" of the "sub-vector" $V[a-]$, then
$g\comp h$ is an "EGCD" of the "sub-vector" $U[a-]$.
\end{corollary}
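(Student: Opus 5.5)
The plan is to obtain the corollary directly from Lemma~\ref{lem:egcd-distributivity}, applied to the "sub-vector" $U[a-]$ instead of to $U$. The first step is to check that pointwise composition commutes with taking sub-vectors. By definition of the "residual vector", $g\vcomp V = U$, since $g\comp\residual{g}{U[t]} = U[t]$ for every $t$ with $U[t]\neq\bot$, and $\bot$ entries are preserved by the convention $g\comp\bot=\bot$.

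Restricting this identity to indices of the form $at$ gives
\[
(g\vcomp V[a-])[t] = g\comp V[at] = U[at] = (U[a-])[t]
\]
for all $t\in\Sigma^*$. Hence $U[a-] = g\vcomp V[a-]$. Moreover, $g$ is "epi", because it is an "EGCD", and in particular an "epi common divisor".

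We are now exactly in the setting of Lemma~\ref{lem:egcd-distributivity}, with $U[a-]$ in place of $U$ and $V[a-]$ in place of $V$. Since $h$ is an "EGCD" of $V[a-]$ by hypothesis, the lemma yields that $g\comp h$ is an "EGCD" of $U[a-]$.

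The only point needing care is the degenerate case where $V[a-]$ is entirely undefined, so that by convention its unique "GCD" (and "EGCD") is $\bot$. In that case $U[a-]$ is also entirely undefined, and $g\comp\bot=\bot$ is its "EGCD" by the same convention. The prefix-closure assumption on defined entries is inherited by sub-vectors, so $V[a-]$ and $U[a-]$ are legitimate "vectors". I expect no real obstacle: the argument is a bookkeeping check that sub-vector extraction commutes with pointwise composition, plus the degenerate case.
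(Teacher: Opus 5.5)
Your proposal is correct and matches the paper's proof: the paper notes that $g$ is epi and that $U[a-]$ is the pointwise composition of $g$ with $V[a-]$, then applies Lemma~\ref{lem:egcd-distributivity} with $U[a-]$ and $V[a-]$ in place of $U$ and $V$. Your remarks on the entirely undefined case and on prefix-closure are bookkeeping that the paper leaves implicit.
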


We can now turn towards the construction of the "final transducer".
\AP
We shall work with several "vectors" at once, which are conveniently
represented by ""matrices"", namely, families 
$H=(H[s,t])_{s,t\in\Sigma^*}$ of "updates" (or undefined entries) 
indexed by pairs of words $s,t\in\Sigma^*$.
\AP
Each row $H[s,-]$ of $H$ is a "vector", and thus, 
the "domains" and "codomains" of its "updates" are all the same. 
A column of $H$ is not necessarily a "vector", namely, 
its "updates" may have different "domains" or "codomains": 
we call it a ""column vector"".

\AP
Specifically, we shall consider the ""Hankel matrix"" $\reintro{\Hankel*}$
of the "transduction" $\varphi$, whose entries are defined as 
$\Hankel[s,t] = \varphi(s t)$ for every $s,t\in\Sigma^*$,
where $\varphi(s t)$ is seen as an "update" from the initial "data type" 
$\bbD_{\init*}=\{\initdata\}$ to the final "data type" $\bbD_{\halt*}$.
The "Hankel matrix" is nothing but a convenient way to describe
a "transduction", and eventually extract "EGCDs" and corresponding
"residual vectors".
Also observe that every "sub-vector" $\Hankel[s,a-]$ of a row of the 
"Hankel matrix" coincides with a ``successor'' row $\Hankel[sa,-]$.

Next, we apply Assumption \ref{ass:GCD} to each row $\Hankel[s,-]$ of 
the "Hankel matrix", obtaining an "EGCD" $g_s$ and a corresponding
"residual vector" $V_s=\vresidual{g}{\Hankel[s,-]}$ for every $s\in\Sigma^*$.
We organize these objects into 
a "column vector" $\Divisors=(g_s)_{s\in\Sigma^*}$ and 
a new "matrix" $\Residuals=(V_s[t])_{s,t\in\Sigma^*}$. 
$\Divisors$ and $\Residuals$ are called, respectively, 
the \reintro{divisor vector} and the \reintro{residual matrix} of $\Hankel$.
The $\viso$-equivalence classes of the rows of the "residual matrix" $\Residuals$
will represent the "control states" of our "final transducer", and
can be equally seen as the classes of a one-sided Myhill-Nerode equivalence.
In a similar way, the "divisor vector" $\Divisors$ will be used 
to define the "transformations" of the "final" "transducer".
These constructions crucially rely on suitable compatibility 
properties that are reminiscent of the right-invariance property of the classical 
Myhill-Nerode equivalence, and that are established in Lemma \ref{lem:invariance} below.
\AP
\phantomintro{\Divisors*}
\phantomintro{\Partials*_a}
\phantomintro{\Residuals*}

\begin{lemma}[restate=Invariance, 
		  name={\textbf{right-invariance}}]\label{lem:invariance}
Under Assumption \ref{ass:GCD}, one can find some 
"column vectors" $\reintro{\Divisors*}$ and $\reintro{\Partials*_a}$ 
(one for every letter $a\in\Sigma$), and a 
"matrix" $\reintro{\Residuals*}$ 
such that, for all $s\in\Sigma^*$ and $a\in\Sigma$:
\begin{itemize}
\item $\Divisors[s]$ is an "EGCD" of $\Hankel[s,-]$ and 
	$\Residuals[s,-] = \vresidual{\Divisors[s]}{\Hankel[s,-]}$,
\item $\Partials_a[s]$ is an "EGCD" of $\Residuals[s,a-]$ and 
	$\Residuals[sa,-] = \vresidual{\Partials_a[s]}{\Residuals[s,a-]}$,
\item $\Divisors[sa] = \Divisors[s] \comp \Partials_a[s]$.	
\end{itemize}
In particular, $\Divisors[s]$ and $\Residuals[s,-]$
are uniquely determined from $\Hankel[s,-]$, 
and $\Partials_a[s]$ is uniquely determined from $\Residuals[s,-]$. 
\end{lemma}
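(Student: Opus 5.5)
The plan is to define $\Divisors$, $\Partials_a$ and $\Residuals$ simultaneously by induction on the length of $s$. The key point is to choose, at each successor row, the "EGCD" that is compatible with the one already fixed at the parent row. Uniqueness up to "isomorphism@@arrow" (Lemma~\ref{lem:unique-egcd}) then takes care of the remaining freedom.

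\emph{Base case $s=\emptystr$.} Assumption~\ref{ass:GCD} together with Lemma~\ref{lem:gcd-vs-egcd} gives an "EGCD" of the row $\Hankel[\emptystr,-]$. We take this as $\Divisors[\emptystr]$ and set $\Residuals[\emptystr,-]=\vresidual{\Divisors[\emptystr]}{\Hankel[\emptystr,-]}$. This residual vector is well defined because an "epi divisor" has a unique "residual".

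\emph{Inductive step.} Suppose $\Divisors[s]$ and $\Residuals[s,-]$ are already defined and satisfy the first item. For each $a\in\Sigma$, again by Assumption~\ref{ass:GCD} and Lemma~\ref{lem:gcd-vs-egcd}, we pick an "EGCD" $\Partials_a[s]$ of the "sub-vector" $\Residuals[s,a-]$. We then set
\[
\Divisors[sa]=\Divisors[s]\comp\Partials_a[s],
\qquad
\Residuals[sa,-]=\vresidual{\Partials_a[s]}{\Residuals[s,a-]}.
\]
Both the second and third items now hold by construction. The definition is consistent because every non-empty word $sa$ has a unique parent $s$ and letter $a$.

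It remains to check the first item for $sa$.
\begin{itemize}
\item \emph{$\Divisors[sa]$ is an EGCD of $\Hankel[sa,-]$.} Recall that $\Hankel[s,a-]=\Hankel[sa,-]$. Corollary~\ref{cor:egcds-of-subvectors} applies with $g=\Divisors[s]$ and $V=\Residuals[s,-]$, so $\Divisors[s]\comp\Partials_a[s]$ is an "EGCD" of $\Hankel[s,a-]=\Hankel[sa,-]$.
\item \emph{$\Residuals[sa,-]$ is the correct residual vector.} Compositions of "epis" are "epis", so $\Divisors[sa]$ is epi. Pointwise we have
\[
\Hankel[sa,-]=\Divisors[s]\vcomp\Residuals[s,a-]=\Divisors[s]\vcomp\Partials_a[s]\vcomp\Residuals[sa,-].
\]
Uniqueness of residuals via an epi divisor then gives $\Residuals[sa,-]=\vresidual{\Divisors[sa]}{\Hankel[sa,-]}$.
\item \emph{Undefined entries.} The $\bot$ entries need a check, but it is routine: if the relevant vector is identically $\bot$, the conventional "GCD" $\bot$ propagates, and $f\comp\bot=\bot$.
\end{itemize}

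\emph{The ``in particular'' clause.} This should be read as uniqueness up to "isomorphism@@arrow", and I expect it to be the delicate point. By Lemma~\ref{lem:unique-egcd}, any other "EGCD" of $\Hankel[s,-]$ has the form $\Divisors[s]\comp j$ for an isomorphism $j$, and its residual vector is $\viso$-related to $\Residuals[s,-]$. So $\Divisors[s]$ and $\Residuals[s,-]$ depend only on $\Hankel[s,-]$, up to such $j$.

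Similarly, $\Partials_a[s]$ is determined by $\Residuals[s,-]$ up to isomorphism. To make the dependence truly functional, I would fix a choice function that picks one "EGCD" per vector: the same vector always gets the same EGCD, and two $\viso$-related vectors get EGCDs related through the witnessing isomorphism. The main obstacle is ensuring that such choices can be made coherently, namely that rows $\Residuals[s,-]$ and $\Residuals[s',-]$ which are $\viso$-equivalent receive $\Partials_a$ related by conjugation with the witnessing isomorphism. I would handle this with the uniqueness Lemma~\ref{lem:unique-egcd} combined with the fact that an isomorphism composed with an EGCD is still an EGCD.
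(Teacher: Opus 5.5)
Your proof is correct and is essentially the paper's own argument: induction on $s$, choosing $\partial_a[s]$ as an EGCD of $R[s,a-]$, setting $G[sa]=G[s]\mathbin{;}\partial_a[s]$ with $R[sa,-]$ the residual vector via $\partial_a[s]$, and then getting the first item for $sa$ from Corollary~\ref{cor:egcds-of-subvectors}, the identity $H[sa,-]=H[s,a-]$, and cancellation of the epi $G[s]$. You are right to read the ``in particular'' clause as uniqueness up to arrow isomorphism (the paper's proof does not discuss it further), and the coherence problem you anticipate at the end is not actually needed for this lemma, since a plain choice function picking one EGCD per vector already makes $\partial_a[s]$ a function of $R[s,a-]$.
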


We conclude with the construction of the "final transducer" $\finaltransd$, 
which provides the last ingredient for the existence of "algebraically minimal" 
"transducers":

\phantomintro{\finaltransd*}
\begin{proposition}[restate=FinalTransducer, 
			  name={\textbf{final transducer}}]\label{prop:final-transducer}
There is a "transducer" $\reintro{\finaltransd*}$
that is "final" in the sub-category of "images" of "initial@@arrow" "morphisms", 
namely, for every "transducer" $A$ that "realizes" $\varphi$,
there is a unique "morphism" $\finalmorph{\Reach*(A)}: \Reach(A) \to \finaltransd$.
\end{proposition}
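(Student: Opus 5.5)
The plan is to build $\finaltransd$ from the objects of Lemma~\ref{lem:invariance}, in the style of a Myhill--Nerode construction. Then, for a given $A$, we read off the "morphism" from $\Reach(A)$ using the "diagonal fill-in property" and the uniqueness of "EGCDs" (Lemma~\ref{lem:unique-egcd}).

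\emph{Construction.} We take as "control states" of $\finaltransd$ the $\viso$-classes $[s]$ of the rows $\Residuals[s,-]$, keeping only the rows that are not everywhere $\bot$. For each class we fix a "representative" $\repr{[s]}$, and we let the "type" of $[s]$ be the "codomain" of $\Divisors[\repr{[s]}]$, which is equal to the "domain" of the "updates" in $\Residuals[\repr{[s]},-]$.

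\begin{itemize}
\item The "initial transformation" maps $(\initstate,\initdata)$ to $[\emptystr]$ via the "update" $\Divisors[\repr{[\emptystr]}]$.
\item The "final transformation" at $[s]$ is $\Residuals[\repr{[s]},\emptystr]$.
\item The "internal transformation" $\delta_a$ at $[s]$ goes to $[sa]$ via $\Partials_a[r]\comp j$, where $r=\repr{[s]}$ and $j$ is the "isomorphism@@arrow" witnessing $\Residuals[ra,-]\viso\Residuals[\repr{[sa]},-]$.
\end{itemize}

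Well-definedness, namely that $[ra]=[sa]$ whenever $[r]=[s]$, follows from the uniqueness clause of Lemma~\ref{lem:invariance}. If $\Residuals[r,-]=i\vcomp\Residuals[s,-]$ with $i$ an iso, then Lemma~\ref{lem:egcd-distributivity} shows that $i\comp\Partials_a[s]$ is an "EGCD" of $\Residuals[r,a-]$. Lemma~\ref{lem:unique-egcd} then gives $\Residuals[ra,-]\viso\Residuals[sa,-]$. A short induction shows that $\finaltransd$ "realizes" $\varphi$: on input $s$ it reaches $[s]$ with data $\Divisors[s]\comp j_s$ applied to $\initdata$, for some iso $j_s$, and the output is $\Divisors[s]\comp\Residuals[s,\emptystr]=\Hankel[s,\emptystr]=\varphi(s)$.

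\emph{Existence of the morphism.} Let $A$ be given. By Proposition~\ref{prop:images}, every reachable "state" $q$ of $\Reach(A)$ has "domain" $\cl(\{d_s : s\text{ reaches }q\})$. For a word $s$ reaching $q$, the row $\Hankel[s,-]$ factors as $\deltainit\comp\delta_s$, restricted to $\initdata$, followed by the "vector" $W_q[t]=\delta_t\comp\deltahalt$ seen at $q$. To get an "epi" factorization through the state $q$, I would take the "strong factorization" of the "column" of initial data into $q$. This gives an "epi" $e_q$ from the join of the initial configurations to $\Data(q)$. By Lemma~\ref{lem:epis}, $\Data(q)$ is the closure of the reached data, so the restriction of $\deltainit\comp\delta_s$ to each such word is compatible with this epi.

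Next, take an "EGCD" $k_q$ of $W_q$. Each $\Hankel[s,-]$ is divided by the "epi" $e_s\comp k_q$, where $e_s$ is the corestriction of $\deltainit\comp\delta_s$ to the closure. Hence, by the defining property of "EGCDs", $e_s\comp k_q$ is divided by $\Divisors[s]$. By Lemma~\ref{lem:egcd-distributivity}, $e_s\comp k_q$ is itself an "EGCD", provided that $e_s$ is "epi". This makes $\Residuals[s,-]\viso\vresidual{k_q}{W_q}$, so all words reaching $q$ land in the same class $[s]$.

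The morphism $h$ at $q$ is then $k_q$ composed with the witnessing isomorphism. Commutation with $\delta_a$ follows from Corollary~\ref{cor:egcds-of-subvectors} applied to $W_q[a-]=\check\delta_a(q)\vcomp W_{q'}$. Commutation with $\deltainit$ and $\deltahalt$ is immediate from the residual equations.

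\emph{Uniqueness, and the main obstacle.} Two morphisms $h,h'\colon\Reach(A)\to\finaltransd$ agree on the reachable configurations, since both must send $d_s$ to the configuration of $\finaltransd$ reached by $s$. Since $\Reach(A)$ is the "image" of the "initial arrow", the arrow $\initialmorph{A}$ corestricted to it is "epi", and $\initialmorph{\Reach(A)}\comp h=\initialmorph{\finaltransd}=\initialmorph{\Reach(A)}\comp h'$ gives $h=h'$.

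The delicate step is the previous paragraph: showing that the map $q\mapsto k_q$ really is "epi" over the \emph{closure} domain. The point of care is that $e_s$ is "epi" only onto the closure of the single point $d_s$, not onto $\Data(q)$. For this I would use the "EGCD" of the whole row family $\{\Hankel[s,-] : s\text{ reaches } q\}$, concatenated so that it forms a single epi over $\Data(q)$. I would then invoke Lemma~\ref{lem:closure-continuity} to make sure that the divisibility obtained on reachable data extends to the closure.
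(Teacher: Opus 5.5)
\textbf{Verdict: there is a genuine gap in the existence step, and your proposed repair does not close it.} Your construction of $\finaltransd$ and your uniqueness argument follow the paper. In the construction, states are $\viso$-classes of non-trivial rows of $\Residuals$, and transitions apply $\Partials_a$ at the representative, corrected by a witnessing isomorphism. Uniqueness comes from cancelling the epi part of the initial arrow. Your well-definedness check for the transitions, via Lemma~\ref{lem:egcd-distributivity} and Lemma~\ref{lem:unique-egcd}, is correct and spells out a step the paper leaves implicit.

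The gap is where you flag it. To get $\Residuals[s,-]\viso\vresidual{k_q}{W_q}$ you need $e_s\comp k_q$ to be an EGCD of $\Hankel[s,-]$. (Incidentally, an epi common divisor \emph{divides} the EGCD $\Divisors[s]$, not the other way round.) Lemma~\ref{lem:egcd-distributivity} requires $e_s:\bbD_{\init*}\to\Data(q)$ to be epi, which by Lemma~\ref{lem:epis} means $\Data(q)\subseteq\cl(\{d_s\})$. This fails as soon as points are closed and two words with different data reach $q$. Taking an EGCD of the whole family of rows reaching $q$ does not help, because the states of $\finaltransd$ are classes of \emph{single-row} residuals, and a single-row EGCD can absorb the particular value $d_s$. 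Lemma~\ref{lem:closure-continuity} only pushes closures forward along updates; it cannot make a one-point map epi onto a larger closure.

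\textbf{The step you need is actually false for this $\finaltransd$.}
\begin{itemize}
\item Work in the copyless non-erasing free term algebra. Let $\varphi(\sigma^n)=a^n(d)$, and let $A$ have one state, initial update $()\mapsto d$, transition $x\mapsto a(x)$ and final identity. Its state in $\Reach(A)$ has domain $\cl(\{a^n(d) : n\ge 0\})=\bbD_1$.
\item In the row $\Hankel[\sigma^n,-]$, the entry $()\mapsto a^n(d)$ at $t=\emptystr$ divides every entry and is divided by every common divisor, so it is a GCD. Hence every EGCD of that row has codomain isomorphic to $\cl(\{a^n(d)\})=\{a^n(d)\}$, since singletons are closed (Example~\ref{ex:closure-terms}).
\item An isomorphism $\{a^n(d)\}\to\{a^m(d)\}$ is the restriction of some $x\mapsto c[x]$ with exactly one occurrence of $x$, and so is its inverse. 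This forces $n=m$.
\item So the rows $\Residuals[\sigma^n,-]$ are pairwise non-equivalent and $\finaltransd$ has infinitely many states. There is then no morphism $\Reach(A)\to\finaltransd$ at all. Precomposed with the epi $I\to\Reach(A)$, such a morphism would equal $\initialmorph*{\finaltransd*}$, which sends the words $\sigma^n$ to distinct states, although they all reach the single state of $\Reach(A)$.
\end{itemize}

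\textbf{The paper's own proof makes the same leap.} In items (4)--(5) of its existence part, it infers $\Residuals[s,-]\viso\Residuals[s',-]$ from the isomorphism of the EGCDs of $V_{q'}$. That inference needs $f_s=\specb{e}(s)$ to be epi. This holds when every update is epi (e.g.\ the leaf substitution algebra, Corollary~\ref{cor:downward-stt-epi}) or when closures of points are whole domains (sequential transducers), but not in general. So completing the argument needs either a hypothesis of this kind or a different definition of the states of $\finaltransd$; it is not a matter of invoking a lemma the paper already provides.
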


\LIPICS{
\begin{proof}[Proof sketch]
The "final transducer" $\finaltransd$ is obtained as a variation of the classical 
Myhill-Nerode construction: the "states" are strings $\repr{s}$ serving as
representatives of $\viso$-classes of "residual vectors" of the form $\Residuals[s,-]$, and the 
transitions are induced by the corresponding ``derivatives'' $\Partials_a[\repr{s}]$.
Some bookkeeping on the "isomorphisms" that witness $\viso$-equivalences between
"residual vectors" is required to ensure that the "transducer" produces the exact 
output $\varphi(w)$, rather than only an "isomorphic" one. 
The existence of a unique "morphism" from every "initial image" $\Reach(A)$
to $\finaltransd$ follows from a standard reachability argument. 
Every "state" $q$ of $\Reach(A)$ is reached by some string, 
and all strings reaching $q$ induce "residual vectors" in the same $\viso$-class. 
Hence the "final morphism" maps $q$ to a representative $\repr{s}$ of the 
$\viso$-class of $\Residuals[s,-]$, for any string $s$ that reaches $q$. 
The "update" attached to this mapping is the "residual" obtained
when the update produced by $\Reach(A)$ after reading $s$ is 
"factored through@divided by" the "EGCD" $\Divisors[\repr{s}]$.
\end{proof}
}

\ARXIV{
\begin{proof}[Proof sketch (details in the appendix)]
This is the usual Myhill-Nerode-style construction of a "transducer" with "states"
quotiented by observational equivalence. Here observational equivalence is given
by the "isomorphism@@vector" relation $\viso$ on the "residual vectors" 
$\Residuals[w,-]$, as defined in Lemma \ref{lem:invariance}.
This means that the states of the "transducer" $\finaltransd$
will correspond to $\viso$-classes of "residual vectors". 
However, because we need to produce exact "outputs", not just up to "isomorphism",
we also need to take into account the possible "updates" $j$ that witness 
$\Residuals[w,-] \viso[j] \Residuals[w',-]$ for two "residual vectors".
Concretely, for each word $w\in\Sigma^*$, we fix a "representative" word $\repr{w}$ 
and an "isomorphism" $\isov{w}: \Dom(\Residuals[w,\emptystr]) \to \Dom(\Residuals[\repr{w},\emptystr])$
in such a way that $\Residuals[w,-] \viso[\isov{w}] \Residuals[\repr{w},-]$,
provided that $\Residuals[w,-]$ is not constantly $\bot$.
Accordingly, the "transducer" $\finaltransd$ has the "representatives" $\repr{w}$ 
as "control states". Its $a$-labelled transitions move from one "representative" 
$\repr{w}$ to the next "representative" $\repr{wa}$, and apply the canonical ``derivative'' 
$\Partials_a[\repr{w}]$ from Lemma~\ref{lem:invariance}, transformed by the chosen 
"isomorphism" $\isov{wa}$ to land back in the "domain" of the target "residual vector" 
$\Residuals[\repr{wa},-]$. 
This is the only extra bookkeeping w.r.t.~the standard constructions based on 
Myhill-Nerode equivalence.

By construction, after reading $w$, the "transducer" $\finaltransd$ is in the 
representative "state" $\repr{w}$ and the stored data is the corresponding 
"EGCD" $\Divisors[\repr{w}]$ (see again Lemma \ref{lem:invariance}) applied 
to the initial data $\initdata$.
This implies that the "output" associated with $w$ is
$(\Divisors[\repr{w}]\comp \Residuals[\repr{w},\emptystr])(\initdata)$, 
which is easily shown to coincide with $\varphi(w)$.

Finally, to show that there is a unique "morphism" from every "initial image" 
$\Reach(A)$ to $\finaltransd$, one uses a rather standard reachability argument.
One first recalls that there is an "epi" "morphism" $e: \initialtransd \epi \Reach(A)$,
and then observes that $e$ forces all words reaching the same state of $\Reach(A)$
to yield "residual vectors" in the same $\viso$-class.
This allows one to canonically map each "state" $q'$ reached by $w$ in $\Reach(A)$ 
to the corresponding "representative" $\repr{w}$, with an "update" that, when 
composed with the one determined by $e$ on $q'$,
gives precisely the "EGCD" $\Divisors[\repr{w}]$. The resulting "transformation"
is then shown to commute with the diagrams that characterize a "transducer morphism", 
and to be the unique possible "morphism" from $\Reach(A)$ to $\finaltransd$.
\end{proof}
}

%% file: minimization-theorems.tex

\subsection{Minimization results and effectiveness}\label{sec:main-theorem}

A first minimization result now follows directly from
Propositions \ref{prop:minimal}, \ref{prop:images}, 
\ref{prop:initial-transducer}, and \ref{prop:final-transducer}:

\begin{theorem}[restate=MinimalTransducer,
		    name={\textbf{minimal transducer}}]\label{thm:minimal-transducer}
If $\bbD$ is a "data structure" with "constrained domains" that satisfies 
Assumption \ref{ass:GCD},
then every "transduction" "realized" by a "transducer" over $\bbD$
is also "realized" by one that is "algebraically minimal".
\end{theorem}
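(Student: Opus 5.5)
The plan is to instantiate the abstract framework of Section~\ref{sec:category}, and in particular Proposition~\ref{prop:minimal}, in the "category" whose "objects" are the "transducers" over $\bbD$ (with "constrained domains") that "realize" a fixed "transduction" $\varphi$, and whose "arrows" are "transducer morphisms". Fix any $\varphi$ realized by some "transducer" $A_0$ over $\bbD$, so that this category is non-empty. First I would check that it is indeed a "category": "transducer morphisms" compose, by Lemma~\ref{lem:composition-transformations} together with the commuting squares of Figure~\ref{fig:transducer-morphism-diagrams}, and the identity "transformation" on $\States{Q_A}$ is a "morphism". Then I would verify the three hypotheses preceding Proposition~\ref{prop:minimal} one by one.

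The first hypothesis is the existence of "strong factorizations" and "images". This is exactly Proposition~\ref{prop:images}, which uses that $\bbD$ admits "constrained domains" so that $\clS(\Rng(h))$ is again a "configuration space". I would note that the "epis" and "strong monos" there are relative to the category of "transducer morphisms"; the footnote after Lemma~\ref{lem:strong-monos} ensures that the diagonal fill-in is itself a "transducer morphism".

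The second hypothesis is an "initial object". This is given by Proposition~\ref{prop:initial-transducer}: the transducer $\initialtransd$ realizes $\varphi$ and has a unique "morphism" into each object.

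The third hypothesis is an "object" final among "initial images". Here Assumption~\ref{ass:GCD} is used, through Lemma~\ref{lem:gcd-vs-egcd}, to obtain "EGCDs" of all rows of the "Hankel matrix". Proposition~\ref{prop:final-transducer} then supplies $\finaltransd$ with a unique "morphism" from every $\Reach(A)$. One point I must check is that $\finaltransd$ itself belongs to the sub-category of "initial images", i.e.\ that $\finaltransd \iso \Reach(\finaltransd)$. If the proposition does not give this directly, I would replace $\finaltransd$ by $\Reach(\finaltransd)$. Uniqueness of the morphism $\Reach(\finaltransd)\to\finaltransd$, compared with the composite through the strong mono, shows that $\Reach(\finaltransd)$ is still final among "initial images".

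With these three ingredients, Proposition~\ref{prop:minimal} yields $M \iso \Obs(\Reach(A))$ for every $A$. Hence $M$ is a "subquotient" of every transducer realizing $\varphi$, that is, "algebraically minimal", and it realizes $\varphi$ since it is an object of the category.

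The main obstacle I expect is the bookkeeping of subcategories rather than any new construction. "Algebraic minimality" has to be read relative to the category of transducers realizing $\varphi$. Also, "epi"/"strong mono" in the sub-category of "initial images", used to form $\Obs$, must be compatible with the factorizations of Proposition~\ref{prop:images}. To handle this, I would show that the image of a morphism between initial images is again an initial image. Composing $\initialtransd \epi \Reach(A) \epi C$ gives an epi, so $C$ coincides with its own $\Reach$ by Lemma~\ref{lem:strong-image}.
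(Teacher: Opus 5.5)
Your proposal is correct and follows the paper's own argument, which derives the theorem directly from Propositions~\ref{prop:minimal}, \ref{prop:images}, \ref{prop:initial-transducer} and~\ref{prop:final-transducer}. Your extra bookkeeping is sound, and the paper leaves it implicit: you work relative to the category of transducers realizing a fixed $\varphi$, and you check that the image of a morphism between initial images is again an initial image, because the initial arrow into it is a composite of epis.

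The fallback of replacing $M$ by $\Reach(M)$ is not needed. The initial arrow into $M$ is already epi by Lemma~\ref{lem:epis}: every state $\repr{w}$ is reached, and the datum reached there is the image of the initial datum under an epi update whose codomain is the whole data domain of $\repr{w}$. If you keep the fallback anyway, your argument as written only gives uniqueness of morphisms from an initial image into $\Reach(M)$, which follows because strong monos are monos. Existence comes separately, from the diagonal fill-in property.
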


We can also prove an analogous result for the subclass of \emph{"finite@@transducer"} "transducers",
using the additional assumption that all "constrained domains" are "finitary":

\begin{theorem}[restate=MinimalFiniteTransducer,
                name={\textbf{minimal finite transducer}}]\label{thm:minimal-finite-transducer}
If $\bbD$ is a "data structure" with "constrained domains" that satisfies 
both Assumptions \ref{ass:compactness} and \ref{ass:GCD},
then every "transduction" "realized" by a "finite transducer" over $\bbD$
is also "realized" by one that is "finite@@transducer" and "algebraically minimal".
\end{theorem}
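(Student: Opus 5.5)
By Theorem~\ref{thm:minimal-transducer}, the transduction $\varphi$ realized by a given finite transducer $A$ is also realized by an algebraically minimal transducer. Concretely, by Proposition~\ref{prop:minimal} this is $M\iso\Obs(\Reach(A))$. So the task reduces to showing that $\Obs(\Reach(A))$ can be taken to be finite, and to have only finitary constrained domains, whenever $A$ is finite.

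\textbf{Step 1: control states.} I will check that the two constructions never increase the number of control states. By the proof of Proposition~\ref{prop:images}, $\Reach(A)$ is obtained by replacing the configuration space of $A$ with the closure $\clS(\Rng(\initialmorph{A}))$. Its control states are copies of a subset of $Q_A$, hence finitely many. Likewise, $\Obs(\Reach(A))$ is the image of the final arrow $\Reach(A)\to M$, so its configuration space is $\clS(\Rng(\finalmorph{\Reach(A)}))$ inside $M$. Its control states are the range of the state component $\speca{h}$ of that morphism, and since $\speca h$ is a function on a finite set, this range is finite. (The states of $M$ itself may be infinite; that does not matter.)

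\textbf{Step 2: finitary domains.} Each data domain of $\Obs(\Reach(A))$ is the closure of a set of data, which is a constrained domain. By Assumption~\ref{ass:compactness}, every constraint is equivalent to a finite subset of itself. Hence every such closure is a finitary constrained domain, presented by finitely many update equations. The transformations of $\Obs(\Reach(A))$ are restrictions of updates of $M$ to these domains, so they remain legitimate finitely presented components.

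\textbf{Step 3: minimality in the subclass.} The transducer $\Obs(\Reach(A))$ is isomorphic to $M$. By Proposition~\ref{prop:minimal}, $M$ is a subquotient of every transducer realizing $\varphi$, and in particular of every finite one. Since epis and strong monos are taken in the full category, we must make sure minimality does not break when we restrict to the full subcategory of finite transducers. An epi in the full category remains epi in the subcategory, because there are fewer test arrows. A strong mono is the delicate case. Here the diagonal $d$ required by the fill-in property for a square inside the subcategory exists in the full category, and its source $B$ lies in the subcategory, so $d$ is an arrow of the full subcategory. Therefore the subquotient witnesses transfer.

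\textbf{Main obstacle.} I expect the main obstacle to be this transfer of the notions to the subcategory, especially making sure that the intermediate object $C=\Reach(A')$ in the subquotient $M \leftarrow C \smono A'$ is itself finite for a finite $A'$. Step 1 settles that, since $C$ is a pruning of $A'$ with finitary domains by Step 2.
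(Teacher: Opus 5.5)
Your Steps 1 and 2 are essentially the paper's proof. $\operatorname{Reach}(A)$ is a pruning of $A$, so it has finitely many control states. The states of $\operatorname{Obs}(\operatorname{Reach}(A))$ form the range of the state component of a map out of $\operatorname{Reach}(A)$; the paper phrases both facts as ``a subobject, resp.\ a quotient, has no more control states''. Assumption~\ref{ass:compactness} then makes every constrained domain finitary.

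Your parenthetical that $M$ itself may have infinitely many states is mistaken, though harmless for your argument. Isomorphisms of transducers are bijective on control states, so $M\simeq\operatorname{Obs}(\operatorname{Reach}(A))$ forces $M$ to be finite. That is exactly the form in which the paper states its conclusion.

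Step 3 is not needed. As in Theorem~\ref{thm:minimal-transducer}, ``algebraically minimal'' refers to the category of all transducers realizing $\varphi$. Once the minimal object is shown to be finite the theorem is proved, and the paper stops there.

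If you do want minimality relative to the full subcategory of finite transducers, your transfer of strong monos has a gap. In the subcategory, the fill-in property must hold for every arrow $f$ that is epi \emph{relative to finite transducers}. You only proved the opposite inclusion, namely that full-category epis remain epi in the subcategory. For an $f$ that is epi only among finite transducers, the full-category fill-in says nothing, so you cannot take the diagonal $d$ from there.

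To close the gap, show that between finite transducers the two notions of epi coincide. Suppose $h,h'\colon B\to C$ satisfy $f\mathbin{;}h=f\mathbin{;}h'$, with $B$ finite and $C$ arbitrary. Restrict $C$ to the finitely many control states occupied by configurations in $\operatorname{Rng}(h)\cup\operatorname{Rng}(h')$. Ranges of morphisms contain the configuration produced by the initial transformation of $C$ and are closed under the internal transformations. So this restriction is a finite transducer realizing $\varphi$, and $h,h'$ co-restrict to morphisms into it. Epi-ness of $f$ in the subcategory then yields $h=h'$.
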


\begin{proof}
First of all, we claim that if a "transduction" is "realized" by some "transducer" $A$
with finitely many "control states", then every "algebraically minimal"
"transducer" $\finaltransd$ also has finitely many "control states".
This is because $\Reach(A)$ is a "subobject" of $A$, hence it has no more 
"control states" than $A$, and, similarly, $\Obs(\Reach*(A))$ is a "quotient" 
of $\Reach(A)$, hence it has no more "control states" than $\Reach(A)$.
So, finiteness of "control states" transfers from $A$ to $\Obs(\Reach*(A)) \iso \finaltransd$.
As for the "constrained domains" that may appear inside $\finaltransd$,
these are finitely presentable under the additional Assumption \ref{ass:compactness}.
\end{proof}

Moreover, we can show that Assumption~\ref{ass:GCD} is also necessary for the
existence of "algebraically minimal" models, at least over "standard" "data structures".
\AP
By ""standard data structure"" we mean a "data structure" where every
"domain" $\bbD_\alpha$ and every set of "updates" $\bbD_\alpha^\beta$ 
is countable, and such that every "common divisor" of $\bbD_\alpha^\beta$ 
is "isomorphic@@arrow" to the "identity" on $\bbD_\alpha$.
\AP
When the latter condition holds, we also say that the "updates" in $\bbD_\alpha^\beta$ 
are ""jointly coprime"". 
Most "data structures" ---in particular all examples considered here--- 
are "standard". For instance, the "updates" in $\bbD_2^1$ of the 
"free term algebra" $\bbD$ are "jointly coprime", since 
already for $u_1: (x,y)\mapsto a(x,y)$ and $u_2: (x,y)\mapsto b(x,y)$
the only "common divisors" are the "identity" $(x,y)\mapsto(x,y)$ and the swap $(x,y)\mapsto(y,x)$.

\begin{theorem}[restate=GCDNecessary,
                name={\textbf{necessity of GCDs}}]\label{thm:gcd-necessary}
Assumption \ref{ass:GCD} holds over any "standard" "data structure" $\bbD$
whenever, for every "transduction" $\varphi$, the class of "transducers" 
over $\bbD$ "realizing" $\varphi$ contains an "algebraically minimal" model.
\end{theorem}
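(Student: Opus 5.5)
We argue by contraposition, from an arbitrary set $U\subseteq\bbD_\alpha^\beta$ of "updates" to a minimal "transducer". Because $\bbD$ is "standard", $U$ is countable, so we can enumerate it as $U=\{u_1,u_2,\dots\}$. We then design a "transduction" $\varphi$ whose "Hankel matrix" contains a row that is, up to a fixed prefix, exactly the "vector" of elements of $U$.

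\textbf{The transduction.} Fix the input alphabet $\Sigma=\{a,b\}$ and data $d\in\bbD_\alpha$ ranging over $\bbD_\alpha$, which is also countable. We encode pairs $(d,i)$ by words $s_d\,a\,b^{i}$, where the words $s_d$ enumerate $\bbD_\alpha$, say $s_d=b^{n_d}$ prefixed by a marker. Set $\varphi(s_d\, a\, b^i)=u_i(d)$. Some bookkeeping is needed, because outputs land in $\bbD_{\halt*}$ rather than $\bbD_\beta$. To handle this we post-compose with the "jointly coprime" family $\bbD_\beta^{\halt*}$: after $b^i$ we read a further suffix selecting some $w\in\bbD_\beta^{\halt*}$, again countably many, and output $w(u_i(d))$.

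\textbf{A transducer realizing $\varphi$.} This $\varphi$ is realized by a transducer $A$ of the following shape. An initial tree of states reaches $d$; a state $p$ of type $\alpha$ stores $d$; then states indexed by $i$ hold $u_i(d)$ in type $\beta$; and a final selection applies $w$.

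\textbf{The key steps.} We let $M$ be an "algebraically minimal" model. By Proposition~\ref{prop:minimal}, it is a "subquotient" of the "initial transducer" $\initialtransd$ and of $A$. We then proceed as follows.
\begin{enumerate}
\item All words $s_d a$ lead in $A$ to the single state $p$. Since $M$ has no more reachable states than $\Reach(A)$, they also lead in $M$ to a common state $p'$.
\item Its data type supplies updates $\kappa_{b^i}$ followed by final updates. The morphism $\Reach(A)\to M$ obtained by composing the epi and strong mono, via the "diagonal fill-in property", yields a factorization $u_i = g\comp v_i$ for all $i$. Here $g$ is the update attached to $p$.
\item For any "common divisor" $g'$ of $U$, we build an equivalent transducer $A'$ that stores $g'$-data at $p$. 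That is, $A'$ stores $g'(d)$ and applies the "residuals" of $U$ via $g'$.
\item Minimality of $M$ forces $M$ to be a subquotient of $A'$. The same fill-in argument then shows that the update of $M$ at $p'$ factors through $g'$. This exhibits $g$, or more precisely the update $M$ assigns to $p'$, as "divided" by every common divisor, and hence as a "GCD".
\end{enumerate}

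\textbf{The main obstacle} is step 4. There the subquotient relation gives only an epi $C\epi M$ from some subobject $C\smono A'$, which is an arrow in the wrong direction for directly dividing. We must transport divisibility through "epis" and "strong monos" of "transducers", not mere updates. We plan to use Lemma~\ref{lem:epis} and Lemma~\ref{lem:strong-monos} to read off state-wise updates. The "jointly coprime" hypothesis then kills the ambiguity introduced by the free choice of initial and final encodings: the prefix and suffix families must contribute only "isomorphisms@@arrow", so the divisor obtained at $p'$ is genuinely a divisor of $U$ comparable to every $g'$. Countability, in turn, is what ensures that $\varphi$ is indexed by words at all.
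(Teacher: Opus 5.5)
Your core argument is correct, and it follows a genuinely different, leaner route than the paper's.

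\textbf{Comparison with the paper.} The paper's transduction branches after $\tilde d\qsep$ into two continuations. The $\usep$-branch produces $u(d)$ for \emph{every} $u\in\bbD*_\alpha^\beta$, and the $\dsep$-branch produces $u(d)$ only for $u\in U$. The transducer $A_f$ applies the common divisor $f$ on the $\dsep$-transition, and the candidate GCD is the update $g$ on the corresponding $\dsep$-transition of $M$. The morphism $A_f\to M$ only gives $g_0\comp* g=f\comp*(\cdots)$, where $g_0$ is the update $M$ accumulates on $\tilde d\qsep$. Joint coprimality, fed by the $\usep$-branch, is what makes $g_0$ an isomorphism, so that this relation becomes a statement about the single transition update $g$.

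You keep only the $U$-branch and take as candidate the accumulated update $g\colon d\mapsto(\text{data of } M \text{ after } s_d a)$. Two facts carry the proof:
\begin{itemize}
\item $g$ depends only on the runs of $M$, not on which transducer a morphism into $M$ comes from.
\item Every continuation from $p'$ outputs an element of $U$, so $g$ is a common divisor of $U$ directly.
\end{itemize}
Then, for each common divisor $g'$, the morphism $k'\colon\operatorname{Reach}(A')\to M$ gives $g=g'\comp*\check k'(p)$.

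So your core argument uses no joint coprimality at all. It needs only countability and the constant updates $d_{\triangleright}\mapsto d$, which the paper also uses. What the paper's route buys in exchange is a more local conclusion: up to the isomorphism $g_0$, the GCD is carried by a single transition of $M$.

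\textbf{Points to fix.}
\begin{enumerate}
\item Your ``main obstacle'' in step~4 is not an obstacle. A morphism \emph{into} $M$ is exactly the right direction, since it expresses $M$'s data as a function of the source's data. Start from $M\twoheadleftarrow C\rightarrowtail A'$. Both $I\twoheadrightarrow\operatorname{Reach}(A')\rightarrowtail A'$ and $I\to C\rightarrowtail A'$ equal the unique initial arrow into $A'$. The diagonal fill-in, applied to the epi $I\twoheadrightarrow\operatorname{Reach}(A')$ against the strong mono $C\rightarrowtail A'$, gives $\operatorname{Reach}(A')\to C$, and composing yields $k'$. No lemma on epi or strong-mono transformations is needed, and no coprimality of a ``prefix family''.
\item This same morphism, not a count of states, is what justifies step~1.
\item Drop the appeal to Proposition~\ref{prop:minimal}. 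It presupposes the final transducer, whose construction uses Assumption~\ref{ass:GCD}, the very thing being proved. The definition of algebraic minimality is all you need.
\item Your output-type bookkeeping breaks step~2 as written. After post-composing with $w\in\bbD*_\beta^{\triangleleft}$ you only get $u_i\comp* w=g\comp* v_{i,w}$, not $u_i=g\comp* v_i$. Joint coprimality can repair this, but not with your $A$: in $\operatorname{Reach}(A)$ the state holding $u_i(d)$ carries only $\operatorname{cl}(\operatorname{Rng}(u_i))$, and on that set $\bbD*_\beta^{\triangleleft}$ need not be jointly coprime.

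The fix is to send all $\beta$-valued branches into one state $q$ and add a branch that loads every $e\in\bbD*_\beta$ into $q$. The component of $\operatorname{Reach}(A)\to M$ at $q$ is then a common divisor of all of $\bbD*_\beta^{\triangleleft}$, hence an isomorphism $j$. Let $z_i$ be $M$'s update on the rest of the prefix. From $u_i\comp* j=g\comp* z_i$ you get $u_i=g\comp* z_i\comp* j^{-1}$. This is the paper's $\usep$-trick moved to the output side. The paper itself sidesteps the issue by letting outputs live in $\bbD*_\beta$.
\end{enumerate}
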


\begin{proof}[Proof sketch\ARXIV{ (details in the appendix)}]
Let $U\subseteq \bbD_\alpha^\beta$ be a set of "updates". 
Using countability of $\bbD_\alpha$ and $\bbD_\alpha^\beta$, we encode 
each data $d\in\bbD_\alpha$ by a string $\tilde d$, and each "update"
$u\in\bbD_\alpha^\beta$ by a string $\tilde u$. We then extend the encoding
alphabet with fresh symbols $\qsep,\usep,\dsep$, and define a "transduction"
$\varphi$ as follows: every input $\tilde d \qsep \usep \tilde u$ is mapped
to $u(d)$, while $\tilde d \qsep \dsep \tilde u$ is mapped to $u(d)$ only
if $u\in U$; all other inputs have undefined outputs.

For every "common divisor" $f$ of $U$, there is a "transducer" $A_f$
"realizing" $\varphi$ as follows. After reading a prefix $\tilde d\qsep$, 
$A_f$ reaches a distinguished "state" $q_f$ storing $d$. 
From $q_f$, a $\usep$-labelled transition applies the "identity" on $\bbD_\alpha$ 
and reaches a component that consumes suffixes $\tilde u$, eventually producing $u(d)$. 
Similarly, from $q_f$, a $\dsep$-labelled transition applies the "update" $f$ and reaches a
component that consumes suffixes $\tilde u$, eventually producing $u(d)$ 
only when $u\in U$.

Next, we exploit the existence of an "algebraically minimal" "transducer" 
$\finaltransd$ "realizing" the same "transduction" $\varphi$.
Since each $A_f$ is pruned by design, i.e.~$A_f=\Reach(A_f)$, 
$\finaltransd$ is a "quotient" of $A_f$.
Hence the "morphism" from $A_f$ to $\finaltransd$ maps the 
$\dsep$-labelled transition of $A_f$ to a fixed $\dsep$-labelled 
transition of $\finaltransd$, whose "update" we denote by $g$. 
This forces $g$ to factor through every "common divisor" $f$ of $U$.
It remains to see that $g$ is a "common divisor" of $U$. 
The runs of $\finaltransd$ labelled by $\usep\tilde u$ 
ensure that no non-trivial "common divisor" can be accumulated before reading $\dsep$: 
otherwise, such a non-trivial accumulated "update" would divide all "updates" in $\bbD_\alpha^\beta$, 
contradicting "joint coprimality". 
Therefore $g$ must "divide" every "update" in $U$, implying that it is indeed a "GCD" of $U$.
\end{proof}

Finally, it is natural to ask whether an "algebraically minimal" "transducer"
$\finaltransd$ can be \emph{computed} from a given "finite transducer" $A$.
Here there is no general recipe: computability depends on the class of "transducers"
at hand and on suitable effectiveness assumptions. Nevertheless, the proof of
Theorem~\ref{thm:minimal-transducer}, in particular Propositions~\ref{prop:images}
and~\ref{prop:final-transducer}, suggests the following general procedure.
\begin{enumerate}
\item For each "state" $q$ of
      $A=(Q,\deltainit,(\delta_a)_{a\in\Sigma},\deltahalt)$, compute a "GCD"
      $f_q$ of the "vector" $U_q$ of "updates" induced by maximal runs that 
      depart from $q$, namely, $U_q[t] = \specb{\delta}_{t\halt}(q)$ for all $t\in\Sigma^*$.
      This can be done, assuming an effective version of Assumption~\ref{ass:GCD},
      by viewing the $f_q$'s as unknowns in the system of equations
      \[
            f_q ~=~ \mathop{GCD}\big( \big\{ \specb{\delta}_{\halt*}(q) \big\} \cup
                                      \big\{ \specb{\delta}_a(q) \comp f_{q'}
                                             \::\: a\in\Sigma, \: q'=\speca{\delta}_a(q) \big\} \big)
            \tag{$\star$}
      \]
      and by solving it via a greatest fixpoint computation. 
      More precisely, one starts by setting all variables $f_q$ to the undefined 
      "update" $\bot$, and repeatedly assigns each variable $f_q$ the "update"
      in the right-hand side of~($\star$). Termination is guaranteed when the 
      "divisor" preorder is well-founded, since the "updates" $f_q$ can
      only decrease w.r.t.~the "divisor" preorder.
\item Use Lemma~\ref{lem:gcd-vs-egcd} to turn the "GCDs" $f_q$ into
      corresponding "EGCDs" $g_q$, and normalize the "updates" of $A$ as follows. 
      The "initial@initial transformation" "update"
      $\specb{\delta}_{\init*}(\initstate)$ is post-"composed" with $g_q$,
      where $q=\speca{\delta}_{\init*}(\initstate)$. 
      Similarly, for each $a$-labelled "transition" from $q$ to $q' = \speca{\delta}_a(q)$,
      compute the "residual" of $\specb{\delta}_a(q)$ via $g_q$, and
      then post-"compose" with $g_{q'}$, basically replacing 
      $\specb{\delta}_a(q)$ by $(\residual{g_q}{\specb{\delta}_a(q)}) \comp g_{q'}$.
      Finally, replace each "final@final transformation" "update"
      $\specb{\delta}_{\halt*}(q)$ by
      $\residual{g_q}{\specb{\delta}_{\halt*}(q)}$. 
      This yields an "equivalent" "transducer" $B$, with the same "control states" as $A$, 
      and such that, for every $q\in Q$, the "EGCD" of the "vector" $V_q$ of "updates" 
      induced by the maximal runs that depart from $q$ is "isomorphic@@arrow" to the 
      "identity".
\item Construct the "initial image" $C=\Reach(B)$, as in
      Proposition~\ref{prop:images}, by restricting the "configuration space" of
      $B$ to the "closure" of the set of reachable "configurations". 
      Equivalently, this closure is the smallest inductive invariant of
      $B=(Q,\kappainit,(\kappa_a)_{a\in\Sigma},\kappahalt)$, obtained as the
      smallest fixpoint of
      \[
        F: ~ S ~\mapsto~ \bigclS(S \:\cup\: \{\kappainit(\initstate,\initdata)\}
                                   \:\cup\: \bigcup\nolimits_{a\in\Sigma}\kappa_a(S)).
      \]
      Computing this fixpoint may require data-specific methods, such as
      adaptations of techniques from linear-algebra (cf.~\cite{Karr76}).

\item Finally, merge every two "states" $q,q'$ of $C$ such that
      $V_q \viso[j] V_{q'}$ for some "isomorphism"
      $j:\Dom(V_q[\emptystr])\to\Dom(V_{q'}[\emptystr])$. 
      The difficulty here is to find the witnessing "isomorphism" $j$, if it exists.
      When only finitely many candidate "isomorphisms" $j$ exist, one can enumerate 
      them and test which $j$ satisfies $V_q=j\vcomp V_{q'}$. 
      This test reduces to a variant of the "equivalence" problem between two 
      copies of $C$, which can often be solved by a backward fixpoint computation 
      inspired by~\cite{Karr76}
      (see also~\cite{ck86,BenediktDSW17,MuschollPuppisICALP19,SeidlManethKemper18}).

      Concretely, one starts from the equation $x=x'$ between the final outputs
      of the two copies of $C$ and propagates it backwards along pairs of
      equally labelled "transitions", using effective closure of systems of
      equations under inverse images of "updates" and finite intersections
      (Lemma~\ref{lem:constraint-closure}). This yields increasingly stronger
      systems of equations associated with pairs of "control states"; under
      Assumption~\ref{ass:compactness}, and assuming decidable entailment, the
      fixpoint can be effectively detected. One then checks whether the system
      associated with $(q,q')$ entails the equation $x=v(x')$, where $x,x'$
      denote the possible data at $q,q'$.

      Once $V_q \viso[j] V_{q'}$ is detected, one merges $q'$ into $q$ by
      redirecting every "transition" entering (resp.~exiting) $q'$ to enter 
      (resp.~exit) $q$, and by pre-"composing" its "update" with $j$ 
      (resp.~post-"composing" its "update" with $\inverse{j}$). 
      By Proposition~\ref{prop:final-transducer}, the resulting "transducer" 
      is precisely $\Obs(C)=\Obs(\Reach(B))$. This "transducer" is also "isomorphic" to
      $\Obs(\Reach(A))$ since $A$ is "equivalent" to $B$ (see remark after Proposition \ref{prop:minimal}), 
      and therefore to $\finaltransd$.
\end{enumerate} 
For these steps to be effective, Assumptions~\ref{ass:compactness} 
and~\ref{ass:GCD} must be strengthened by computability requirements. 
Some are straightforward: one typically observes that "composition" 
of "updates", "residuals" by "epi" "divisors", "GCDs" of finite sets, 
and entailment between systems of equations are computable. 
These suffice for the first two steps. The remaining steps are more subtle. 
Computing the "initial image" $C=\Reach(B)$ may already be hard.
For example, for a single-"state" "transducer" $B$ over the "polynomial register algebra",
it amounts to computing the smallest algebraic invariant of polynomial maps,
which is not feasible in general~\cite[Theorem 18]{HrushovsOuakninePoulyWorrell23}.
Likewise, there is no general method for enumerating "isomorphisms" between
"domains", as needed to test $V_q\viso V_{q'}$. 
Effective solutions are therefore model-dependent.

%% file: downward-stt.tex

In this part we use Theorems \ref{thm:minimal-transducer} and \ref{thm:minimal-finite-transducer}
to show that two classes of "transducers", namely, "downward STTs" and "upward STTs",
admit "algebraically minimal" models, which can moreover be computed from "finite transducers".

\subsection{Minimization of downward STT}\label{subsec:downwardSTT}

Recall that "downward STTs" are "transducers" over the "leaf substitution algebra",
whose "updates" transform "linear terms" by applying "substitutions" at the leaves.

The first step towards effective "minimization" is to show that Assumption \ref{ass:compactness} 
(compactness of "constrained domains") holds over the "leaf substitution algebra".
In fact, in this particular "data structure", it can be shown that all 
"constrained domains" are trivial, namely, they either coincide with 
the full "domain" over which they are defined or they are empty.
This collapse, as well as the fact that all "updates" of the "data structure"
are "epis", is a consequence of a non-overlap property of "substitutions", 
which we state below without proof:

\begin{lemma}[restate=NonOverlap,
              name={\textbf{non-overlap property}}]\label{lem:non-overlap}
For all "terms" $t$ over a $\tau$-tuple $\bar x$ of variables 
and for all $\tau$-tuples of "terms" $\bar u$ and $\bar v$,
$t\subst[\bar x]{\bar u} = t\subst[\bar x]{\bar v}$
implies
$\bar u = \bar v$.
\end{lemma}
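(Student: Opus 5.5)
We prove the contrapositive: if $\bar u\neq\bar v$, then $t\subst[\bar x]{\bar u}\neq t\subst[\bar x]{\bar v}$. The argument is a structural induction on $t$, but first we must settle a degenerate case. If some variable $x_i$ does not occur in $t$, its image is erased and the implication fails as stated. For instance, take $t=a$ with $a$ of rank $0$; then any two tuples $\bar u,\bar v$ give the same result. So the lemma must be read with every variable of $\bar x$ occurring in $t$. This holds for "linear terms" of the "leaf substitution algebra", where each placeholder occurs exactly once, and more generally for "non-erasing" specifications. We adopt this reading, and the induction will only need that each $x_i$ occurs at least once.

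The key claim is a statement about subterms: if $t\subst[\bar x]{\bar u}=t\subst[\bar x]{\bar v}$, then $u_i=v_i$ for every $x_i$ occurring in $t$. We prove it by induction on the structure of $t$.

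\emph{Base case, $t=x_i$.} Then $t\subst[\bar x]{\bar u}=u_i$ and $t\subst[\bar x]{\bar v}=v_i$, so the hypothesis gives $u_i=v_i$ directly.

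\emph{Base case, $t$ a constant of rank $0$.} No variable occurs in $t$, so the claim is vacuous.

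\emph{Inductive step, $t=c(t_1,\dots,t_k)$ with $c\in\Gamma$.} Substitution commutes with the root symbol, so the two results are $c(t_1\subst[\bar x]{\bar u},\dots,t_k\subst[\bar x]{\bar u})$ and $c(t_1\subst[\bar x]{\bar v},\dots,t_k\subst[\bar x]{\bar v})$. Terms are free trees, and two trees with the same root label are equal only if their children are pairwise equal. Hence $t_j\subst[\bar x]{\bar u}=t_j\subst[\bar x]{\bar v}$ for each $j$. Any variable $x_i$ occurring in $t$ occurs in some child $t_j$, and the induction hypothesis applied to $t_j$ gives $u_i=v_i$.

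Once every variable of $\bar x$ is known to occur in $t$, the claim yields $u_i=v_i$ for all $i=1,\dots,\tau$, that is, $\bar u=\bar v$.

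The induction itself is routine. The only real obstacle is being precise about when the implication holds. The counterexample above shows the hypothesis that every variable occurs in $t$ is genuinely needed, and it is met in the "leaf substitution algebra", where the lemma is applied. With linear terms the placeholder notation $t\subst{\bar u}$ fixes the correspondence between the $i$-th leaf and $u_i$, so the same induction applies unchanged.
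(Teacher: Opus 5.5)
The paper states Lemma~\ref{lem:non-overlap} without proof, so there is no argument of the authors to compare with. Your structural induction is the standard proof, and it is correct.

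Your caveat is accurate and should be kept. Read literally, the statement fails as soon as some $x_i$ does not occur in $t$: a ground $t$ gives $t[\bar x/\bar u]=t=t[\bar x/\bar v]$ for arbitrary $\bar u\neq\bar v$. So the hypothesis that every variable of $\bar x$ occurs in $t$ is genuinely needed. It does hold where the lemma is used for the leaf substitution algebra, namely in Corollaries~\ref{cor:downward-stt-compactness} and~\ref{cor:downward-stt-epi}. There $t$ is a linear term of type $\alpha$ whose $\alpha$ placeholders each occur exactly once, so your corrected version is exactly what those arguments need.

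You were also right to run the induction on the weaker claim, namely that $u_i=v_i$ for every $x_i$ occurring in $t$, rather than on the corrected lemma itself. The children $t_j$ of a node generally do not contain all of $\bar x$, so the all-variables-occur hypothesis is not inherited by subterms, whereas your claim is.
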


\begin{corollary}[restate=DownwardSTTcompactness, ,
                  name={\textbf{compactness}}]\label{cor:downward-stt-compactness}
Assumption \ref{ass:compactness} holds trivially for the "leaf substitution algebra".
\end{corollary}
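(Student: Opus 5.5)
The plan is to show directly that every "constrained domain" of the "leaf substitution algebra" is either empty or the full "domain" $\bbD_\tau$ over which it is defined. Once this is established, compactness is immediate: take any "constraint" $\gamma$ of "type" $\tau$. If its "solution set" is all of $\bbD_\tau$, the empty subset $\gamma'=\emptyset$ has the same "solution set". If its "solution set" is empty, some single pair $(f,g)\in\gamma$ must already have no solution, and $\gamma'=\{(f,g)\}$ suffices. Either way $\gamma'$ is a finite subset of $\gamma$, hence a "finitary" "constrained type".

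The key step is to analyse a single equation $(f,g)$ with $f,g\in\bbD_\tau^\beta$. Here $f$ is given by a $\tau$-tuple $\bar u$ of "linear terms" with $\beta$ placeholders in total, and $g$ by a $\tau$-tuple $\bar v$ of the same kind. For a "linear term" $t$ of "type" $\tau$ we have $f(t)=t\subst{\bar u}$ and $g(t)=t\subst{\bar v}$. Suppose some $t$ satisfies $f(t)=g(t)$. Then, viewing $t$ as a "term" over the $\tau$-tuple of variables $\bar x$, Lemma~\ref{lem:non-overlap} gives $\bar u=\bar v$. Hence $f=g$, and the equation holds for every $t\in\bbD_\tau$. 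So each equation has "solution set" either $\emptyset$ or $\bbD_\tau$. The "solution set" of $\gamma$ is the intersection of these, so it is also either $\emptyset$ or $\bbD_\tau$, and the case analysis above applies.

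One point needs care: "types" can themselves be "constrained types", and Assumption~\ref{ass:compactness} refers to numeric "types". Since every "constrained domain" collapses to $\emptyset$ or $\bbD_\tau$, no genuinely new "domains" arise. Constraints over a constrained type are therefore constraints over $\bbD_\tau$ or over $\emptyset$, and the latter case is trivial.

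I expect the main obstacle to be this careful application of Lemma~\ref{lem:non-overlap}. Placeholders must be read as the variables $x_1,\dots,x_\tau$ in left-to-right order, which is what makes $t\subst{\bar u}$ a "first-order substitution" $t\subst[\bar x]{\bar u}$ to which the lemma applies. Everything else is routine bookkeeping.
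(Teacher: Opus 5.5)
Your proposal is correct and follows the paper's argument: Lemma~\ref{lem:non-overlap} applied to $t[\bar u]=t[\bar v]$ shows that every single equation is either vacuous or unsatisfiable, so every solution set is either $\emptyset$ or the whole domain of type $\tau$. You go further than the paper by explicitly naming the finite subconstraint (the empty set, or one unsatisfiable pair), which the paper leaves implicit, and your first case also covers the degenerate situation where that domain is itself empty.
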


\begin{corollary}[restate=DownwardSTTepi,
              name={\textbf{epi updates}}]\label{cor:downward-stt-epi}
Every "update" of the "leaf substitution algebra" is an "epi".
\end{corollary}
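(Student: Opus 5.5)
The plan is to apply the characterization of "epi" "transformations" from Lemma~\ref{lem:epis}, specialized to a single "update". Fix an "update" $u\in\bbD_\alpha^\beta$ of the "leaf substitution algebra". It is specified by an $\alpha$-tuple $\bar u$ of "linear terms" with $\beta$ placeholders, and acts by $t\mapsto t\subst{\bar u}$. Seen as a "transformation" between single-"state" "configuration spaces", its control component is trivially surjective. By Lemma~\ref{lem:epis}, it therefore suffices to show that $\bbD_\beta\subseteq\cl(\Rng(u))$. Equivalently, every "constrained domain" of "type" $\beta$ that contains $\Rng(u)$ must be all of $\bbD_\beta$.

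I would get this from the collapse of "constrained domains" already announced before Corollary~\ref{cor:downward-stt-compactness}: every "solution set" is either empty or the full "domain". Concretely, take any equation $(f,g)$ with $f,g\in\bbD_\beta^\gamma$, given by tuples $\bar f,\bar g$. Suppose some $t$ satisfies $t\subst{\bar f}=t\subst{\bar g}$. By Lemma~\ref{lem:non-overlap}, this forces $\bar f=\bar g$, hence $f=g$, so the equation holds everywhere.

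Now apply this with $t=s\subst{\bar u}$ for any $s\in\bbD_\alpha$. Since $\Rng(u)\neq\emptyset$ (each "domain" is non-empty), every equation valid on $\Rng(u)$ is trivial. Hence $\cl(\Rng(u))=\bbD_\beta$, and Lemma~\ref{lem:epis} gives that $u$ is "epi".

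Alternatively, one can argue directly from the definition of "epi". Suppose $u\comp h=u\comp h'$. Pick any $s$; then $(s\subst{\bar u})\subst{\bar h}=(s\subst{\bar u})\subst{\bar h'}$, and Lemma~\ref{lem:non-overlap} applied to the "term" $s\subst{\bar u}$ yields $\bar h=\bar h'$.

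The main obstacle is justifying the use of Lemma~\ref{lem:non-overlap} in the "linear term" setting. Placeholders must be read as distinct variables $x_1,\dots,x_\beta$, each occurring exactly once in $t$, so that each component of $\bar f$ is recovered at the position of its variable. One must also handle "types" with $\beta=0$: there $\bbD_0^\gamma$ is a singleton or the equation is vacuous, so the claim is immediate.
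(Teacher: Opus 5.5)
Your proof is correct and follows the paper's own argument. Lemma~\ref{lem:non-overlap} makes every equation satisfied by some datum trivial, so the closure of the non-empty range of an update is the whole domain, and Lemma~\ref{lem:epis} then gives that the update is epi; your direct cancellation argument is a valid shortcut to the same conclusion.
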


Another consequence of the above results is that the "initial image" $\Reach(A)$
can be computed from a given "finite transducer" $A$ by simply pruning the
unreachable "control states":

\begin{corollary}[restate=DownwardSTTimages,
              name={\textbf{effective initial images}}]\label{cor:downward-stt-images}
Given a "finite@@transducer" "downward STT" $A$, 
one can compute its "initial image" $\Reach(A)$.
\end{corollary}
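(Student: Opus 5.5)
We need to show that $\Reach(A)$ is computable for a finite downward STT $A$. By Proposition~\ref{prop:images}, $\Reach(A)$ is obtained from $A$ by restricting its "configuration space" to the "closure" $\clS(\Rng(\initialmorph{A}))$ of the "range" of the "initial arrow". So it suffices to describe this closure explicitly and show that it is computable.

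The first step is to compute the "range" of $\initialmorph{A}$. The "initial transducer" $\initialtransd$ has words as states. Its "morphism" into $A$ maps a word $w$ to the "configuration" $\delta_{\init w}(\initstate,\initdata)$ whenever this is defined. Hence the range consists of the reachable configurations. Their "states" form the set $R\subseteq Q$ of control states reachable from $\speca{\delta}_{\init*}(\initstate)$ along the graph of the partial functions $\speca\delta_a$. Since $Q$ is finite, $R$ is computable by a standard graph search.

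The second step is to compute the closure state by state. For each $q\in R$, the set $D_q$ of reachable data at $q$ is non-empty. Corollary~\ref{cor:downward-stt-compactness}, or rather the collapse behind it derived from Lemma~\ref{lem:non-overlap}, says every "constrained domain" of the "leaf substitution algebra" is either empty or the full domain of its type. So $\cl(D_q)$ is a non-empty constrained domain containing $D_q$, and therefore equals $\Data(q)$. For $q\notin R$, $D_q=\emptyset$, and $\cl(\emptyset)=\emptyset$ because constrained domains are closed under intersection and $\emptyset$ is itself a constrained domain. One can see the latter, for instance, by intersecting the solution sets of two equations $x\mapsto x\subst{\bar u}$ versus $x\mapsto x\subst{\bar v}$ with distinct constant substitutions. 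If that witness is unavailable for some degenerate type, we simply note that $\cl(\emptyset)$ contributes no configuration at an unreachable state. This is harmless, since states with empty domain carry no configurations.

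It follows that $\clS(\Rng(\initialmorph{A}))=\States{R}$ with unconstrained domains. Thus $\Reach(A)$ is the restriction of $A$ to the states in $R$, with the same "updates". The epi $f$ is the corestriction of $\initialmorph{A}$, which is epi by Lemma~\ref{lem:epis}, and the strong mono $g$ is the inclusion, by Lemma~\ref{lem:strong-monos}.

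The main obstacle is the collapse claim that nonempty constrained domains are full. I would prove it from Lemma~\ref{lem:non-overlap} as follows. Any equation $(f,g)$ consists of updates that are substitutions $t\mapsto t\subst{\bar u}$ and $t\mapsto t\subst{\bar v}$. If it is satisfied by a single linear term $t$, then non-overlap gives $\bar u=\bar v$, so $f=g$ everywhere. Hence any constraint having a solution is satisfied by all data of its type.
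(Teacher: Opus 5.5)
Your argument is correct and follows the paper's own (one-line) justification. Non-overlap forces every constrained domain of the leaf substitution algebra to be empty or full, so the closure of the reachable data at each reachable state is the whole domain, and $\Reach(A)$ is just $A$ pruned to its graph-reachable control states.

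The only blemish is the aside on $\cl(\emptyset)$. It can be the whole domain, e.g.\ at type $0$, where the identity is the only update out of that type. This does not affect your conclusion, because the image built in Proposition~\ref{prop:images} keeps only the states in $\Rng(\speca h)$.
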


As for Assumption~\ref{ass:GCD} (existence of "GCDs"), recall that "updates" 
over the "leaf substitution algebra" are specified by tuples of "linear terms". 
\AP
The "divisor" relation between such "updates" coincides with the standard notion 
of "generalization" under "first-order substitution": given two tuples
$\bar u=(u_1,\dots,u_\alpha)$ and $\bar v=(v_1,\dots,v_\alpha)$ of "linear
terms", the "update" specified by $\bar u$ "divides" the one specified by
$\bar v$ iff $\bar u$ ""generalizes"" $\bar v$, that is, iff there is a tuple
$\bar w=(w_1,\dots,w_\beta)$ of "linear terms" such that
$u_i\subst{\bar w}=v_i$ for all $i=1,\dots,\alpha$.
\AP
Accordingly, a "GCD" of a set of "updates" over the "leaf substitution algebra"
is precisely a ""least general generalization"" (\reintro{anti-unifier} for short)
of the corresponding tuples of "terms". Anti-unification has been studied since
the 70's: Plotkin and Reynolds~\cite{Plotkin70,Reynolds70} independently proved
that finite sets of "terms" admit "anti-unifiers", unique up to variable renaming. 
Their theory adapts directly to our "linear terms", by requiring the witnesses 
of "anti-unification" to be "linear" as well. Although the classical results 
concern finite sets, the same existence proof extends to infinite sets;
in that case, the "anti-unifier" is no longer computable in general, but 
can still be constructed by induction.

 
\begin{lemma}[restate=DownwardSSTgcd,
              name={\textbf{see for instance \cite[Theorem 1]{Plotkin70}}}]\label{lem:downward-stt-gcd}
Assumption \ref{ass:GCD} holds for the "leaf substitution algebra",
namely, all sets of "updates" over the "leaf substitution algebra" admit "GCDs". 
Moreover, these "GCDs" can be computed when the sets are finite.
\end{lemma}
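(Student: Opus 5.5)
We work directly with the concrete description of the "leaf substitution algebra". An "update" $\bbD_\alpha^\beta$ is an $\alpha$-tuple $\bar u$ of "linear terms" with $\beta$ placeholders in total. Composition $\bar u\comp\bar w$ is the tuple $\bar u\subst{\bar w}$. So $\bar g$ "divides" $\bar u$ exactly when $\bar g$ "generalizes" $\bar u$.

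The approach reduces a common divisor of a set $U$ of such tuples to a single "linear term". We encode an $\alpha$-tuple as one term $\mathsf{tup}_\alpha(u_1,\dots,u_\alpha)$, using a fresh symbol of rank $\alpha$. Generalization of tuples then coincides with generalization of these encoded terms by terms whose root is $\mathsf{tup}_\alpha$. This lets us follow Plotkin's construction position by position.

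\textbf{The construction.} Define the set $P$ of positions (tree addresses) $p$ satisfying the following: every term in the encoded set $U$ has a node at $p$, and all of them carry the same symbol at every proper prefix of $p$. Then define a candidate $g$.
\begin{itemize}
\item Every maximal position $p\in P$ at which all terms of $U$ agree on the symbol becomes an internal node of $g$ with that symbol.
\item Every position $p\in P$ where the terms of $U$ disagree becomes a placeholder of $g$.
\end{itemize}
Placeholders are numbered in left-to-right order, which keeps $g$ linear and makes the number of placeholders the "output" type. Since all terms in $U$ share the root symbol $\mathsf{tup}_\alpha$, the root is never a placeholder. The construction does not need $U$ to be finite: $P$ is determined by $U$, and it is finite because it is bounded by the smallest term of $U$, whose size bounds the depth. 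The resulting $g$ is a finite term even when $U$ is infinite. The residual of $u\in U$ via $g$ is the tuple of subterms of $u$ at the placeholder positions of $g$. Each such subterm is linear, and concatenating them preserves the left-to-right order of placeholders. This shows that $g$ is a "common divisor".

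\textbf{Greatest.} Let $g'$ be any common divisor, with residuals $\bar w_u$ such that $g'\subst{\bar w_u}=u$. Every non-placeholder node of $g'$ occurs with the same symbol in all $u\in U$, so it lies in the agreement region. Hence each placeholder of $g'$ sits at a position that is either a placeholder position of $g$ or an ancestor of one. The required outer factor $\bar z$ with $g'\subst{\bar z}=g$ is then obtained by taking, for each placeholder of $g'$, the subterm of $g$ rooted at that position. These subterms are linear, and their placeholders appear in the correct order because linear terms respect left-to-right order.

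This step is the main obstacle. We must check that the pieces of $g$ below the placeholders of $g'$ are well defined, meaning no placeholder of $g'$ lies strictly inside a disagreement-free region that $g$ has collapsed. We also need the identity $g'\subst{\bar z}=g$ to hold exactly. We plan to argue position by position, using the non-overlap property (Lemma~\ref{lem:non-overlap}) to see that the residuals are uniquely determined.

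\textbf{Edge cases and computability.} When $U$ is empty, or consists only of undefined entries, the convention $\bot$ applies. When $U$ is finite, $P$ and $g$ are obtained by a simultaneous top-down traversal of the terms in $U$, so the "GCD" is computable. This is exactly the classical anti-unification algorithm of \cite{Plotkin70}, restricted to linear generalizers.
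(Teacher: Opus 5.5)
Your proposal is correct and follows the paper's route. The paper identifies GCDs with linear least general generalizations and invokes Plotkin--Reynolds anti-unification, adapted to linear witnesses and to infinite sets; your common-prefix construction, with fresh left-to-right placeholders at disagreement positions, makes this explicit, and your remark that $P$ is bounded by any single element of $U$ handles the infinite case directly. The step you flag as the main obstacle already follows from your own observation, so Lemma~\ref{lem:non-overlap} is not needed: every proper ancestor of a placeholder of $g'$ is a non-placeholder node of $g'$, hence an agreement node and never a placeholder of $g$, and no placeholder of $g$ can sit at a non-placeholder node of $g'$. Two wording fixes: drop the stray word ``maximal'' in the construction, since every agreement position of $P$ becomes a node of $g$. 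Also say that each placeholder of $g'$ sits at \emph{some} position of $g$, because the subterm of $g$ there may be ground.
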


Not only one can compute "GCDs" of finite sets of "updates": using the fixpoint method 
outlined in Section \ref{sec:main-theorem} (Step (2)), one can also compute "GCDs" of 
"vectors" of "updates" induced by the "control states" of a "finite transducer".

Summing up, the existence of "minimal" "STTs" and the possibility of computing them from given 
"finite@@transducer" "STTs", follows immediately from Corollaries \ref{cor:downward-stt-compactness}
and \ref{cor:downward-stt-images}, Lemmas \ref{lem:downward-stt-gcd} and \ref{lem:gcd-vs-egcd}, 
Theorems \ref{thm:minimal-transducer} and \ref{thm:minimal-finite-transducer}, 
and from the discussion on effectiveness provided in Section \ref{sec:main-theorem}.

\begin{theorem}[restate=MinimalDownwardSTT,
                name={\textbf{minimal downward STTs}}]\label{thm:minimal-downward-stt}
"Downward STTs" admit "algebraically minimal" "transducers". 
Moreover, "minimization" can be performed
effectively on any given "finite transducer".
\end{theorem}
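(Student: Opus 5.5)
Existence follows by instantiating Theorems~\ref{thm:minimal-transducer} and~\ref{thm:minimal-finite-transducer} with the leaf substitution algebra. By Corollary~\ref{cor:downward-stt-compactness}, every constrained domain is either empty or the whole domain, so constrained domains exist (as in the paper's general definition) and are finitary; Assumption~\ref{ass:compactness} therefore holds. Lemma~\ref{lem:downward-stt-gcd} gives Assumption~\ref{ass:GCD}. Theorem~\ref{thm:minimal-transducer} then yields an algebraically minimal downward STT for every realizable transduction. If the transduction is realized by a finite transducer, Theorem~\ref{thm:minimal-finite-transducer} makes this minimal transducer finite as well.

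For effectiveness, I would follow the four-step procedure of Section~\ref{sec:main-theorem}. Start from a finite downward STT $A$.

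In Step~1 we compute GCDs $f_q$ of the vectors $U_q$ by the greatest fixpoint iteration for $(\star)$, using the effective anti-unification of Lemma~\ref{lem:downward-stt-gcd} for finite sets. Termination needs the divisor preorder to be well-founded along the iteration, after the first defined value. A strictly more general linear tuple (one not obtainable from the current one by variable renaming) has strictly fewer non-variable symbols. Hence each $f_q$ can strictly decrease only finitely often.

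In Step~2, Corollary~\ref{cor:downward-stt-epi} says every update is epi. So each GCD is already an EGCD, with Lemma~\ref{lem:gcd-vs-egcd} applied trivially, and residuals $\residual{g}{u}$ are computed by matching: find the unique $\bar w$ with $\bar g\subst{\bar w}=\bar u$, unique by Lemma~\ref{lem:non-overlap}. This yields the normalized equivalent transducer $B$.

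Step~3 is immediate from Corollary~\ref{cor:downward-stt-images}: the closure of the reachable data at each reachable state is the whole domain. So $\Reach(B)$ is just $B$ with the unreachable control states removed.

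Step~4 is the main obstacle: decide $V_q\viso V_{q'}$ and find the witness $j$. Isomorphisms of $\bbD_\tau$ in the leaf substitution algebra are the invertible updates. An update of type $\tau\to\tau'$ replaces each placeholder by a linear term, which can only preserve or increase the number of placeholders and the size. So an update is invertible only if it is the identity tuple of placeholders; the fixed left-to-right order rules out permutations. Thus $j$ is forced to be the identity, and the candidate set is finite (a singleton). It remains to decide $V_q=V_{q'}$ for two states of the same type, i.e.\ equivalence of the two states as downward STTs from a common placeholder term.

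For this I would run the backward propagation sketched in Section~\ref{sec:main-theorem}. Because constrained domains are trivial, the propagated system at each pair of states is either empty or everything, so it reduces to a finite Boolean fixpoint. Equivalently, I would reduce it to a pairwise check that equally labelled transitions and final updates agree after normalization, which holds when both states have identity EGCDs, as in Choffrut's sequential-transducer minimization. I expect justifying this last reduction (that after normalization, equality of residual vectors coincides with a syntactic bisimulation) to require the most care. Merging the equivalent states then yields $\Obs(\Reach(B))\iso\finaltransd$.
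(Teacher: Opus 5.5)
Your existence argument and Steps~1--3 are the instantiation the paper intends. The paper's proof is only a pointer to Corollaries~\ref{cor:downward-stt-compactness} and~\ref{cor:downward-stt-images}, Lemmas~\ref{lem:downward-stt-gcd} and~\ref{lem:gcd-vs-egcd}, the two general theorems, and the procedure of Section~\ref{sec:main-theorem}. Counting non-placeholder symbols is a valid way to get termination of the fixpoint for $(\star)$.

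\textbf{Step~4, first option.} This option fails as written. The backward propagation attaches to each pair of states $(p,p')$ a relation between the data $x$ stored at $p$ and the data $x'$ stored at $p'$. Such a relation is not a constrained domain of the leaf substitution algebra: there are no pair types, and Corollary~\ref{cor:downward-stt-compactness} only says that single-variable systems $f(x)=g(x)$ are trivial. Two-variable relations are in general non-trivial. For example, if $p$ and $p'$ both have type $1$ and final update $x\mapsto x[a]$ for a constant $a$, the relation at $(p,p')$ is $\{(x,x') : x[a]=x'[a]\}$. This is in general neither empty nor total. With a binary symbol $h$ it also contains off-diagonal pairs such as $(h(x_1,a),h(a,x_1))$. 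So there is no finite Boolean fixpoint, and your ``Equivalently'' is unfounded. In fact the generic backward method has nothing to work with here, since even the starting equation $x=x'$ between two ground outputs is not a constraint of this data structure.

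\textbf{Step~4, second option.} This is the right route, and the justification you defer takes only a few lines.
\begin{itemize}
\item Isomorphisms. Your argument is slightly off: an update may replace a placeholder by a ground term, so the number of placeholders can decrease. The correct argument is that $j\comp j^{-1}=\mathrm{id}$ forces every component of the tuple specifying $j$ to be a single placeholder, so $j$ is the identity.
\item Dead states. Remove the states $q$ whose vector $V_q$ is constantly $\bot$ (exactly those with $f_q=\bot$ in Step~1). Otherwise a transition into a dead state on one side and a missing transition on the other defeat any syntactic check.
\item Notation. In the normalized transducer, write $u_a(p)$ for the update on the $a$-transition at $p$, $p_a$ for its target, and $u_{\mathrm{fin}}(p)$ for the final update at $p$.
\item Key fact. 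The EGCD of $V_{p_a}$ is now the identity, and $u_a(p)$ is epi by Corollary~\ref{cor:downward-stt-epi}. So Lemma~\ref{lem:egcd-distributivity} makes $u_a(p)$ an EGCD of $V_p[a-]=u_a(p)\vcomp V_{p_a}$.
\item Only if. Suppose $V_p=V_{p'}$. Then $u_{\mathrm{fin}}(p)=V_p[\varepsilon]=u_{\mathrm{fin}}(p')$. Next, $u_a(p)=u_a(p')$ by Lemma~\ref{lem:unique-egcd} and the triviality of isomorphisms. Finally, $V_{p_a}=V_{p'_a}$ by cancelling the epi $u_a(p)$.
\item If. Any relation with these syntactic properties gives $V_p[t]=V_{p'}[t]$ by induction on $t$.
\end{itemize}
Hence $\{(p,p') : V_p=V_{p'}\}$ is the greatest syntactic bisimulation of the finite normalized transducer. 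It is computable by partition refinement, exactly as for onward sequential transducers, and merging its classes yields the minimal transducer.
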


We conclude by noting a similarity between the 
"leaf substitution algebra" and the algebra of strings 
with "updates" that prolong strings by appending constants. 
The latter algebra can be seen as a particular case of 
the former, by simply identifying every string $w$ with a unary 
tree that has a "substitution" variable at the leaf.
In this perspective, Theorem \ref{thm:minimal-downward-stt} can 
be seen as a natural generalization of the minimization result for 
"sequential transducers" \cite{cho03}.

%% file: upward-stt.tex

\subsection{Minimization of upward STT}\label{subsec:upwardSTT}

We now turn to the "upward@@STT" variant of "STTs", with "updates" over the "free term algebra".
Here the results are slightly more difficult, but also interesting.
Recall that in the "free term algebra", an "update" from "type" $\alpha$ to "type" 
$\beta$ maps any $\alpha$-tuple $\bar u$ of "ground terms" to the $\beta$-tuple
$\bar t\subst[\bar x]{\bar u}$ of "ground terms", where $\bar t$ only depends 
on the "update" under consideration and contains "terms" over the variables 
$\bar x=(x_1,\dots,x_\alpha)$.

We first discuss which "updates" should be allowed in the "free term algebra",
with the goal of obtaining reasonable "minimal" instances of "upward STTs".
On the one hand, "erasing updates" should be forbidden, since they may produce
"minimal" "transducers" with useless registers. Indeed, if "erasing updates"
were allowed, then any two "updates" of the form
$f:(x_1,\dots,x_\alpha)\mapsto(t_1,\dots,t_\beta)$ and
$f':(x_1,\dots,x_\alpha)\mapsto(t_1,\dots,t_\beta,t_{\beta+1})$
would be mutual "divisors", implying that one could add to an "upward STT"
arbitrarily many registers loaded with useless data, without violating "minimality". 
Note that the same issue arises in other register-based models, such as the "string register algebra".
On the other hand, non-"linear" "updates" should be allowed. Consider the set
$U$ consisting of the two "updates" $\emptytuple \mapsto a(u_1,u_2)$ and
$\emptytuple \mapsto b(u_2,u_1)$, for distinct "terms" $u_1,u_2$. If only
"linear updates" were allowed, then $\emptytuple \mapsto u_1$ and
$\emptytuple \mapsto u_2$ would be incomparable maximal "common divisors" of $U$,
so $U$ would have no "greatest common divisor", and hence,
by Theorem \ref{thm:gcd-necessary}, a "minimal" "upward STT" need not exist. 
With non-"linear updates", instead, $U$ admits a "greatest common divisor",
for example, $\emptytuple \mapsto (u_1,u_2)$; indeed the latter "update" 
can be "composed" with $(x_1,x_2)\mapsto a(x_1,x_2)$ and $(x_1,x_2)\mapsto b(x_2,x_1)$ 
to recover respectively the two original "updates" of $U$.

Based on the above observations, two variants of the "free term algebra" 
remain admissible: one with "copyful", "non-erasing" "updates", and one with 
"copyless", "non-erasing" "updates".


Next, we show that Assumption \ref{ass:compactness} (compactness of "constraints") 
holds over both variants of the "free term algebra". 
We do so by embedding the "free term algebra" into the "polynomial register algebra",
and by relying on Hilbert's finite basis theorem \cite{Hilbert1890}.

\begin{lemma}[restate=UpwardSTTcompactness,
              name={\textbf{compactness}}]\label{lem:upward-stt-compactness}
Assumption \ref{ass:compactness} holds for the "free term algebra",
both in the variant that includes "copyful" "non-erasing" "updates" 
and in the variant that restricts to "copyless" "non-erasing" "updates".
\end{lemma}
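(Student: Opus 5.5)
We embed the "free term algebra" into the "polynomial register algebra" and transfer Hilbert's finite basis theorem. Fix a numeric "type" $\tau$ and a "constraint" $\gamma$ of "type" $\tau$, i.e.\ a set of pairs $(f,g)$ of "updates" with input "type" $\tau$. Each equation $f(\bar x)=g(\bar x)$ is a conjunction of finitely many term equations $s(\bar x)=s'(\bar x)$ between "terms" over $\bar x=(x_1,\dots,x_\tau)$, one per output component. The aim is to show that the "solution set" of $\gamma$ is already the "solution set" of a finite $\gamma'\subseteq\gamma$. Once this holds, $\gamma'$ is by definition a finitary "constraint", and the same argument applies unchanged to both variants of "updates", because $\gamma'$ is a subset of $\gamma$ and thus contains only "updates" of the variant under consideration.

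\emph{Encoding.} Choose an injective encoding of "ground terms" into a polynomial ring that is compatible with term formation. A natural choice uses commuting indeterminates and polynomials with integer coefficients: fix a variable $z$, and for each symbol $c\in\Gamma$ of rank $k$ define $\enc(c(t_1,\dots,t_k)) = p_c + \sum_{i=1}^k z^{i}\cdot q_{c,i}\cdot \enc(t_i)$. The exact shape matters less than two properties: (i) $\enc$ is injective on "ground terms", and (ii) $\enc$ extends homomorphically to "terms" with variables, so that each "term" $s(\bar x)$ yields a polynomial expression $P_s$ in $\tau$ polynomial unknowns with $\enc(s(\bar t))=P_s(\enc(\bar t))$. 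Property (ii) is immediate from the inductive definition. Property (i) is the delicate one. A cleaner alternative that I would try first is a free noncommutative construction that is then made commutative via fresh variables per position: encode a "term" as the polynomial $\sum_{\text{nodes } \nu} X_{\text{label}(\nu)}\cdot m_\nu$, where $m_\nu$ is a monomial encoding the path to $\nu$, e.g.\ $m_{\nu i}=m_\nu\cdot y_i$, with one such monomial per child index, working in $\bbQ[X_c, y_i, \dots]$ (finitely many variables since $\Gamma$ is finite). Paths are words, though, and commutative monomials lose their order, so I would instead use $m_{\nu i} = m_\nu^{\,r}\cdot y_i$ with a large base $r$, or nested indeterminates. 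Making this injective is the main technical obstacle of the proof.

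\emph{Transfer of solutions.} With an injective homomorphic $\enc$, a tuple $\bar t$ satisfies $s(\bar t)=s'(\bar t)$ iff $P_s(\enc\bar t)=P_{s'}(\enc\bar t)$. The data of "type" $\tau$ are thus in bijection with the points $\enc(\bar t)$, and each equation of $\gamma$ becomes a polynomial equation, read coefficientwise as a system over $\bbQ$, in unknowns ranging over the ring $\bbQ[\dots]^\tau$. Consider the ideal generated by all the differences $P_s-P_{s'}$ coming from $\gamma$, taken in the polynomial ring over the unknown tuple (with coefficients in the encoding ring). By Hilbert's basis theorem this ideal is finitely generated, so finitely many generators suffice, and each generator lies in the ideal spanned by finitely many equations of $\gamma$. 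Collect those equations into a finite $\gamma'\subseteq\gamma$. Every element of the ideal of $\gamma$ is a combination of generators of the ideal of $\gamma'$, so any common zero of $\gamma'$ is a zero of every equation of $\gamma$. Restricting to encoded points, $\bbD_{\gamma'}=\bbD_\gamma$.

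\emph{Where care is needed.} Two points deserve attention. First, Hilbert's theorem requires a Noetherian coefficient ring and finitely many unknowns. The unknowns are the $\tau$ components, and the encoding ring $\bbQ[X_c,y_i]$ is Noetherian since $\Gamma$ is finite, so the theorem applies. Second, as noted above, injectivity of $\enc$ is the main obstacle. If the explicit polynomial encoding turns out awkward, I would fall back on an encoding into a polynomial ring in infinitely many variables indexed by positions, and argue Noetherianity via the fact that only finitely many equations are needed to fix the support; or directly use the known Noetherianity of term equations (finite unification-based argument: solution sets of term equation systems are determined by most general unifiers, and strictly decreasing chains of unifier-defined sets are finite). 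The copyless/copyful distinction does not affect the argument.
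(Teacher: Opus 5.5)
Your proposal has a genuine gap, and it sits exactly where you place the \emph{main technical obstacle}. The whole argument needs an encoding $\enc$ of ground terms into $R=\bbQ[X_c,y_i]$ that is both injective and such that every term $s(\bar x)$ acts as a polynomial $P_s\in R[u_1,\dots,u_\tau]$. Given such an encoding, your transfer step is fine: evaluation at $\enc(\bar t)$ is a ring homomorphism and $R[\bar u]$ is Noetherian. Taking $\gamma'\subseteq\gamma$ also correctly handles both variants, as in the paper. But you never construct the encoding, and the candidates you spell out do not work:
\begin{itemize}
\item Your first encoding is affine with commuting coefficients. For a binary $b$ it identifies $b(b(c,d),b(e,f))$ with $b(b(c,e),b(d,f))$, because both $\enc(d)$ and $\enc(e)$ receive the coefficient $z^3q_{b,1}q_{b,2}$.
\item The node-sum encoding with $m_{\nu i}=m_\nu y_i$ collapses in the same way.
\item The repair $m_{\nu i}=m_\nu^{\,r}y_i$ restores injectivity but destroys property (ii). All these encodings are homogeneous of degree $1$ in the $X_c$'s. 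So if a polynomial $P$ realizes the update $x\mapsto g(x)$ for a unary $g$, comparing degree-$1$ components gives $\enc(g(t))=\alpha+\beta\,\enc(t)$ for fixed $\alpha,\beta$. Testing on two constants $a,a'$ gives $\beta=y_1$ and $\alpha=X_g$. Testing on $g(a)$ then compares $X_ay_1^{r+1}$ with $X_ay_1^2$, which forces $r=1$.
\end{itemize}
More conceptually, encodings that are linear in the node labels make updates act affinely. Affine maps over a commutative ring cannot record the left-to-right order of branches. This is the real content of the lemma, not a detail to be settled later.

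The missing idea is a genuinely non-linear encoding that keeps a length in a separate coordinate. The paper first serializes terms as XML-like strings over $\underline{\Gamma}\uplus\overline{\Gamma}$. Every term update, copyful or copyless, then becomes a string update. Next it encodes a string $w$ over a $k$-letter alphabet by the pair $(r_w,k^{|w|})$, where $r_w$ is the base-$k$ value of $w$. This is injective, and concatenation becomes the polynomial map $(y_1,z_1,y_2,z_2)\mapsto(y_1z_2+y_2,\,z_1z_2)$. Each update therefore becomes a polynomial map $\bbQ^{2\tau}\to\bbQ^{2\beta}$, and Hilbert's theorem over $\bbQ$ concludes as in your transfer step. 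An injective polynomial pairing function such as Cantor's would also do.

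Your one-line fallback through unification would be a different, Hilbert-free route. As stated, though, its key claim is unproved, and it is false for arbitrary instance-closed sets: $x$, $f(x)$, $f(f(x))$, \dots\ decrease forever. What makes it work is the following. A system over $\bar x$ has an idempotent MGU whose parameters form a subset of $\bar x$. A strict inclusion $\mathrm{Sol}(E_2)\subsetneq\mathrm{Sol}(E_1)$ of non-empty solution sets forces strictly fewer parameters. Indeed, with equally many, the substitution instantiating the first MGU into the second maps parameters bijectively onto parameters, so it is a renaming. Chains of solution sets then have length at most $\tau+2$. Any finite $\gamma'\subseteq\gamma$ whose solution set is minimal among finite subsets then satisfies $\mathrm{Sol}(\gamma')=\mathrm{Sol}(\gamma)$. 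You would have to supply this argument.
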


Together with Propositions~\ref{prop:images} and \ref{prop:initial-transducer}, 
this gives the existence of the "initial image" $\Reach(B)$ for every "transducer" $B$. 
We now show that $\Reach(B)$ is computable when $B$ is "finite@@transducer". 
The key ingredient is the classical theory of unification in the "free term algebra"~\cite{Robinson65,MartelliMontanari82},
which provides finite representations of "solutions" to systems of equations, and
therefore allows one to reason about them effectively.
We provide the important definitions below.

\AP
Let $E$ be a system of equations over variables
$\bar x=(x_1,\dots,x_\alpha)$. A ""unifier"" of $E$ is an $\alpha$-tuple%
\footnote{In \cite{Robinson65,MartelliMontanari82}, a "unifier" is defined
          as a mapping from variables to "terms". Here we use an equivalent
          definition based on a tuple of "terms", where the input variables 
          are omitted but can be recovered from the arity.}
$\bar u$ of "terms" over fresh variables $\bar y=(y_1,\dots,y_\beta)$, called
""parameters"", such that every "solution" of $E$ is of the form
$\bar u\subst[\bar y]{\bar t}$ for some $\beta$-tuple $\bar t$ of "ground terms".
\AP
A "unifier" $\bar u$ is ""most general"" (\reintro{MGU}) if every other
"unifier" $\bar v$ factors through it, namely,
$\bar v=\bar u\subst[\bar y]{\bar w}$ for some tuple $\bar w$ of "terms" over
the "parameters" of $\bar v$. 
Since an "MGU", when it exists, is unique up to renaming of "parameters", 
we shall often speak of \emph{the} "MGU" of $E$.
For example, the "MGU" of the system of equations
$a(b(x_1),x_2,x_3)=a(x_2,b(x_1),x_3)$ and
$a(c(x_1),x_2,x_3)=a(x_3,x_2,c(x_1))$ is the tuple
$\bar u = (y_1,b(y_1),c(y_1))$ over the single "parameter" $y_1$.

The next lemma is a classical result of the theory of unification 
in the "free term algebra" \cite{Robinson65,MartelliMontanari82}.
It shows that the "solutions" of a finite system of equations, if they exist,
can be effectively represented in a solved (parametric) form ---essentially an "MGU"---
and this enables deciding entailment between systems of equations:

\begin{lemma}[name={\textbf{\cite[Unification Theorem]{Robinson65}, 
                            \cite[Theorem 2.3]{MartelliMontanari82}}}]
    \label{lem:parametric-form}
One can decide whether a given finite system of equations $E$ over $\bar x$
is satisfiable (i.e.~whether $\Sol(E)\neq\emptyset$), and in this case
compute an "MGU" $\bar u$ of $E$ with "parameters" $\bar y$ such that
\[
  \Sol(E) ~=~ \big\{ \bar u\subst[\bar y]{\bar w} \::\: \bar w \text{ tuple of arbitrary "ground terms"} \big\}.
\]
Moreover, $E$ entails an equation $\bar t=\bar t'$ over $\bar x$ iff 
$\bar t\subst[\bar x]{\bar u}$ and $\bar t'\subst[\bar x]{\bar u}$
are syntactically equal.
\end{lemma}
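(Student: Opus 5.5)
The plan follows the classical Robinson/Martelli--Montanari route. Everything is organized around the unification algorithm as a terminating rewriting procedure on finite systems of equations.

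\emph{Representing systems.} I view the finite system $E$ as a multiset of term equations $s=s'$ over $\bar x$. Each equation $f(x)=g(x)$ between "updates" $f,g\in\bbD_\alpha^\beta$ of the "free term algebra" unfolds componentwise into $\beta$ term equations. Then I apply the Martelli--Montanari transformation rules:
\begin{itemize}
\item \emph{decomposition}: $a(s_1,\dots,s_k)=a(s'_1,\dots,s'_k)$ becomes $s_i=s'_i$;
\item \emph{clash}: fail on distinct head symbols;
\item \emph{deletion} of trivial equations $x=x$;
\item \emph{orientation} of $t=x$ to $x=t$;
\item \emph{occurs check}: fail on $x=t$ with $x$ occurring properly in $t$;
\item \emph{variable elimination}: substitute $t$ for $x$ elsewhere when $x=t$ and $x\notin t$.
\end{itemize}
Each rule preserves the set of ground "solutions". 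Failure rules are sound because ground terms are finite trees, and distinct head symbols never become equal.

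\emph{Termination and solved form.} I use the standard well-founded measure: the number of unsolved variables, then the total size of the equations, then the number of equations of the form $t=x$, ordered lexicographically. Every rule decreases this measure. So the procedure either fails, in which case $\Sol(E)=\emptyset$, or reaches a solved form $x_{i_1}=t_1,\dots,x_{i_k}=t_k$. In the solved form the left-hand variables are distinct and do not occur in any $t_j$. The remaining variables become the "parameters" $\bar y$, and $\bar u$ is the tuple obtained by reading off $x_i\mapsto t_j$ or $x_i\mapsto y$.

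\emph{Description of the solutions.} By construction, $\Sol(E)$ is exactly the set of instances $\bar u\subst[\bar y]{\bar w}$ with $\bar w$ ground. I also need "most general"ity. Any "unifier" $\bar v$ has all its ground instances in $\Sol(E)$. Since the rules are equivalences also over non-ground substitutions, $\bar v$ satisfies the solved form syntactically, which forces $\bar v=\bar u\subst[\bar y]{\bar w}$ where $\bar w$ is the restriction of $\bar v$ to the parameter positions.

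The step I expect to be the main obstacle is the entailment characterization. The ``if'' direction is immediate: if $\bar t\subst[\bar x]{\bar u}=\bar t'\subst[\bar x]{\bar u}$ syntactically, then substituting any ground $\bar w$ gives equality on every solution. For ``only if'', I argue as follows.
\begin{itemize}
\item Suppose the two terms $p=\bar t\subst[\bar x]{\bar u}$ and $p'=\bar t'\subst[\bar x]{\bar u}$ over $\bar y$ differ syntactically.
\item Pick a ground instantiation of $\bar y$ that keeps them distinct. I intend to instantiate each parameter $y_i$ by a distinct ground term large enough, say deeper than both $p$ and $p'$, so that no accidental coincidences arise.
\item Then the first position where $p$ and $p'$ differ still witnesses a difference after substitution. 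There are two cases there: either a symbol clashes, or a variable faces a different subterm, and the size gap separates them.
\end{itemize}
This requires the signature to admit infinitely many ground terms, i.e.~a symbol of positive rank together with a constant. In the degenerate finite-signature case, I would handle entailment by finite enumeration instead.
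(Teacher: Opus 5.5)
The unification part of your plan is the classical Martelli--Montanari argument: the transformation rules, the lexicographic termination measure, and reading $\bar u$ off the solved form. The paper gives no proof of its own; it cites Robinson and Martelli--Montanari for the whole lemma, entailment clause included.

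\textbf{The gap} is in the step you flag as the main obstacle. Instantiating the parameters by pairwise distinct ground terms deeper than $p$ and $p'$ does not prevent coincidences. Take $f$ unary, $c$ a constant, $p=y_1$, $p'=f(y_2)$, $\sigma(y_2)=f(f(c))$ and $\sigma(y_1)=f(f(f(c)))$. Your conditions hold, yet $p\sigma=p'\sigma$. When a parameter $y_i$ faces a non-variable subterm $s$ in which $y_i$ does not occur, your choice creates no ``size gap'' between $\sigma(y_i)$ and $s\sigma$.

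A correct version argues locally at a prefix-minimal position $\pi$ where the labels of $p$ and $p'$ differ. All proper prefixes of $\pi$ carry the same function symbol in both terms, so $\pi$ is a position of both, with $(p\sigma)|_\pi=(p|_\pi)\sigma$ and $(p'\sigma)|_\pi=(p'|_\pi)\sigma$. Then:
\begin{itemize}
\item a symbol clash is separated by every $\sigma$;
\item $y_i$ facing a non-variable term that contains $y_i$ is separated by every $\sigma$, by size;
\item $y_i$ facing $y_j$ is separated by giving them different values;
\item $y_i$ facing a non-variable $s$ without $y_i$ is separated by fixing $\sigma$ on the other parameters and then choosing $\sigma(y_i)\neq s\sigma$.
\end{itemize}
Alternatively, when a symbol of positive rank exists, space the depths: give $y_k$ an instance of depth $kN$, with $N$ exceeding the depths of $p$ and $p'$. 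Then the depth of $s\sigma$ is either below $N$ or not a multiple of $N$, whereas that of $\sigma(y_i)$ is a positive multiple of $N$.

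\textbf{The degenerate case.} The local argument needs exactly two distinct ground terms, and that is the right hypothesis, not infinitely many. With a single constant $c$ and no other symbol, the stated equivalence is false: $E=\emptyset$ over $x_1$ entails $x_1=c$, while $y_1\neq c$. So falling back on finite enumeration would decide entailment but would not prove the characterization. With several constants and no function symbols, the characterization holds and no enumeration is needed. The paper tacitly assumes a non-degenerate alphabet, as it does elsewhere when it relies on a variable taking two different values.

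\textbf{Order of steps in the MGU argument.} Your step from ``every ground instance of $\bar v$ is a solution'' to ``$\bar v$ satisfies the solved form syntactically'' uses this same separation fact, so prove it first. Alternatively, define unifiers as syntactic solutions of $E$; the step then follows directly from the rules being equivalences over arbitrary substitutions.
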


Besides implying decidability of entailment between systems of equations,
the above lemma is also crucial for computing $\Reach(B)$ from a given
"finite transducer" $B$. 
The computability of $\Reach(B)$ indeed relies on the ideas
outlined at the end of Section \ref{sec:main-theorem}, and in particular on 
a variant of an algorithm from \cite{Karr76} for constructing inductive invariants
in the "free term algebra".
The algorithm boils down to repeatedly computing "closures" of finite unions of 
"constrained domains" and of images of "constrained domains" via "updates",
which is possible thanks to Lemma \ref{lem:parametric-form}.

\begin{lemma}[restate=UpwardSTTimages,
              name={\textbf{effective initial images}}]\label{lem:upward-stt-images}
Given a "finite@@transducer" "upward STT" $B$, 
one can compute its "initial image" $\Reach(B)$.
\end{lemma}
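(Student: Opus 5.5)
We compute $\Reach(B)$ as the least fixpoint of the operator
\[
  F: ~ S ~\mapsto~ \bigclS(S \:\cup\: \{\kappainit(\initstate,\initdata)\} \:\cup\: \bigcup\nolimits_{a\in\Sigma}\kappa_a(S))
\]
from Step~(3) of Section~\ref{sec:main-theorem}. The iteration starts from the empty configuration set and keeps, for each control state $q$ of $B$, a finite presentation of the current closed set $S_q\subseteq\Data(q)$. By the proof of Proposition~\ref{prop:images} applied to $\initialmorph{B}$, the least fixpoint of $F$ is exactly $\clS(\Rng(\initialmorph{B}))$, i.e.\ the configuration space of $\Reach(B)$. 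Indeed, $\Rng(\initialmorph{B})$ is the set of configurations reachable from $\kappainit(\initstate,\initdata)$. Moreover, Lemma~\ref{lem:closure-continuity} and Lemma~\ref{lem:closure-union-push} show that the closure of the reachable set is an $F$-fixpoint below every other one. Pruning the states $q$ with $S_q=\emptyset$ and restricting the transitions of $B$ then gives $\Reach(B)$.

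To represent the sets $S_q$ effectively, we use that every closed set is $\Sol(E)$ for a finite system $E$ (Lemma~\ref{lem:upward-stt-compactness}). By Lemma~\ref{lem:parametric-form}, such a set is either empty or of the form $\{\bar u\subst[\bar y]{\bar w}\}$ for an MGU $\bar u$. So each $S_q$ is stored as $\emptyset$ or as a tuple of terms $\bar u_q$ with parameters. Two basic operations are needed.

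\textbf{Images.} For an update $f$ given by $\bar c$, the set $f(\Sol(E))$ consists of the instances of the parametric tuple $\bar c\subst[\bar x]{\bar u}$. We must show that its closure is computable. We expect it to be the solution set of the equations entailed by this parametric tuple. The plan is to show that the closure of the set of all ground instances of a tuple $\bar p$ of terms over parameters is determined by $\bar p$ up to a canonical rewriting: equalities between positions carrying the same parameter, and fixed ground subterms. We also need to show this closure is the set of ground instances of an explicitly computable parametric tuple (essentially $\bar p$ itself, once parameters not occurring are dropped).

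\textbf{Unions (anti-unification).} We must compute $\cl(\Sol(E_1)\cup\Sol(E_2))$. Here we use Plotkin--Reynolds anti-unification (as in Lemma~\ref{lem:downward-stt-gcd}). The least general generalization $\bar p$ of $\bar u_1,\bar u_2$ should present the closure of the union, namely all ground instances of $\bar p$. This is the analogue of Karr's affine hull in the term setting.

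The main obstacle is exactly these closure identities: showing that the set of ground instances of a parametric tuple is closed, and that the anti-unifier gives the least closed superset. The delicate case is when the signature is poor, e.g.\ Example~\ref{ex:closure-terms} needs a binary symbol to separate. We would prove closedness by exhibiting, for any ground tuple $\bar t$ that is not an instance of $\bar p$, an equation between non-erasing updates that holds on all instances of $\bar p$ but fails on $\bar t$. This should be built from the positions where $\bar t$ clashes with $\bar p$, using the non-overlap property (Lemma~\ref{lem:non-overlap}). If the signature lacks symbols of rank at least 2, we would treat that degenerate case separately, where the domains are essentially strings and the argument is direct. If exact anti-unifier presentations fail, a fallback is to compute the closure through the embedding into the polynomial register algebra used for Lemma~\ref{lem:upward-stt-compactness}.

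\textbf{Termination.} The closed sets at each state form a chain $S_q^{(0)}\subseteq S_q^{(1)}\subseteq\cdots$ of constrained domains. Their defining constraints form a decreasing chain. By compactness (Lemma~\ref{lem:upward-stt-compactness}, via Hilbert's basis theorem, i.e.\ Noetherianity), such a chain stabilizes. Stabilization is detectable because entailment between finite systems is decidable by Lemma~\ref{lem:parametric-form}.
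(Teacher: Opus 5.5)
Your overall architecture is the paper's: Kleene iteration of $F$ from $\emptyset$, each state's closed set stored as $\emptyset$ or as an MGU, and closure of images and of unions as the two primitives. However, both closure identities you rely on are false. The set of ground instances of a parametric tuple is in general \emph{not} closed. By Example~\ref{ex:closure-terms}, the only constrained domains in $\bbD_1$ are $\emptyset$, singletons and $\bbD_1$. So for a unary symbol $b$, the instances $\{b(t)\}$ of $(b(y))$ form a non-closed set whose closure is $\bbD_1$. Similarly, the instances of $(a(y),b(a(y)))$ have closure $\{(s,b(s))\}$. This is not an artifact of a poor signature: any equation valid on all $b(t)$ holds everywhere, so no equation separates a constant $a$ from the instances of $(b(y))$. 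The separating equation you plan to construct therefore does not exist. The anti-unification claim fails for the same reason: $\{b(a)\}\cup\{b(c)\}$ has least general generalization $(b(y))$ but closure $\bbD_1$.

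As a result, your iteration can return sets that are not even constrained domains. Take a one-register upward STT with initial value $b(a)$ and transition $x\mapsto b(d(x))$ ($b,d$ unary). Your procedure stabilizes at $\{b(t)\}$, while $\mathrm{Reach}(B)$ carries $\bbD_1$ at that state. The missing ingredient is how to pass from a parametric tuple $\bar p$ to the closure of its instances. The paper takes the solution set of all equations $\bar t=\bar t'$ over $\bar x$ that are unified by $\bar p$, i.e.\ where $\bar t[\bar x/\bar p]$ and $\bar t'[\bar x/\bar p]$ are syntactically equal. It then observes that the finitely many maximally deconstructed such pairs suffice. The MGU of that finite system is in general strictly more general than $\bar p$: $(x_1)$ for $(b(y))$, and $(x_1,b(x_1))$ for $(a(y),b(a(y)))$. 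For unions, the paper keeps the maximally deconstructed pairs unified by both MGUs. Your anti-unification step can be rescued, because an equation is unified by both $\bar u_1$ and $\bar u_2$ exactly when it is unified by their least general generalization. But you must then apply this closure step rather than take the generalization's instance set as the result. The polynomial-embedding fallback does not fill the gap: Zariski closure uses all polynomial equations, not only those induced by term updates. For example, two distinct constants form a Zariski-closed set, but their term closure is $\bbD_1$.

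Your termination argument uses the wrong chain condition. Compactness (Hilbert's basis theorem) says that increasing chains of constraints stabilize, i.e.\ \emph{decreasing} chains of closed sets stabilize. Your Kleene chain of closed sets is increasing. Compactness alone cannot give this: the polynomial register algebra satisfies it, yet finite sets are closed there (Example~\ref{ex:closure-polynomials}), so $\{1\}\subset\{1,2\}\subset\cdots$ never stabilizes. In the term setting stabilization does hold, but for a term-specific reason, which is what the paper uses. Each strict step makes the MGU at some state strictly more general. The generality order on tuples of terms has no infinite strictly ascending chain: size cannot grow, and at equal size the number of distinct variables must strictly increase.
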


Finally, we prove the existence of "GCDs" for sets of "updates" over the
"free term algebra" with \emph{"copyless"}, "non-erasing" "updates"; later,
we show that "GCDs" may fail for the "copyful", "non-erasing" variant.
The key step is an interpolation property for the "divisor" relation: 
for any two "updates" $f,f'$ with the same "domain", there is an "update" $g$ 
such that $f$ and $f'$ both "divide" $g$, and such that $g$ "divides" every 
"update" $h$ divided by both $f$ and $f'$. 
This interpolation property relies on two crucial facts: (1) the "divisor" preorder 
can be characterized as a suitable "embedding" relation between multisets 
of "terms", and (2) overlapping occurrences of subterms are necessarily nested. 
Once interpolation is established, one constructs a "GCD" of a set $U$ of "updates" 
by considering the set $G$ of all "common divisors" of $U$: any two maximal elements 
of $G$, with respect to the "divisor" preorder, turn out to "divide" one another, 
and hence are both \emph{"greatest@GCD"} "common divisors" of $U$.

\begin{lemma}[restate=UpwardSTTgcd,
              name={\textbf{GCDs}}]\label{lem:upward-stt-gcd}
Assumption \ref{ass:GCD} holds over
the "free term algebra" with "copyless" and "non-erasing" "updates".
Moreover, "GCDs" can be computed for finite sets of "updates".
\end{lemma}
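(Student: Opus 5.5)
I will follow the strategy announced just before the statement. First, characterize the "divisor" preorder combinatorially. Then establish an interpolation property. Finally, extract a "GCD" as a maximal "common divisor".

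\emph{Characterizing divisibility.} A "copyless", "non-erasing" "update" $u\in\bbD_\alpha^\beta$ is specified by a $\beta$-tuple $\bar c$ of "terms" in which each variable $x_1,\dots,x_\alpha$ occurs exactly once, in any order. For such $g\in\bbD_\alpha^\gamma$ and $u$, I claim that $g$ "divides" $u$ iff the following holds. The multiset of "terms" of $g$ (components of its specification) can be placed, injectively and respecting variables, as pairwise disjoint (non-overlapping) subterm occurrences inside the components of $u$, and these occurrences cover every variable occurrence of $u$. Given such an "embedding", the residual $v$ is read off by replacing each embedded occurrence with a fresh variable $y_i$. Copylessness makes $v$ copyless, and covering all variables forces every $y_i$ to appear, so $v$ is "non-erasing". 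Conversely, any "factorization" $u=g\comp v$ with $v$ copyless and non-erasing yields exactly such an embedding.

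\emph{Interpolation and bounded size.} For two "common divisors" $f,f'$ of a fixed $u$ (more generally, of all members of $U$), I build an interpolant $g$ that both $f$ and $f'$ divide and that divides $u$. I would take the embedded occurrences of $f$ and $f'$ in $u$. Because subterm occurrences in a "term" either are disjoint or are nested, the union of the two families can be simplified by keeping only the maximal occurrences. The "terms" at these maximal occurrences, with the original variables, form the components of $g$. Each $f$- or $f'$-occurrence lies inside a maximal one, so $f$ and $f'$ both divide $g$, and $g$ divides $u$ by the characterization. For the "common divisor" setting, the same must work uniformly across all $u\in U$, and this uniformity is the main obstacle. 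The occurrences of $f$ and $f'$ in different members of $U$ may nest differently: in one $u$ an $f$-term may contain an $f'$-term, while in another they are disjoint. So "take maximal occurrences" may produce different $g$'s.

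To fix this, I would define $g$ purely from $f$ and $f'$. I would form the finest grouping of the variables $x_1,\dots,x_\alpha$ such that each component of $f$ and each component of $f'$ uses variables from a single group. Then I would show that, in every $u\in U$, the subterm spanning a group is determined by $f$ and $f'$ alone. This uses non-overlap (Lemma~\ref{lem:non-overlap}) together with the nesting fact: the component of $f'$ containing an $f$-component can be recovered as $f'$-"term" with $f$-"terms" substituted. The consequence is that "common divisor"ship is preserved.

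\emph{Existence of the GCD.} With interpolation in hand, let $G$ be the set of "common divisors" of $U$. The identity is in $G$, so $G$ is nonempty. Every element of $G$ has output arity at most $\alpha$ and total size bounded by that of any fixed $u\in U$. Hence the divisor preorder restricted to $G$ has no infinite strictly ascending chains up to "arrow isomorphism", since strict divisibility increases size, and a maximal element $g$ exists. For any $f\in G$, the interpolant of $f$ and $g$ lies in $G$ and is divided by $g$, so by maximality it is equivalent to $g$, and therefore $f$ divides $g$. Thus $g$ is a "GCD".

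\emph{Computability.} For finite $U$, the candidates are bounded by the size of one member, so they can be enumerated. Testing the embedding characterization is decidable, so one can select a common divisor of maximal size.
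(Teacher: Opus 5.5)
\textbf{Verdict: there is a genuine gap, in the uniform interpolation step.}

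Your overall architecture matches the paper's: divisibility is characterized by disjoint subterm embeddings, an interpolation property is proved, and a maximal common divisor is shown to be greatest. Your size bound for the existence of maximal elements is actually more careful than the paper's one-line remark. The problem is the interpolation step, which is where the real work lies.

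Your variable grouping only controls nesting that is \emph{forced} by shared variables. If a component of $f$ and a component of $f'$ share a variable, their copies in every member of $U$ are nested in a way fixed by $f$ and $f'$. Within a group, the spanning subterm is then just the largest component of the group, so that part is fine.

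However, copyless non-erasing updates may have variable-free components, e.g.\ $x\mapsto(a(x),b)$, or any update out of type $0$. Such components belong to no group. A ground component of $f'$ may lie inside the copy of a component of $f$ in one member of $U$ and outside it in another. Your construction never decides which ground components of $f$ and $f'$ to identify and which to keep apart. For example, take $\alpha=0$, $f=()\mapsto c(b)$ and $f'=()\mapsto(b,b)$. Both divide $()\mapsto d(c(b),b)$ and $()\mapsto e(c(b),b,b)$. In the first, one $b$ of $f'$ is forced inside $c(b)$; in the second it need not be. The correct interpolant $()\mapsto(c(b),b)$ is not produced by grouping variables, since there are none. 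This is not a corner case: rows of the Hankel matrix consist of updates out of the one-element initial domain, so all components of their divisors are ground.

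The paper closes this by building a genuine join of $f$ and $f'$ in the divisor preorder. It starts from the multiset sum of their components, ground or not, and processes elements in decreasing subterm order. Each element is attached as a child of a current root of the opposite origin that still contains an occurrence of it disjoint from that root's existing children; otherwise it becomes a new root. The roots form the interpolant. The paper then shows this interpolant embeds into \emph{every} multiset into which both $f$ and $f'$ embed. The embedding is built inductively, with a pigeonhole argument for the case where an element's image falls inside the image of an earlier root of the other origin. Because the join depends only on $f,f'$, uniformity over all members of $U$ is automatic. Without an argument of this kind, or some other way to place ground components consistently, your interpolation step, and hence the existence of a GCD, is not established.

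A smaller slip: common divisors need not have output arity at most $\alpha$, again because of ground components, e.g.\ $()\mapsto(b,b)$. Only your total-size bound holds, but it is enough for your termination and enumeration arguments.
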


As usual, the "minimization" result follows from the previous lemmas
and from Theorems \ref{thm:minimal-transducer} and \ref{thm:minimal-finite-transducer}. 
As for effectiveness, given a "finite@@transducer" "upward STT" $A$, one can compute the 
"EGCDs" associated with its "control states", by using 
(1) the fixpoint algorithm described at the end of Section \ref{sec:main-theorem}
and (2) the effectiveness of "GCDs" of finite sets. 
We recall that this enables a normalization of $A$ into 
an "equivalent" "transducer" $B$.
We also recall from Lemma \ref{lem:upward-stt-images} that
one can compute $C=\Reach(B)$.
Finally, "states" with $\viso$-equivalent "residual vectors"
can be detected, and then merged, because
(1) for any two "vectors" $V,V'$, we have $V \viso V'$
iff there is a permutation $\pi$ of the components of the 
tuples of the "domain" of $V$ such that 
$V[t](d_1,\dots,d_n) = V'[t](d_{\pi(1)},\dots,d_{\pi(n)})$,
for all $t\in\Sigma^*$ with $V[t]\neq\bot$ and all 
$(d_1,\dots,d_n)\in\Dom(V[t])$,
(2) the latter condition is a variant of an "equivalence" 
problem, which can be decided using the standard techniques 
described at the end Section \ref{sec:main-theorem}, and 
(3) there are only finitely many such permutations $\pi$,
which can then be verified one by one.

\begin{theorem}[restate=MinimalUpwardSTT,
                name={\textbf{minimal upward STTs}}]\label{thm:minimal-upward-stt}
"Upward STTs", with "updates" restricted to be "copyless" and "non-erasing", 
admit "algebraically minimal" "transducers".
Moreover, "minimization" can be performed
effectively on any given "finite transducer".
\end{theorem}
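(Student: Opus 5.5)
The theorem combines Theorems~\ref{thm:minimal-transducer} and~\ref{thm:minimal-finite-transducer} with the lemmas of this subsection, and then runs the four-step procedure of Section~\ref{sec:main-theorem} effectively. For existence, the data structure is the free term algebra restricted to copyless, non-erasing updates, extended with constrained domains. Lemma~\ref{lem:upward-stt-gcd} gives Assumption~\ref{ass:GCD}, so Theorem~\ref{thm:minimal-transducer} yields an algebraically minimal transducer $\finaltransd$ for every realizable transduction. Lemma~\ref{lem:upward-stt-compactness} gives Assumption~\ref{ass:compactness}. Hence, by Theorem~\ref{thm:minimal-finite-transducer}, when the input $A$ is finite, $\finaltransd\iso\Obs(\Reach(A))$ is finite and all its constrained domains are finitary. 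One routine check is needed: restricting to copyless, non-erasing updates is compatible with composition, identities and residuals by epi divisors. The residual of a copyless non-erasing update through an epi divisor should again be copyless and non-erasing, since each variable of the divisor must feed into the composite.

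For effectiveness, take a finite upward STT $A$ and follow Steps 1--4. Step 1: compute GCDs $f_q$ of the run-vectors $U_q$ by the greatest fixpoint iteration of $(\star)$, using computability of GCDs of finite sets (Lemma~\ref{lem:upward-stt-gcd}). For termination I would argue that the divisor preorder on copyless, non-erasing updates with fixed input type is well-founded modulo arrow isomorphism. Dividing strictly either removes symbols from the finite total size of the target terms or decreases the output arity, and the non-erasing condition bounds arities by the sizes of the terms being divided, so no infinite strictly decreasing chain exists. Step 2: turn each $f_q$ into an EGCD $g_q$ via Lemma~\ref{lem:gcd-vs-egcd}. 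The strong factorization here is computable by Lemma~\ref{lem:parametric-form}: take the closure of the range of $f_q$, represented by an MGU. Then normalize $A$ into an equivalent $B$; residuals are computed by syntactic matching. Step 3: compute $C=\Reach(B)$ by Lemma~\ref{lem:upward-stt-images}.

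Step 4 merges states and is the main obstacle. Two residual vectors $V_q,V_{q'}$ over constrained domains satisfy $V_q\viso V_{q'}$ only via isomorphisms $j$ between their domains, and I must show there are finitely many candidates. I would argue that, because updates are non-erasing and copyless, an isomorphism between solution sets of unification systems in parametric form (Lemma~\ref{lem:parametric-form}) is induced by a permutation of the parameters $\bar y$. Its inverse must also be non-erasing and copyless, which forces each component to be a single variable. Hence the candidates are among the finitely many permutations $\pi$. For each $\pi$, test $V_q=j\vcomp V_{q'}$ by the backward Karr-style fixpoint on pairs of copies of $C$. Propagate equations via inverse images and intersections (Lemma~\ref{lem:constraint-closure}). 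The fixpoint stabilizes by compactness (Lemma~\ref{lem:upward-stt-compactness}), and entailment is decidable by the MGU criterion of Lemma~\ref{lem:parametric-form}. Merging the $\viso$-equivalent states then yields $\Obs(C)$, which by Proposition~\ref{prop:final-transducer} and the remark after Proposition~\ref{prop:minimal} is isomorphic to $\finaltransd$.
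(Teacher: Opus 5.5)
Your proposal follows the paper's route. Existence comes from Lemma~\ref{lem:upward-stt-gcd} with Theorem~\ref{thm:minimal-transducer}, finiteness from Lemma~\ref{lem:upward-stt-compactness} with Theorem~\ref{thm:minimal-finite-transducer}, and effectiveness from the four steps of Section~\ref{sec:main-theorem}: the GCD fixpoint, normalization, Lemma~\ref{lem:upward-stt-images}, and merging states by testing finitely many permutations against an equivalence check. Your size-based termination argument for Step~1 fills in a detail the paper leaves implicit.

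One correction to Step~4. The candidate isomorphisms are permutations of the registers, i.e.\ of the components of the domain tuples, as the paper states, not permutations of the MGU parameters $\bar y$. The reason is that composing copyless non-erasing updates adds up their numbers of non-variable symbols. So if $j$ followed by $j^{-1}$ is the identity on a non-empty domain, both must be pure variable permutations.
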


Concerning the possibility of a "minimization" result 
for "upward STTs" with \emph{"copyful"} "updates",
in view of Theorem \ref{thm:gcd-necessary}, it suffices to
show that "GCDs" not always exist.
Consider the updates $f,f',g \in \bbD_1^2$ and $g'\in\bbD_1^3$ defined by
\[
\begin{array}{rcclrccl}
    f:  & x & \mapsto & \big( a(x), \, b(a(x)) \big) &  \qquad\qquad
    g:  & x & \mapsto & \big( b(a(x)), \, c(a(x)) \big)
    \\
    f': & x & \mapsto & \big( a(x), \, c(a(x)) \big) &  \qquad\qquad 
    g': & x & \mapsto & \big( a(x), \, b(a(x)), \, c(a(x)) \big).
\end{array}
\]
We have that both $f$ and $f'$ are "common divisors" of $U=\{g,g'\}$,
but they are not one a "divisor" or another (this is because we forbid "erasing updates").
If a "GCD" $h$ existed for $U=\{g,g'\}$, then this would necessarily be "divided"
by $f$ and $f'$. Moreover, since neither $b(a(x))$ nor $c(a(x))$ can be obtained 
from the other by an "update", $h$ is also "divided" by $g$.
Since $g'$ is "divided" by $h$, we obtain by transitivity that $g'$ is also "divided" by $g$.
However, we see that this is not possible, because no "update" can produce
the additional term $a(x)$ of $g'$ without having either $x$ or $a(x)$ in the input
(here we are relying on the fact that the variable $x$ can take at least two different values).


\medskip
We conclude by returning to the "minimization" result for "copyless", "upward STTs",
in order to highlight a connection with resource minimization.
This setting is indeed an example where "algebraic minimization" partially coincides with resource minimization.
It is immediate to see that a "finite@@transducer", "algebraically minimal" "transducer" $\finaltransd$ 
has a minimum number of "control states" among all "equivalent" "transducers": 
this holds for all models because, by definition, 
$\finaltransd$ is a "subquotient" of every "equivalent" "transducer", and 
neither a "subobject" nor a "quotient" can have a larger number of "control states". 
In the specific case of "upward STTs", where every data is a tuple of "terms",
it is natural to think of each position of the tuple as a register, and 
thus wonder whether "algebraic minimization" achieves also register minimization.
This holds only to some extent. Indeed, an "algebraically minimal" 
"upward STT" can have a "control state" whose "constrained domain" 
forces the value of a certain register to be constant; this register can then be 
removed while preserving "equivalence".
Nevertheless, it can be shown that every "algebraically minimal" "upward STT"
optimizes, at each "control state", the number of \emph{non-constant} registers 
among all "equivalent" "transducers" with the same control structure.

%% file: conclusion.tex

We have presented sufficient and necessary conditions for the existence of
"algebraically minimal" models of "streaming transducers" over a general
"data structure" $\bbD$. The first condition is the possibility of enriching
$\bbD$ with "constrained domains", defined by systems of equations, providing
the closure properties needed to construct "images" of "transducer morphisms". 
The second condition is the existence of "greatest common divisors" for non-empty 
sets of "updates", providing the algebraic ingredient needed to construct the
"final" object underlying "minimization".

After developing the abstract framework, we instantiated it to concrete
"data structures" and obtained effective "minimization" results for two
variants of string-to-"term" "transducers", which construct their "outputs"
incrementally by extending "terms" either at leaves ("downward STTs") or at the
roots ("upward STTs"). In both cases, the minimization procedure can be understood
as a non-trivial generalization of Choffrut's "minimization" of "sequential
transducers"~\cite{cho03}, with anti-unification and unification playing the
respective roles needed to compute quotients and subobjects.

Beyond the "minimization" results established here, the proposed framework 
opens several directions for further investigation. First, our results suggest
possible applications to logical characterizations. Specifically, based on the
characterization of the first-order fragment of string-to-string order-preserving
transductions in~\cite{FiliotGauwinLhote16}, corresponding to "sequential
transducers" with "regular look-ahead", it seems plausible that the
"minimization" results developed here could be used to obtain analogous
characterizations for "STTs" with "regular look-ahead".

A second direction seeks a common generalization of "downward@@STT" and
"upward STTs" within a richer model that still admits "minimization". One possible
route is to allow "updates" defined by second-order "term" "substitutions", or
by suitable restricted variants of them. Along this line, the use of "GCDs" in
our framework appears to connect with open problems on anti-unification for
second-order "term" "substitution", also known as higher-order anti-unification
\cite{Pfenning1991,BaumgartnerKutsiaLevyVillaret2017}.

It would also be interesting to understand how far the present approach extends
to other "data structures", such as the "polynomial register algebra" or the
"string register algebra". In this direction, we already have preliminary
"minimization" results for "transducers" with univariate polynomial "updates",
based on Ritt's decomposition theory of polynomials~\cite{Ritt1922,RittSurvey90}.
Another related question is which equational theories can be imposed on the
"free term algebra" while preserving the existence of "GCDs". For instance, such
theories could allow strings to be represented as "terms" modulo associativity
of concatenation.

Finally, the algorithmic aspects of the theory deserve further study. In
particular, while our results give effective "minimization" procedures for the
classes considered here, the complexity of computing "minimal" "transducers"
from "finite@@transducer" ones remains to be analyzed systematically.

%% file: app-category.tex

\section{Proofs for Section \ref{sec:category}}\label{app:category}

\AlgebraicallyMinimal*

\begin{proof}
The first property $I \iso \Reach(I)$ follows from considering
a "strong factorization" of the "initial arrow" 
$\initialmorph{I}: I \to I$, which is also an identity "morphism".
This gives the "initial image" $\Reach(I)$ together with an "epi" 
$f: I \epi \Reach(I)$ and a "strong mono" $g: \Reach(I) \smono I$. 
One can then apply twice the "diagonal fill-in property" 
to the emerging diagram:
\begin{align*}
\input{fig-initial-is-image}
\end{align*}

Similarly, the second property $M \iso \Obs(I)$ follows from considering a
"strong factorization" of the unique "morphism" $h$ from $I$ to $M$, together
with the induced "image".
Since the latter "morphism" $h$ can be interpreted both as the "initial arrow" 
for $M$ and as the "final arrow" for $I$ ($\iso \Reach(I))$, we obtain that the 
"images" $\Reach(M)$ and $\Obs(I)$ are "isomorphic". Moreover, we also have
$M \iso \Reach(M)$, simply because $M$ is a "final object" in the sub-category of "initial images",
and hence $M \iso \Reach(M) \iso \Obs(I)$. 
These arguments can be equally seen as consequences of the "diagonal fill-in property" 
applied to the following diagram:
\begin{align*}
\input{fig-minimal-is-initial-quotient}
\end{align*}
Finally, for an arbitrary "object" $A$, the property $M \iso \Obs(\Reach*(A))$
follows from considering, first, a "strong factorization" $I \epi[f] \Reach(A) \smono[g] A$
for the unique "initial arrow" $\initialmorph{A}: I \to A$, and then another
"strong factorization" $\Reach(A) \epi[f'] \Obs(\Reach*(A)) \smono[g'] M$ 
for the unique "final arrow" $\finalmorph{\Reach*(A)}: \Reach(A) \to M$. 
By recalling that there is also an "epi" $h: I \iso \Reach(I) \epi \Obs(\Reach*(I)) \iso M$,
we get the diagram below, where the "diagonal fill-in property" implies $M \iso \Obs(\Reach*(A))$:
\begin{align*}
\input{fig-minimal-is-unique}
\end{align*}
\qedhere
\end{proof}

%% file: fig-initial-is-image.tex
\begin{tikzpicture}[baseline=(current bounding box.center)]
\matrix (M) [matrix of math nodes, column sep=20mm, row sep=18mm, nodes={inner sep=1mm}] {
  |[name=I, anchor=center]| I & 
  |[name=ReachI, anchor=west]| \Reach(I) 
  \\
  |[name=Ip, anchor=center]| I & 
  |[name=Ipp, anchor=west]| I 
  \\
};
\path (I) edge [epi arrow] node [above=1mm,pos=0.4] {$f$} (ReachI);
\path (I) edge [epi arrow] node [left=2mm,pos=0.4] {$\initialmorph{I}$} (Ip);
\path (Ip) edge [reg mono arrow] node [below=1mm] {$\initialmorph{I}$} (Ipp);
\path ([xshift=2mm]ReachI.south west) edge [reg mono arrow] node [right=2mm] {$g$} ([xshift=2mm]Ipp.north west);
\path ([xshift=-1.3mm]ReachI.south west) edge [arrow,dashed] node [above left=1.3mm] {$\iso*$} ([xshift=-1mm]Ip.north east);
\path ([xshift=1.3mm]Ip.north east) edge [arrow,dashed] ([xshift=1.3mm]ReachI.south west);
\end{tikzpicture}

%% file: fig-minimal-is-initial-quotient.tex
\begin{tikzpicture}[baseline=(current bounding box.center)]
\matrix (Mat) [matrix of math nodes, column sep=20mm, row sep=18mm, nodes={inner sep=1mm}] {
  |[name=I, anchor=west]| I \iso \Reach(I) & 
  |[name=ObsReachI, anchor=west]| \Obs(\Reach*(I)) \iso \Obs(I) 
  \\
  |[name=ReachM, anchor=west]| \Reach(M) \iso M & 
  |[name=M, anchor=west]| M 
  \\
};
\path (I.east) edge [epi arrow] (ObsReachI.west);
\path ([xshift=2mm]I.south west) edge [epi arrow] ([xshift=2mm]ReachM.north west);
\path (ReachM.east) edge [reg mono arrow] (M.west);
\path ([xshift=3mm]ObsReachI.south west) edge [reg mono arrow] ([xshift=3mm]M.north west);
\path ([xshift=-1mm]ObsReachI.south west) edge [arrow,dashed] node [above left=1mm] {$\iso*$} ([xshift=-1mm]ReachM.north east);
\path ([xshift=1mm]ReachM.north east) edge [arrow,dashed] ([xshift=1mm]ObsReachI.south west);
\end{tikzpicture}

%% file: fig-minimal-is-unique.tex
\begin{tikzpicture}[baseline=(current bounding box.center)]
\matrix (Mat) [matrix of math nodes, column sep=20mm, row sep=18mm, nodes={inner sep=1mm}] {
  & 
  |[name=A, anchor=center]| A & 
  \\
  |[name=I, anchor=west]| I & 
  |[name=Reach, anchor=center]| \Reach(A) & 
  |[name=ObsReach, anchor=east]| \Obs(\Reach*(A)) 
  \\
  |[name=M, anchor=west]| M & 
  & 
  |[name=Mp, anchor=east]| M 
  \\
};
\draw [fill=gray, opacity=0.1, rounded corners] ([xshift=7mm,yshift=7mm]ObsReach.north east) rectangle ([xshift=-7mm,yshift=-7mm]M.south west);
\draw ([xshift=1mm,yshift=3mm]ObsReach.north east) node [above=2mm, gray, anchor=west] {sub-category of};
\draw ([xshift=1mm,yshift=3mm]ObsReach.north east) node [below=2mm, gray, anchor=west] {initial images};
\path ([xshift=2mm]I.south west) edge [epi arrow] node [left=2mm,pos=0.4] {$h$} ([xshift=2mm]M.north west);
\path (M) edge [reg mono arrow] node [below=2mm] {$\identity{M}$} (Mp);
\path (I) edge [epi arrow] node [above=2mm,pos=0.4] {$f$} (Reach);
\path (Reach) edge [reg mono arrow] node [right=2mm,pos=0.6] {$g$} (A);
\path (Reach) edge [epi arrow] node [above=2mm,pos=0.4] {$f'$} (ObsReach);
\path ([xshift=-3mm]ObsReach.south east) edge [reg mono arrow] node [right=1mm] {$g'$} ([xshift=-3mm]Mp.north east);
\path ([xshift=-1mm]ObsReach.south) edge [arrow,dashed] node [above left=1mm] {$\iso*$} ([xshift=-1mm]M.north east);
\path ([xshift=5mm]M.north east) edge [arrow,dashed] ([xshift=5mm]ObsReach.south);
\end{tikzpicture}

%% file: app-model.tex

\section{Proofs for Section \ref{sec:model}}\label{app:model}

\Constraints*

\begin{proof}
To prove closure of "constrained domains" under inverses of "updates", 
consider a set $E\subseteq \bbD_\beta^\gamma\times\bbD_\beta^\gamma$ 
representing a system of equations inducing a "constrained domain" 
$\Sol(E)$ over the basic "type" $\beta$.
Next, consider an "update" $f\in\bbD_\alpha^\beta$. 
For an arbitrary data $d$ of "type" $\alpha$, we have 
$d \in f^{-1}(\Sol(E))$ iff $f(d) \in \Sol(E)$.
This means that the set $f^{-1}(\Sol(E))$ contains all and only
the "solutions" of the following system of equations:
\[
	E' = \{(f\comp u, f\comp v) \::\: (u,v)\in E \}.
\]
In particular, $f^{-1}(\Sol(E))$ is a "constrained domain"
over the basic type $\alpha$ represented by system $E'$,
which can be computed from $E$ and $f$.

Closure of "constraints" under intersections is straightforward, 
as taking the intersection of "constraints" corresponds
to taking the conjunction of corresponding systems of equations.
Moreover, if the intersection is finite, then the conjunction 
is also finite and hence computable.
\end{proof}

\ClosureUnionPush*

\begin{proof}
By "extensiveness", we have $D \cup D' \subseteq \cl(D) \cup \cl(D')$, and
hence, by "monotonicity", $\cl(D \cup D') \subseteq \cl(\cl(D) \cup \cl(D'))$.
Conversely, $D \subseteq D\cup D'$ and $D'\subseteq D\cup D'$ imply, by "monotonicity",
$\cl(D) \cup \cl(D') \subseteq \cl(D \cup D')$, and hence 
$\cl(\cl(D) \cup \cl(D')) \subseteq \cl(\cl(D\cup D'))$.
Finally, "idempotency" implies $\cl(\cl(D\cup D')) = \cl(D \cup D')$.
\end{proof}

%% file: app-minimization.tex

\section{Proofs for Section \ref{sec:minimization}}\label{app:minimization}

\ClosureContinuity*

\begin{proof}
We first prove closure-continuity for "updates", namely, that
for every $f:\bbD_\alpha\to\bbD_\beta$ and every $D \subseteq \bbD_\alpha$,
\[
    f(\cl(D)) \subseteq \cl(f(D)).
\]
For this, we are going to exploit the definition of "closure" of a set as the intersection of all "constrained domains"
that contain that set. So, consider an arbitrary "constrained domain" $E \subseteq \bbD_\beta$ that contains $f(D)$.
By Lemma \ref{lem:constraint-closure}, $f^{-1}(E)$ ($\subseteq \bbD_\alpha$) is also a "constrained domain".
Moreover, $D\subseteq f^{-1}(E)$, because $d\in D$ implies $f(d) \in f(D) \subseteq E$, and hence $d\in f^{-1}(E)$.
By definition of $\cl(D)$ as the intersection of all "constrained domains" that contain $D$, we get
$\cl(D) \subseteq f^{-1}(E)$.
Applying $f$ and using the general fact that $f(f^{-1}(E)) \subseteq E$, we obtain 
$f(\cl(D)) \subseteq f(f^{-1}(E)) \subseteq E$.
Since the latter containment holds for all "constrained domains" $E \supseteq f(D)$, it also holds for their intersection,
and hence $f(\cl(D)) \subseteq \cl(f(D))$.

Next, we lift the above property to "transformations". More precisely, 
given $f:\States{Q} \to \States{Q'}$, "specified" by a 
pair $(\speca f,\specb f)$, and given $S\subseteq \States{Q}$, we let 
$D_{S,q} = \{d\in\Data(q) \::\: (q,d)\in S\}$ and derive:
\begin{align*}
    f\big(\clS(S)\big) 
    &~=~ \bigcup_{q\in Q} f\big(\{q\}\times\cl(D_{S,q}))
    \tag{by definition of "closure@@space"} \\
    &~=~ \bigcup_{q\in Q} \big\{\speca f(q)\big\} \times \specb f\big(\cl(D_{S,q})\big)
    \tag{by distributivity of $\cup$ and $\times$} \\
    &~\subseteq~ \bigcup_{q\in Q} \big\{\speca f(q)\big\} \times \bigcl(\specb f(D_{S,q}))
    \tag{by closure-continuity of "updates"} \\
    &~\subseteq~ \bigclS(\bigcup_{q\in Q} \{\speca f(q)\} \times \specb f(D_{S,q}))
    \tag{by subadditivity of "closure"} \\
    &~=~ \bigclS(f(S)).
    \tag{by definition of "closure@@space"}
\end{align*}
Note that the opposite containment does not hold in general, since the "closure" operator 
does not commute with union.
\end{proof}

\Epis*

\begin{proof}
We start by observing that 
$$\Cod(f) = \bigcup_{r\in Q'} \bigcup_{q\in \speca f^{-1}(r)} \Cod(\specb f(q))$$ and 
$$\Rng(f) = \bigcup_{r\in Q'} \bigcup_{q\in \speca f^{-1}(r)} \Rng(\specb f(q)).$$
This implies the equivalence between the latter two conditions.

We now prove the implication from the first condition to the third.
Suppose that $f$ is "epi" and consider an arbitrary "state" $r\in Q'$.
Let $h,h'$ be the "transformations" over $\States{Q'}$
specified, respectively, by $(\speca h,\specb h)$ and $(\speca h',\specb h')$,
where $\speca h$ and $\specb h$ are both undefined on the entire set $Q'$,
$\speca h'$ (resp.~$\specb h'$) is defined only on $r$ and maps it to the
same "control state" $r$ (resp.~to the identity "update" over $\Data(r)$).
If $r\nin\Rng(\speca f)$, we would have 
$f\comp h = f\comp h'$, and hence $h=h'$ because $f$ is "epi".
Since this would contradict the definitions of $h$ and $h'$ and
since $r$ was chosen arbitrarily, 
we must conclude that $\speca f$ is surjective.
To complete the proof in one direction, we show that 
$\Data(r) \subseteq \bigcl(\bigcup_{q\in\speca f^{-1}(r)} \Rng(\specb f(q)))$.
For the sake of brevity, let $D$ be the "constrained domain" 
$\bigcl(\bigcup_{q\in\speca f^{-1}(r)} \Rng(\specb f(q)))$.
Suppose, by way of contradiction, that $\Data(r) \nsubseteq D$,
and let $d\in \Data(r) \setminus D$.
Since $d\nin D$, there exist a "constrained domain" $D'$ 
and a pair of "updates" $u,v: \Data(r) \to D'$ such that
$u(d)\neq v(d)$, and yet $u(d')=v(d')$ 
for all $d'\in \bigcup_{q\in\speca f^{-1}(r)}\Rng(\specb f(q))$.
We construct from this two "transformations" $h,h': \States{Q'} \to \States{Q''}$
such that $f\comp h = f\comp h'$. Formally, we let $Q''$ be the set that consists
of a single "control state" $q''$ with associated "data type" $\Data(q'')=D'$, and
"specify" $h$ and $h'$ respectively by the pairs $(\speca h,\specb h)$ and 
$(\speca h',\specb h')$, where
\begin{itemize}
    \item $\speca h$ and $\speca h'$ are defined only on $r$ and they both map 
          $r$ to $q''$,
    \item $\specb h$ and $\specb h'$ are defined only on $r$ and they map $r$
          to the "update" $u$ and $v$, respectively.
\end{itemize}
Note that $h$ and $h'$ agree on the images of the "configurations" $(r,d')\in\States{Q'}$, 
for all $d'\in \bigcup_{q\in\speca f^{-1}(r)}\Rng(\specb f(q))$, 
and hence, because the latter "configurations" are precisely the images of $f$,
we have $f\comp h = f\comp h'$. 
On the other hand, by construction, we have $h(q,d)=(q'',u(d)) \neq (q'',v(d))=h'(q,d)$,
and because $(q,d)$ is a "configuration" of $\States{Q'}$ where both $h$ and $h'$ are defined,
we have $h\neq h'$.
This contradicts the fact that $f$ is "epi", and thus proves that 
$\Data(r)$ is contained in $D=\bigcl(\bigcup_{q\in\speca f^{-1}(r)} \Rng(\specb f(q)))$.

We now prove the converse direction, from the third condition to the first.
Suppose that $\speca f$ is surjective and that
$\Data(r)$ is contained in the "constrained domain"
$D=\bigcl(\bigcup_{q\in\speca f^{-1}(r)} \Rng(\specb f(q)))$.
We head towards proving that $f$ is "epi".
Consider two "transformations" $h,h': \States{Q'} \to \States{Q''}$,
for an arbitrary "configuration space" $\States{Q''}$,
and suppose $f\comp h=f\comp h'$.
Using Lemma \ref{lem:composition-transformations}, 
the latter equality can be rephrased in terms of the "specifications"
$(\speca h,\specb h)$ and $(\speca h',\specb h')$ of $h$ and $h'$:
\begin{itemize}
\item $\speca f\comp \speca h = \speca f\comp \speca h'$,
\item $\specb f(q)\comp \specb h(\speca f(q)) = \specb f(q)\comp \specb h'(\speca f(q))$
      for all $q\in Q$.
\end{itemize}
Now, since $\speca f$ is surjective, the first item above implies 
$\speca h=\speca h'$.
Similarly, the second item implies that, for all $r\in Q'$, 
the two "updates" $\specb h(r)$ and $\specb h'(r)$ coincide 
on $\bigcup_{q\in\speca f^{-1}(r)} \Rng(\specb f(q))$.
Moreover, because any two "updates" that agree on a set also agree 
on its "closure", $\specb h(r)$ and $\specb h'(r)$ coincide
also on $D=\bigcl(\bigcup_{q\in\speca f^{-1}(r)} \Rng(\specb f(q)))$.
Putting all together and recalling that $\Data(r) \subseteq D$, 
we get $\specb h(r) = \specb h'(r)$ for all $r\in Q'$, and hence $h=h'$.
Finally, because $h,h'$ were chosen arbitrarily, we conclude that $f$ is "epi".
\end{proof}

\StrongMonos*

\begin{proof}
To prove that the inclusion map $g: \States{Q} \to \States{Q'}$ is a "strong mono", 
we consider some "transformations" $f: \States{R} \epi \States{R'}$, with $f$ "epi", 
$h:\States{R} \to \States{Q}$, and $h': \States{R'} \to \States{Q'}$ such that 
$f\comp h' = h\comp g$,
and we construct a "transformation" $d: \States{R'} \to \States{Q}$ such that
$h = f\comp d$ and $h' = d\comp g$, as in the following diagram:
\begin{align*}
\input{fig-inclusion-is-strong-mono}
\end{align*}

Intuitively, $d$ is obtained from $h'$ by restricting its "codomain" to $\States{Q}$
(recall that $\States{Q}\subseteq\States{Q'}$).
For this to make sense, we need to verify that $\Rng(h') \subseteq \States{Q}$.
Towards this, we observe that
\begin{enumerate}
    \item since $f\comp h'=h\comp g$, we have $h'(f(\States{R})) = g(h(\States{R}))$;
    \item since $g$ is an inclusion map, we have $g(h(\States{R})) = h(\States{R}) \subseteq \States{Q}$,
          and hence, by "monotonicity", $\clS(g(h(\States{R}))) \subseteq \cl(\States{Q}) = \States{Q}$;
    \item since $f$ is "epi", by Lemma \ref{lem:epis} we have
          $\States{R'} \subseteq \clS(f(\States{R}))$.
\end{enumerate}
Putting all together and using Lemma \ref{lem:closure-continuity}
to pull out the "closure@@space", we get
\[
  \Rng(h') \:=\: 
  h'(\States{R'}) \:\subseteq\: 
  h'(\clS(f(\States{R}))) \:\subseteq\:
  \clS(h'(f(\States{R}))) \:=\:
  \clS(g(h(\States{R}))) \:\subseteq\:
  \States{Q}.
\]
We can now define $d$ as $h'$ with the "codomain" restricted to $\States{Q}$.
This can implemented at the level of "specifications", as follows.
Given a "specification" $(\speca h',\specb h')$ of $h'$, 
the "specification" of $d$ is defined as $(\speca d,\specb d)$, where
\begin{itemize}
    \item $\speca d: R' \to Q$ is defined by $\speca d(r)=\speca h'(r)$ for all $r\in R'$,
    \item given $r\in\Dom(\speca h')$ and $q=\speca h'(r)$, 
          $\specb d(r): \Data_{R'}(r) \to \Data_{Q}(q)$ is the "update"
          obtained from $\specb h'(r): \Data_{R'}(r) \to \Data_{Q'}(q)$ 
          by restricting the "codomain" $\Data_{Q'}(q)$ to the set 
          $\Data_{Q}(q) = \bigcl(\bigcup_{q'\in{\speca h'}^{-1}(r)} \Rng(\specb h'(q')))$
          (note that the "state" $q$ is assigned the "data domain"
           $\Data_{Q'}(q)$ or, equally, the "constrained domain" $\Data_{Q}(q)$, 
           depending on whether $q$ is seen as an element of $Q'$ or $Q$).
\end{itemize}
By construction, we have $h = f\comp d$ and $h' = d\comp g$.

To conclude the proof, it remains to show that the defined "transformation" $d$ 
is the only possible one that makes the above diagram commute.
This follows easily from the fact that $f$ is an "epi"; indeed, for every
other "transformation" $d': \States{R'} \to \States{Q}$ such that
$h = f\comp d'$, we have $f\comp d = f\comp d'$ and hence $d=d'$.
\end{proof}

\Image*

\begin{proof}
Let $A=(Q_A,\deltainit,(\delta_a)_{a\in\Sigma},\deltahalt)$,
$B=(Q_B,\kappainit,(\kappa_a)_{a\in\Sigma},\kappahalt)$,
and let $(\speca h,\specb h)$ be "specification" of the "morphism" $h$.
We define the "transducer" 
$C=(Q_C,\chiinit,(\chi_a)_{a\in\Sigma},\chihalt)$
as a ``pruning'' of $B$:
\begin{itemize}
\item $Q_C=\Rng(\speca h)$, which is a subset of $Q_B$.
      Each "control state" $q$ in $Q_C$ is assigned a new "constrained type" 
      with the induced "domain"
      $\bigcl(\bigcup_{r\in\speca h^{-1}(q)} \Rng(\specb h(r)))$.
      Note that the same "control state" $q$ is assigned a possibly
      larger "domain" in $Q_B$, so we use explicit notation
      to distinguish the two "domains": 
      $\Data_{Q_B}(q)$ and $\Data_{Q_C}(q)$.
      Also observe that, by using the "generalization of the closure operator@closure@space"
      to sets of "configurations", we have $\States{Q_C} = \bigclS(\Rng(h))$.
\item $\chiinit$, $\chi_a$, and $\chihalt$ are obtained from
      the corresponding "transformations" $\kappainit$, $\kappa_a$,
      $\kappahalt$ of $B$ by replacing the "domain"/"codomain" 
      $\States{Q_B}$ with $\States{Q_C}$.
      These replacements can be done at the level of "specifications".
      For example, if $(\speca\kappa_a,\specb\kappa_a)$
      is the "specification" of the "internal transformation" $\kappa_a$
      of $B$, then the "specification" of the corresponding "transformation"
      $\chi_a$ of $C$ is $(\speca\chi_a,\specb\chi_a)$,
      where $\speca\chi_a$ is the restriction of $\speca\kappa_a$
      to the set $Q_C$ and 
      $\specb\chi_a$ maps every $q\in Q_C$
      to the "update" $\specb\kappa_a(q)$, with its "domain" and "codomain"
      restricted to $\Data_{Q_B}(q)$ and $\Data_{Q_B}(\speca\kappa_a(q))$, 
      respectively.
      This definition guarantees not only 
      $\Dom(\chi_a)\subseteq\States{Q_C}$, but also 
      $\Rng(\chi_a)\subseteq\States{Q_C}$:
      \begin{align*}
        \Rng(\chi_a) 
        &~=~ \chi_a(\States{Q_C}) 
            \tag{by definition of "range"} \\
        &~=~ \chi_a\big(\clS(h(\States{Q_A}))\big) 
            \tag{by definition of $Q_C$} \\
        &~\subseteq~ \bigclS(\chi_a(h(\States{Q_A}))) 
            \tag{by Lemma \ref{lem:closure-continuity}} \\
        &~=~ \bigclS(h(\delta_a(\States{Q_A})))
            \tag{since $h$ is a "morphism"} \\
        &~\subseteq~ \bigclS(h(\States{Q_A}))
            \tag{since $\Rng(\delta_a)\subseteq\States{Q_A}$} \\
        &~=~ \States{Q_C}.
            \tag{again by definition of $Q_C$}
      \end{align*}
\end{itemize}
We also define the "transformation" $f:\States{Q_A} \to \States{Q_C}$ 
as $h$, but with its "codomain" restricted to $\States{Q_C}$.
Lemma \ref{lem:epis} immediately implies that $f$ is an "epi".
Moreover, $f$ is a "transducer morphism" from $A$ to $C$:
indeed, because $h$ is a "morphism" from $A$ to $B$ and 
the "internal transformation" of $C$ is a restriction 
of that of $B$, we have 
$f(\delta_a(q,d)) = h(\delta_a(q,d)) = \kappa_a(h(q,d)) = \chi_a(f(q,d))$,
and similarly for $\deltainit$ and $\deltahalt$.

To conclude, we define the other part of the "factorization" of $h$, 
namely, the "strong mono" "morphism" $g$ from $C$ to $B$,
so that $h=f\comp g$. 
The "morphism" $g$ is simply the inclusion map from 
$\States{Q_C}=\bigclS(\Rng(h))$ to $\States{Q_B}$. 
Formally, it is specified by the pair $(\speca g,\specb g)$, where 
$\speca g: Q_C \to Q_B$ maps every $q\in\Rng(\speca h)$ 
to $q$ itself, and $\specb g$ maps every $q\in\Rng(\speca h)$ 
to the identity "update" $\specb g(q): \Data_{Q_C}(q) \to \Data_{Q_B}(q)$.
By Lemma \ref{lem:strong-monos}, we know that $g$ is a "strong mono".
\end{proof}

\InitialTransducer*

\begin{proof}
As claimed in the proposition, the set "control states" of 
$\initialtransd$ is $\Sigma^*$. 
All "control states" $w\in\Sigma^*$ are associated with the
"data type" $\init$, i.e.~the same that is associated with $\initstate$,
and the corresponding "domain" is the singleton $\bbD_{\init*} = \{\initdata\}$.
Basically, $\initialtransd$ is an infinite-state "transducer" 
without registers.
The "transformations" of $\initialtransd$ are also easily defined:
\begin{itemize}
\item $\deltainit$ maps $(\initstate,\initdata)$ 
    to the "configuration" $(\emptystr,\initdata)$;
      formally, $\deltainit$ is "specified" by a pair 
      of functions $(\speca{\delta}_{\init*}, \specb{\delta}_{\init*})$, 
      where $\speca{\delta}_{\init*}$ (resp.~$\specb{\delta}_{\init*}$) 
      maps $\initstate$ to the "control state" $\emptystr$ 
      (resp.~to the identity "update" $\initdata \mapsto \initdata$);
\item for every $a\in\Sigma$, $\delta_a$ maps any "configuration"
    $(w,\initdata)$ to the "configuration" $(wa, \initdata)$; 
    this is "specified by" the pair $(\speca \delta_a, \specb \delta_a)$, 
    where $\speca{\delta_a}(w)=wa$ and $\specb{\delta_a}(w)$ is again 
    the identity "update" $\initdata \mapsto \initdata$;
\item $\deltahalt$ maps any "configuration" $(w,\initdata)$
      for which $\varphi(w)$ is defined to the pair $(\haltstate, \varphi(w))$,
      namely, $\speca{\delta}_{\halt*}(w)=\haltstate$ and 
      $\specb{\delta}_{\halt*}(w): \initdata \mapsto \varphi(w)$;
    if $\varphi(w)$ undefined, then
    $\delta_\halt(w,\initdata)$ is undefined too.
\end{itemize}
This "transducer" clearly "realizes" $\varphi$.
We also observe that the only component that depends on $\varphi$
is the "final transformation" $\delta_\halt$.

To prove that this is an "initial object", consider any "transducer" 
$A=(\States{Q'},\kappainit,(\kappa_a)_{a\in\Sigma},\kappahalt)$ 
that "realizes" the same "transduction" $\varphi$.
We can construct an "initial@@arrow" "morphism" 
$\initialmorph{A}: \initialtransd \rightharpoonup A$
by simply mapping every "configuration" $(w,\initdata)$
of $\initialtransd$ to the "configuration" 
$\kappa*_{\init* w}(\initstate,\initdata)$ 
reached by $A$ after reading $w$, 
assuming $\kappa*_{\init* w}(\initstate,\initdata)$ is defined
(otherwise, if it is not defined, we let 
$\initialmorph{A}(w,\initdata)$ be undefined too).
Formally, the "morphism" $\initialmorph{A}$ is "specified by" 
the pair $(\specainitialmorph{A}, \specbinitialmorph{A})$
of partial functions that map any $w\in Q$ ($=\Sigma^*$)
respectively to the "control state" $q_w$ of $A$ 
and to the "update" $u_w: \initdata \mapsto d_w$,
assuming $(q_w, d_w) = \kappa*_{\init* w}(\initstate,\initdata)$.

We omit the straightforward proof that $\initialmorph{A}$ 
is indeed a "transducer morphism", and we show instead that 
$\initialmorph{A}$ is the \emph{unique} possible "morphism" 
from $\initialtransd$ to $A$, essentially because the 
definition of $\initialmorph{A}$ was the only admissible one.
We prove this by considering another possible "morphism" 
$\initialmorphprime{A}: \initialtransd \rightharpoonup A$ and by verifying that 
$\initialmorph{A}(w,\initdata)=\initialmorphprime{A}(w,\initdata)$, 
using a simple induction on $|w|$.
For the base case $w=\emptystr$, we recall that 
$\delta_\init \comp \initialmorph{A} 
 = \kappainit
 = \deltainit \comp \initialmorphprime{A}$,
and hence 
\begin{align*}
    \initialmorph{A}(\emptystr,\initdata) 
    &= \initialmorph{A}(\deltainit(\initstate,\initdata)) \\
    &= \kappainit(\initstate,\initdata) \\
    &= \initialmorphprime{A}(\deltainit(\initstate,\initdata)) \\
    & = \initialmorphprime{A}(\emptystr,\initdata).
\end{align*}
For the inductive step, we assume that 
$\initialmorph{A}(w,\initdata) = \initialmorphprime{A}(w,\initdata)$, we recall that 
$\kappa_a \comp \initialmorph{A} = \delta_a \comp \initialmorph{A}$ and
$\kappa_a \comp \initialmorphprime{A} = \delta_a \comp \initialmorphprime{A}$, and we derive
\begin{align*}
    \initialmorph{A}(wa,\initdata) 
    &= \initialmorph{A}(\delta_a(w,\initdata)) \\
    &= \kappa_a(\initialmorph{A}(w,\initdata)) \\
    &= \kappa_a(\initialmorphprime{A}(w,\initdata)) \\
    &= \initialmorphprime{A}(\delta_a(w,\initdata)) \\
    &= \initialmorphprime{A}(wa,\initdata).
    \tag*{\qedhere}
\end{align*}
\end{proof}

\GCDvsEGCD*

\begin{proof}
Let $U$ be a "vector" and $g$ a "GCD" of $U$.
By applying Proposition \ref{prop:images} to single-state "transformations" 
---which correspond to "updates"---,
one obtains a "strong factorization" $g=g'\comp g''$.
Clearly, $g'$ is an "epi common divisor" of $U$. To prove that $g'$ is also "greatest@EGCD" among 
all "epi common divisors" of $U$, consider another "epi common divisor" $f$ of $U$.
Since $g$ is "greatest@GCD" among all "common divisors" of $U$, $f$ is a "divisor" of $g$,
so $g = f\comp f'$ for some $f'$. 
Since $f$ is an "epi" and $g''$ is a "strong mono", by the "diagonal fill-in property" 
there is a (unique) "update" $d$ such that $f\comp d = g'$. This shows that $f$ is also a "divisor"
of $g'$, and hence $g'$ is an "EGCD" of $U$.
\end{proof}

\UniqueEGCD*

\begin{proof}
Suppose that $g,g'$ are two "EGCDs" of the same "vector" $U$.
Clearly, they are "divisors" one of another, and in particular,
because $g,g'$ are "epis", they induce unique "updates" $j,j'$ 
such that $g'=g\comp j$ and $g=g'\comp j'$.
Let $d=j\comp j'$ and $d'=j'\comp j$, 
and observe that $g\comp d = g$ and $g'=g'\comp d'$.
However, because, once again, $g$ is an "epi",
there is only one "residual" of $g$ via $g$,
which is the "identity". So $d$ is the "identity", and similarly for $d'$. 
This shows that $j$ is an "isomorphism" from the "codomain" of $g$
to the "codomain" of $g'$, and $j'$ is its "inverse".
Moreover, because $g\comp j = g'$, we have that $g$ and $g'$
are "isomorphic@@arrow".
\end{proof}

\EGCDDistributivity*

\begin{proof}
We will reason with the following "EGCDs", which exist thanks to prior assumptions:
\begin{itemize}
\item $f$ is an "EGCD" of $U=g\vcomp V$, with $T=\vresidual{f}{U}$ associated "residual vector",
\item $h$ is an "EGCD" of $V$, with $W=\vresidual{h}{V}$ associated "residual vector".
\end{itemize}
We aim at proving that $g\comp h$ is an "EGCD" of $U$.

We start by observing that $U = g\vcomp V = g\comp h\vcomp W$, so $g\comp h$ is a
"common divisor" of $U$, and it is "epi" since $g$ and $h$ are "epis".
It remains to show that $g\comp h$ is a "\emph{greatest} epi common divisor@EGCD" of $U$.
Recalling that $f$ is an "EGCD" of $U$, we also derive that $g\comp h$ is a "divisor" of $f$.
Now, let $\ell = \residual{g\comp h}{f}$. Because every "residual" of an "epi" is an "epi",
(cf.~\cite[Proposition 7.41]{TheJoyOfCats}), $\ell$ is an "epi".
Next, from the equation
\[
    g\vcomp V ~=~ U ~=~ f\vcomp T ~=~ g\comp h\comp \ell \vcomp T,
\]
we can cancel $g$ from both sides, since it is "epi", and obtain
\[
    V ~=~ h\comp\ell \vcomp T.
\]
This shows that $h\comp\ell$ is a "divisor" of $V$, and it is "epi" because $h$ and $\ell$ are "epi".
Because $h$ is an "EGCD" of $V$, we deduce that $h\comp\ell$ is a "divisor" of $h$, and by pre-"composing" with $g$, 
we get that $g\comp h\comp\ell$ ($=g\comp h\comp \residual{(g\comp h)}{f} = f$) 
is a "divisor" of $g\comp h$.
For the last stretch, every "epi common divisor" of $U$ is a "divisor" of the "EGCD" $f$ of $U$,
and hence by transitivity it is also a "divisor" of $g\comp h$.
Since we have already established that $g\comp h$ is an "epi common divisor" of $U$,
this shows that $g\comp h$ is an "EGCD" of $U$.
\end{proof}

\EGCDsOfSubvectors*

\begin{proof}
Let $g$ is the "EGCD" of $U$, $V=\vresidual{g}{U}$, 
$U'=U[a-]$, $V'=V[a-]$, and let $h$ be the "EGCD" of $V'$.
We have that $g$ is "epi" and $U'=g\vcomp V'$,
and so, by Lemma \ref{lem:egcd-distributivity},
$g\comp h$ is the "EGCD" of $U'$.
\end{proof}




\Invariance*

\begin{proof}
We construct our objects $\Divisors[s]$, $\Partials_a[s]$, and $\Residuals[s,-]$ 
by exploiting an induction on $s\in\Sigma^*$, 
while preserving the properties stated in the lemma.

For the base case of the induction, we only define $\Divisors[\emptystr]$
and $\Residuals[\emptystr,-]$, as follows: we exploit 
the existence of "EGCDs" (which follows from Assumptions \ref{ass:compactness} 
and \ref{ass:GCD} and from Lemma \ref{lem:gcd-vs-egcd})
and we let $\Divisors[\emptystr]$ be an "EGCD" of $\Hankel[\emptystr,-]$
and $\Residuals[\emptystr,-] = \vresidual{\Divisors[\emptystr]}{\Hankel[\emptystr,-]}$. 
This trivially satisfies the first property of the lemma for $s=\emptystr$.

For the inductive step, we consider a word $s\in\Sigma^*$
and a letter $a\in\Sigma$, and we define first $\Partials_a[s]$
and $\Residuals[sa,-]$, and then $\Divisors[sa]$. 
In doing so we assume as inductive hypothesis that
$\Divisors[s]$ and $\Residuals[s,-]$ are already defined.
We then define $\Partials_a[s]$ as an "EGCD" of the "sub-vector"
$\Residuals[s,a-]$, and we let $\Residuals[sa,-] = \vresidual{\Partials_a[s]}{\Residuals[s,a-]}$.
These definitions clearly satisfy the second property of the lemma.
We also let $\Divisors[sa] = \Divisors[s] \comp \Partials_a[s]$,
so as to satisfy the third property. 
By applying Corollary \ref{cor:egcds-of-subvectors} with
$U=\Hankel[s,-]$,
$g=\Divisors[s]$,
$V=\Residuals[s,-]$, and
$h=\Partials_a[s]$, 
we obtain that
$g\comp h = \Divisors[sa]$ is an "EGCD" of 
$U[a-] = \Hankel[s,a-] = \Hankel[sa,-]$,
and in particular we have
\begin{align*}
    \Divisors[sa] \vcomp \Residuals[sa,-] 
    & ~=~ \Divisors[s] \comp \Partials_a[s] \vcomp \Residuals[sa,-] 
        \tag{by definition of $\Divisors[sa]$} \\
    & ~=~ \Divisors[s] \vcomp \Residuals[s,a-]
        \tag{by definition of $\Partials_a[s]$} \\
    & ~=~ \Hankel[s,a-] = 
        \tag{by inductive hypothesis} \\
    & ~=~ \Hankel[sa,-].
        \tag{by definition of Hankel matrix}
\end{align*}
This shows that the first property is preserved
during the induction step, and completes the proof.
\end{proof}

\FinalTransducer*

\begin{proof}
We shall use the "column vectors" $\Divisors$ and $\Partials_a$ and the
"residual matrix" $\Residuals$ provided in Lemma \ref{lem:invariance}.
\AP
We will only consider the "vectors" of $\Residuals$ that are not constantly
$\bot$; we call these vectors ""non-trivial"".
\AP
Because we need to distinguish the "residual vectors" only
up to the equivalence $\viso$ (defined after Lemma \ref{lem:unique-egcd}), 
it is convenient to fix a ""representative"" $\repr{w}$ for all the words $w\in\Sigma^*$
that induce "non-trivial" "residual vectors" in the same $\viso$-class.
\AP
We also fix an "isomorphism" $""\isov*{w}""$ witnessing 
$\Residuals[w,-] \viso[\isov*{w}] \Residuals[\repr{w},-]$.

\smallskip
\noindent
\underline{\emph{The final transducer}.}~
We define the "transducer" 
$\finaltransd = (\space{Q},\deltainit,(\delta_a)_{a\in\Sigma},\deltahalt)$,
as follows: 
\begin{itemize}
\item $Q$ consists of the "representatives" $\repr{w}$, now seen as "control states"
      with associated "types";
      more precisely, the "type" $\type(\repr*{w})$ associated with each "state" $\repr{w}$
      is the same as the "type" of the "domain" of any "update"
      $\Residuals[\repr{w},t]$, provided it is not $\bot$
      --- recall that the non-$\bot$ "updates" of a "vector" have all the same "domain";
      in particular, we have $\Data(\repr{w}) = \Dom(\Residuals[\repr{w},\emptystr])$;
\item $\deltainit$ maps $(\initstate,\initdata)$ to $(\repr{\emptystr},d)$, 
    where $d=\Divisors[\repr{\emptystr}](\initdata)$, 
    provided that $\Residuals[\repr{\emptystr},-]$ is "non-trivial"
    --- note that 
    $\Divisors[\repr{\emptystr}] \vcomp \Residuals[\repr{\emptystr},-] = \Hankel[\repr{\emptystr},-]$,
    and hence $\Divisors[\repr{\emptystr}]$ is an "update" from the 
    singleton "domain" $\bbD_{\init*} = \{\initdata\}$ to the "domain" $\Data(\repr{\emptystr})$;
\item for every $a\in\Sigma$, $\delta_a$ maps any "configuration" 
    $(s,d)$ to the "configuration" $(\repr{sa},d')$, 
    where $d' = (\Partials_a[s]\comp \isov{sa})(d)$, provided that 
    $\Residuals[\repr{sa},-]$ is a "non-trivial" "residual vector"
    (otherwise, $\delta_a$ is undefined on $(s,d)$)
    --- note that the target "configuration" $(\repr{sa},d')$ is 
    well-defined because, by Lemma \ref{lem:invariance},
    $\Partials_a[s]$ is uniquely determined from $\Residuals[s,-]$ and $a$
    and $\isov{sa}$ is, by construction, determined by $sa$;
\item $\deltahalt$ maps any "configuration" $(s,d)$ to $(\haltstate,d')$, 
    where $d' = \Residuals[s,\emptystr](d)$
    --- note that, because $\Residuals[s,-]$ is "non-trivial", 
    $\Residuals[s,\emptystr](d)$ is defined.
\end{itemize}
A simple induction on the length of $w$ shows that,
after reading $w$, the "transducer" $\finaltransd$ reaches
the "configuration" 
\[
  \big(\repr{w}, \: \Divisors[\repr{w}](\initdata)\big)
\]
provided that $\Residuals[\repr{w},-]$ is a "non-trivial" "residual vector"
(otherwise, $\finaltransd$ has no run on $w$).
By construction, $\isov{w}$ is an "isomorphism" such that
$\Residuals[w,-] = \isov{w} \vcomp \Residuals[\repr{w},-]$.
Similarly, one verifies by induction that 
$\Divisors[\repr{w}] = \Divisors[w]\comp \isov{w}$.
Thus, the "output" produced by $\finaltransd$ after reading $w$ is 
\[
  \big( \underbrace{\Divisors[\repr{w}]}_{\Divisors[w]\comp \isov{w}} \:\comp 
        \underbrace{\Residuals[\repr{w},\emptystr]}_{\isov{w}^{-1}\comp\Residuals[w,\emptystr]} \big) (\initdata)
  ~=~ \big( \Divisors[w]\comp\Residuals[w,\emptystr] \big) (\initdata)
  ~=~ \Hankel[w,\emptystr] 
  ~=~ \varphi(w).
\]
This shows that $\finaltransd$ "realizes" exactly the "transduction" $\varphi$.

\smallskip
\noindent
\underline{\emph{Existence of final morphism}.}~
We show that $\finaltransd$ is indeed "final" for the sub-category of "initial" "images".
Consider an arbitrary "transducer" $A$ that also "realizes" $\varphi$, and let 
$\initialmorph{A}: \initialtransd \to A$ be the unique "initial@@arrow" "morphism" towards $A$
(Proposition \ref{prop:initial-transducer}). 
Further let $\initialtransd \epi[e] B \smono[m] A$ be a "strong factorization" of $\initialmorph{A}$
(Proposition \ref{prop:images}), and let $(\hat e,\check e)$ be the "specification" of the "epi" "morphism" $e$.
We also need to denote the components of the "initial transducer" $\initialtransd$ 
and those of the "initial image" $B$, so we let 
$\initialtransd = (\Sigma^*,\kappainit,(\kappa_a)_{a\in\Sigma},\kappahalt)$ and
$B = (Q_B,\chiinit,(\chi_a)_{a\in\Sigma},\chihalt)$.

We disclose some important consequences of the fact that $e$ is "epi":
\begin{enumerate}
\item Since $e$ is "epi", by Lemma \ref{lem:epis}, 
      $\speca e$ is a surjective partial function, and hence for 
      every $q'\in Q_B$, the set $\speca{e}^{-1}(q')$ is non-empty
      (this set contains "control states" of $\initialtransd$, which 
      are words over $\Sigma$).
      Hereafter, $s$ is tacitly assumed to be an arbitrary word from $\speca e^{-1}(q')$
      that induces a "non-trivial" "residual vector" $\Residuals[s,-]$.
\item For every $t\in\Sigma^*$, the "output" $\varphi(st)$ of $B$, 
      seen as an "update" from the singleton "domain" $\bbD_{\init*}$, 
      "factorizes" as $f_s\comp h_t$, where
      $f_s=\specb{\chi_{\init s}}(\initstate)$
      and
      $h_t=\specb{\chi_{t\halt}}(q')$,
      namely, $f_s$ (resp.~$h_t$) is the "update" induced by 
      $\init s$ (resp.~$t\halt$) in $B$ starting from $\initstate$ (resp.~$q'$).
      Since $e$ is a "morphism" from $\initialtransd$ to $B$,
      we also have $f_s=\specb{e}(s)$.
\item Let $V_{q'}=(h_t)_{t\in\Sigma^*}$. 
      Because $B$ computes the "transduction" $\varphi$, we have
      \[
          \Hankel[s,-] 
          ~=~ \big(\varphi(st)\big)_{t\in\Sigma^*}
          ~=~ f_s\vcomp V_{q'}.
      \]
      Moreover, by Lemma~\ref{lem:invariance},
      $\Divisors[s]$ is an "EGCD" of $\Hankel[s,-]$ and $\Residuals[s,-]$ 
      is the corresponding "residual vector". We thus have 
      \[
          \Residuals[s,-] ~=~ \residual{\Divisors[s]}{(f_s\comp V_{q'})}.
      \]
\item Now, let $g_s$ be an "EGCD" of the "vector" $V_{q'}$, and recall
      from Lemma~\ref{lem:unique-egcd} that any two "EGCDs" of $V_{q'}$
      are "isomorphic@@arrow". 
      In particular, since $V_{q'}$ depends only on $q'$, for all $s,s'\in\speca e^{-1}(q')$, 
      there exists an "isomorphism@@arrow" $i_{s,s'}$ between the "codomains"
      of $g_s$ and $g_{s'}$ such that
      \[
        g_s ~=~ g_{s'}\comp i_{s,s'}.
      \]
\item Let $W_{s,q'} = \residual{g_s}{V_{q'}}$ be the "residual vector" of $V_{q'}$ via the "EGCD" $g_s$.
      We have
      \[
          \Hankel[s,-] 
          ~=~ f_s \vcomp V_{q'} 
          ~=~ f_s \vcomp W_{s,q'}
      \]     
      and similarly for $s'\in\speca e^{-1}(q')$.
      The "isomorphism@@arrow" $i_{s,s'}$ witnessing $g_s \viso g_{s'}$
      also induces a $\viso$-equivalence between the corresponding residual vectors:
      \[
        \Residuals[s,-] \viso[i_{s,s'}] \Residuals[s',-].
      \]
      In particular, all the "vectors" $\Residuals[s,-]$ for $s\in\speca e^{-1}(q')$
      lie in the same $\viso$-class.
\end{enumerate}
We can now "specify" the "final@@arrow" "morphism" $\finalmorph{B}: B \to \finaltransd$.
For the mapping $\specafinalmorph{B}$ from the "states" 
of $B$ to the "states" of $\finaltransd$, we let
\[
    \specafinalmorph{B}(q') ~=~ 
    \begin{cases}
        \repr{w} & \text{if there is $w\in\speca e^{-1}(q')$ with $\Residuals[w,-]$ "non-trivial"} \\
        \text{undefined} & \text{otherwise.}
    \end{cases}
\]
Note that this is well-defined because by Property (5) above,
all words in $\speca e^{-1}(q')$ induce pairwise $\viso$-equivalent 
"residual vectors", and so have the same "representative".
As for the mapping $\specbfinalmorph{B}$ from "states" of $B$ to "updates",
we recall from Properties (2--4) above that, for every $s\in\speca e^{-1}(q')$, 
$g_s$ is an "EGCD" of the "vector" $V_{q'}=(h_t)_{t\in\Sigma^*}$,
which contains the "updates" $h_t=\specb{\chi_{t\halt*}}(q')$ induced by $B$ 
starting from $q'$.
Accordingly, we let 
\[
    \specbfinalmorph{B}(q') ~=~ 
    \begin{cases}
        g_{\repr{w}} & \text{if there is $w\in\speca e^{-1}(q')$ with $\Residuals[w,-]$ "non-trivial"} \\
        \text{undefined} & \text{otherwise}
    \end{cases}
\]
(this is again well-defined thanks to Property (5)).
By construction, the "transformation" $\finalmorph{B}: \States{Q_B}\to\States{Q}$ 
"specified by" the pair $(\specafinalmorph{B},\specbfinalmorph{B})$ satisfies the following property:
\begin{align*}
    \rightward{\text{if}}\phantom{\text{then }}   
    & \quad e(w,\initdata) = (q',d) \text{ and } 
      \Residuals[\repr{w},-] \text{ is "non-trivial"} \\
    \text{then } 
    & \quad \finalmorph{B}(q',d) = (\repr{w}, \: g_{\repr{w}}(d)). 
    \tag{$\star$}
\end{align*}
Below, we show that $\finalmorph{B}$ is also a "transducer morphism".
\LIPICSorARXIV{
We need to show that the solid parts of the 
diagrams in Figure \ref{fig:final_morphism}
commute (as for the dashed parts, 
we already know they commute since $e$ is a "morphism" 
from $\initialtransd$ to $B$).
\begin{figure*}[t]
\(
\input{fig-final-morphism}
\)
\caption{Diagrams describing the "final@@arrow" "morphism" $\finalmorph{B}: B \to \finaltransd$.}
\label{fig:final_morphism}
\end{figure*}
}{
We need to show that the solid parts of the 
diagrams below commute (as for the dashed parts, 
we already know that they commute since $e$ is a 
"transducer morphism" from $\initialtransd$ to $B$):
\[
\input{fig-final-morphism}
\]
}
We are going to exploit Property $(\star)$ above 
and the fact that $e$ is an "epi".
More precisely, for the left diagram, we have:
\begin{align*}
    \chiinit \:\comp \finalmorph{B} 
    & ~=~ \kappainit \:\comp e \:\comp \finalmorph{B} 
        \tag{since $e: \initialtransd \to A$} \\
    & ~=~ \deltainit.
        \tag{by ($\star$) and by definition of $\finaltransd$}
\end{align*}
For the middle diagram, we have:
\begin{align*}
    e \:\comp \chi_a \:\comp \finalmorph{B} 
    & ~=~ \kappa_a \:\comp e \:\comp \finalmorph{B} 
        \tag{since $e: \initialtransd \to A$} \\
    & ~=~ e \:\comp \finalmorph{B} \:\comp \delta_a
        \tag{by ($\star$) and by definition of $\finaltransd$} \\
    \text{and hence}\quad
    \chi_a \,\comp \finalmorph{B} 
    & ~=~ \finalmorph{B} \,\comp \delta_a.
        \tag{because $e$ is "epi"}
\end{align*}
Similarly, for the right diagram, we have:
\begin{align*}
    e \:\comp \chihalt
    & ~=~ \kappahalt
        \tag{since $e: \initialtransd \to A$} \\
    & ~=~ e \:\comp \finalmorph{B} \:\comp \deltahalt
        \tag{by ($\star$) and by definition of $\finaltransd$} \\
    \text{and hence}\quad
    \chihalt 
    & ~=~ \finalmorph{B} \,\comp \deltahalt.
        \tag{because $e$ is "epi"}
\end{align*}
This shows that $\finalmorph{B}$ is a "transducer morphism" from $B$
to $\finaltransd$. 

\smallskip
\noindent
\underline{\emph{Uniqueness of final morphism}.}~
Finally, we argue that $\finalmorph{B}$ is the unique possible
"transducer morphism" from $B$ to $\finaltransd$. 
Consider another possible "morphism"
$\finalmorphprime{B}: B \to \finaltransd$.
A simple induction on $w\in\Sigma^*$ shows that
$(e \,\comp \finalmorph{B})(w,\initdata) = (e \,\comp \finalmorphprime{B})(w,\initdata)$.
This shows that $e \,\comp \finalmorph{B} = e \,\comp \finalmorphprime{B}$,
and thus, since $e$ is "epi", $\finalmorph{B} = \finalmorphprime{B}$.
\end{proof}

\GCDNecessary*

\begin{proof}
We fix a "standard data structure" $\bbD$ and assume that
every class of "transducers" over $\bbD$ "realizing" a certain
"transduction" contains an "algebraically minimal model".
Towards proving that $\bbD$ satisfies Assumption \ref{ass:GCD},
we consider some input and output "types" $\alpha$ and $\beta$
and a set $U\subseteq\bbD_\alpha^\beta$ of "updates". 

We exploit the fact that both sets $\bbD_\alpha$ and $\bbD_\alpha^\beta$
are countable to define string-based encodings of the elements of these
sets. Formally, we enumerate $\bbD_\alpha$ as $\{d_1,d_2,\dots\}$
and we encode each "data" $d_i$ by the string $\tilde d_i = \bullet^i$,
where $\bullet$ is a special symbol from the underlying alphabet.
Similarly, we enumerate $\bbD_\alpha^\beta$ as $\{u_1,u_2,\dots\}$ 
and we encode each "update" $u_j$ by the string $\bullet^j$.
We then assume that the underlying alphabet contains three other symbols, $\qsep,\usep,\dsep$,
and define the "transduction" 
$\varphi: \{\bullet,\qsep,\usep,\dsep\}^* \to \bbD_\beta$
such that:
\begin{itemize}
    \item $\varphi$ maps every string of the form $\bullet^i \qsep \usep \bullet^j$,
          with $i,j\in\bbN$, to $u_j(d_i)$;
    \item $\varphi$ maps every string of the form $\bullet^i \qsep \dsep \bullet^j$,
          with $i,j\in\bbN$ and $u_j\in U$, to $u_j(d_i)$;
    \item $\varphi$ is undefined on all other inputs, in particular on inputs $\bullet^i \qsep \dsep \bullet^j$
          such that $u_j\nin U$.
\end{itemize}

Now, for every "common divisor" $f$ of $U$, we construct a "transducer" $A_f$
that "realizes" $\varphi$, as shown in Figure \ref{fig:common-divisor-transducer}
(some "identity" "updates" along transitions are omitted for readability).
In particular, after reading a prefix $\tilde d\qsep$, $A_f$ 
reaches a distinguished "state" $q_f$ storing the data $d$ encoded by the prefix. 
From $q_f$, a $\usep$-labelled transition applies the "identity" 
on $\bbD_\alpha$ and reaches a component that consumes suffixes $\tilde u$, 
eventually producing $u(d)$. Similarly, from $q_f$, a $\dsep$-labelled transition 
applies the "update" $f$ and reaches a component that consumes suffixes $\tilde u$, 
eventually producing $u(d)$ only when $u\in U$.

\begin{figure*}[t]
\begin{align*}
\input{fig-common-divisor-transducer}    
\end{align*}
\caption{The "transducer" $A_f$ that emits a "common divisor" $f$ of $U$ upon reading $\dsep$.}
\label{fig:common-divisor-transducer}
\end{figure*}

Next, we exploit the existence of an "algebraically minimal" "transducer" 
$\finaltransd$ "realizing" the same "transduction" $\varphi$, and from which
we will be able to extract a "GCD" of $U$.
Since each $A_f$ is pruned by design, i.e.~$A_f=\Reach(A_f)$, 
$\finaltransd$ is a "quotient" of $A_f$, and hence there is a 
"morphim" $h$ from $A_f$ to $\finaltransd$.
In particular, this "morphism" maps the unique $\dsep$-labelled transition of $A_f$ 
to a unique $\dsep$-labelled transition of $\finaltransd$, whose "update" we denote by $g$. 
Clearly, this forces $g$ to factor through every "common divisor" $f$ of $U$.

It remains to see that $g$ is a "common divisor" of $U$. 
For this, we show that the runs of $\finaltransd$ labelled by $\usep\tilde u$, for all $u\in\bbD_\alpha^\beta$, 
ensure that no non-trivial "common divisor" can be accumulated before reading $\dsep$.
Formally, we define the accumulated "update" before $\dsep$ as the "update" $g_0$ that maps
any data $d\in\bbD_\alpha$ to the data of the "configuration" reached by $\finaltransd$ 
after reading the corresponding prefix $\tilde d\qsep$.
By the commutativity properties of "transducer morphisms", we know that $g_0$
must coincide with $\specb h(q_f)$, namely, the "update" that is applied by the "morphism" $h$ 
when applied to "configurations" of $A_f$ with $q_f$ as "control state".
Moreover, by construction, $g_0$ must "divide" all the "updates" $u\in\bbD_\alpha^\beta$ 
induced by the possible continuations $\usep\tilde u$. Because the latter "updates"
are "jointly coprime", we derive that $g_0$ is "isomorphic@@arrow" to the "identity" over $\bbD_\alpha$.
Finally, knowing that $g_0\comp g$ must be a "common divisor" of $U$, we conclude that
$g$ is also a "common divisor" of $U$, and hence a "GCD" of $U$.
\end{proof}

%% file: fig-inclusion-is-strong-mono.tex
\begin{tikzpicture}[baseline=(current bounding box.center)]
\matrix (M) [matrix of math nodes, row sep=16mm, column sep=18mm] {
  |[name=A]| \States{R} & |[name=B]| \States{R'} \\
  |[name=C]| \States{Q} & |[name=D]| \States{Q'} \\
};
\path (A) edge [epi arrow] node [above=1mm] {$f$} (B);
\path (C) edge [arrow] node [below=1mm] {$g$} (D);
\path (A) edge [arrow] node [left=1mm] {$h$} (C);
\path (B) edge [arrow] node [right=1mm] {$h'$} (D);
\path (B) edge [arrow,dashed] node [above left=1mm] {$d$} (C);
\end{tikzpicture}

%% file: fig-final-morphism.tex
\begin{tikzpicture}[baseline=(current bounding box.center)]
\matrix (M) [matrix of nodes, column sep=12mm, row sep=10mm] {
& 
|[gray,name=space1pp]| $\States*{\Sigma^*}$ && 
|[gray,name=space2pp]| $\States*{\Sigma^*}$ & 
|[gray,name=space3pp]| $\States*{\Sigma^*}$ && 
|[gray,name=space4pp]| $\States*{\Sigma^*}$ & 
\\
|[name=init]| $\States*{\{\initstate*\}}$ & 
|[nicecyan,name=space1]| $\States*{Q_B}$ && 
|[nicecyan,name=space2]| $\States*{Q_B}$ & 
|[nicecyan,name=space3]| $\States*{Q_B}$ && 
|[nicecyan,name=space4]| $\States*{Q_B}$ & 
|[name=halt]| $\States*{\{\haltstate*\}}$ 
\\
& 
|[nicered,name=space1p]| $\States*{Q}$ && 
|[nicered,name=space2p]| $\States*{Q}$ & 
|[nicered,name=space3p]| $\States*{Q}$ && 
|[nicered,name=space4p]| $\States*{Q}$ & 
\\
};
\path[gray] (init) edge [partial arrow,dashed] node [above=1mm, pos=0.4] {$\kappainit$} (space1pp);
\path[gray] (space2pp) edge [partial arrow,dashed] node [pos=0.4,above=1mm, pos=0.4] {$\kappa_a$} (space3pp);
\path[gray] (space4pp) edge [partial arrow,dashed] node [pos=0.4,above=1mm, pos=0.4] {$\kappahalt$} (halt);
\path[nicecyan] (init) edge [partial arrow] node [above=0.5mm, pos=0.4] {$\chiinit$} (space1);
\path[nicecyan] (space2) edge [partial arrow] node [pos=0.4,above=1mm, pos=0.4] {$\chi_a$} (space3);
\path[nicecyan] (space4) edge [partial arrow] node [pos=0.4,above=0.5mm, pos=0.4] {$\chihalt$} (halt);
\path[nicered] (init) edge [partial arrow] node [below=1mm, pos=0.4] {$\deltainit$} (space1p);
\path[nicered] (space2p) edge [partial arrow] node [pos=0.4,below=1mm, pos=0.4] {$\delta_a$} (space3p);
\path[nicered] (space4p) edge [partial arrow] node [pos=0.4,below=1mm, pos=0.4] {$\deltahalt$} (halt);
\path (space1) edge [partial arrow] node [right=2mm, pos=0.4] {$\finalmorph{B}$} (space1p);
\path (space2) edge [partial arrow] node [left=2mm, pos=0.4] {$\finalmorph{B}$} (space2p);
\path (space3) edge [partial arrow] node [right=2mm, pos=0.4] {$\finalmorph{B}$} (space3p);
\path (space4) edge [partial arrow] node [left=2mm, pos=0.4] {$\finalmorph{B}$} (space4p);
\path[gray] (space1pp) edge [epi arrow,dashed] node [right=2mm, pos=0.4] {$e$} (space1);
\path[gray] (space2pp) edge [epi arrow,dashed] node [left=2mm, pos=0.4] {$e$} (space2);
\path[gray] (space3pp) edge [epi arrow,dashed] node [right=2mm, pos=0.4] {$e$} (space3);
\path[gray] (space4pp) edge [epi arrow,dashed] node [left=2mm, pos=0.4] {$e$} (space4);
\end{tikzpicture}

%% file: fig-common-divisor-transducer.tex
\begin{tikzpicture}[baseline=(current bounding box.center)]
\matrix (M) [matrix of nodes, column sep=9mm, row sep=16mm, nodes={anchor=center}] {
|[name=init]| $(\initstate,\bbD_{\init*})$ & 
|[name=q0]| $(q_0,\bbD_{\init*})$ & 
|[name=q1]| $\dots$ & |[name=qi]| $(q_i,\bbD_{\init*})$ & 
|[name=q3]| $\dots$ & & & & 
\\
& & |[name=p0]| $(p_0,\bbD_\alpha)$ & 
|[name=p1]| $\dots$ & |[name=pj]| $(p_j,\bbD_\alpha)$ & 
|[name=p3]| $\dots$ & & 
\\
|[name=qf]| $(q_f,\bbD_\alpha)$ & & & & & & 
|[name=halt]| $(\haltstate,\bbD_\beta)$ & 
\\
& & |[name=r0]| $(r_0,\bbD_\alpha)$ & 
|[name=r1]| $\dots$ & |[name=rj]| $(r_j,\bbD_\alpha)$ & 
|[name=r3]| $\dots$ & & 
\\
};
\path (init) edge [partial arrow] node [above=1mm,pos=0.4] {\scriptsize $\init$} (q0);
\path (q0) edge [partial arrow] node [above=1mm,pos=0.4] {\scriptsize $\bullet$} (q1);
\path (q1) edge [partial arrow] node [above=1mm,pos=0.4] {\scriptsize $\bullet$} (qi);
\path (qi) edge [partial arrow] node [above=1mm,pos=0.4] {\scriptsize $\bullet$} (q3);
\path (qi) edge [partial arrow,out=-135,in=90] node [above=1mm,pos=0.4,sloped] {\scriptsize $\qsep: ()\mapsto d_i$} (qf);
\path (qf) edge [partial arrow,bend right,out=30] node [below=1mm,pos=0.5,sloped] {\scriptsize $\usep: d\mapsto d$} (p0);
\path (p0) edge [partial arrow] node [above=1mm,pos=0.4] {\scriptsize $\bullet$} (p1);
\path (p1) edge [partial arrow] node [above=1mm,pos=0.4] {\scriptsize $\bullet$} (pj);
\path (pj) edge [partial arrow] node [above=1mm,pos=0.4] {\scriptsize $\bullet$} (p3);
\path (pj) edge [partial arrow,bend right,in=150] node [below=1mm,pos=0.4,sloped] {\scriptsize $\halt: d\mapsto u_j(d)$} (halt);
\path (qf) edge [partial arrow,bend left,out=-30] node [above=1mm,pos=0.5,sloped] {\scriptsize $\dsep: d\mapsto f(d)$} (r0);
\path (r0) edge [partial arrow] node [above=1mm,pos=0.4] {\scriptsize $\bullet$} (r1);
\path (r1) edge [partial arrow] node [above=1mm,pos=0.4] {\scriptsize $\bullet$} (rj);
\path (rj) edge [partial arrow] node [above=1mm,pos=0.4] {\scriptsize $\bullet$} (r3);
\path (rj) edge [partial arrow,bend left,in=-150,dashed] node [above=1mm,pos=0.4,sloped] {\scriptsize $\halt: d\mapsto \residual{f}{u_j}(d)$} node [below=0.3mm,pos=0.4,sloped] {\scriptsize \qquad\qquad provided $u_j\in U$} (halt);
\end{tikzpicture}

%% file: app-applications.tex

\section{Proofs for Section \ref{sec:applications}}\label{app:applications}

\DownwardSTTcompactness*

\begin{proof}
Given an equation $f(x)=g(x)$ in the "leaf substitution algebra" $\bbD$,
with $f$ and $g$ defined by $\alpha$-tuples of "linear terms" of "type" $\beta$, 
say $\bar u$ and $\bar v$, respectively, and given another "linear term" $t$ 
of same "type" $\alpha$,
we have that $x=t$ is a solution of the equation iff $t\subst{\bar u} = t\subst{\bar v}$.
By Lemma \ref{lem:non-overlap}, every such equation is either vacuous or unsatisfiable.
\end{proof}

\DownwardSTTepi*

\begin{proof}
By Lemma \ref{lem:non-overlap} it also follows that the closure $\cl(D)$ 
of any set $D\subseteq\bbD_\tau$ coincides with the entire "domain" $\bbD_\tau$ 
or it is empty. 
By the characterization provided in Lemma \ref{lem:epis}, it follows that
every "update" of the "leaf substitution algebra" is an "epi".
\end{proof}

\UpwardSTTcompactness*

\begin{proof}
Let $\bbD$ be the "free term algebra" and let $\Sigma$ 
be the ranked alphabet over which the "ground terms" of $\bbD$ are defined. 
We fix a numeric data "type" $\tau$, 
with the objective of proving that every "constraint" of "type" $\tau$ 
is equivalent to a finite subset of it.
The plan is to embed the "domain" $\bbD_\tau$ into the 
"polynomial register algebra",
and then exploit a finite basis property for the systems of polynomial equations
that correspond to the "constraints" of "type" $\tau$.
The embedding is actually formalized as the composition of 
two standard encodings: one from the "free term algebra" 
to the "string register algebra", which maps $\tau$-tuples of "ground terms"
to $\tau$-tuples of XML-like strings, and another one from the "string register algebra" 
to the "polynomial register algebra", which maps $\tau$-tuples of strings to 
$2\tau$-tuples of numbers.

To formalize the first encoding, we introduce two unranked copies of $\Sigma$, 
containing open and closed tags, respectively:
\begin{align*}
    \opentag\Sigma ~=~ \{\opentag a \::\: a\in\Sigma\}
    \qquad\qquad
    \closetag\Sigma ~=~ \{\closetag a \::\: a\in\Sigma\}.
\end{align*}
We then define the encoding $\enc$ from "terms" over $\Sigma$ 
to strings over $\underline\Sigma\uplus\overline\Sigma$, inductively as follows:
\[
	\enc(a(t_1,\dots,t_n)) ~=~ \underline a \: \enc(t_1) \: \dots \: \enc(t_n) \: \overline{a}.
\]
We naturally extend this encoding in a pointwise manner, 
so that it maps $\tau$-tuples of "ground terms" to $\tau$-tuples of strings.
It is also convenient to further extend the encoding to "terms" over the variables
$\bar x=(x_1,\dots,x_\tau)$, by simply expanding the alphabet $\underline\Sigma\uplus\overline\Sigma$ 
with those variables and by letting $\enc(x_i)=x_i$.
For example, the "term" $a(b,c(x_1))$ is encoded as 
$\underline a\,\underline b\,\overline b\,\underline c\,x_1\,\overline c\,\overline a$.

Accordingly, every "update" $f\in\bbD_\tau^\beta$,
specified by a tuple of "terms" $\bar c=(c_1,\dots,c_\beta)$ over $\bar x$
and mapping $\bar t=(t_1,\dots,t_\tau)$ to $f(\bar t)=\bar c\subst[\bar x]{\bar t}$,
is simulated by a corresponding function $\enc(f)$ on tuples of strings, so that
$\enc(f)(\enc(\bar t)) = \enc(f(\bar t))$. More precisely, $\enc(f)$ maps every $\tau$-tuple
$\bar w=(w_1,\dots,w_\tau)$ of strings to the $\beta$-tuple
$\bar u\subst[\bar x]{\bar v}$, where $\bar u = \enc(\bar c)$
and $\bar u\subst[\bar x]{\bar w}$ denotes the "substitution" in $\bar u$
of every occurrences of variable $x_i$ by the $i$-th component of $\bar w$, 
for all $i=1,\dots,\tau$. It is easy to see that the latter function
$\enc(f)$ is actually a valid "update" over the "string register algebra".

We now define the second encoding $\enc'$ from the "string register algebra" 
to the "polynomial register algebra".
For this we let $k$ be the size of the alphabet $\opentag\Sigma \uplus \closetag\Sigma$ 
and we think of each letter as a digit from $0$ to $k-1$.
The function $\enc'$ maps every $\tau$-tuple of strings $(w_1,\dots,w_\tau)$ 
to the $2\tau$-tuple of numbers 
$(r_1,s_1,\dots,r_\tau,s_\tau)$, where each $r_i$ is the number whose base-$k$ representation 
is $w_i$, and $s_i = k^{|w_i|}$. 
As for the basic operations on tuples of strings, namely, concatenation, insertion, swap,
and duplication of registers, these are simulated by corresponding "polynomial maps", as follows:
\begin{itemize}
    \item concatenation $(\dots,x_i,x_{i+1},\dots) \mapsto (\dots,x_i x_{i+1},\dots)$ 
          is simulated by the polynomial map
          $(\dots,y_i,z_i, y_{i+1},z_{i+1},\dots) \mapsto (\dots, y_i z_{i+1} + y_{i+1}, z_i z_{i+1}, \dots)$;
    \item insertion $(\dots,x_i,x_{i+1},\dots) \mapsto (\dots,x_i, a, x_{i+1},\dots)$
          is simulated by the polynomial map
          $(\dots,y_i,z_i, y_{i+1},z_{i+1},\dots) \mapsto (\dots, y_i,z_i, r_a, k, y_{i+1},z_{i+1},\dots)$,
          where $r_a$ is the digit that corresponds to the letter $a$;
    \item swap $(\dots,x_i,x_{i+1},\dots) \mapsto (\dots,x_{i+1}, x_i,\dots)$
    	  is simulated by 
          $(\dots, y_i,z_i, y_{i+1},z_{i+1}, \dots) \mapsto (\dots, y_{i+1},z_{i+1}, y_i,z_i, \dots)$;
    \item duplication $(\dots, x_i, \dots) \mapsto (\dots, x_i, x_i, \dots)$
    	  is simulated by the map
    	  $(\dots, y_i,z_i, \dots) \mapsto (\dots, y_i,z_i, y_i,z_i, \dots)$.
\end{itemize}
The above correspondence is naturally extended via "composition" 
to all possible "updates" over the "string register algebra".

By composing the encodings $\enc$ and $\enc'$, we obtain an embedding $\enc\comp\enc'$
of the "free term algebra" into the "polynomial register algebra".
Via this embedding, we can transform every "constraint" of "type" $\tau$ over the "free term algebra"
to a system of polynomial equations. By Hilbert's basis theorem \cite{Hilbert1890},
we know that every such system has an equivalent finite subsystem, which can then
be transformed back to a finite "constraint" over the "free term algebra",
showing that all "constrained domains" are finitary.
\end{proof}

\UpwardSTTimages*

\begin{proof}
We recall from Propositions \ref{prop:images} and \ref{prop:initial-transducer} that 
the "transducer" $\Reach(B)$ is obtained by restricting the "configuration space" of 
$B=(Q,\deltainit,(\delta_a)_{a\in\Sigma},\deltahalt)$ to the "closure@@space" of the 
set of reachable "configurations".
This "closure@@space" can be equivalently described as the least fixpoint of the monotone operator
\[
  F(S) ~=~ \bigclS(S \;\cup\; \{\delta_{\triangleright}(q_{\triangleright},d_{\triangleright})\} 
                     \;\cup\; \bigcup\nolimits_{a\in\Sigma}\delta_a(S))
\]
Thus, it suffices to show that, in the "copyless", "non-erasing", "free term algebra", 
the Kleene iteration $S_{i+1}=F(S_i)$, starting with $S_0=\emptyset$, 
can be carried out effectively and stabilizes. 
We will see that the resulting procedure is similar to the one described in \cite{Karr76}
for computing the smallest linear inductive invariant of a linear program. 

Let us discuss more in detail the step for the Kleene iteration, which boils down to computing
$F(S)$ for a given "constrained configuration space" $S \subseteq \States{Q}$.
Recall that such $S$ is of the form $\bigcup_{q\in Q}\{q\}\times D_q$, where $D_q$ is a 
"constrained domain", which, by Lemma \ref{lem:upward-stt-compactness}, is described by a 
finite system of equations, say $D_q = \Sol(E_q)$. 
For the sake of brevity, let 
\[
\begin{aligned}
  D_{\init,q} ~=~ 
  \begin{cases}
    \{d\}     & \text{if } \deltainit(\initstate,\initdata)=(q,d) \\
    \emptyset & \text{otherwise}
  \end{cases}
\end{aligned}
\qquad\quad\text{and}\quad\qquad
\begin{aligned}
  D_{p,a,q} ~=~
  \begin{cases}
    \specb\delta_a(p)(D_p) & \text{if } \speca{\delta}_a(p)=q \\
    \emptyset              & \text{otherwise}. 
  \end{cases}
\end{aligned}
\]
We get 
\begin{align*}
  F(S) &~=~ \bigcup_{q\in Q} \{q\} \times \bigcl(D_q \:\cup\: D_{\init,q} \:\cup\: \bigcup\nolimits_{p\in Q, a\in\Sigma} D_{p,a,q}) \\
       &~=~ \bigcup_{q\in Q} \{q\} \times \bigcl(D_q \:\cup\: \cl(D_{\init,q}) \:\cup\: \bigcup\nolimits_{p\in Q, a\in\Sigma} \cl(D_{p,a,q})).
       & \tag{by Lemma \ref{lem:closure-union-push}} 
\end{align*}
Therefore, in order to compute $F(S)$ it suffices to know how to construct
\begin{enumerate}
\item the "closure" $\cl(D_1 \cup D_2)$ of the union of two "constrained domains" $D_1,D_2$,
\item the "closure" $\cl(f(D))$ of the application of an "update" $f$ to a "constrained domain" $D$.
\end{enumerate}
In the following we shall focus on these two sub-problems, and finally discuss termination of
the fixpoint computation.
For both sub-problems we shall exploit Lemma \ref{lem:parametric-form}, which shows that every "constrained domain"
admits an equivalent parametric form computable as the "most general unifier" ("MGU") of the underlying 
system of equations.

\smallskip
\noindent
\underline{\emph{Closure of unions of constrained domains}.}~
We explain how to compute $\cl(D_1 \cup D_2)$ for two "constrained domains" $D_1,D_2$
represented by finite systems of equations $E_1,E_2$, respectively. Without loss of generality, we assume that 
both systems $E_1$ and $E_2$ are satisfiable: this can be decided thanks to Lemma \ref{lem:parametric-form};
in case any of the two systems turns out to be unsatisfiable, $\cl(D_1 \cup D_2)$ is defined by the other system.
Let $\bar u_1,\bar u_2$ be the "MGUs" of $E_1,E_2$, respectively. 
By the first part of Lemma \ref{lem:parametric-form}, these "MGUs" exist and can be computed from $E_1,E_2$.

By the definition of "closure", $\cl(D_1\cup D_2)$ is determined by the (generally infinite) set of equations
that are valid on $D_1 \cup D_2$, namely, those equations that are entailed by \emph{both} systems $E_1$ and $E_2$.
Using the second part of Lemma \ref{lem:parametric-form}, entailment of an equation 
$\bar t = \bar t'$ by $E_i$ (for $i=1,2$) 
can be decided by checking syntactic equality of the tuples of terms 
$\bar t\subst[\bar x]{\bar u}$ and $\bar t'\subst[\bar x]{\bar u}$,
where $\bar u$ is the "MGU" of $E_i$. 
\AP
When this entailment holds we say that the tuples $\bar t,\bar t'$ are ""unified"" by $\bar u$.
The crux here is to avoid enumerating the infinitely many possible pairs of tuples $\bar t,\bar t'$ 
that are "unified" by $\bar u$.
\AP
To avoid this, we can restrict our attention 
to tuples that are ""maximally deconstructed"", namely, tuples 
$\bar t=(t_1,\dots,t_\beta)$ and $\bar t'=(t'_1,\dots,t'_\beta)$ 
of "terms" over $\bar x$ such that, for every $1\le j\le\beta$,
either $t_j$ or $t'_j$ is a variable from $\bar x$. 
Indeed, all pairs "unified" by $\bar u$ consist of tuples of the form $\bar w\subst[\bar y]{\bar t}$
and $\bar w[\bar y]{\bar t'}$, for some arbitrary tuple $\bar w$ over $\bar y$ and some
pair $\bar t,\bar t'$ of "maximally deconstructed" tuples "unified" by $\bar u$.
We illustrate this principle with an example:

\begin{example}
Consider again the "MGU" $\bar u = (x_1, b(x_1), c(x_1))$ over a single "parameter" $x_1$. 
The following are all the possible pairs of tuples of "terms" over $\bar x=(x_1,x_2,x_3)$
that are "unified" by $\bar u$:
\begin{itemize}
  \item $\bar t = (x_1, x_2, x_3)$ and $\bar t' = (x_1, x_2, x_3)$ (trivial pair),
  \item $\bar t = (x_2, b(x_1), x_3)$ and $\bar t' = (b(x_1), x_2, x_3)$,
  \item $\bar t = (x_3, c(x_1), x_2)$ and $\bar t' = (c(x_1), x_3, x_2)$,
  \item $\bar w\subst[\bar x]{\bar t}$ and $\bar w\subst[\bar x]{\bar t'}$,
        where $\bar w$ is any tuple of "terms" over $\bar x$ 
        and $\bar t,\bar t'$ is a pair chosen from any of the previous cases.
\end{itemize}
Up to permutations of components (which are covered by the fourth case), 
the first three cases are the only pairs of "maximally deconstructed" tuples "unified" by $\bar u$.
\end{example}

We let the reader verify that
the components of two "maximally deconstructed" tuples $\bar t,\bar t'$ "unified" by $\bar u$
are either variables from $\bar x$ or "terms" from $\bar u$, and so there are only finitely
many such pairs $\bar t,\bar t'$.
Moreover, if $\bar u_1,\bar u_2$ are the "MGUs" of two systems of equations $E_1,E_2$, 
then the "closure" of $\Sol(E_1) \cup \Sol(E_2)$ is represented by the (finite) system $E$ 
of equations $\bar t = \bar t'$, for all pairs $\bar t,\bar t'$ that are 
"maximally deconstructed" and "unified" by both $\bar u_1$ and $\bar u_2$.
In particular, one can compute from two systems of equations $E_1$ and $E_2$,
representing the "constrained domains" $D_1=\Sol(E_1)$ and $D_2=\Sol(E_2)$, 
a new system of equations $E$ representing the "constrained domain" $D = \cl(D_1 \cup D_2)$.

\smallskip
\noindent
\underline{\emph{Closure of application of an update to a constrained domain}.}~
The construction of $\cl(f(D))$ for a given "update" $f$ and a given "constrained domain" $D$
follows ideas similar to the previous case. 
More precisely, let $f\in\bbD_\alpha^\beta$ be a "copyless", "non-erasing" "update" 
specified by a $\beta$-tuple of "terms" $\bar c$ over an $\alpha$-tuple of variables
$\bar x$ (so that $f: \bar t \mapsto \bar c\subst[\bar x]{\bar t}$), 
and let $E$ be a finite system of equations over $\bar x$ such that $\Sol(E)=D$
Without loss of generality, assume $\Sol(E)\neq\emptyset$. 
By the first part of Lemma \ref{lem:parametric-form}, we can construct the "MGU" $\bar u$ of $E$,
and we can then define the tuple $\bar v = \bar c\subst[\bar x]{\bar u}$.
By construction, $\bar v$ is the parametric form of the elements in $f(D)$.

Now, let $\bar y$ be a tuple of fresh variables of the same arity as $\bar c$.
Recall that $\cl(f(D))$ is the "solution set" of the (infinite) system $E'$ 
consisting of all equations $\bar t=\bar t'$ over $\bar y$ that are valid on $f(D)$.
The tuples of "terms" $\bar t,\bar t'$ that are equated by $E'$ can be equivalently 
described as those for which the original system $E$ entails the equation 
$\bar t\subst[\bar y]{\bar c} = \bar t'\subst[\bar y]{\bar c}$.
In its turn, thanks to the second part of Lemma \ref{lem:parametric-form}, 
the latter entailment is characterized by a syntactic equality between the tuples
\[
  \underbrace{\bar t\subst[\bar y]{\bar c}\subst[\bar x]{\bar u}}_{\bar t\subst[\bar y]{\bar v}}
  \qquad\text{and}\qquad
  \underbrace{\bar t'\subst[\bar y]{\bar c}\subst[\bar x]{\bar u}}_{\bar t'\subst[\bar y]{\bar v}}
\]
namely, $\bar t$ and $\bar t'$ are "unified" by $\bar v = \bar c\subst[\bar x]{\bar u}$.
Finally, by the same arguments used in the previous case ("closure" of union of "constrained domains"),
we can restrict our attention to pairs of tuples $\bar t,\bar t'$ that are 
"maximally deconstructed". Towards a conclusion, we observe that there are only finitely many
equations $\bar t = \bar t'$ where $\bar t,\bar t'$ are "maximally deconstructed" and unified by
$\bar v$, and hence one can compute a finite system $E'$ of such equations such that
$\Sol(E') = \cl(f(\Sol(E))) = \cl(f(D))$.

\smallskip
\noindent
\underline{\emph{Termination of the fixpoint procedure}.}~
It remains to argue that the Kleene iteration $S_{i+1}=F(S_i)$ stabilizes after finitely many steps.
We start by observing that the system of equations $E_{i,q}$ associated with each step $i$
and each state $q\in Q$, and representing the current "constrained configuration space" $S_i$, 
is weakened at every step --- formally, we have that each system $E_{i,q}$ entails the next system 
$E_{i+1,q}$. This implies that the corresponding "MGU" $\bar u_{i,q}$ becomes more and more general 
--- formally, each "MGU" $\bar u_{i,q}$ is an instantiation of the next "MGU" $\bar u_{i+1,q}$, meaning
that the former can be obtained from the latter by a variable "substitution". Because the instantiation
order is well-founded, it is not possible to have, at a given "control state", an infinite chain 
of "MGUs" of strictly increasing generality. Finally, because there are only finitely many "control states", 
it follows that $S_i$ is eventually constant, allowing the fixpoint procedure to terminate.
\end{proof}

\UpwardSTTgcd*

\begin{proof}
Let $\bbD$ denote the "free term algebra".
Recall that the "updates" over $\bbD$ are represented by tuples of "terms" 
with variables at the leaves, and that for "copyless", "non-erasing" "updates" 
every variable occurs exactly once. 
Also recall that if two tuples of "terms" are permutations one of another, 
then they represent "isomorphic@@arrow" "updates", which are in particular  
equivalent w.r.t.~"divisor" relation.
For this reason, in this proof it is convenient to further abstract our representations
of "updates" using \emph{multisets of "terms"}, rather than tuples. More precisely,
up to "isomorphism@@arrow", every "update" $f\in\bbD_\alpha^\beta$ is represented
by a multiset $S$ of cardinality $\beta$ that contains "terms" over a number $\alpha$ 
of variables. 
We treat the elements of a multiset, and in particular the possible duplicates 
of the same "term", as distinct objects, each associated with a "term".
\AP
A ""copy"" of a "term" $t$ inside another "term" $t'$ (resp.~inside a multiset $S$)
is an occurrence of $t$ as a subterm of $t'$ (resp.~as a subterm of some element of $S$). 
Different occurrences of the same "term" $t$ are again treated as different objects.
We claim without a proof that the "divisor" relation is characterized as follows:

\begin{myclaim}\label{claim:subterm-embedding}
The "update" represented by a multiset $S$ is a "divisor" of the "update" represented by 
another multiset $T$ 
iff there is a function $e$ that maps every element $s\in S$ to a "copy" of $s$ 
inside $T$, in such a way that 
\begin{enumerate}
  \item the images of $e$, seen as occurrences of subterms inside $T$, are pairwise 
        ""disjoint"", meaning they are not contained one in another as subterms;
  \item the images of $e$ cover all the variables that appear in $T$.
\end{enumerate}
\end{myclaim}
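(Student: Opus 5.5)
We prove the two directions separately, working with the concrete representation of "copyless", "non-erasing" "updates" in which every variable occurs exactly once. Let $S$ consist of terms over variables $\bar x=(x_1,\dots,x_\alpha)$ and $T$ of terms over variables $\bar z$. Divisibility "$S$ divides $T$" means that there is a "copyless", "non-erasing" "update" $v$, given by a multiset $S'$ of terms over the variables of $S$ (one variable per element of $S$, say $y_s$ for $s\in S$), such that $T$ is obtained from $S'$ by substituting each $y_s$ with $s$. We must be careful about which side is substituted into which. In the diagrammatic order $f\comp v$, the update $f$ is applied first, so $T$ is $S'\subst[\bar y]{S}$: the outer structure comes from $v$, and the terms of $S$ are plugged into its leaves.

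For the \emph{only if} direction, assume $T=S'\subst[\bar y]{S}$. Since $v$ is "copyless" and "non-erasing", each $y_s$ occurs exactly once in $S'$. We define $e(s)$ to be the occurrence of $s$ at the position of $y_s$ in $S'$, viewed inside $T$. Distinct variable positions in $S'$ are incomparable, since variables sit at leaves. Hence the images of $e$ are "disjoint". Every variable of $T$ comes from some $s$ plugged into $S'$, because $S'$ itself carries only the $\bar y$ variables. This gives the covering condition.

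For the \emph{if} direction, take $e$ satisfying conditions 1 and 2. We build $S'$ from $T$ by replacing each occurrence $e(s)$ with a fresh variable $y_s$. This is well defined because the occurrences are pairwise "disjoint", so the replacements do not interfere. By construction $S'\subst[\bar y]{S}=T$. Each $y_s$ occurs exactly once in $S'$, so $v$ is "copyless" and "non-erasing". It remains to check that $S'$ contains no leftover original variables, and this is exactly the covering condition 2. We also need the variable bookkeeping between $S$ and $T$ to be consistent. Since $S$ is "copyless" and "non-erasing", each variable $x_i$ occurs exactly once in $S$. Disjointness then forces each variable of $T$ to be covered by exactly one image, through exactly one occurrence. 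This yields a bijection between the variables of $S$ and those of $T$, which is compatible with the substitution. We have to choose the identification of $\bar x$ with $\bar z$ through $e$, and it must respect variable names up to renaming. This is fine because "updates" are taken up to "isomorphism@@arrow", that is, up to permutation.

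The main obstacle is this last point: matching the variables of $S$ with those of $T$ consistently. A copy of $s$ inside $T$ must be an occurrence of $s$ \emph{up to renaming of variables}, and this renaming has to be the same across all elements of $S$ that share variables. Since each variable occurs once in $S$, different elements of $S$ have disjoint variable sets, so the renamings can be combined. Injectivity of the combined renaming follows from the disjointness of the images.
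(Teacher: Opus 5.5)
The paper states this claim without proof, so the only comparison is with the statement itself. Your only-if direction and the core of your if direction are correct. If $u=f\comp* v$, the specification of $u$ is $S'[\bar y/S]$, and the leaf positions of the $y_s$ in $S'$ give pairwise disjoint occurrences of the elements of $S$ that cover every variable of $T$. Conversely, cutting out the disjoint occurrences $e(s)$ and replacing them by fresh $y_s$ gives a copyless, non-erasing $S'$, which by covering contains only the $y_s$, with $S'[\bar y/S]=T$. That last equality holds only because $e(s)$ is a \emph{literal} occurrence of $s$, with the same variable names. This is exactly the paper's definition of a copy: an occurrence of $s$ as a subterm.

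The error is in your last two paragraphs. There you let $T$ range over other variables $\bar z$, read a copy as an occurrence of $s$ up to renaming, and call the identification of $\bar x$ with $\bar z$ through $e$ harmless because updates are taken up to isomorphism. That step fails.

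\begin{itemize}
\item A divisor of $u$ must have the same domain $\bbD*_\alpha$ as $u$, so $S$ and $T$ are terms over the same positional variables $x_1,\dots,x_\alpha$.
\item The isomorphisms absorbed by the multiset representation permute output components ($f$ versus $f\comp*\pi$), and these preserve divisibility.
\item Permutations of input components ($f$ versus $\pi\comp* f$) do not preserve divisibility.
\end{itemize}

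For example, with $a$ binary, take $S=\{a(x_1,x_2)\}$ and $T=\{a(x_2,x_1)\}$. Under your reading, mapping $a(x_1,x_2)$ onto all of $a(x_2,x_1)$ via the renaming $x_1\leftrightarrow x_2$ satisfies both conditions. Yet $(x_1,x_2)\mapsto a(x_1,x_2)$ does not divide $(x_1,x_2)\mapsto a(x_2,x_1)$. Every copyless, non-erasing update from type $1$ to type $1$ is a one-hole context $C$, and $C[a(x_1,x_2)]\neq a(x_2,x_1)$.

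In general, when the combined renaming is not the identity, your construction only shows that $\pi\comp* f$ divides $u$ for some input permutation $\pi$. What you call the `main obstacle' is therefore a misreading rather than a difficulty. Delete the renaming discussion and require $e(s)$ to be literally $s$; your first two paragraphs are then a complete proof.
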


\noindent
\AP
A function $e$ as above is called a ""subterm embedding"" of $S$ into $T$,
and it is denoted for short by $e: S \mathrel{\reintro{\emb*}} T$. 
Below is an example:
\begin{align*}
\input{fig-subterm-embedding}
\end{align*}

When we write $S,T,\dots \:\emb\: U,V,\dots$ we mean that there are "embeddings@@subterm" 
of every multiset of the left hand-side into every multiset of the right hand-side.
We prove the following interpolation property.

\begin{myclaim}\label{claim:interpolation}
For all multisets $S,T$, there is a multiset $U$ such that $S,T\emb U$ and 
for all multisets $V$,
\[
  S,T \emb V
  \qquad\text{implies}\qquad
  S,T \emb U \emb V.
\]
\end{myclaim}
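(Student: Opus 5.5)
The plan is to build $U$ directly from $S$ and $T$ by ``gluing'' them along their variables. The key rigidity fact is that updates are "copyless" and "non-erasing". So in every multiset $V$ representing an update from $\bbD_\alpha$, each variable $x_i$ occurs exactly once. Hence every \emph{non-ground} term $s$ (one containing at least one variable) has at most one "copy" inside $V$, namely the subterm occurrence rooted at the appropriate ancestor of its variables. Consequently, for non-ground elements the "subterm embeddings" $S\emb V$ and $T\emb V$ are forced, and the only freedom left concerns ground elements.

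\emph{Step 1 (non-ground part).} Let $N$ be the set of non-ground elements of $S$ and $T$. If $s\in S$ and $t\in T$ share a variable, their forced copies in any common upper bound $V$ overlap. Overlapping occurrences are nested, so $s$ must already be a subterm of $t$ or vice versa. If they share no variable, their copies are disjoint. Let $U_0$ be the set of elements of $N$ that are maximal with respect to the subterm relation, where equal terms from $S$ and $T$ are identified. Elements of $U_0$ have pairwise disjoint variable sets, and together they cover all variables because $S$ does. Mapping each $s\in S$ (resp.\ $t\in T$) to its unique occurrence inside the maximal element containing it gives disjoint images covering all variables. Disjointness holds because images of elements of $S$ inside one member of $U_0$ are disjoint, since they already were in the original representation.

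\emph{Step 2 (universality on the non-ground part).} Given $V$ with $S,T\emb V$, I map each $u\in U_0$ to its forced copy in $V$. This copy exists because $u$ is itself an element of $S$ or $T$. The copies are disjoint because distinct members of $U_0$ have disjoint variable sets and overlapping occurrences in $V$ would have to be nested. They cover all variables of $V$ because $S$'s images already do.

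\emph{Step 3 (ground elements: the main obstacle).} Ground elements of $S$ and $T$ have no forced position: in $V$ they may sit inside a non-ground piece, coincide with each other, or be nested inside one another. My first attempt is to let $U$ consist of $U_0$ together with the ground elements of $S$ and of $T$ that are not absorbable. A ground $s$ is absorbable if it occurs inside some $u\in U_0$ at a position disjoint from the Step 1 images. I would then merge the remaining ground elements of $S$ and $T$ via a maximal nesting matching: pair a ground $s$ with a ground $t$ when one is a subterm of the other, keeping the larger one.

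The delicate point is to choose this absorption and matching canonically, so that every $V$ receives an embedding of $U$. I would show this by an exchange argument on $V$. Take the embeddings $S\emb V$ and $T\emb V$ and look at the maximal occurrences among the union of their images. These are pairwise disjoint, since overlapping occurrences are nested, and they define a multiset $U_V$ with $S,T\emb U_V\emb V$. It then suffices to prove $U\emb U_V$. For that, I would show that any ground element kept separate in $U$ can be matched injectively to a distinct ground occurrence among the maximal pieces of $U_V$. This is a Hall-type matching argument, and whether it goes through for the chosen canonical $U$ is exactly where I expect the difficulty to lie.
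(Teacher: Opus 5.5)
Your Steps 1--2 are sound. They tacitly assume that $S$ and $T$ have some common upper bound, which is the situation in which the claim is used; for $S=\{a(x)\}$ and $T=\{b(x)\}$, for instance, no $U$ exists at all. The real content of the claim, however, lies in the ground pieces, and there your proposal stops at a conjecture. You say yourself that you do not know whether the Hall-type argument goes through, so the universality of $U$ is not proved.

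Moreover, the tentative $U$ of Step 3 is not the right one:
\begin{itemize}
\item A one-to-one ``nesting matching'' cannot let one ground term host several others. For $S=\{x,c(a,a)\}$ and $T=\{x,a,a\}$ your recipe yields $U=\{x,c(a,a),a\}$. Yet $S$ itself is a common upper bound (send both $a$'s of $T$ into $c(a,a)$), and $U$ does not embed into $S$, since $S$ has no copy of $a$ disjoint from $c(a,a)$.
\item Absorption must go into hosts of the \emph{opposite} origin and keep the absorbed pieces pairwise disjoint. Once you impose this, the result depends on the order of absorption. For $S=\{c(e(a),x)\}$ and $T=\{x,e(a),a\}$, absorbing $a$ before $e(a)$ leaves $U=\{c(e(a),x),e(a)\}$. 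This does not embed into the common upper bound $\{c(e(a),x),a\}$.
\item Matching to \emph{distinct} occurrences in $U_V$ is not the right constraint, because distinct occurrences may be nested rather than disjoint.
\end{itemize}

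The missing idea is the paper's uniform greedy construction together with a counting argument. The paper does not separate ground from non-ground terms. It lists all elements of $S+T$ in a decreasing order refining the subterm order and builds a forest of height at most two. Each element is attached as a child of an earlier root of opposite origin that still has an occurrence of it disjoint from that root's current children; otherwise it becomes a new root. $U$ is the multiset of roots. A root may host many children, and processing larger terms first settles the competition for positions.

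Universality is then obtained not by a matching but by defining $e:U\emb* V$ along the same order from $e_S$ and $e_T$. Suppose $e_S(u_i)$ falls inside $e_T(u_j)$ for a root $u_j$ that had no free occurrence of $u_i$. Then the children of $u_j$ containing $u_i$ already account for all occurrences of $u_i$ in $u_j$. So they cannot all be mapped by $e_S$ inside $e_T(u_j)$ together with $e_S(u_i)$: one of them lies outside $e_T(u_j)$ and supplies a relocated copy of $u_i$.

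Your reduction to the multiset $U_V$ of maximal pieces is a reasonable reduction, since each $U_V$ is itself a height-two arrangement of $S+T$ of this kind. But without a correct canonical $U$ and this counting step, the proposal does not establish the claim.
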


\begin{proof}
It is tempting to define $U$ as the disjoint union $S+T$ of $S$ and $T$, 
since $S,T\emb U$ would hold trivially. 
However, we cannot claim that if there are "embeddings@@subterm" 
$e_S: S\emb V$ and $e_T: T\emb V$, then the union $e_S+e_T$ of these
"embeddings@@subterm" is an "embedding@@subterm" $S+T$ into $V$, 
essentially because $e_S + e_T$ may violate the requirement of images to be "disjoint". 
Below, we see an example of a violation of this requirement
between an element $s$ that originally belonged to $S$ 
and another element $t$ that originally belonged to $T$
\AP
(we use colors to distinguish the ""origin"", i.e.~$S$ or $T$,
of each element of $S+T$ and the "embeddings@@subterm" into $V$):
\begin{align*}
\input{fig-embedding-violation}
\end{align*}
To correctly define the interpolant $U$ we need to carefully select the
elements from $S+T$. This can be done inductively using a greedy strategy,
as follows.
We start by listing the elements of $S+T$ based on a total order $<$
that refines the subterm partial order, starting from the largest element,
that is: $u_1 > u_2 > \dots > u_n$. Then, following this fixed order, 
we add each element $u_i$ to a forest, either as a new root or as a child 
of a previously added element.
In doing so, we shall prefer placing $u_i$ as a child of another element $u_j$ 
($j<i$), provided that the following conditions are satisfied:
\begin{itemize}
\item $u_j$ is a root in the current forest, 
\item $u_j$ contains an occurrence of $u_i$ as a subterm 
      that is "disjoint" from all current children of $u_j$
      (seen as other occurrences of subterms),
\item $u_j$ and $u_i$ have opposite "origins", namely,
      either $u_j\in S$ and $u_i\in T$, or, vice versa, $u_j\in T$ and $u_i\in S$.
\end{itemize}
If more than one element $u_j$ satisfies the above conditions,
then we choose any of them, say the first one according to the fixed order.
Otherwise, if there is no element $u_j$ that satisfies the above conditions, 
then we add $u_i$ as a new root of the forest.
This clearly produces a forest of height at most $2$, in which the children
of ever root have opposite "origins" than the root and represent disjoint
subterm occurrences.
Finally, we define $U$ as the multiset consisting of the roots of the forest.

It is easy to see that $S \emb U$ (and similarly $T\emb U$). Indeed,
every element $s$ of $S$ must appear in the forest either as a root,
say $r_s$, or as a child, say $c_s$; in the former case, we can map 
$s$ to the root $r_s$ itself, which clearly belongs to $U$; in the 
latter case, we can map $s$ to a "copy" inside the parent of $c_s$, 
which is a root of the forest and hence belongs to $U$.

Next, we consider a multiset $V$ admitting "embeddings@@subterm" 
$e_S: S \emb V$ and $e_T: T \emb V$, and we aim at constructing an
"embedding@@subterm" $e: U \emb V$.
We shall define the image $e(u_i)$ of each element $u_i$ of $U$ 
inductively, following again the total order on $S+T$ that we 
fixed earlier (recall that $U$ is contained in $S+T$). 
In doing so, we shall guarantee the following invariant:

\begin{quote}
\emph{$e(u_i)$ is a "copy" of $u_i$ inside $(e_S + e_T)(u_j)$
      for some $u_j\in U$ with $j\le i$ 
      (possibly $j=i$ and $e(u_i)=(e_S+e_T)(u_i)$).}
\end{quote}

Suppose, without loss of generality, that $u_i$ has "origin" in $S$.
We distinguish two cases based on the nesting relationships between the 
previously defined images $e(u_j)$, for all $u_j\in U$ with $j<i$, and 
the image $e_S(u_i)$ induced by the original "embedding@@subterm" of $S$ into $U$:
\begin{enumerate}
\item $e_S(u_i)$ is "disjoint" from all previously defined images $e(u_j)$,
      for all $u_j\in U$ with $j<i$.
      In this case, we simply let $e(u_i) = e_S(u_i)$.
\item $e_S(u_i)$ is nested inside a previously defined image $e(u_j)$,
      for some $u_j\in U$ with $j<i$.
      Without loss of generality, we can assume that $u_j$ is the first
      element of $U$ whose image $e(u_j)$ contains $e_S(u_i)$.
      In particular, by the inductive invariant, this means that
      $e(u_j) = (e_S + e_T)(u_j)$.
      First, we claim that $u_j$ must have different "origin" than $u_i$,
      namely, $u_j\in T$: indeed, if this were not the case, then
      $e_S$ would map the two elements $u_i$ and $u_j$ to 
      non-"disjoint" "copies" inside $U$, thus contradicting the
      definition of "embedding@@subterm".
      Second, because $u_i$ was added to the forest as a new root, 
      and not as a child of $u_j$, this means that $u_j$ already contained 
      subterms $u_{k_1},\dots,u_{k_m}$, for $m\ge 1$ and $j<k_1,\dots,k_m<i$, 
      that cover all occurrences of $u_i$ as a subterm of $u_j$. 
      Of course, the elements $u_{k_1},\dots,u_{k_m}$, 
      being children of $u_j$, do not belong to $U$, and have different 
      "origin" than $u_j$, so they all belong to $S - U$ (i.e.~the multiset 
      difference between $S$ and $U$).
      Also note that these "terms" $u_{k_1},\dots,u_{k_m}$, 
      being elements of $S$, must be mapped via $e_S$ to "copies" inside $U$.
      A simple Pigeonhole argument then shows that that at least one 
      such element $u_{k_\ell}$ must be mapped via $e_S$ to a "copy" 
      inside $U$, but \emph{outside} $u_j$.
      Accordingly, we define $e(u_i)$ as any occurrence of $u_i$ as
      a subterm of $e_S(u_{k_\ell})$, which is "disjoint" from $e(u_j)$.
\end{enumerate}
Based on the above arguments and constructions, the mapping $e$ 
is a valid "embedding@@subterm" of $U$ into $V$, and thus
$S,T\emb U\emb V$.
\end{proof}

It remains to prove, using the above claims, the existence of GCDs
for sets of "updates", and their computability when the sets are finite. 
Let $H$ be any set of "updates" with the same "domain", and let $F$ be 
the set of all "common divisors" of $H$.
Further let $G$ be the set of \emph{maximal} (not necessarily greatest) 
elements of $F$ w.r.t.~the "divisor" preorder, namely: $G$ contains an "update" 
$g$ iff $g\in F$ and for all $f\in F$, if $g$ is a "divisor" of $f$, 
then $f$ is a "divisor" of $g$, meaning that 
$f$ and $g$ are equivalent w.r.t.~the "divisor" preorder.
Of course $F$, and thus $G$ as well, is not empty, since the identity
"update" is clearly a "common divisor" of $H$.

Now, we head towards proving that every "update" in $G$ is a "greatest common divisor" of $H$
(this also means that the "updates" in $G$ are all equivalent w.r.t.~the "divisor" preorder).
Let $g\in G$ and $f\in F$. 
Since both $f$ and $g$ are "common divisors" of $H$,
by Claims \ref{claim:subterm-embedding} and \ref{claim:interpolation},
there is a "common divisor" $g'$ of $H$ which is "divided" by both $f$ and $g$.
However, because $g$ was chosen to be maximal in $F$, we also know that
$g'$ is a "divisor" of $g$, and hence, by transitivity, $f$ is a "divisor" of $g$.
This shows that $g$ is a "greatest common divisor" of $H$.

Finally, as concerns the computability of such a "GCD" of $H$, when $H$ is finite,
it suffices to proceed by a fixpoint computation: starting from the identity "update",
which is clearly a "common divisor" of $H$, one repeatedly extend the "update"
(or equally, one of "terms" in the multiset that represents it), while preserving
the property of being a "common divisor", until no further extension is possible.
Of course, each extension step is effective and the process will terminate in a 
finite number of steps.
Moreover, even though at each step there might be multiple possible extensions
applicable, any maximal sequence of extension steps will eventually produce the same "GCD" 
of $H$ up to "isomorphism", precisely because of Claim \ref{claim:interpolation}.
This gives an effective procedure to compute a "GCD" from a given finite set $H$
of "updates".
\end{proof}

%% file: fig-subterm-embedding.tex
\begin{tikzpicture}[baseline=(current bounding box.center)]
\matrix (M1) [matrix of math nodes, column sep=2mm, row sep=2mm, anchor=west] {
  |[name=S]| S & = & \Big\{~ & |[name=t1]| {} & ~,~ & |[name=t2]| {} & ~,~ & |[name=t3]| {} & ~\Big\} \\
};

\matrix (M2) [matrix of math nodes, column sep=2mm, row sep=2mm, anchor=north west] at ([yshift=-8mm]M1.south west) {
  |[name=T]| T & = & \Bigg\{~~ & |[name=t1p]| {} & |[name=tp]| {} & |[name=t3p]| {} & \quad,\qquad & |[name=t2p]| {} & \quad~~\Bigg\} \\
};

\pic at (t2p) {tri={w=1.3, h=1.3, color=gray}};
\pic[yshift=-4mm] (T2) at (t2p) {tri={w=0.5, h=0.5, color=nicecyan}};
\path (t2.center) edge [bold arrow, nicecyan, bend right=10] (T2-apex);
\pic at (t2) {tri={w=0.5, h=0.5, color=nicecyan}};
\pic[yshift=-4mm] (T3) at (t3p) {tri={w=0.5, h=0.5, color=nicecyan}};
\path (t3.center) edge [bold arrow, outline, nicecyan, bend left=10] (T3-apex);
\path (t3.center) edge [bold arrow, nicecyan, bend left=10] (T3-apex);
\pic at (t3) {tri={w=0.5, h=0.5, color=nicecyan}};
\pic at (tp) {tri={w=1.3, h=1.3, color=gray}};
\pic[yshift=-4mm] at (t3p) {tri={w=0.5, h=0.5, color=nicecyan}};
\pic[yshift=-4mm] (T1) at (t1p) {tri={w=0.5, h=0.5, color=nicecyan}};
\path (t1.center) edge [bold arrow, nicecyan, bend right=30] (T1-apex);
\pic at (t1) {tri={w=0.5, h=0.5, color=nicecyan}};
\end{tikzpicture}

%% file: fig-embedding-violation.tex
\begin{tikzpicture}[baseline=(current bounding box.center)]
\matrix (M1) [matrix of math nodes, column sep=2mm, row sep=2mm, anchor=west] {
  |[name=SU]| S\uplus T & = & \Big\{~ & |[name=t1]| {} & ~,~ & |[name=t2]| {} & ~,~ & |[name=t3]| {} & ~,~ & |[name=t4]| {} & ~,~ & |[name=t5]| {} & \Big\} 
  \\
};

\matrix (M2) [matrix of math nodes, column sep=2mm, row sep=2mm, anchor=north west] at ([yshift=-8mm]M1.south west) {
  |[name=V]| \phantom{S\uplus}~V & = & \Bigg\{~~ & |[name=t1p]| {} & |[name=tp]| {} & |[name=t3p]| {} & & \quad,\quad & & |[name=t2p]| {} & |[name=tpp]| {} & |[name=t5p]| {} & \quad\Bigg\} 
  \\
};

\node[nicecyan,yshift=4mm] at (t2) {$s$};
\node[nicered,yshift=4mm] at (t4) {$t$};
\pic at (tpp) {tri={w=1.3, h=1.3, color=gray}};
\pic[yshift=-4mm] (T2) at (t2p) {tri={w=0.5, h=0.5, color=nicecyan}};
\path (t2.center) edge [bold arrow, nicecyan, bend right=10] (T2-apex);
\pic at (t2) {tri={w=0.5, h=0.5, color=nicecyan}};
\pic[yshift=-4mm] (T3) at (t3p) {tri={w=0.3, h=0.3, color=nicered}};
\path (t3.center) edge [bold arrow, outline, nicered, bend left=10] (T3-apex);
\pic at (tp) {tri={w=1.3, h=1.3, color=gray}};
\pic[yshift=-5mm] at (t3p) {tri={w=0.3, h=0.3, color=nicered}};
\path (t3.center) edge [bold arrow, nicered, bend left=10] (T3-apex);
\pic at (t3) {tri={w=0.3, h=0.3, color=nicered}};
\pic[yshift=-4mm] (T1) at (t1p) {tri={w=0.5, h=0.5, color=nicecyan}};
\path (t1.center) edge [bold arrow, nicecyan, bend right=30] (T1-apex);
\pic at (t1) {tri={w=0.5, h=0.5, color=nicecyan}};
\pic[yshift=-5mm] (T4) at (t2p) {tri={w=0.3, h=0.3, color=nicered}};
\path (t4.center) edge [bold arrow, nicered, bend left=10] (T4-apex);
\pic at (t4) {tri={w=0.3, h=0.3, color=nicered}};
\pic[yshift=-5mm] (T5) at (t5p) {tri={w=0.3, h=0.3, color=nicered}};
\path (t5.center) edge [bold arrow, nicered, bend left=10] (T5-apex);
\pic at (t5) {tri={w=0.3, h=0.3, color=nicered}};
\end{tikzpicture}